%% file: main-full-arxiv.tex
\documentclass[acmsmall,authorversion,dvipsnames]{acmart}
\usepackage[ruled,noend,linesnumbered]{algorithm2e}
\usepackage{mathtools}
\usepackage{enumitem}
\usepackage{subcaption}
\usepackage{soul}
\usepackage{csquotes}
\usepackage{tikz}
\usepackage{bm}
\usetikzlibrary{calc}
\usetikzlibrary{topaths}
\usetikzlibrary{backgrounds}
\usetikzlibrary{graphs}
\usepackage{dashbox}
\usepackage{float}
\usepackage{multirow}
\usepackage{afterpage}
\usepackage{makecell}
\usepackage{xassoccnt}
\usepackage{multicol}
\usepackage{float}

\SetAlFnt{\small}

\tikzset{vertex/.style={minimum width=4pt,inner sep=0pt,circle,fill=black}}
\tikzset{hyperedge/.style={thick}}
\tikzset{hypergraph/.style={every label/.style={
circle,inner sep=1pt}}}
\tikzset{subtle/.style={on background layer,every path/.append style={lightgray!50!white}}}

\usepackage[capitalise]{cleveref}

\input{macros}

\newcommand{\arxivorappendix}{appendix}

\setcopyright{cc}
\setcctype{by}
\acmJournal{PACMMOD}
\acmYear{2026} \acmVolume{4}
\acmNumber{3 (SIGMOD)} \acmArticle{240}
\acmMonth{6} %
\acmDOI{10.1145/3802117}

\received{October 2025}
\received[revised]{January 2026}
\received[accepted]{February 2026}

\begin{document}
\title{Succinct Structure Representations for Efficient Query Optimization}
\author{Zhekai Jiang}
\authornote{Both authors contributed equally to this research.}
\affiliation{
  \institution{EPFL}
  \city{Lausanne}
  \state{Vaud}
  \country{Switzerland}
}
\email{zhekai.jiang@epfl.ch}
\orcid{0009-0001-0989-5926}

\author{Qichen Wang}
\authornotemark[1]
\affiliation{
  \institution{Nanyang Technological University}
  \country{Singapore}
}
\email{qichen.wang@ntu.edu.sg}
\orcid{0000-0002-0959-5536}

\author{Christoph Koch}
\affiliation{
  \institution{EPFL}
  \city{Lausanne}
  \state{Vaud}
  \country{Switzerland}
}
\email{christoph.koch@epfl.ch}
\orcid{0000-0002-9130-7205}

\input{0_abstract}

\begin{CCSXML}
<ccs2012>
   <concept>
       <concept_id>10003752.10010070.10010111.10011711</concept_id>
       <concept_desc>Theory of computation~Database query processing and optimization (theory)</concept_desc>
       <concept_significance>500</concept_significance>
       </concept>
   <concept>
       <concept_id>10002951.10002952.10003190.10003192.10003210</concept_id>
       <concept_desc>Information systems~Query optimization</concept_desc>
       <concept_significance>500</concept_significance>
       </concept>
   <concept>
       <concept_id>10002951.10002952.10003190.10003192.10003425</concept_id>
       <concept_desc>Information systems~Query planning</concept_desc>
       <concept_significance>500</concept_significance>
       </concept>
   <concept>
       <concept_id>10002951.10002952.10003190.10003192.10003426</concept_id>
       <concept_desc>Information systems~Join algorithms</concept_desc>
       <concept_significance>500</concept_significance>
       </concept>
 </ccs2012>
\end{CCSXML}

\ccsdesc[500]{Theory of computation~Database query processing and optimization (theory)}
\ccsdesc[500]{Information systems~Query optimization}
\ccsdesc[500]{Information systems~Query planning}
\ccsdesc[500]{Information systems~Join algorithms}

\keywords{Conjunctive queries; Plan enumeration; Cost-based optimization}

\maketitle
\sloppy

\input{1_introduction}
\input{2_preliminaries}

\input{3_hierarchical}
\input{4_meta-representation}
\input{5_cost}

\input{6_experiments}
\input{7_conclusions}

\begin{acks}
Qichen Wang is supported by the Ministry of Education, Singapore, through the Academic Research Fund Tier 1 grant (RS32/25), and by the Nanyang Technological University Startup Grant. Part of this work was completed while Qichen Wang was at EPFL. 
\end{acks}

\bibliographystyle{ACM-Reference-Format}
\bibliography{references.bib}

\clearpage
\SuspendCounters{TotPages}
\newpage
\appendix
\input{app_glossary}
\input{app_hierarchical}

\input{app_meta}
\input{app_cost}
\input{app_exp}

\end{document}

%% file: macros.tex
\DeclarePairedDelimiter{\set}{\lbrace}{\rbrace}

\newcommand*{\NP}{\ensuremath{\mathsf{NP}}}

\newcommand*{\prufer}{Pr\"ufer}

\AtEndPreamble{%
	\theoremstyle{acmplain}

}
\newenvironment{sketch}{\begin{proof}[Proof sketch]}{\end{proof}}

\DontPrintSemicolon
\SetKwInOut{Input}{input}
\SetKwInOut{Output}{output}
\SetKwProg{Fn}{Function}{:}{}
\SetKw{Accept}{accept}
\SetKw{Reject}{reject}
\SetKw{ACCEPT}{accept}
\SetKw{REJECT}{reject}

\SetCommentSty{commentfont}

\newcommand{\sys}{metaDecomp}
\usepackage{rotating}
\usepackage{pdflscape}
\usepackage{crossreftools}

\apptocmd\normalsize{%
   \setlength{\abovedisplayskip}{2pt}
   \setlength{\belowdisplayskip}{2pt}
   \setlength{\abovedisplayshortskip}{2pt}
   \setlength{\belowdisplayshortskip}{2pt}
}{}{}

\usepackage{marginnote}

\usepackage{xcolor}

\definecolor{ricolor}{RGB}{86,180,233}
\definecolor{riicolor}{RGB}{230,159,0}
\definecolor{riiicolor}{RGB}{0,158,115}

%% file: 0_abstract.tex
\begin{abstract}
Structural decomposition methods offer powerful theoretical guarantees for join evaluation, yet they are rarely used in real-world query optimizers. A major reason is the difficulty of combining cost-based plan search and structure-based evaluation. In this work, we bridge this gap by introducing meta-decompositions for acyclic queries, a novel representation that succinctly represents all possible join trees and enables their efficient enumeration. Meta-decompositions can be constructed in polynomial time and have sizes linear in the query size. We design an efficient polynomial-time cost-based optimizer based directly on the meta-decomposition, without the need to explicitly enumerate all possible join trees. We characterize plans found by this approach using a novel notion of width, which effectively implies the theoretical worst-case asymptotic bounds of intermediate result sizes and running time of any query plan. Experimental results demonstrate that, in practice, the plans in our class are consistently comparable to---even in many cases better than---the optimal ones found by the state-of-the-art dynamic programming approach, especially on large and complex queries, while our planning process runs by orders of magnitude faster, comparable to the time taken by common heuristic methods. 
\end{abstract}

%% file: 1_introduction.tex
\section{Introduction}
\label{sec:intro}

Joins are the cornerstone of relational database queries.  However, query optimizers face extreme difficulty when planning large multi-way join queries, as the search space of join orders grows exponentially as the number of relations increases.  In fact, finding the optimal join order of a query is well-known to be NP-hard in general~\cite{cluet_complexity_1995,scheufele_constructing_1996}. For this reason, beyond a small number of joins, current query optimizers must abandon exhaustive dynamic programming and resort to heuristic or greedy algorithms.  As a consequence, real-world optimizers often miss the best plans and produce suboptimal ones that can be by orders of magnitude slower.

The database theory community has long been providing tools that exploit query structures to tackle this complexity.  These approaches usually \emph{decompose} queries into specific tree structures, such as join trees~\cite{yannakakis_algorithms_1981} using the GYO algorithm~\cite{grahamGYO,DBLP:conf/compsac/YuO79}, which were later generalized to hypertree decompositions~\cite{DBLP:journals/jcss/GottlobLS02}. The resulting decompositions effectively capture the hardness of queries and, through specific requirements on connectivity, allow one to evaluate queries with theoretically-guaranteed instance-optimal complexity using the Yannakakis algorithm~\cite{bernstein_using_1981,yannakakis_algorithms_1981} or its variants~\cite{wang_yannakakis_2025,DBLP:journals/jcss/GottlobLS02}.
Over the years, these structure-based methods have been generalized to evaluate joins with projections~\cite{PODS251,PODS252,CDE}, aggregations~\cite{AJAR, FAQ}, differences~\cite{SIGMOD23, PODS24}, top-$k$~\cite{wang2023relational}, comparisons~\cite{SIGMOD22}, etc., all with strong theoretical performance guarantees.  These results suggest that leveraging a query’s structural properties can be of great help for query optimization and can dramatically improve query evaluation, with robust worst-case bounds. Yet, in practice, despite limited attempts by, e.g., Scarcello et al.~\cite{scarcello_weighted_2007}, RelationalAI~\cite{submodularwidth}, and Gottlob et al.~\cite{gottlob_reaching_2023}, real-world systems rarely utilize such structure-based approaches. Therefore, they continue to struggle with large, complex queries.

Why are structure-based methods rarely adopted in mainstream systems? A core obstacle is the dichotomy between cost-based plan search and structure-based evaluation~\cite{koutris_database_2026}. Current cost-based optimizers usually search for query plans with optimal or near-optimal cost under some cost model using
(1)~\textbf{exact algorithms}, mostly based on dynamic programming, such as~\cite{selinger_access_1979,vance_rapid_1996,vance_join-order_1998,moerkotte_dynamic_2008,stoian_dpconv_2024}, which have to run in exponential time, and thus do not scale to very large numbers of relations;
(2)~\textbf{heuristic strategies} such as greedy operator ordering (GOO)~\cite{fegaras_new_1998}, 
iterative dynamic programming~\cite{kossmann_iterative_2000}, UnionDP~\cite{mancini_efficient_2022}, and adaptive approaches~\cite{neumann_adaptive_2018,birler_optimizing_2025}, which have neither optimality guarantees nor a reasonable approximation factor unless under some assumption, such as independence of predicate selectivities;
(3)~\textbf{transformation-based techniques}, such as the Volcano/Cascades framework~\cite{graefe_volcano_1993,graefe_cascades_1995} and genetic algorithms~\cite{bennett_genetic_1991};
and
(4) more recently, \textbf{machine learning--based tuning and rewriting}~\cite{marcus_bao_2021,yang_balsa_2022,chen_efficient_2022,zhou_learned_2023,zhou_learned_2021,li_llm-r2_2024}.
Although certain approaches leverage, e.g., \emph{query graphs}~\cite{moerkotte_dynamic_2008,neumann_adaptive_2018,mancini_efficient_2022,birler_optimizing_2025}, they only show the existence of join attributes between relations but do not have a clear connection with the requirements of the aforementioned structural decompositions that help reason about asymptotic complexities.
\input{figs/123-12-13-23}

\begin{example} \label{ex:123-12-13-23-intro}
    Consider the query
    \begin{align*}
        Q_{\ref{ex:123-12-13-23-intro}} \gets \pi_{\emptyset} (& R_1[x_1, x_2, x_3, x_4] \bowtie R_2[x_1, x_2, x_5] \bowtie R_3[x_1, x_3, x_6] \bowtie R_4[x_2, x_3, x_7]),
    \end{align*}
    which is a Boolean acyclic query.
    Its hypergraph is shown in \cref{fig:123-12-13-23-hg}, and a join tree, which is also a hypertree decomposition, in \cref{fig:123-12-13-23-meta}.
    Its query graph, as shown in \cref{fig:123-12-13-23-qg}, is a clique with the four relations as vertices, as each pair shares some join attribute.
    \cref{fig:123-12-13-23-hierarchical-plan} is a query plan that follows the structure of the join tree (which we formally define later). \cref{fig:123-12-13-23-non-hierarchical-plan} does not, but it is still within the search space of traditional algorithms according to the query graph.
\end{example}

Theoretical research typically focuses on worst-case asymptotic data complexity, assuming that cost differences among decompositions are negligible.  
In practice, however, different decompositions may lead to significantly different concrete cost values.
Therefore, using only structural properties, such as hypertree widths, is not sufficient~\cite{gottlob_reaching_2023}.
However, selecting the best decomposition is challenging, as there can be an exponential number of candidates. 
Without an efficient cost-based exploration of alternatives, explicitly scanning through all of them would be computationally prohibitive.
Furthermore, evaluation plans for structure-based methods are not natively supported by current systems that use standard binary query plans~\cite{koutris_database_2026,gottlob_reaching_2023}. As a result, deploying them requires a dedicated optimization and execution pipeline, which limits their practical adoption.

\subsection{Our Contributions}

In this work, we take a major step towards bridging structural decomposition techniques with cost-based query optimization.
We start by introducing \textbf{a novel \emph{width} measurement for query plans} 
and show that, for query plans of width 1, we can effectively embed the query’s tree structure within a traditional query plan to guide query planning even in the absence of cost information.  In particular, we show that acyclic Boolean queries and relation-dominated queries can be evaluated in \emph{instance-optimal time} using width-1 plans.  We also present a method which, for any given join tree of a query, constructs the optimal width-1 query plan that adheres to the join tree in $O(|Q|)$,  \emph{linear in the query size} $|Q|$.

By searching for plans of minimal width, we restrict the search space to only those that follow the query's optimal tree decompositions. To make this practical, we introduce \emph{meta-decompositions} for acyclic conjunctive queries that \textbf{universally encode, in one succinct structure, the entire family of join trees of the query}. Despite the possibly exponential number of join trees of a given query, this representation \textbf{can be constructed in polynomial time and has size linear in the query size}. We give an efficient procedure that enumerates all join trees in polynomial amortized delay from a meta-decomposition, as a potential tool for other structure-based methods requiring explicit join tree enumeration.

Nevertheless, such an enumeration still incurs an exponential overhead. Therefore, we develop an efficient \textbf{\emph{structure-guided cost-based optimization} framework to find the optimal width-1 query plan among all join trees} without na\"ively scanning all candidates. For queries whose meta-decompositions have bounded fan-out, \emph{our algorithm finds the optimal width-1 plan in linear time}. For general queries, we also provide efficient heuristics with the same linear complexity and a low constant factor.

Although we currently restrict our focus to acyclic queries only, such queries in fact account for the overwhelming majority of real workloads~\cite{DBLP:journals/vldb/BonifatiMT20,luo_algorithms_2025}. As shown in \cref{tab:acyclic-in-benchmarks}, among 8,125 queries across several popular benchmarks, 7,929 (97.59\%) are acyclic. Structure-based methods for optimizing acyclic queries have therefore been of sustained interest~\cite{yannakakis_algorithms_1981,wang_yannakakis_2025,RPT2025,yang_predicate_2024,bekkers2024instance,wang2023relational,SIGMOD22,DiamondJoin2024, SIGMOD23, luo_algorithms_2025}.
\begin{table}
    \small
    \centering
    \begin{footnotesize}
    \begin{tabular}{c|c|c|c}
        \hline
        {\bf Benchmark} & {\bf \# Queries} & {\bf \# Acyclic} & {\bf \% Acyclic} \\ \hline
        {\bf TPC-H}~\cite{noauthor_tpc_2022}              & 22         & 21 & 95.45 \%       \\ \hline
        {\bf JOB}~\cite{leis_query_2018} & 113 & 113 & 100 \% \\ \hline
        {\bf STATS-CEB}~\cite{han_cardinality_2021} & 2603 & 2603 & 100 \% \\ \hline
        {\bf Spider-NLP}~\cite{yu_spider_2018} & 4712 & 4709 & 99.94 \% \\ \hline
        {\bf DSB}~\cite{ding_dsb_2021} (SPJ queries only) & 300 & 240 & 80 \% \\ \hline
        {\bf Musicbrainz}~\cite{mancini_efficient_2022} & 375 & 243 & 64.8 \% \\ \hline
        {\bf Total} & 8125 & 7929 & {\bf 97.59 \%} \\ \hline
    \end{tabular}
    \end{footnotesize}
    \caption{Number of acyclic queries in major benchmarks. Part of the data comes from \cite{luo_algorithms_2025}.}
    \label{tab:acyclic-in-benchmarks}
    \vspace{-1\baselineskip}
\end{table}

We then empirically demonstrate that our approach can indeed be game-changing for query optimization for large joins. %
We \textbf{implement an optimizer \sys{}} and evaluate it on queries from 4 benchmarks, which include many large, complex queries.  We compare the results against several state-of-the-art approaches. The results show that width-$1$ plans consistently match or even outperform the plans found by these baselines, while taking very small amounts of time for optimization.
We also demonstrate that join tree enumeration can be by orders of magnitude faster than existing approaches based on na\"ive GYO reductions, and that selecting join trees using our cost-based optimization framework significantly improves the effectiveness of existing structure-based~methods.

\paragraph{Paper organization} The rest of the paper is organized as follows:
After laying out the necessary background in \cref{sec:preliminaries}, we first define the \emph{width} notion of query plans and discuss the connection between query structures and the widths of query plans in \cref{sec:hierarchical-query-plan}.
In \cref{sec:meta}, we define \emph{meta-decompositions}, the algorithm to construct them, as well as our way of enumerating all join trees based on them.
In \cref{sec:opt}, we demonstrate how to perform cost-based optimization using meta-decompositions.
In \cref{sec:exp}, we present experimental results to show the efficiency of query optimization and execution using this approach.
We conclude and discuss possible future work in \cref{sec:conclusions}.  
Due to space constraints, missing proofs can be found in the \arxivorappendix.

%% file: figs/123-12-13-23.tex
\begin{figure}
    \centering
    \begin{subfigure}{.33\linewidth}
        \small
        \centering
        \begin{tikzpicture}[hypergraph,xscale=.7,yscale=.5,trim left=0cm,trim right=0cm]
            \node[vertex,label={east:$x_1$}] (x1) at (0, 0) {};
            \node[vertex,label={east:$x_2$}] (x2) at (-2, -2) {};
            \node[vertex,label={west:$x_3$}] (x3) at (2, -2) {};
            \node[vertex,label={east:$x_4$}] (x4) at (0, 0.75) {};
            \node[vertex,label={west:$x_5$}] (x5) at (-1, -1) {};
            \node[vertex,label={east:$x_6$}] (x6) at (1, -1) {};
            \node[vertex,label={west:$x_7$}] (x7) at (0, -2) {};
        
            \begin{scope}[on background layer]
                
                \newcommand*{\pathri}{($(x2) + (0,-0.4)$)
                    to[out=0,in=-135] ($(x5) + (-0.25,0.25)$)
                    to[out=45,in=180] ($(x1) + (0,-0.5)$)
                    to[out=0,in=135] ($(x6) + (0.25,0.25)$)
                    to[out=-45,in=180] ($(x3) + (0,-0.4)$)
                    to[out=0,in=0] ($(x4) + (0,0.5)$)
                    to[out=180,in=180] ($(x2) + (0,-0.4)$)
                }

                \newcommand*{\pathrii}{	($(x1) + (0.3,0.3)$)
                    to[out=135,in=45] ($(x5) + (-0.45,0.45)$)
                    to[out=-135,in=135] ($(x2) + (-0.3,-0.3)$)
                    to[out=-45,in=-135] ($(x5) + (0.45,-0.45)$)
                    to[out=45,in=-45] ($(x1) + (0.3,0.3)$)}
                
                \newcommand*{\pathriii}{($(x1) + (-0.3,0.3)$)
                    to[out=45,in=135] ($(x6) + (0.45,0.45)$)
                    to[out=-45,in=45] ($(x3) + (0.3,-0.3)$)
                    to[out=-135,in=-45] ($(x6) + (-0.45,-0.45)$)
                    to[out=135,in=-135] ($(x1) + (-0.3,0.3)$)}
                    
                \newcommand*{\pathriv}{	($(x2) + (0,0.4)$)
                    to[out=0,in=180] ($(x3) + (0,0.4)$)
                    to[out=0,in=90] ($(x3) + (0.5,0)$)
                    to[out=-90,in=0] ($(x3) + (0,-0.4)$)
                    to[out=180,in=0] ($(x2) + (0,-0.4)$)
                    to[out=180,in=-90] ($(x2) + (-0.5,0)$)
                    to[out=90,in=180] ($(x2) + (0,0.4)$)}
                
                \draw[hyperedge,looseness=.85, draw=BurntOrange,fill=BurntOrange,fill opacity=.1] \pathri;
                \draw[hyperedge,looseness=.85,draw=Red,fill=Red,fill opacity=.1] \pathrii;
                \draw[hyperedge,looseness=.85,draw=Fuchsia,fill=Fuchsia,fill opacity=.1] \pathriii;
                \draw[hyperedge,looseness=.85,draw=NavyBlue,fill=NavyBlue,fill opacity=.1] \pathriv;

                \node[text=BurntOrange] at (2,0.75) {$R_1$};
                \node[text=Red] at (-0.5,-1) {$R_2$};
                \node[text=Fuchsia] at (0.5,-1) {$R_3$};
                \node[text=Cyan] at (0.5,-2) {$R_4$};
                
            \end{scope}
        \end{tikzpicture}
        \caption{Query hypergraph}
        \label{fig:123-12-13-23-hg}
    \end{subfigure}
    \hfill
    \begin{subfigure}{.66\linewidth}
        \small
        \centering
        \begin{tikzpicture}[
			level distance=1.1cm,
			level 1/.style={sibling distance=2cm}
        ]
			\node { $\lambda: \{ R_1 \}$, $\chi : \{ x_1, x_2, x_3, x_4 \}$ }
				child { node[align=center] {$\lambda: \{ R_2 \}$ \\ $\chi: \{ x_1, x_2, x_5 \}$} edge from parent[-] }
                child { node[align=center] {$\lambda: \{ R_3 \}$ \\ $\chi: \{ x_1, x_3, x_6 \}$} edge from parent[-] }
                child { node[align=center] {$\lambda: \{ R_4 \}$ \\ $\chi: \{ x_2, x_3, x_7 \}$} edge from parent[-] } ;
		\end{tikzpicture}
        \caption{A join tree / hypertree decomposition}
        \label{fig:123-12-13-23-meta}
    \end{subfigure}
    \hfill\ 

    \hfill
    \begin{subfigure}{.32\linewidth}
        \centering
        \begin{tikzpicture}[xscale=.75,yscale=.75]
            \node[label={center:$R_1$}] (R1) at (0,1) {};
            \node[label={center:$R_2$}] (R2) at (-1,0) {};
            \node[label={center:$R_3$}] (R3) at (0,-1) {};
            \node[label={center:$R_4$}] (R4) at (1,0) {};
            \draw (R1) -- (R2);
            \draw (R1) -- (R3);
            \draw (R1) -- (R4);
            \draw (R2) -- (R3);
            \draw (R2) -- (R4);
            \draw (R3) -- (R4);
        \end{tikzpicture}
        \caption{Query graph}
        \label{fig:123-12-13-23-qg}
    \end{subfigure}
    \hfill
    \begin{subfigure}{.32\linewidth}
        \centering
        \begin{tikzpicture}[
			level distance=0.6cm,
			level 1/.style={sibling distance=1.25cm},
        ]
			\node { $\bowtie$ }
                child { node {$\bowtie$} edge from parent[-]
    				child { node {$\bowtie$} edge from parent[-]
                        child { node {$R_1$} edge from parent[-] }
                        child { node {$R_2$} edge from parent[-] }
                    }
                    child { node {$R_3$} edge from parent[-] }            
                } child { node {$R_4$} edge from parent[-] }
            ;
		\end{tikzpicture}
        \caption{A width-1 query plan}
        \label{fig:123-12-13-23-hierarchical-plan}
    \end{subfigure}
    \hfill
    \begin{subfigure}{.32\linewidth}
        \centering
        \begin{tikzpicture}[
			level distance=0.6cm,
			level 1/.style={sibling distance=1.25cm}
        ]
			\node { $\bowtie$ }
                child { node {$\bowtie$} edge from parent[-]
    				child { node {$\bowtie$} edge from parent[-]
                        child { node {$R_2$} edge from parent[-] }
                        child { node {$R_3$} edge from parent[-] }
                    }
                    child { node {$R_1$} edge from parent[-] }            
                } child { node {$R_4$} edge from parent[-] }
            ;
		\end{tikzpicture}
        \caption{A width-2 query plan}
        \label{fig:123-12-13-23-non-hierarchical-plan}
    \end{subfigure}
    \hfill\ 
    \caption{The hypergraph, a join tree, the query graph, and two query plans of the query $Q_{\ref{ex:123-12-13-23-intro}}$ in \cref{ex:123-12-13-23-intro}}
    \label{fig:123-12-13-23}
\end{figure}
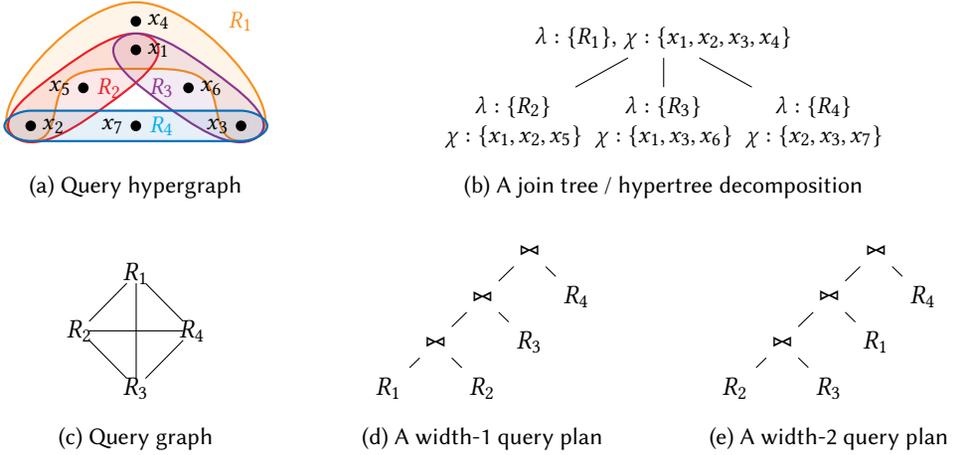

%% file: 2_preliminaries.tex
\section{Preliminaries}\label{sec:preliminaries}

We write $2^S$ for the set of all subsets, i.e., the power set, of a set $S$. If a function $f : A \to B$ and $S \subseteq A$, we let $f(S) = \set{ f(a) : a \in S }$. For a set of sets $S$, we write $\cup S = \bigcup_{x \in S} x$.  If $f : A \to T$, where $T$ is a set, and $S \subseteq A$, we let $f(S) = \bigcup_{x \in S} f(x)$.

\subsection{Conjunctive Queries and Hypergraphs}

Conjunctive queries can be represented by hypergraphs, and our terminology follows the seminal references on this topic~\cite{DBLP:journals/jcss/GottlobLS02,DBLP:journals/jcss/GottlobLS03,DBLP:conf/wg/GottlobGMSS05,DBLP:conf/pods/GottlobGLS16,wang_yannakakis_2025}.
Let $\mathcal{D} = \{R_1[\bar{x}_1], \cdots, R_n[\bar{x}_n]\}$ be a database consisting of $n$ relations, where $\bar{x}_i$ is the set of attributes of relation $R_i$. A conjunctive query (with self-predicates) is defined as 
\[
    Q \gets \pi_{\bar{y}}\left(\sigma_{1}(R_1[\bar{x}_1]) \Join \cdots \Join \sigma_{n}(R_n[\bar{x}_n]\right)),
\]
where $\sigma_{i}$ is the set of selection predicates involving only attributes in $\bar{x}_i$, $\bar{y}$ is the set of output attributes, and $\Join$ denotes natural join (with attribute renaming to avoid ambiguity). If $\bar{y} = \bigcup_{i \in [n]} \bar{x}_i$, the query is called a full join query and we omit $\bar{y}$ for simplicity; if $\bar{y} = \emptyset$, the query is a Boolean query that returns true if there exists a tuple satisfying all specified joins and predicates, or false otherwise.  We also require the query to be \emph{self-join-free} at the logical level:  If multiple relations correspond to the same physical relation, they are treated as distinct logical relations with unique~renamings.

We define the associated \emph{hypergraph} for the conjunctive query $Q$ as $H=(V(H), E(H))$, where the vertices set $V(H)$ consists of all attributes appearing in the query, i.e., $V(H) = \bigcup_{i \in [n]} \bar{x}_i$, and the \emph{hyperedges} set $E(H) \subseteq 2^{V(H)} \setminus \set{ \emptyset }$ contains one hyperedge for each relation, i.e., $E(H) = \{e_1, \cdots, e_n\}$, where each $e_i = \bar{x}_i$.

\subsection{Acyclicity and Join Trees}

A conjunctive query is acyclic if its associated hypergraph is acyclic, and we define acyclicity in the standard way in database theory, e.g.,~\cite[Section 6.4]{abiteboul_foundations_1996}. Namely, a hypergraph $H$ is acyclic if and only if the output of the GYO algorithm~\cite{grahamGYO, DBLP:conf/compsac/YuO79} on $H$ is empty~\cite[Theorem 6.4.5]{abiteboul_foundations_1996}. Equivalently, $H$ is acyclic if and only if there exists a \emph{join tree} \cite[Theorem 6.4.5]{abiteboul_foundations_1996}, or, a width-1 hypertree decomposition~\cite{DBLP:journals/jcss/GottlobLS02}.

Slightly differently from the usual definition of join trees~\cite{abiteboul_foundations_1996}, where each vertex is simply a relation, in this paper, we define them using notations for hypertree decompositions~\cite{DBLP:journals/jcss/GottlobLS02}, to make it easier to generalize to the meta-decompositions that we will define later.

Let $Q$ be an acyclic query with associated hypergraph $H$. A \emph{join tree} of $Q$ is a labeled tree $T = (V(T), r(T), E(T), \chi, \lambda)$
 with vertices $V(T)$, root $r(T)$, and edges $E(T)$.
$\chi : V(T) \to 2^{V(H)}$ is a function s.t.\ 
$\bigcup_{ p \in V(T) } \chi(p) = V(H)$,
and $\lambda : V(T) \to 2^{E(H)}$.\footnote{For simplicity, we may use hyperedges (e.g., $R_1$) to label and identify the vertices of join trees instead of writing the full $\lambda$- and $\chi$-labels when the context is clear.} And $T$ needs to satisfy the following~ conditions:
\begin{enumerate}[label=(C\textsubscript{\arabic*}),ref=(C\textsubscript{\arabic*}),leftmargin=*]
	\item\label{H1}\gdef\Giref{\ref{H1}} \textbf{Coverage condition.}  For each $e \in E(H)$, there exists exactly one $p \in V(T)$ with $\lambda(p) = \set{e}$.
	\item\label{H2}\gdef\Giiref{\ref{H2}} \textbf{Connectedness condition.} For each $v \in V(H)$, the set of vertices $\set{ p \in V(T) : v \in \chi(p) }$ induces a connected subtree of $T$.
    Or, equivalently, for all pairs of vertices $p, q \in V(T)$ and all $s$ on the (unique) path between $p$ and $q$ on $T$, we have $\chi(p) \cap \chi(q) \subseteq \chi(s)$.
	\item\label{H3}\gdef\Giiiref{\ref{H3}} \textbf{Width-1.} For all $p \in V(T)$, $|\lambda(p)| = 1$ and, without loss of generality, $\chi(p) = V(\lambda(p)) $.
\end{enumerate}

\begin{example}
    We revisit the query $Q_{\ref{ex:123-12-13-23-intro}}$ in \cref{ex:123-12-13-23-intro}. Its hypergraph is illustrated in \cref{fig:123-12-13-23-hg}. It is easy to verify that the join tree in \cref{fig:123-12-13-23-meta} satisfies conditions \ref{H1} and \ref{H3}. Regarding condition \ref{H2}, consider, e.g., attribute \( x_1 \). The vertices in the join tree that include \( x_1 \) are the ones with \( R_1 \), \( R_2 \), and \( R_3 \), which induce a connected subtree. Similar arguments apply for all other attributes as well.
\end{example}

We assume all trees are rooted and consider rerootings as distinct trees.
We write $T_p$ as the subtree rooted at $p$, i.e., with $p$ and all its descendants. For every neighbor (child or parent) $q$ of vertex $p$ in $T$, we write $T_{p \to q}$ for the subtree with
\[
    V( T_{p \to q} ) = 
    \set{ s \in V(T) \setminus \{ p \} : \text{the path from } p \text{ to } s \text{ in } T \text{ contains } q }\text.
\]
Intuitively, if $q$ is a child of $p$, it is the same as $T_q$. If $q$ is a parent of $p$, it denotes the subtree of $T$ with all vertices except those in $T_p$.

We define the \emph{fan-out} $f(p)$ of any $p \in V(T)$ as 
the number of children of $p$ on $T$.  The \emph{fan-out} of the entire tree $T$, denoted by $f(T)$, is the maximum fan-out among all nodes of $T$.

Given a join tree $T$, for every $p \in V(T)$, we define the \emph{induced query} $Q(p)$ as the join of all relations in the subtree rooted at $p$. 

One important observation is that \emph{an acyclic conjunctive query may have an exponential number of distinct join trees}.

\begin{theorem}
    \label{thm:exp-join-trees}
    Consider \emph{star queries}\footnote{We note that, in some literature, e.g.,~\cite{birler_optimizing_2025}, such queries are called ``\emph{clique}'' queries, because their \emph{query graphs} are cliques. 
    }
    of the form
    \[
        Q_{\rm star} \gets R_1[\bar{x}_1] \Join \cdots \Join R_n[\bar{x}_n], 
    \]
    where for any $i, j \in [n]$, $\bar{x}_i \cap \bar{x}_j = \set{x}$, for some attribute $x$.  There are $n^{n-1}$ possible join trees for such a star query with $n$ relations. 
\end{theorem}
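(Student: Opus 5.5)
The plan is to reduce the count to Cayley's formula. Since each vertex of a join tree is labeled by exactly one relation (conditions \ref{H1} and \ref{H3}), and $\chi$ is determined by $\lambda$, a join tree is just a rooted tree on the vertex set $\set{R_1,\dots,R_n}$. Rerootings count as distinct trees. The key claim will be that, for the star query, \emph{every} rooted labeled tree on these $n$ vertices satisfies the connectedness condition \ref{H2}. Granting the claim, the number of join trees equals the number of rooted labeled trees on $n$ vertices. Cayley's formula gives $n^{n-2}$ unrooted labeled trees, and each has $n$ choices of root, so the total is $n \cdot n^{n-2} = n^{n-1}$.

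To verify the claim, I check \ref{H2} attribute by attribute.

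\emph{The shared attribute $x$.} By hypothesis $x \in \bar{x}_i \cap \bar{x}_j$ for all $i \neq j$, so $x$ lies in every $\chi(p)$. Hence the set of vertices containing $x$ is all of $V(T)$, which is trivially connected.

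\emph{Any other attribute $y \neq x$.} Suppose $y$ appeared in two relations $\bar{x}_i$ and $\bar{x}_j$ with $i \neq j$. Then $y \in \bar{x}_i \cap \bar{x}_j = \set{x}$, a contradiction. So $y$ occurs in exactly one vertex, and a single vertex is a connected subtree.

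I will state the path form of \ref{H2} as an equivalent check. For distinct $p,q$, we have $\chi(p) \cap \chi(q) = \set{x}$, and $x \in \chi(s)$ for every $s$ on the path between them.

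The only step needing care is the bijection between join trees and rooted labeled trees on $[n]$. I must argue two things. First, distinct underlying trees or roots give distinct join trees, which holds because the paper treats rerootings as distinct. Second, no other data is free: $\lambda$ is a bijection onto singletons by \ref{H1} and \ref{H3}, and $\chi(p)=V(\lambda(p))$ is fixed. Nothing else is a real obstacle, since Cayley's formula is cited as a classical result. For $n=1$ the formula gives $1^0=1$, which is consistent.
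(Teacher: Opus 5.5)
Your proposal is correct and follows the same route as the paper: every labeled tree on the $n$ relations satisfies the connectedness condition because all pairwise intersections equal $\{ x \}$, so counting $n^{n-2}$ unrooted labeled trees (Cayley/Pr\"ufer) times $n$ choices of root gives $n^{n-1}$. Your attribute-by-attribute check of the connectedness condition and the explicit bijection between join trees and rooted labeled trees fill in details that the paper only asserts.
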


\begin{example}
    \label{ex:star}
    The query $Q_{\ref{ex:star}} \gets R_1[x_1, x_2] \bowtie R_2[x_1, x_3] \bowtie R_3[x_1, x_4] \bowtie R_4[x_1, x_5]$ is such a star query. \cref{fig:star-hg} shows its associated hypergraph, and \cref{fig:star-jts} shows two of its valid join trees.
    \input{figs/star}
\end{example}

In fact, the possible join trees of such queries in \cref{thm:exp-join-trees} and \cref{ex:star} are simply \emph{all possible trees with $n$ vertices} (or, equivalently, all spanning trees of a complete graph $K_n$), each labeled by one relation. Any such tree satisfies the connectedness condition~\ref{H2}. 
There are a total of $n^{n-1}$ such trees ($n^{n-2}$ distinct non-rooted trees~\cite{prufer_neuer_1918}, each of which has $n$ possible root nodes).

\subsection{Query Plans and Join Ordering}
\label{sec:prelim-plans}

A \emph{query plan} $\mathcal{P} = (V(\mathcal{P}), r(\mathcal{P}), E(\mathcal{P}))$ for a query $Q$ is a tree rooted at $r(\mathcal{P})$, where each non-leaf node is a relational operator (e.g., join, projection, selection), and each leaf node corresponds to a base relation. 
A plan specifies an order of operations to evaluate the query, where the outputs of child nodes serve as inputs to the parent.
In this work, we focus on query optimization for binary joins, where each join operation in the query plan has exactly two child nodes.  
For the purpose of join ordering, \emph{we use only join operators in internal nodes and assume selections and projections are inherently integrated in the join
or scan
operation at the lowest possible level.}
Namely, in each step of join or scan, we apply a filter condition if all relevant attributes already appear in the subtree, and we project out attributes that neither appear in the output nor act as a join key with some remaining relation.

Given a query plan $\mathcal{P}$, for every $p \in V(\mathcal{P})$, we define the \emph{induced query} $Q(p)$ as the join of all relations in the subtree rooted at $p$ (with all possible projection and filters inherently applied).

We exclude Cartesian products: for every join node, the outputs of its children must share some common attribute as the join key.

A query plan is called \emph{left-deep} (or \emph{right-deep}) if all join operations have their right (left) child node corresponding to a base relation; otherwise, the query plan is referred to as \emph{bushy}.

We apply cost functions $\mathcal{C}$ on query plans, in the following form, to measure the optimality of query plans.
\[
    \mathcal{C}(\mathcal{P})= \left\{ \begin{array}{ll}
        c(r(\mathcal{P})), & \text{if $r(\mathcal{P})$ is single relation} \\
        c(r(\mathcal{P}))\oplus \mathcal{C}(\mathcal{P}_1) \oplus \mathcal{C}(\mathcal{P}_2), & \text{if $r(\mathcal{P})$ is join operation}
    \end{array} 
    \right.
\]
where $c(r(\mathcal{P}))$ is the cardinality of the output of $r(\mathcal{P})$; $\mathcal{P}_1$ and $\mathcal{P}_2$ (if applicable) are the left and right subtrees of $r(\mathcal{P})$, respectively; and $\oplus$ is some operator to aggregate $c(r)$, $\mathcal{C}(\mathcal{P}_1)$, and $\mathcal{C}(\mathcal{P}_2)$.  In this work, we adopt the cost function $C_{\rm out}$ to minimize the total intermediate result size by substituting the $\oplus$ operator simply with arithmetic $+$, as is standard in, e.g., \cite{stoian_dpconv_2024,fegaras_new_1998,neumann_adaptive_2018,cluet_complexity_1995}.  

The typical approach to find the optimal plan is to recursively find the optimal split of a set of relations into two disjoint subsets~\cite{selinger_access_1979,vance_rapid_1996,vance_join-order_1998,moerkotte_dynamic_2008,stoian_dpconv_2024}.   With a given cost function $\mathcal{C}$, the problem can be solved by the dynamic programming~(DP) algorithm DPccp~\cite{moerkotte_dynamic_2008} in $O(3^n)$. The recently proposed algorithm DPconv~\cite{stoian_dpconv_2024} reduces the complexity for the cost model $C_{\rm out}$ to $O(2^n n^3 W \log (Wn))$, where $W$ is the largest join cardinality.

%% file: figs/star.tex
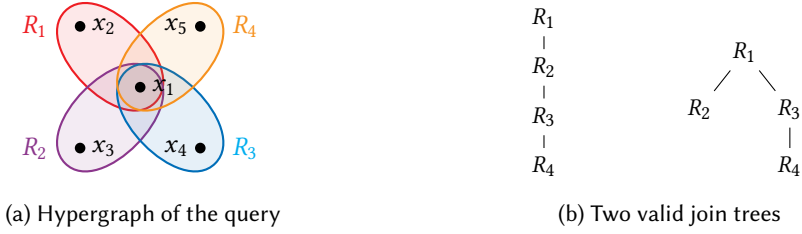
\begin{figure}
    \centering
    \begin{subfigure}{.49\linewidth}
        \centering
        \vspace{-\baselineskip}
        \begin{tikzpicture}[hypergraph,xscale=.8,yscale=0.8,trim left=0cm,trim right=0cm]
            \node[vertex,label={east:$x_1$}] (x1) at (0, 0) {};
            \node[vertex,label={east:$x_2$}] (x2) at (-1, 1) {};
            \node[vertex,label={east:$x_3$}] (x3) at (-1, -1) {};
            \node[vertex,label={west:$x_4$}] (x4) at (1, -1) {};
            \node[vertex,label={west:$x_5$}] (x5) at (1, 1) {};
        
            \begin{scope}[on background layer]
                
                \newcommand*{\pathri}{($(x2) + (-0.25,0.25)$)
                    to[out=45,in=45] ($(x1) + (0.25,-0.25)$)
                    to[out=-135,in=-135] ($(x2) + (-0.25,0.25)$)}
                    
                \newcommand*{\pathrii}{	($(x1) + (0.25,0.25)$)
                    to[out=135,in=135] ($(x3) + (-0.25,-0.25)$)
                    to[out=-45,in=-45] ($(x1) + (0.25,0.25)$)}
                
                \newcommand*{\pathriii}{($(x1) + (-0.25,0.25)$)
                    to[out=45,in=45] ($(x4) + (0.25,-0.25)$)
                    to[out=-135,in=-135] ($(x1) + (-0.25,0.25)$)}

                \newcommand*{\pathriv}{	($(x5) + (0.25,0.25)$)
                    to[out=135,in=135] ($(x1) + (-0.25,-0.25)$)
                    to[out=-45,in=-45] ($(x5) + (0.25,0.25)$)}
                
                \draw[hyperedge,looseness=1,draw=Red,fill=Red,fill opacity=.1] \pathri;
                \draw[hyperedge,looseness=1,draw=Fuchsia,fill=Fuchsia,fill opacity=.1] \pathrii;
                \draw[hyperedge,looseness=1,draw=NavyBlue,fill=NavyBlue,fill opacity=.1] \pathriii;
                \draw[hyperedge,looseness=1, draw=BurntOrange,fill=BurntOrange,fill opacity=.1] \pathriv;
        
                \node[text=Red] at (-1.75,1) {$R_1$};
                \node[text=Fuchsia] at (-1.75,-1) {$R_2$};
                \node[text=Cyan] at (1.75,-1) {$R_3$};
                \node[text=BurntOrange] at (1.75,1) {$R_4$};
                
            \end{scope}
        \end{tikzpicture}
        \vspace{-0.5\baselineskip}
        \caption{Hypergraph of the query}
        \label{fig:star-hg}
    \end{subfigure}
    \hfill
    \begin{subfigure}{.49\linewidth}
        \small
        \centering
        \hfill
        \begin{tikzpicture}[
			level distance=0.65cm,
			level 1/.style={sibling distance=1.2cm},
        ]
			\node { $R_1$ }
				child { node {$R_2$} edge from parent[-]
                    child { node {$R_3$} edge from parent[-]
                        child { node {$R_4$} edge from parent[-] }
                    } 
                };
		\end{tikzpicture}
        \hfill
        \begin{tikzpicture}[
			level distance=0.75cm,
			level 1/.style={sibling distance=1.2cm},
        ]
			\node { $R_1$ }
				child { node {$R_2$} edge from parent[-]
                }
                child { node {$R_3$} edge from parent[-]
                    child { node {$R_4$} edge from parent[-] }
                } ;
		\end{tikzpicture}
        \hfill
        \ 
        \caption{Two valid join trees}
        \label{fig:star-jts}
    \end{subfigure}
    \vspace{-0.5\baselineskip}
    \caption{Hypergraph and two valid join trees of the star query $Q_{\ref{ex:star}}$ in \cref{ex:star}}
    \label{fig:star}
\end{figure}

%% file: 3_hierarchical.tex
\section{Widths of Query Plans and Width-1 Query Plans}
\label{sec:hierarchical-query-plan}

In this section, we begin by introducing our notion of widths for query plans, which connects a plan's structure to the potential sizes of its intermediate results.  
We consider the ``interface'' of each intermediate step in the query plan, i.e., the set of attributes that serve as join keys with the remaining relations.
Intuitively, the width of a query plan shows the number of relations needed to cover the interface of each step in the query plan.
This measure helps quantify the sizes of intermediate results and the complexity of a query plan before looking at the specific database~instance.  
\input{figs/hierarchical}
\begin{example} \label{ex:hierarchical}
    Consider the query 
    \[Q_{\ref{ex:hierarchical}} \gets \pi_\emptyset(R_1[x_1, x_2, x_3] \bowtie R_2[x_1, x_4, x_5] \bowtie R_3[x_5, x_6] \bowtie R_4[x_3, x_7]). \]
    Its hypergraph is shown in \cref{fig:hierarchical-hg} and a join tree in \cref{fig:hierarchical-jt}.
    Given this join tree, \cref{fig:hierarchical-plan} is an example of a valid width-1 query plan.
    After joining $R_1$ and $R_4$, $x_1$ is the only attribute in the intersection with the remaining relations $R_2$ and $R_3$, and it can be covered by only one relation $R_1$.
    Similarly, at the step that joins $R_2$ and $R_3$, $x_1$ is the only attribute in the interface.
    A width of 1 implies that, if the size of any one single relation is bounded by $N$, then the size of each intermediate step can be bounded by $N$, as it is at most the size of the relation that covers the interface.
    
    On the contrary, \cref{fig:non-hierarchical-plan} shows a query plan that is width-2: When we join $R_1$ and $R_2$, the interface contains $x_5$ and $x_3$, which are needed to join $R_3$ and $R_4$, respectively.
    But there is no single relation that can cover both $x_5$ and $x_3$, so the intermediate sizes can be as high as $N^2$, instead of $N$ as in the width-1 plan in \cref{fig:hierarchical-plan}.
\end{example}

More formally, given a query plan $\mathcal{P} = (V(\mathcal{P}), r(\mathcal{P}), E(\mathcal{P}))$, for any node $p \in V(\mathcal{P})$, let $H_p = (V(H_p), E(H_p))$ be the hypergraph induced by the relations being joined in the subtree rooted at $p$, and let the induced query $Q(p) = \mathop{\Join}_{e \in E(H_p)} e$. Let $\overline H_p = (V(\overline H_p), E(\overline H_p))$ be the \emph{residual hypergraph} with relations that are not yet joined in the subtree rooted at $p$, i.e., $E(\overline H_p) = E(H) \setminus E(H_p)$, and $V(\overline H_p) = \cup(E(\overline H_p))$.  We define the \emph{interface} $I(p) = V(H_p) \cap V(\overline H_p)$, i.e., the set of attributes that will serve as join keys with the remaining relations. 
The \emph{width of node} $p$ is the smallest number of relations needed to cover this interface, i.e.,
\[
    w(p) = \min \left\{ |S| : S \subseteq E(H_p) \text{ and } I(p) \subseteq \cup S \right\}.
\]
The width of the query plan $\mathcal{P}$ is the maximum width over all nodes.

The width of a query plan indicates an upper bound on the minimal intermediate results over all nodes $p$ during evaluation.

\begin{theorem} \label{thm:w1plan-size-bound}
Given a Boolean conjunctive query $Q$ and a query plan $\mathcal{P}$ for the query, the maximum intermediate result size, i.e.,
    $\max_{p \in V(H_p)} |\pi_{I(p)} Q(p)|,$
is upper-bounded by $O(N^{w(\mathcal{P})})$ in the worst case, where $N$ is the maximum cardinality of any base relation.
\end{theorem}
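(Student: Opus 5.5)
Fix an arbitrary node $p \in V(\mathcal{P})$ and bound $|\pi_{I(p)} Q(p)|$ by $N^{w(p)}$. Taking the maximum over $p$ then gives $O(N^{w(\mathcal{P})})$, since $w(\mathcal{P}) = \max_p w(p)$.

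By the definition of $w(p)$, there is a set $S \subseteq E(H_p)$ with $|S| = w(p)$ and $I(p) \subseteq \cup S$. The key observation is that $Q(p)$ is a join over all relations in $E(H_p)$. Dropping conjuncts from a join can only enlarge the result, and projection is monotone. Hence
\[
    \pi_{I(p)} Q(p) \subseteq \pi_{I(p)} \Bigl( \mathop{\Join}_{e \in S} \pi_{e \cap V(H_p)} (\sigma_e(R_e)) \Bigr) \subseteq \pi_{I(p)} \Bigl( \mathop{\bigtimes}_{e \in S} R_e \Bigr),
\]
where the last step treats the relations in $S$ as independent, i.e., as a Cartesian product. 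We need care on one point: every attribute of $I(p)$ lies in some $e \in S$, so the projection onto $I(p)$ is well defined on the right-hand side. It is also the image of a tuple of the left-hand side under restriction, because any tuple of $Q(p)$ restricted to $\cup S$ agrees with tuples of each $R_e$ for $e \in S$.

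The cardinality bound then follows. A projection of a Cartesian product of $|S|$ relations, each of size at most $N$ (selections only shrink them), has at most $\prod_{e \in S} |R_e| \le N^{|S|} = N^{w(p)}$ tuples.

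The only mildly delicate step is formalizing the containment of projections when selections and projections are pushed into the plan, as the paper assumes for intermediate operators. I would handle this by arguing at the level of the full, unprojected join $Q(p)$: map each tuple $t \in Q(p)$ to the tuple $(t|_{e})_{e \in S}$ of the product, and observe that $t|_{I(p)}$ is determined by this tuple because $I(p) \subseteq \cup S$. This gives an injection from $\pi_{I(p)} Q(p)$ into the product's projection, which finishes the bound.

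It is also worth noting that the statement is only an upper bound. No tightness or worst-case-instance construction is required, so the proof stops there.
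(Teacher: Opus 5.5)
Your proof is correct and follows the same route as the paper. You take a minimum cover $S \subseteq E(H_p)$ of $I(p)$, observe that $\pi_{I(p)} Q(p)$ is contained in (the projection of) the join of the relations in $S$, and bound that by a Cartesian product of size at most $N^{|S|}$. Your version is also slightly more careful in two respects: you bound every node by $N^{w(p)} \le N^{w(\mathcal{P})}$, whereas the paper argues only at a node $p^*$ attaining the maximum width, and you make the projection containment precise through the restriction map.
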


This notation of width has a direct connection with join trees: %

\begin{example}
    We revisit the query $Q_{\ref{ex:hierarchical}}$ from \cref{ex:hierarchical}.
    Take $p$ as the node of the join tree with $R_2$, as an example. The induced query $Q(p) = R_2 \bowtie R_3$. In the width-1 query plan in \cref{fig:hierarchical-plan}, the right child of the root node, which we denote by $q$, induces exactly $Q(p) = R_2 \bowtie R_3$.
    In the other direction, for the node $q$ of the query plan, we consider its children $u$ and $v$ (which correspond to the leaf nodes with $R_2$ and $R_3$ respectively). $u$ has interface $I(u) = \{ x_1, x_5 \}$, which can be covered by a single relation $R_2$. And $v$ has interface $I(v) = \{ x_5 \}$, which can be covered by $R_3$.
    However, for the width-2 query plan in \cref{fig:non-hierarchical-plan}, there is no such $q$ of the query plan such that $Q(p) = Q(q) = R_2 \bowtie R_3$.
\end{example}

We formally define this correspondence as follows:

\begin{definition} \label{def:w1plan-induced-by-jt}
    We say a query plan $\mathcal{P}$ is \emph{induced by} a join tree $T$~if
    \begin{enumerate}[leftmargin=*]
        \item for each node $p \in V(T)$, there exists $q \in V(\mathcal{P})$ with $Q(p) = Q(q)$,
        i.e., the induced query of node $p$ in the join tree is exactly the induced query of node $q$ in the query plan; and
        \item for each non-leaf node $q \in V(\mathcal{P})$ with two children $u$ and $v$, there exists some $s \in E(H_u)$ and $t \in E(H_v)$ such that $I(u) \subseteq s$, $I(v) \subseteq t$, and there exists an edge between $s$ and $t$ in $E(T)$.
    \end{enumerate}
\end{definition}

Under this definition, we have that

\begin{theorem} \label{thm:w1plan-jt}
    Given a query plan $\mathcal{P}$ for a query $Q$, the width $w(\mathcal{P}) = 1$ if and only if there exists a join tree $T$ of $Q$ that induces $\mathcal{P}$.
\end{theorem}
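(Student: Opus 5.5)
We prove the two directions separately. Throughout, for a plan node $q$ we write $E(H_q)$ for the relations below $q$, and we use that the plan has no Cartesian products.

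\emph{($\Leftarrow$) A join tree inducing $\mathcal{P}$ forces width 1.} Suppose $T$ induces $\mathcal{P}$. Every non-root node $u$ of $\mathcal{P}$ is a child of some join node $q$. By condition~(2) of \cref{def:w1plan-induced-by-jt}, some single relation $s \in E(H_u)$ satisfies $I(u) \subseteq s$, so $w(u) = 1$. For the root, the residual hypergraph is empty, hence $I = \emptyset$ and its width is trivially at most~1. Therefore $w(\mathcal{P}) = 1$. This direction is immediate.

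\emph{($\Rightarrow$) Width 1 yields a join tree that induces $\mathcal{P}$.} We build $T$ bottom-up along $\mathcal{P}$, maintaining the following invariant for every plan node $q$: there is a join tree $T^{(q)}$ of the subquery $E(H_q)$, rooted at a relation $\rho(q) \in E(H_q)$ with $I(q) \subseteq \rho(q)$, whose subtrees induce exactly the subqueries $Q(\cdot)$ of plan nodes below $q$.

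For a leaf, $T^{(q)}$ is the single relation, and the invariant holds.

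For a join node $q$ with children $u$ and $v$, width 1 gives $s \in E(H_u)$ with $I(u) \subseteq s$ and $t \in E(H_v)$ with $I(v) \subseteq t$. We would like to take $s = \rho(u)$ and $t = \rho(v)$. Since rerooting a join tree preserves (C$_2$), we may reroot $T^{(u)}$ at $s$ and $T^{(v)}$ at $t$ whenever needed. We then add the edge $s$–$t$.

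Connectedness (C$_2$) survives this step. Any attribute $x$ shared between $H_u$ and $H_v$ lies in $I(u) \cap I(v)$, and hence in both $s$ and $t$. So the occurrence sets of $x$ in the two trees are each connected and are joined through the edge $s$–$t$.

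Next we choose the root $\rho(q) \in \{s, t\}$ so that $I(q) \subseteq \rho(q)$. We know $I(q) \subseteq I(u) \cup I(v) \subseteq s \cup t$, but this alone does not put $I(q)$ inside one of them. Here width 1 of $q$ itself gives some relation $r \in E(H_q)$ covering $I(q)$, not necessarily $s$ or $t$. So we instead keep $\rho(q) = r$ and reroot the combined tree at $r$.

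\emph{Main obstacle.} Rerooting conflicts with the requirement that $Q(p)$ for each join-tree node $p$ matches a plan node. Rooted subtrees change under rerooting, so condition~(1) of \cref{def:w1plan-induced-by-jt} is the hard part.

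The key fix is to delay rerooting. We attach $v$'s tree as a child subtree of $s$ inside $u$'s tree, oriented with root $t$. Symmetrically, we could attach $u$'s tree below $t$. We make this choice according to which side contains the covering relation $r$ of the parent's interface. To justify this, we first prove a lemma: for every plan node $q$, $I(q) \subseteq \rho(q)$ can be arranged with $\rho(q) \in \{\rho(u), \rho(v)\}$.

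The lemma argument runs as follows. $I(q) \subseteq I(u) \cup I(v)$. If $I(q)$ intersects both $I(u) \setminus t$ and $I(v) \setminus s$, then the covering relation $r$ lies, say, in $H_u$. In that case $r$ contains an attribute of $I(v) \setminus s$, which must also appear in $H_u$ and therefore in $I(u) \subseteq s$, a contradiction.

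With the lemma in hand, every tree we attach is rooted at $t$. Its subtree equals $T^{(v)}$, so $Q(t\text{-subtree}) = Q(v)$, and earlier subtrees are untouched. The top-level tree then induces $Q(r(\mathcal{P})) = Q$. Condition~(2) holds by construction, since each join node contributes exactly the edge $s$–$t$. The remaining step is to verify (C$_1$) and (C$_3$), which is routine.
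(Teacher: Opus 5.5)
Your proof is correct, and in both directions it is more careful than the paper's.

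\textbf{The ($\Rightarrow$) direction.} The paper also builds $T$ bottom-up along $\mathcal{P}$. At each join node, however, it reroots the two sub-join trees at \emph{arbitrary} relations covering $I(c_1)$ and $I(c_2)$, links them, and checks only connectedness. This secures condition~(2) of \cref{def:w1plan-induced-by-jt} but never addresses condition~(1), and free rerooting can break that condition. Take $R_1[x,a_1] \bowtie R_2[x,a_2] \bowtie R_3[x,a_3] \bowtie R_4[x,a_4]$ with plan $R_4 \bowtie ((R_1 \bowtie R_2) \bowtie R_3)$. Every relation covers every interface $\{x\}$, so the paper's construction may link $R_1$--$R_2$, then $R_2$--$R_3$, then $R_4$--$R_1$. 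On the resulting path, every choice of root leaves some rooted subtree equal to $\{R_2,R_3\}$, $\{R_1,R_4\}$ or $\{R_1,R_2,R_4\}$, and none of these is a plan node.

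Your invariant avoids this: the root $\rho(q)$ of $T^{(q)}$ covers $I(q)$. Your lemma propagates it, and the lemma is really a one-liner: if the width-1 witness $r$ for $I(q)$ lies in $E(H_u)$, then $I(q) \subseteq V(H_u) \cap V(\overline{H}_q) \subseteq I(u) \subseteq \rho(u)$. So you never reroot. You link $\rho(u)$ with $\rho(v)$ and hang one tree under the other's root, which makes every rooted subtree of the final tree equal to some $E(H_{q'})$. In the write-up, delete the exploratory rerooting steps, since the invariant already lets you take $s=\rho(u)$ and $t=\rho(v)$.

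\textbf{The ($\Leftarrow$) direction.} The paper argues via connectedness of $T$ plus condition~(1). But condition~(1) only maps join-tree nodes to plan nodes, so it says nothing about plan nodes such as $R_1 \bowtie R_4$ in \cref{fig:hierarchical-plan}. Reading $w(u) \le 1$ for every non-root $u$ directly off condition~(2), as you do, is the sound argument. To conclude $w(\mathcal{P}) = 1$ rather than $\le 1$, add that the children of a join node have nonempty interfaces, because Cartesian products are excluded.
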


Theorem~\ref{thm:w1plan-jt} directly leads to the relationship between width-1 query plans and acyclic queries:

\begin{theorem}\label{thm:acyclicwidth}
    A conjunctive query $Q$ has width-1 query plans if and only if $Q$ is acyclic.
\end{theorem}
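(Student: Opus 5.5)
The plan is to derive both directions from \cref{thm:w1plan-jt}, combined with the characterization of acyclicity by join trees recalled in \cref{sec:preliminaries}: $H$ is acyclic if and only if it has a join tree~\cite[Theorem 6.4.5]{abiteboul_foundations_1996}.

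\emph{($\Rightarrow$)} This direction is immediate. Suppose $\mathcal{P}$ is a query plan for $Q$ with $w(\mathcal{P}) = 1$. By \cref{thm:w1plan-jt} there is a join tree $T$ of $Q$ that induces $\mathcal{P}$. In particular a join tree of $Q$ exists, so $H$ is acyclic, and hence $Q$ is acyclic.

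\emph{($\Leftarrow$)} Suppose $Q$ is acyclic and fix a join tree $T$ of $Q$. I will build a plan $\mathcal{P}$ bottom-up and verify that $T$ induces it in the sense of \cref{def:w1plan-induced-by-jt}; \cref{thm:w1plan-jt} then gives $w(\mathcal{P}) = 1$.

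The construction is as follows. For a node $p$ of $T$ with children $c_1, \dots, c_k$, let $\mathcal{P}_p$ be the left-deep chain
\[
(\cdots((\lambda(p) \bowtie \mathcal{P}_{c_1}) \bowtie \mathcal{P}_{c_2}) \cdots) \bowtie \mathcal{P}_{c_k}.
\]
Set $\mathcal{P} = \mathcal{P}_{r(T)}$.

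Condition (1) of \cref{def:w1plan-induced-by-jt} holds by induction: the root of $\mathcal{P}_p$ joins exactly the relations of $T_p$, so its induced query equals $Q(p)$.

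For condition (2), take any join node. Its left input $u$ joins $p$ together with the subtrees $T_{c_1}, \dots, T_{c_{j-1}}$. Its right input $v$ is $\mathcal{P}_{c_j}$. By the connectedness condition~\ref{H2}:
\begin{itemize}
\item Any attribute of $T_{c_j}$ that also occurs outside $T_{c_j}$ must occur in both $c_j$ and $p$. Hence $I(v) \subseteq \chi(c_j)$.
\item Any attribute of the left part that occurs outside it must lie on a tree path leaving that part. Every such path passes through $p$, whether it goes to the parent of $p$ or to a later child $c_{j'}$ with $j' \geq j$. Hence $I(u) \subseteq \chi(p)$.
\end{itemize}
By~\ref{H3}, $\chi(p)$ and $\chi(c_j)$ are exactly the hyperedges $\lambda(p)$ and $\lambda(c_j)$. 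Thus we can take $s = \lambda(p)$ and $t = \lambda(c_j)$, which are adjacent in $T$, as required.

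The main obstacle is the no-Cartesian-product requirement. The left-deep chain above is a valid plan only if each join shares an attribute, that is, only if $\chi(p) \cap \chi(c_j) \neq \emptyset$ for every edge of $T$. This fails if $H$ is disconnected, in which case every plan must contain a Cartesian product. So I would restrict to connected $H$, as is implicit in the setting.

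Even for connected $H$, an arbitrary join tree may contain an edge $\{p, c\}$ with empty label intersection. I would handle this by choosing $T$ from the GYO reduction: there, each ear is attached to a witness hyperedge with which it shares an attribute, provided the hypergraph is connected. Alternatively, one can repair a given tree by re-hanging the subtree below an empty edge onto a node that shares an attribute with it. Connectedness guarantees such a node exists, and~\ref{H2} guarantees the re-hanging preserves the join-tree conditions. With this choice of $T$, the construction goes through and the theorem follows.
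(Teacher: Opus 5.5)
Your proof is correct and follows the paper's route. The paper obtains this theorem directly from \cref{thm:w1plan-jt} plus the join-tree characterization of acyclicity, and your per-node left-deep chain for ($\Leftarrow$) is exactly the plan shape used in the paper's proof of \cref{thm:opt-hierarchical-time}. Your connectedness caveat is a genuine omission in the paper's statement: an acyclic query such as $R[x] \bowtie S[y]$ has no Cartesian-product-free plan at all.

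One correction: for connected $H$, no join tree has an edge $\{p,c\}$ with $\chi(p) \cap \chi(c) = \emptyset$. By \ref{H2}, every attribute shared between the two components of $T$ minus that edge lies in both $\chi(p)$ and $\chi(c)$, so an empty intersection would force $H$ to be disconnected. The GYO-based choice and the re-hanging repair are therefore unnecessary, and any join tree works.
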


Width-1 query plans are closely related to structure-based query evaluation methods. Such methods follow a similar pattern: given a join tree, they apply a series of reductions from the leaves upwards until only a single node remains.  
Width-1 query plans exploit the join tree structure in a similar way, evaluating the query bottom-up: each node $p\in V(T)$ completes all its internal joins with its child subtrees before joining with its parent. Such a strategy provides several benefits.

\paragraph{Limiting intermediate result sizes.}
We have seen that, by projecting out attributes that are neither part of the final output nor in the interface at each step, we limit the sizes of intermediate results, as shown by \cref{thm:w1plan-size-bound}.
Although this approach excludes some alternative plans with higher widths, the excluded ones do not preserve such a theoretical guarantee.
    Taking again \cref{ex:hierarchical}, 
    if we choose the width-2 query plan in \cref{fig:non-hierarchical-plan}, $x_3$ has to be kept in all intermediate results before $R_4$ is joined at the very end, whereas, in the width-1 query plan in \cref{fig:hierarchical-plan}, it can already be projected out after completing $R_1 \bowtie R_4$.

\paragraph{Local optimization.}  Instead of searching in a vast space of arbitrary join orders for all relations, we can locally optimize the join order at each node of the join tree. For each node $p \in V(T)$ with children $c_1, \cdots, c_{f(p)}$ in the join tree, we only need to determine the optimal join order between $p$ and $Q(c_1), \dots, Q(c_{f(p)})$. The order in which these joins are performed can be optimized independently at each node, without considering other parts of the query. Such an approach, therefore, significantly prunes the search space for optimization, making plan selection much more tractable, and can yield near-optimal performance for most practical queries. In the experiments~(\cref{sec:exp}), we show in \cref{tab:query-props} that the fan-out $f(T)$ tends to be very low in real benchmarks.

\begin{example}
    Consider again the join tree in \cref{fig:hierarchical-jt} for the query $Q_{\ref{ex:hierarchical}}$. For width-1 query plans induced by this join tree, there must exist a subtree with $R_2 \bowtie R_3$, as required by \cref{def:w1plan-induced-by-jt}(1). It therefore remains only to optimize the order to join $R_1$, $(R_2 \bowtie R_3)$, and $R_4$, which is the local optimization problem at the root node $R_1$ of the join tree. If the cardinalities $|R_1 \bowtie R_4| < |(R_2 \bowtie R_3) \bowtie R_1|$, then the optimal plan in our cost model is the one in \cref{fig:hierarchical-plan}.
\end{example}

With only local search needed, assuming a low fan-out, finding the optimal width-1 query plans induced by a join tree can be done more efficiently than through a global search:

\begin{theorem} \label{thm:opt-hierarchical-time}
    Given a join tree $T$, the optimal width-1 query plan induced by $T$ can be found in time $O(f(T)2^{f(T)} |Q|)$.
\end{theorem}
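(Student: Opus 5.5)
The plan is to reduce the problem, via \cref{def:w1plan-induced-by-jt}, to independent local problems at the nodes of $T$. Each local problem is solved by a subset dynamic program over the children of a node, and the costs are then summed over $T$.

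\textbf{Structural step.} First I characterize the width-1 plans induced by $T$. By condition~(1) of \cref{def:w1plan-induced-by-jt}, for every $p \in V(T)$ the plan contains a node computing $Q(p)$. So the plan restricted to the subtree below that node is a plan for $Q(p)$ alone. This subplan combines the leaf $\lambda(p)$ with the already-built subplans for $Q(c_1),\dots,Q(c_{f(p)})$, where the $c_i$ are the children of $p$.

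I then use condition~(2) and the exclusion of Cartesian products to show that every join in this local part has one side of the form $\lambda(p) \bowtie \bigotimes_{i\in S} Q(c_i)$ for some $S \subseteq [f(p)]$, and the other side equal to a single $Q(c_j)$. The reason is the connectedness condition~\ref{H2}: children subtrees $T_{c_i}, T_{c_j}$ can share attributes only through $p$. A join of two groups of children without $p$ would therefore need the edge required by condition~(2) between two distinct child subtrees, and no such edge exists in $T$.

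Conversely, any such local left-deep ordering, with the ordering chosen independently at every node, yields a plan induced by $T$. Condition~(1) holds by construction. For condition~(2), I take $s=\lambda(p)$ and $t=\lambda(c_j)$: by~\ref{H2}, the interface of the subtree for $Q(c_j)$ is contained in $\chi(c_j)\cap\chi(p)$, and the other side's interface is covered by $\lambda(p)$. I expect this characterization to be the main obstacle. In particular, showing $I(u) \subseteq s$ for the partial left operand needs a careful argument that every attribute shared with relations outside the subtree lies in $\chi(p)$. I would prove it using the equivalent path form of~\ref{H2}.

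\textbf{Decomposition of cost.} Under $C_{\rm out}$, the cost of such a plan is the sum over $p \in V(T)$ of the cardinalities of the intermediate results created at $p$'s local ordering. The cardinality of $\lambda(p)\bowtie\bigotimes_{i\in S}Q(c_i)$, with projections applied, depends only on $p$ and $S$, not on choices made elsewhere. Hence the global optimum is obtained by optimizing each node independently and concatenating the results bottom-up.

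\textbf{Local DP and running time.} At node $p$ I run the recurrence $\mathrm{OPT}_p(S)=c(p,S)+\min_{j\in S}\mathrm{OPT}_p(S\setminus\{j\})$ with $\mathrm{OPT}_p(\emptyset)=c(\lambda(p))$. This has $2^{f(p)}$ states, each with at most $f(p)$ transitions, so it costs $O(f(p)2^{f(p)})$ time, assuming the cardinality estimates $c(p,S)$ are available by oracle lookup in constant time. Summing over all nodes gives $\sum_p f(p)2^{f(p)} \le f(T)2^{f(T)}|V(T)| = O(f(T)2^{f(T)}|Q|)$. Reconstructing the plan from back-pointers takes an additional linear time.
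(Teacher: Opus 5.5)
Your proposal is correct and follows essentially the same route as the paper. The paper uses the same key structural fact: no plan node may join subplans of two distinct children of a join-tree node, because condition~(2) of \cref{def:w1plan-induced-by-jt} would then require a join-tree edge between different child subtrees. This forces each local subplan to be left-deep with $\lambda(p)$ at the bottom, so a subset DP over the children at each node costs $O(f(p)2^{f(p)})$, and summing over the $|Q|$ nodes gives the bound. You also spell out the converse (every such local ordering is induced by $T$), the extension from pairwise child joins to joins of child groups, and the separability of $C_{\rm out}$ across nodes; the paper's proof leaves these steps implicit.
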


\paragraph{Structure-guided query evaluation.} An additional advantage of width-1 query plans is that they naturally accommodate structure-guided query evaluation methods. For example, we can simulate the Yannakakis algorithm for evaluating \emph{relation-dominated} conjunctive queries \cite{wang_yannakakis_2025} by following a width-1 query plan. A conjunctive query is relation-dominated if it is acyclic and there exists a relation $R_i[\bar{x}_i]$ that contains all the output attributes, i.e., $\bar{y} \subseteq \bar{x}_i$.  It shall be noted that an acyclic Boolean query is also a relation-dominated query, since $\bar{y} = \emptyset$.
For this type of query, we have the following guarantee of evaluation time:

\begin{theorem}
    \label{thm:rel-dom}
    A relation-dominated conjunctive query can be evaluated in $O(|db|)$ time using a width-1 query plan, where $|db|$ represents the size of the database.
\end{theorem}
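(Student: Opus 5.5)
Let $R_o[\bar{x}_o]$ be a relation containing all output attributes, so $\bar{y} \subseteq \bar{x}_o$. The plan is to simulate the Yannakakis algorithm along a join tree rooted at $R_o$.

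\textbf{Rooting.} Since $Q$ is acyclic, it has a join tree $T$. Because rerootings of a join tree are again join trees (connectedness \ref{H2} does not depend on the root), we take $T$ rooted at the node $p_o$ with $\lambda(p_o)=\set{R_o}$.

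\textbf{The plan and its width.} We build a width-1 plan $\mathcal{P}$ induced by $T$. At each node $p$ with children $c_1,\dots,c_{f(p)}$, we join $R_p$ successively with the subplans for $Q(c_1),\dots,Q(c_{f(p)})$ in a left-deep fashion. Each of these joins is a semijoin-like step: by \ref{H2}, the interface of the subplan for $Q(c_i)$ is contained in $\chi(c_i)\cap\chi(p)$. By the projection convention of \cref{sec:prelim-plans}, the output of this subplan is $\pi_{I}Q(c_i)$ with $I\subseteq \chi(p)\cap\chi(c_i)$. The running intermediate at $p$ keeps only attributes of $\chi(p)$, namely the interface with the rest of the query together with output attributes. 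The output attributes lie in $\bar{x}_o$ and hence, by connectedness, in $\chi(p)$ whenever they appear in the subtree. The plan has width 1 by construction via \cref{thm:w1plan-jt}, and by \cref{thm:w1plan-size-bound}-style reasoning each intermediate at $p$ is a projection onto $\chi(p)$ of a subset of $R_p$. Thus every intermediate has size at most $|R_p|$.

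\textbf{Running time.} Each join of (a subset of) $R_p$ with $\pi_I Q(c_i)$, where $I\subseteq \chi(p)$, is computed by hashing in $O(|R_p|+|R_{c_i}|)$ time, since its output is a filtered subset of the left input. Each relation therefore participates in $O(1+f(p))$ joins as the parent and once as the child. Summing over all nodes gives $O(\sum_p |R_p|\cdot(1+f(p)))$. To obtain $O(|db|)$ independently of the fan-out, we observe that the left input shrinks monotonically and that each join costs $O(|\text{left}|+|\text{right}|)$. Charging the left side once per join gives $O(f(p)|R_p|)$, which is $O(|db|)$ under data complexity with the query size fixed.

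\textbf{Correctness.} The root intermediate equals $\pi_{\bar{x}_o}$ of the full join restricted to dangling-free tuples, so projecting onto $\bar{y}$ gives the answer in $O(|R_o|)$ further time. For a Boolean query we simply test nonemptiness.

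\textbf{Main obstacle.} The step that needs the most care is showing that the projection convention never retains attributes outside $\chi(p)$. Concretely, one must show that any attribute of $Q(p)$ still needed later, either for a join or for output, lies in $\chi(p)$. This is where both the choice of root $R_o$ and condition \ref{H2} are used: a join attribute shared with a relation outside $T_p$ must pass through $p$, and an output attribute lies in $\bar{x}_o$, which is on the path to the root.
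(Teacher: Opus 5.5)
Your proposal is correct and takes the same route as the paper's justification. You root a join tree at the dominating relation, evaluate bottom-up with a width-1 plan induced by it, and use connectedness plus the choice of root to show that everything of $Q(p)$ still needed later lies in $\chi(p)$, so each intermediate at $p$ is a projection of a subset of $R_p$; your version adds detail (the semijoin nature of each join, correctness of the root result), and your $O(f(p)\,|R_p|)$ per-node charge suffices under data complexity, though your stated aim of fan-out independence is never achieved and that aside can be dropped.
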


To create a linear-time width-1 query plan for evaluating a relation-dominated query, let $R_i$ denote the relation containing all output attributes. One can construct the query plan using any join tree with $R_i$ as the root. The width-1 query plan will then evaluate the query bottom-up. Because all output attributes are located at the root of the join tree and at the end of the query plan, when the query plan is evaluating node $p$, all attributes in $T_p$ but not in $p$ are removed from the query, guaranteeing linear running time.

To conclude this section, we highlight some important distinctions between our definition of width-1 query plans and Yannakakis-style query plans~\cite{yannakakis_algorithms_1981,wang_yannakakis_2025}. Although they both evaluate queries by traversing join trees, width-1 query plans do not use full reductions via semi-joins as in Yannakakis-style algorithms. We have shown, nevertheless, that these plans, by appropriate projections of irrelevant attributes, still preserve favorable theoretical guarantees of asymptotic complexity for Boolean conjunctive queries (\cref{thm:w1plan-size-bound}) and relation-dominated queries (\cref{thm:rel-dom}). In \cref{sec:exp}, we will experimentally show that these plans, in practice, are indeed highly efficient. Furthermore, without the need for full reductions, width-1 query plans are much easier to integrate into the query execution frameworks of current database systems, enabling more efficient cost-based optimization, as we will discuss in the remainder of the paper.

%% file: figs/hierarchical.tex
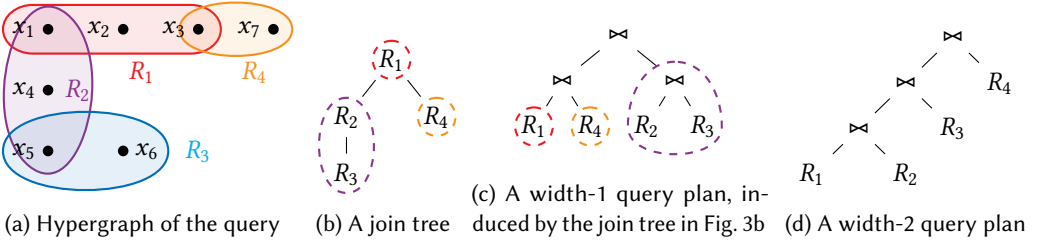
\begin{figure}
    \centering
    \begin{subfigure}{.29\linewidth}
        \centering
        \vspace{-0.5\baselineskip}
        \hspace{-1.5em}
        \begin{tikzpicture}[hypergraph,xscale=1,yscale=0.8,trim left=0cm,trim right=0cm]
            \node[vertex,label={west:$x_1$}] (x1) at (-1, 0) {};
            \node[vertex,label={west:$x_2$}] (x2) at (0, 0) {};
            \node[vertex,label={west:$x_3$}] (x3) at (1, 0) {};
            \node[vertex,label={west:$x_4$}] (x4) at (-1, -1) {};
            \node[vertex,label={west:$x_5$}] (x5) at (-1, -2) {};
            \node[vertex,label={east:$x_6$}] (x6) at (0, -2) {};
            \node[vertex,label={west:$x_7$}] (x7) at (2, 0) {};
        
            \begin{scope}[on background layer]
                    
                \newcommand*{\pathri}{	($(x1) + (-0.2,0.4)$)
    					to[out=0,in=180] ($(x3) + (-0.1,0.4)$)
    					to[out=0,in=90] ($(x3) + (0.3,0)$)
    					to[out=-90,in=0] ($(x3) + (-0.1,-0.4)$)
    					to[out=180,in=0] ($(x1) + (-0.2,-0.4)$)
    					to[out=180,in=-90] ($(x1) + (-0.6,0)$)
    					to[out=90,in=180] ($(x1) + (-0.2,0.4)$)}

                \newcommand*{\pathrii}{	($(x1) + (0.25,0.25)$)
                    to[out=135,in=135] ($(x5) + (-0.25,-0.25)$)
                    to[out=-45,in=-45] ($(x1) + (0.25,0.25)$)}
                
                \newcommand*{\pathriii}{($(x5) + (-0.6,0)$)
        					to[out=90,in=90] ($(x6) + (0.6,0)$)
        					to[out=-90,in=-90] ($(x5) + (-0.6,0)$)}
                
                \newcommand*{\pathriv}{($(x3) + (-0.25,0)$)
                    to[out=90,in=90] ($(x7) + (0.25,0)$)
                    to[out=-90,in=-90] ($(x3) + (-0.25,0)$)}
                
                \draw[hyperedge,looseness=1,draw=Red,fill=Red,fill opacity=.1] \pathri;
                \draw[hyperedge,looseness=1,draw=Fuchsia,fill=Fuchsia,fill opacity=.1] \pathrii;
                \draw[hyperedge,looseness=1,draw=NavyBlue,fill=NavyBlue,fill opacity=.1] \pathriii;
                \draw[hyperedge,looseness=1, draw=BurntOrange,fill=BurntOrange,fill opacity=.1] \pathriv;
        
                \node[text=Red] at (0.25,-0.7) {$R_1$};
                \node[text=Fuchsia] at (-0.6,-1) {$R_2$};
                \node[text=Cyan] at (1,-2) {$R_3$};
                \node[text=BurntOrange] at (1.75,-0.7) {$R_4$};
                
            \end{scope}
        \end{tikzpicture}
        \vspace{-0.5\baselineskip}
        \caption{Hypergraph of the query}
        \label{fig:hierarchical-hg}
    \end{subfigure}
    \hfill
    \begin{subfigure}{.15\linewidth}
        \centering
        \begin{tikzpicture}[
			level distance=0.75cm,
			level 1/.style={sibling distance=1.2cm}
        ]
			\node (R1) { $R_1$ }
				child { node (R2) {$R_2$} edge from parent[-]
                    child { node (R3) {$R_3$} edge from parent[-] }
                }
                child { node (R4) {$R_4$} edge from parent[-]
                } ;

            \begin{scope}[on background layer]

            \draw[hyperedge,dashed,looseness=1,draw=Red,fill opacity=.1] {
                ($(R1) + (0,0.25)$)
                to[out=180,in=90] ($(R1) + (-0.25,0)$)
        		to[out=-90,in=180] ($(R1) + (0,-0.25)$)
                to[out=0,in=-90] ($(R1) + (0.25,0)$)
        		to[out=90,in=0] ($(R1) + (0,0.25)$)
            };
            \draw[hyperedge,dashed,looseness=1,draw=Fuchsia,fill opacity=.1] {
                ($(R2) + (0,0.25)$)
        		to[out=180,in=180] ($(R3) + (0,-0.25)$)
        		to[out=0,in=0] ($(R2) + (0,0.25)$)
            };
            \draw[hyperedge,dashed,looseness=1,draw=BurntOrange,fill opacity=.1] {
                ($(R4) + (0,0.25)$)
                to[out=180,in=90] ($(R4) + (-0.25,0)$)
        		to[out=-90,in=180] ($(R4) + (0,-0.25)$)
                to[out=0,in=-90] ($(R4) + (0.25,0)$)
        		to[out=90,in=0] ($(R4) + (0,0.25)$)
            };
    
        \end{scope}
		\end{tikzpicture}
        \caption{A join tree}
        \label{fig:hierarchical-jt}
    \end{subfigure}
    \hfill
    \begin{subfigure}{.28\linewidth}
        \centering
        \begin{tikzpicture}[
			level distance=0.6cm,
			level 1/.style={sibling distance=1.5cm},
            level 2/.style={sibling distance=0.75cm}
        ]
			\node { $\bowtie$ }
				child { node {$\bowtie$} edge from parent[-]
                    child { node (R1) {$R_1$} edge from parent[-] }
                    child { node (R4) {$R_4$} edge from parent[-] }
                }
                child { node (R2R3) {$\bowtie$} edge from parent[-]
                    child { node (R2) {$R_2$} edge from parent[-] }
                    child { node (R3) {$R_3$} edge from parent[-] }
                };
            \draw[hyperedge,dashed,looseness=1,draw=Red,fill opacity=.1] {
                ($(R1) + (0,0.25)$)
                to[out=180,in=90] ($(R1) + (-0.25,0)$)
        		to[out=-90,in=180] ($(R1) + (0,-0.25)$)
                to[out=0,in=-90] ($(R1) + (0.25,0)$)
        		to[out=90,in=0] ($(R1) + (0,0.25)$)
            };
            \draw[hyperedge,dashed,looseness=1,draw=Fuchsia,fill opacity=.1] {
                ($(R2R3) + (0,0.25)$)
        		to[out=180,in=90] ($(R2) + (-0.25,0)$)
        		to[out=-90,in=-90] ($(R3) + (0.25,0)$)
                to[out=90,in=0] ($(R2R3) + (0,0.25)$)
            };
            \draw[hyperedge,dashed,looseness=1,draw=BurntOrange,fill opacity=.1] {
                ($(R4) + (0,0.25)$)
                to[out=180,in=90] ($(R4) + (-0.25,0)$)
        		to[out=-90,in=180] ($(R4) + (0,-0.25)$)
                to[out=0,in=-90] ($(R4) + (0.25,0)$)
        		to[out=90,in=0] ($(R4) + (0,0.25)$)
            };
		\end{tikzpicture}
        \caption{A width-1 query plan, induced by the join tree in \cref{fig:hierarchical-jt}}
        \label{fig:hierarchical-plan}
    \end{subfigure}
    \hfill
    \begin{subfigure}{.25\linewidth}
        \centering
        \begin{tikzpicture}[
			level distance=0.6cm,
			level 1/.style={sibling distance=1.25cm}
        ]
			\node { $\bowtie$ }
				child { node {$\bowtie$} edge from parent[-]
    				child { node {$\bowtie$} edge from parent[-]
                        child { node {$R_1$} edge from parent[-] }
                        child { node {$R_2$} edge from parent[-] }
                    }
                    child { node {$R_3$} edge from parent[-] }
                }
                child { node {$R_4$} edge from parent[-] };
		\end{tikzpicture}
        \caption{A width-2 query plan}
        \label{fig:non-hierarchical-plan}
    \end{subfigure}
    \caption{The hypergraph, a join tree, a width-1 query plan, and a width-2 query plan of the query $Q_{\ref{ex:hierarchical}}$}
    \label{fig:hierarchical}
\end{figure}

%% file: 4_meta-representation.tex
\section{Meta-Decompositions}
\label{sec:meta}

Even though we have shown a connection between join trees and width-1 query plans in \cref{thm:w1plan-size-bound} and \cref{thm:w1plan-jt}, it is impractical to find the optimal width-1 query plan by enumerating every possible join tree---As noted earlier, a query can have an exponential number of possible join trees. Therefore, in this section, we introduce and explore a generalized representation that efficiently captures all join trees of an acyclic query. This enables us to optimize queries directly on this representation, eliminating the need to enumerate all possible join trees. 

We highlight an observation that \emph{the exponential number of possible join trees often arises when multiple relations share common join attributes}, as is the case for star queries discussed in \cref{thm:exp-join-trees} and illustrated by \cref{ex:star}.
In such a query, any pair of relations has valid join predicates, so in join trees of such queries, these relations can be connected in any arbitrary tree structure.
On the contrary, in the case of a query like in \cref{ex:hierarchical}, where the join attributes are unique for all possible pairs of relations to be joined, the query demonstrates a more ``tree-like'' structure.  There is a unique tree structure for such a query, with $n$ distinct, structurally isomorphic join trees, differing only by the choice of the root relation.

In light of this, we introduce \emph{meta-decompositions}. In the case of acyclic conjunctive queries, they can capture all possible join trees but require only linear space.
One of the major differences from standard join trees is the addition of ``minor nodes'' to handle cases when more than two relations share the same join key.
These nodes have empty $\lambda$-labels, meaning they are not associated with a single relation, but their $\chi$-labels contain the shared join keys.
Additionally, we add a $\kappa$-label to each node to denote the ``interface'' with the remaining relations of the join tree that are not in the subtree rooted at the node.
\begin{example}
    The meta-decomposition of the query $Q_{\ref{ex:star}}$ in \cref{ex:star} is shown in \cref{fig:star-meta}. The root of this tree is a minor node, showing that the four children share precisely the attribute $x_1$. Note also that the $\kappa$-label of all 4 non-root nodes is exactly $\{ x_1 \}$.
\end{example}
\begin{example}
    The meta-decomposition of the query $Q_{\ref{ex:hierarchical}}$ in \cref{ex:hierarchical} is shown in \cref{fig:hierarchical-meta}. It has the exact same structure as the join tree shown in \cref{fig:hierarchical-jt} and has no minor node, as all pairs of relations have different intersections of attributes.
\end{example}
\begin{figure*}
    \input{figs/star-meta}
    \hfill
    \input{figs/hierarchical-meta}
\end{figure*}

The formal definition is given as follows.

\begin{definition}
    Given a conjunctive query $Q$ and the associated hyergraph $H$, its \emph{meta-decomposition} $M = (V(M), r(M), E(M), \lambda, \chi, \kappa)$ is a labeled tree, where
    \begin{itemize}[leftmargin=*]
        \item $V(M)$ is the set of vertices, $r(M) \in V(M)$ is the root, and $E(M)$ is the set of edges,
        \item $\lambda : V(M) \rightarrow 2^{E(H)}$ maps each vertex of $M$ to a set of hyperedges of $H$ (possibly empty), and
        \item $\chi : V(M) \rightarrow 2^{V(H)}$ and $\kappa : V(M) \rightarrow 2^{V(H)}$ each maps each vertex of $M$ to a set of vertices of $H$,
    \end{itemize}
    such that it satisfies \ref{H1}, \ref{H2}, and
    \begin{enumerate}[label=(C\textsubscript{\arabic*}),ref=(C\textsubscript{\arabic*}),leftmargin=*]
        \setcounter{enumi}{3}
        \item[{\crtcrossreflabel{(C$_3'$)}[H3']}] For all $p \in V(M)$, $\chi(p) \subseteq V(\lambda(p))$ if $\lambda(p) \neq \emptyset$.
        \item\label{H4}\gdef\Giref{\ref{H4}} \textbf{Interface condition.} For every non-root $p \in V(M)$, let $q$ be $p$'s parent node. $\kappa(p)$ satisfies
        \begin{enumerate}
            \item $\kappa(p) = \chi(p) \cap \chi(V(M) \setminus V(M_p)) \subseteq \chi(q)$, where $M_p$ is the subtree of $M$ rooted at $p$; and
            \item $\kappa(p) \nsubseteq \chi(s) \text{ for all } s \in V(M) \setminus V(M_q)$ if $\kappa(p) \neq \chi(q)$.
        \end{enumerate}
        For the root node $r(M)$, we set $\kappa(r(M)) = \emptyset$.
        
        \item\label{H5}\gdef\Giref{\ref{H5}} \textbf{Minor nodes and uniqueness condition.} For every maximal subset of nodes $S = \{ p_1, \dots, p_n \} \subseteq V(M) $ $(n \geq 2)$ such that $ \kappa(p_1) = \dots = \kappa(p_n) = \kappa $ for some value of $\kappa$, there exists one unique node $ m \in V(M) $ with $ \lambda(m) = \emptyset $ and $ \chi(m) = \kappa$.
    \end{enumerate}
\end{definition}

In this paper, we consider acyclic queries and their ``minimal-width'' (``width-1'') meta-decompositions, which satisfy
\begin{enumerate}[label=(C\textsubscript{\arabic*}),ref=(C\textsubscript{\arabic*}),leftmargin=*]
    \setcounter{enumi}{3}
    \item[{\crtcrossreflabel{(C$_3''$)}[H3'']}] For all $p \in V(M)$ such that $\lambda(p) \neq \emptyset$, we have $|\lambda(p)| = 1$ and, w.l.o.g, $\chi(p) = V(\lambda(p))$.
\end{enumerate}

Besides introducing minor nodes in condition \ref{H5} and defining $\kappa$ as the interface in condition \ref{H4}(a), we also include condition \ref{H4}(b) that limits the shape of the meta-decomposition $M$: There can be multiple nodes that can satisfy $\kappa(p) \subseteq \chi(q)$, while condition \ref{H4}(b) limits $q$ to be the highest node with $\kappa(p) \subseteq \chi(q)$. We illustrate this with the following example:

\begin{example} \label{ex:meta-complicated}
    Consider the query
    \begin{align*}
        Q_{\ref{ex:meta-complicated}} \gets\ & R_1[x_1, x_2, x_6] \bowtie R_2[x_1, x_2, x_3, x_7] \bowtie R_3[x_1, x_3, x_4, x_8]
        \bowtie R_4[x_1, x_4, x_9] \bowtie R_5[x_1, x_{5}]
    \end{align*}
    Its hypergraph and meta-decomposition are shown in \cref{fig:complicated-hg} and \cref{fig:complicated-meta}, respectively.
    \cref{fig:complicated-jt1} and \cref{fig:complicated-jt2} show two valid join trees of this hypergraph. We highlight that $R_5$ could in fact be the child of any other node of the join tree, because its only interface with the rest of the hypergraph is $\{ x_1 \}$. For example, \cref{fig:complicated-jt3} is also a valid join tree.
    But, for the meta-decomposition in \cref{fig:complicated-meta}, according to condition \ref{H4}(b), we make $R_5$ the child of the highest possible node that contains $x_1$, which in this case is the root node, i.e., the minor node with $\chi$-label $\{ x_1, x_3 \}$.
    
    \input{figs/meta-complicated}
\end{example}

\subsection{Constructing Meta-Decompositions}

\begin{algorithm}
\caption{Constructing the meta-decomposition of an acyclic hypergraph} \label{alg:meta}
\SetKwInOut{Input}{input}
\SetKwInOut{Output}{output}
\SetKwInOut{Known}{known}
\Input{An acyclic hypergraph $H$}
\Output{The meta-decomposition $M$ of $H$}
$H' \gets $ a copy of the input hypergraph $H$ with $V(H') = V(H)$ and $E(H') = E(H)$ \;
$V(M) \gets \emptyset$, $E(M) \gets \emptyset$ \;

\tcp*[l]{Reduction, creating the vertices in the meta-decomposition at the same time}
\While{$E(H')$ is not empty \label{alg:meta-while}} {
    \tcp*[l]{First check if it is possible to create minor nodes for mutually reducible hyperedges}
  \ForEach{maximal $S \subseteq E(H')$ such that $|S| > 1$, and, for all $e \in S$, $o(e, H')$ has the same value $o$ \label{alg-line:meta-minor-origins-sets}}{
    \ForEach{$e \in S$}{
        Add a node $p$ to $V(M)$ with $\lambda(p) = \set{e}$ if $e \in E(H)$, or $\lambda(p) = \emptyset$ if it is special, $\chi(p) = e$, and $\kappa(p) = o(e, H')$ \label{alg-line:meta-origin-node}, if such $p$ does not exist yet \;
        Remove $e$ from $E(H')$ \label{alg-line:meta-origin-remove} \;
    }
    Add a special hyperedge $e_m = o$ to $E(H')$ \label{alg-line:meta-add-special-edge} \;
  }
  $\mathcal{E} \gets$ the set of all ears in $E(H')$ \;

  \ForEach{$e \in \mathcal{E}$}{
    Add a new node $p$ to $V(M)$ with $\lambda(p) = \set{e}$ if $e \in E(H)$, or $\lambda(p) = \emptyset$ if it is special, $\chi(p) = e$, and $\kappa(p) = o(e, H')$, if such $p$ does not exist yet \label{alg-line:meta-ear-node} \;
    Remove $e$ from $E(H')$ \label{alg-line:meta-ear-remove} \;
    \If {there exists at least two vertices $p' \in V(M) \cup \mathcal{E}$ such that $\kappa(p') = \kappa(p)$ \label{alg-line:no-minor-check}} {
        Add a (minor) node $v$ to $V(M)$ with $\lambda(v) = \emptyset$, $\chi(v) = \kappa(p)$, and $\kappa(v) = \kappa(p)$, if such $v$ does not exist yet \label{alg-line:no-minor-create}\;
    }
  }

}
$r(M) \gets $ the node $p \in V(M)$ such that $\kappa(p) = \emptyset$ \;
\tcp*[l]{Generate the edges in the meta-decomposition}
\ForEach {non-root node $p \in V(M)$} {
    \If {there exists a (minor) node $q \in V(M)$ such that $\lambda(q) = \emptyset$ and $\chi(q) = \kappa(p)$ \label{alg-line:minor-origin-edge-cond}} {
        Add an edge $(q, p)$ in $E(M)$ \;
    } \Else( ) {
        Find the node $q \in V(M)$ such that $\kappa(p) \subseteq \chi(q)$ and $\kappa(p) \not\subseteq \kappa(q)$ \label{alg-line:phys-find-parent} \tcp*[r]{``highest'' possible parent}
        Add an edge $(q, p)$ in $E(M)$ \;
    }
}
\Return $M = (V(M), r(M), E(M), \lambda, \chi, \kappa)$ \;
\end{algorithm}

Similar to the original GYO algorithm~\cite{grahamGYO,DBLP:conf/compsac/YuO79}, a meta-decomposition can be constructed by a reduction-based algorithm, as shown in \cref{alg:meta}. In this algorithm, $o(e, H) = e \cap ( \cup (E(H) \setminus \set{e}) )$ denotes the set of overlapping vertices of hyperedge $e$ and the remaining hyperedges $E(H) \setminus \set{e}$.  As in the original GYO algorithm~\cite{grahamGYO,DBLP:conf/compsac/YuO79}, a hyperedge $e$ is \emph{reducible} (also called an \emph{ear}), if there exists another single hyperedge $e' \in E(H) \setminus \{ e \}$ such that $o(e, H) \subseteq e'$.  
The subsuming hyperedge $e'$ is called the \emph{witness}.
And, exceptionally, if there is only one hyperedge $e$ in the hypergraph, it is also considered an ear.
We note that the GYO algorithm removes exactly one ear (among possibly many) in each round, thereby establishing a total order of all hyperedge removal.
In contrast, Algorithm~\ref{alg:meta} \emph{simultaneously removes all ears in every round} until only a single hyperedge remains in the hypergraph. This process introduces only a partial order for all hyperedges, which \emph{maintains the flexibility to represent all possible ordering of ear removal, and hence all possible join trees}, through the resulting meta-decomposition.

\begin{example}
    For the star query $Q_{\ref{ex:star}}$ in \cref{ex:star}, the four hyperedges can be removed in any order in the standard GYO algorithm. For example:
    \begin{enumerate}[leftmargin=*]
        \item With the order $R_4$ (witness $R_3$) -- $R_3$ (witness $R_2$) -- $R_2$ (witness $R_1$) -- $R_1$, we obtain the first join tree in \cref{fig:star-jts}.
        \item With the order $R_4$ (witness $R_3$) -- $R_2$ (witness $R_1$) -- $R_3$ (witness $R_1$) -- $R_1$, we obtain the second join tree in \cref{fig:star-jts}.
    \end{enumerate}
    Furthermore, we note that we can also obtain the second join tree by keeping the order in (1) but only changing witnesses, namely $R_4$ (witness $R_3$) -- $R_3$ (witness $R_1$) -- $R_2$ (witness $R_1$) -- $R_1$.
\end{example}

To address this source of variation, in every round, our algorithm first handles such cases where some edges $ e_1, \dots, e_n$ are \emph{mutually reducible}, i.e., $o(e_1, H) = \dots = o(e_n, H) = o$ (Lines \ref{alg-line:meta-minor-origins-sets}--\ref{alg-line:meta-add-special-edge}).  
(For example, in the case of $Q_{\ref{ex:star}}$, all pairs among $R_1$, $R_2$, $R_3$, and $R_4$ are mutually reducible.)
The algorithm creates a minor node $p$ with $\chi(p) = o$ if such a node does not yet exist, and then removes $e_1, \dots, e_n$ from the hypergraph.  We then add a special hyperedge $e_m = o$, which conceptually corresponds to the minor node, back to the hypergraph, so that the acyclicity of $H$ is preserved. The rationale for adding such an $e_m$ can be illustrated by the following example:

\begin{example}
    \label{ex:special-hyperedge}
    Consider the query
    \begin{align*}
        Q_{\ref{ex:special-hyperedge}} \gets \pi_{\emptyset} (& R_1[x_1, x_2, x_3, x_4] \bowtie R_2[x_1, x_2, x_5] \bowtie R_3[x_1, x_3, x_6] \bowtie R_4[x_2, x_3, x_7] \bowtie R_5[x_1, x_2, x_3, x_8]),
    \end{align*}
    which is similar to \cref{ex:123-12-13-23-intro} but contains an additional relation $R_5$. In this case, at the beginning, $R_1$ and $R_5$ are both ears and mutually reducible, as $o(R_1, H) = o(R_5, H) = \{ x_1, x_2, x_3 \}$. However, if we simply remove both $R_1$ and $R_5$ at the same time, \emph{the remaining hyperedges, $R_2$, $R_3$, and $R_4$, form a cyclic hypergraph}. But, if we add $e_m = o(R_1, H) = o(R_5, H) = \{ x_1, x_2, x_3 \}$, then $e_m$, $R_2$, $R_3$, and $R_4$ still form an acyclic query, and the algorithm can proceed to pick $R_2$, $R_3$, and $R_4$ as ears, before $e_m$ is picked and removed at the very end.
\end{example}

We formalize this idea as follows:

\begin{proposition} \label{lem:add-back-minor}
    Given an acyclic hypergraph $H$, if there exists a set of distinct ears $S = \{ e_1, \dots, e_n \} \subseteq E(H)$ $(n \geq 2)$ such that $o(e_1, H) = \dots = o(e_n, H) = o$, then, the hypergraph $H'$, with $E(H') = (E(H) \setminus S) \cup \{ o \}$ and $V(H') = \cup E(H')$, is acyclic.
\end{proposition}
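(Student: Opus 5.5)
Since acyclicity coincides with the existence of a join tree (the characterization recalled in the preliminaries), the plan is to start from a join tree $T$ of $H$ and modify it into a join tree $T'$ of $H'$. We only need the connectedness condition \ref{H2} for $T'$; coverage is immediate from the construction. The facts used about $S$ are that each $e_i$ is an ear and that all $e_i$ share the same overlap $o = e_i \cap \cup(E(H)\setminus\{e_i\})$. In particular $o \subseteq e_i$ for every $i$, and any attribute of $e_i$ outside $o$ occurs in no other hyperedge.

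The construction has two steps. First, in $T$ replace the label of one member, say $e_1$, by $o$, i.e., the node $p_1$ with $\lambda(p_1)=\{e_1\}$ gets $\chi(p_1)=o$. Second, delete the nodes $p_2,\dots,p_n$ of $e_2,\dots,e_n$, contracting each deleted node into a neighbor so that the tree stays connected. Concretely, whenever we delete $p_i$, we reattach all of its former neighbors to the neighbor of $p_i$ lying on the path from $p_i$ toward $p_1$ (or to $p_1$ itself if adjacent).

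For \ref{H2}, fix an attribute $v \in V(H')$. If $v \notin o$, then $v$ occurs in none of the $e_i$, because attributes of $e_i$ outside $o$ are private to $e_i$ and therefore do not survive into $V(H')$ (unless $v\in o$). So the set of nodes containing $v$ is untouched by the relabeling, and contracting a node not containing $v$ into a neighbor keeps that set connected: a path through $p_i$ between two $v$-nodes is replaced by an edge between $p_i$'s two neighbors on that path, since both attach to a common neighbor.

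This last point needs care, and I would handle it cleanly by processing deletions one at a time. Deleting a node $p$ by merging it into its neighbor $q$ is a standard operation: it preserves \ref{H2} for every attribute whose node set either avoids $p$ or contains $q$ whenever it contains $p$. The merged node keeps $q$'s label, and the paths through $p$ now run through $q$.

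If $v \in o$, then $v$ lies in every $e_i$, so $p_1,\dots,p_n$ all contain $v$ in $T$, and by \ref{H2} the path between any two of them consists of $v$-nodes. Hence each $p_i$ ($i\ge2$) has a neighbor toward $p_1$ containing $v$. Merging $p_i$ into it keeps the $v$-set connected, and $p_1$ still contains $v$ since $v\in o$.

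The main obstacle is to make the merge direction uniform over all $v$. For $v \in o$ we merge toward $p_1$. For $v\notin o$ the condition is vacuous, because $p_i$ does not contain $v$ once we restrict attention to $V(H')$ (its private attributes disappear). It therefore suffices to merge every $p_i$, $i\ge 2$, into its neighbor on the path to $p_1$. This follows from the general fact that removing a node whose $\chi$ intersects $V(H')$ only in a set contained in that neighbor's $\chi$ preserves \ref{H2}.

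Finally, $p_1$ relabeled to $o$ is fine: relabeling shrinks $\chi(p_1)$ from $e_1$ to $o$, which only removes attributes private to $e_1$. Each such attribute's node set was $\{p_1\}$, so its removal cannot break connectivity. This yields a join tree of $H'$, so $H'$ is acyclic.
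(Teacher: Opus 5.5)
Your proof is correct. It rests on the same basic idea as the paper's: take a join tree of $H$ and collapse the nodes of the mutually reducible ears into a single node labeled $o$. The execution differs, though.

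The paper first uses the GYO characterization to choose a join tree in which $p_1$ is a leaf whose parent is $p_2$. This is legitimate because $o(e_1,H)=o(e_2,H)\subseteq e_2$ makes $e_2$ a witness for $e_1$. It then merges only these two nodes into $m$ with $\chi(m)=o$, and checks \ref{H2} by a case analysis on paths passing through $m$ versus paths ending at $m$, using \cref{lem:minor-overlap} ($o=e_1\cap e_2$). The write-up treats two ears explicitly; $n>2$ follows by iteration, since $o$ and each remaining $e_i$ are again mutually reducible ears.

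You instead work with an arbitrary join tree and remove all of $p_2,\dots,p_n$ by contracting each toward $p_1$. You verify \ref{H2} attribute by attribute from two facts:
\begin{enumerate}
\item $e_i\cap V(H')=o$, because the non-$o$ attributes of the ears are private and disappear from $H'$;
\item \ref{H2} in $T$ forces every node on the $p_i$--$p_1$ path to contain $o$, so the contraction target always covers $\chi(p_i)\cap V(H')$.
\end{enumerate}
This gives a self-contained proof of the full statement for every $n$, with no appeal to GYO removal orders. The paper's route trades that generality for a more local picture, namely a single leaf merge.

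One small imprecision: your first justification for $v\notin o$, about a path through $p_i$ between two $v$-nodes, is vacuous. The $v$-nodes form a connected subtree that avoids $p_i$, so all edges among them survive the contraction unchanged. The general merge fact you state afterwards already covers this case correctly.
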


In each round, the algorithm then finds and removes the set $\mathcal{E}$ of all ears of $H'$.
For each $e \in \mathcal{E}$, we create a node $p$ with set $\kappa(p) = o(e, H')$.  
$p$ may be either physical or minor, depending on whether $e$ is a hyperedge in $H$ or a special hyperedge created while removing mutually reducible hyperedges.
During the procedure, the algorithm creates minor nodes whenever necessary (Lines~\ref{alg-line:meta-origin-node}, \ref{alg-line:meta-ear-node}, and \ref{alg-line:no-minor-check}--\ref{alg-line:no-minor-create}).

This algorithm has the following guarantee of asymptotic complexity:

\begin{theorem} \label{thm:meta-complexity}
    \cref{alg:meta} terminates in time $O(|E(H)|^3)$.
\end{theorem}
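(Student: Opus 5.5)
The argument bounds two quantities: the number of iterations of the main \textbf{while}-loop (Line~\ref{alg:meta-while}), and the cost of one iteration together with the final edge-generation phase. Let $m = |E(H)|$. I will measure cost in set operations on hyperedges. Each such operation (intersection, subset test, equality test) costs $O(|V(H)|)$, or $O(1)$ with word-level bitsets. Like the paper, I treat this as a per-operation constant and count hyperedge-level operations.

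\textbf{Step 1: the loop runs $O(m)$ times and the hypergraph never grows.} In every round, each mutually reducible group $S$ with $|S|\ge 2$ is replaced by a single special hyperedge $o$. This lowers $|E(H')|$ by $|S|-1\ge 1$. By \cref{lem:add-back-minor}, $H'$ remains acyclic after the replacement. An acyclic hypergraph with at least one edge always has an ear (by GYO, with a single edge counting as an ear by convention). So the ear-removal phase removes at least one hyperedge per round, and every round strictly decreases $|E(H')|$. Special hyperedges only ever replace at least two edges, so $|E(H')|\le m$ throughout. Hence there are at most $m$ rounds, and the total number of nodes ever created in $M$ is $O(m)$. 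This count covers physical nodes, special-edge nodes, and minor nodes; the ``if such $p$ does not exist yet'' guards rule out duplicates.

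\textbf{Step 2: each round costs $O(m^2)$.} First compute $o(e,H')$ for all $e\in E(H')$. For each $e$, take the union over the other edges, or use vertex occurrence counts; this is $O(m)$ per edge and $O(m^2)$ in total. Grouping edges with equal $o$-values costs $O(m^2)$ by pairwise comparison, or less by hashing. Ear detection needs one subset test $o(e,H')\subseteq e'$ for every pair $(e,e')$, which is $O(m^2)$. The minor-node check at Line~\ref{alg-line:no-minor-check} compares $\kappa(p)$ against the $O(m)$ existing nodes and ears. Done naively for each ear, this is $O(m^2)$ per round; with a hash table keyed on $\kappa$ it is $O(m)$. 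Over $O(m)$ rounds, the reduction phase therefore takes $O(m^3)$.

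\textbf{Step 3: edge generation.} For each of the $O(m)$ non-root nodes $p$, the algorithm searches over the $O(m)$ nodes $q$. It looks for either a minor node with $\chi(q)=\kappa(p)$, or a node with $\kappa(p)\subseteq\chi(q)$ and $\kappa(p)\not\subseteq\kappa(q)$. This costs $O(m^2)$ overall, well within the bound. Finding the root is $O(m)$.

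\textbf{Main obstacle.} The delicate point is Step~1: showing that every iteration makes progress and that the number of hyperedges never grows despite special edges being added back. This relies on \cref{lem:add-back-minor}, which guarantees acyclicity and hence that an ear exists. It also uses the convention that a lone edge is an ear, so the last round terminates. If the $O(|V(H)|)$ cost of set operations were charged explicitly, the bound would become $O(|E(H)|^3|V(H)|)$. I would state the word-RAM and bitset assumption explicitly to match the claimed $O(|E(H)|^3)$.
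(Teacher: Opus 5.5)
\textbf{Gap in Step~1: acyclicity is not shown to persist across rounds.}

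Your accounting has the same shape as the paper's proof: at most $m$ rounds, $O(m^2)$ hyperedge operations per round, and $O(m^2)$ for edge generation. Recomputing the $o$-values each round instead of maintaining the paper's hash maps is a harmless variation.

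The gap is in Step~1. ``Every round strictly decreases $|E(H')|$'' needs $H'$ to be acyclic at the start of \emph{every} round, but \cref{lem:add-back-minor} only covers the special-edge replacement. The ear phase then deletes all of $\mathcal{E}$ at once. That is not a GYO sequence, and acyclicity is not closed under deleting sets of edges. If acyclicity is lost, some later round has $\mathcal{E}=\emptyset$ while $E(H')\neq\emptyset$, and the loop never halts. So what is at stake is termination itself, not a constant factor.

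This can actually happen if the groups $S$ are fixed at the start of a round. Take
\begin{itemize}
\item $A=\{x_1,x_2,x_3,z,a\}$, $B=\{x_1,x_2,x_3,z,b\}$, $F=\{x_1,x_2,x_3,f\}$;
\item $R_{ij}=\{x_i,x_j,y_{ij}\}$ and $T_{ij}=\{y_{ij},t_{ij}\}$ for $ij\in\{12,13,23\}$.
\end{itemize}
This hypergraph is acyclic, with a join tree whose edges are $AB$, $AF$, $FR_{ij}$, $R_{ij}T_{ij}$. The only mutually reducible group is $\{A,B\}$, which is replaced by $e_m=\{x_1,x_2,x_3,z\}$. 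Afterwards $e_m$ and $F$ both have $o$-value $\{x_1,x_2,x_3\}$ and are ears. The $T_{ij}$ are also ears, but the $R_{ij}$ are not. Deleting all the ears leaves the triangle $R_{12},R_{13},R_{23}$, which has no ear and no mutually reducible pair. The paper's proof appeals to the proposition in the same way, so you inherit this omission rather than introduce it, but the step still needs an argument.

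\textbf{How to close it.} First, repeat the replacement step until no two edges of $H'$ share an $o$-value. Each replacement preserves acyclicity and shrinks $|E(H')|$, so your counts are unaffected. Then fix a witness $w(e)$ for each ear $e$. A witness cycle $e_1\to\cdots\to e_k\to e_1$ would give $o(e_i,H')\subseteq e_i\cap e_{i+1}\subseteq o(e_{i+1},H')$ for all $i$ (indices mod $k$). This forces all these $o$-values to be equal, which is now impossible.

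So the ears can be deleted one at a time, each before its witness. The $o$-values only shrink under deletion, and each witness is still present when its ear is deleted, so every deletion is a legal GYO step. Hence $H'$ stays acyclic, and the next round again has an ear.
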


\begin{sketch}
    The key observations are (1) that the size $|E(H')|$ decreases by at least one in each iteration of the while-loop (\cref{alg:meta-while}), and (2) that, in each iteration of the while loop, all sets $S$ (\cref{alg-line:meta-minor-origins-sets}) can be enumerated in $O(|E(H')|) = O(|E(H)|)$ time, if we maintain a hash map that maps each $o \subseteq V(H')$ to all hyperedges $e \in E(H')$ such that $o(e, H') = o$, which is updated whenever a hyperedge $e$ is added or removed from $E(H')$.
\end{sketch}

We show the correctness of the algorithm:

\begin{theorem} \label{thm:meta-correct}
    Given an acyclic query with an acyclic hypergraph $H$, \cref{alg:meta} returns a valid meta-decomposition of $H$.  
\end{theorem}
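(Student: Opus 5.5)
We prove the theorem by checking, one condition at a time, that the output $M$ of \cref{alg:meta} satisfies \ref{H1}, \ref{H2}, \ref{H3''}, \ref{H4}, and \ref{H5}. Before the individual checks we need a loop invariant for the while-loop (\cref{alg:meta-while}).

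\textbf{Invariant.} At the start of every iteration, $H'$ is acyclic and nonempty until termination. Moreover, every hyperedge of $H'$ is either an original relation or a special hyperedge $o$ that equals the $\chi$-label of an already-created minor node.

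Acyclicity is preserved for two reasons. The minor-node step preserves it by \cref{lem:add-back-minor}. Removing all ears simultaneously also preserves it, because ears of an acyclic hypergraph can be removed in any GYO order. The one delicate case is two ears that witness each other. Such ears have equal $o(\cdot,H')$, however, and were therefore already handled by the mutually-reducible step. Combined with \cref{thm:meta-complexity}, which shows that $|E(H')|$ strictly decreases, the invariant yields termination with exactly one last hyperedge. That hyperedge has $\kappa=\emptyset$, so the root is well defined.

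\textbf{\ref{H1} and \ref{H3''}.} Each original hyperedge $e$ is removed from $E(H')$ exactly once, either at \cref{alg-line:meta-origin-remove} or at \cref{alg-line:meta-ear-remove}. Each removal creates exactly one physical node with $\lambda=\{e\}$ and $\chi=e$. Special hyperedges and the minor nodes created at \cref{alg-line:no-minor-create} all receive $\lambda=\emptyset$. Hence every original hyperedge labels exactly one node, and every physical node carries a single hyperedge with $\chi=e$.

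\textbf{\ref{H4}(a) and \ref{H5}.} We argue that $\kappa(p)=o(e,H')$ at removal time equals $\chi(p)\cap\chi(V(M)\setminus V(M_p))$ in the final tree. For this we show that the final subtree $M_p$ consists exactly of $p$ together with the nodes whose hyperedges were removed earlier and were ``absorbed'' through $p$. The attributes of $e$ shared with the rest of $H'$ are then exactly the attributes shared with nodes outside $M_p$. Condition \ref{H5} follows from the creation rule. Whenever two or more nodes share a $\kappa$ value, lines \ref{alg-line:meta-origin-node} and \ref{alg-line:no-minor-check}--\ref{alg-line:no-minor-create} create a minor node with that $\chi$, and they create it only once. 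Maximality holds because all hyperedges with equal overlap are grouped in the same round.

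\textbf{Tree structure and \ref{H4}(b).} The edge-generation phase is what yields a tree satisfying \ref{H4}(b). For a node $p$ with no minor node of label $\kappa(p)$, we must show that a node $q$ with $\kappa(p)\subseteq\chi(q)$ and $\kappa(p)\not\subseteq\kappa(q)$ exists and is unique. Existence comes from the GYO witness of $e$ when it was removed: follow the chain of later-removed hyperedges containing $\kappa(p)$ until the $\kappa$-label no longer contains it. Uniqueness is where \ref{H5} matters. If two such $q$ existed, the attributes $\kappa(p)$ would be shared by two hyperedges at the same stage, which makes their overlaps equal and forces a minor node, a contradiction. This choice of $q$ is precisely the highest node containing $\kappa(p)$, which is \ref{H4}(b). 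Acyclicity of the parent relation follows because each parent is removed strictly later than its child, or is a minor node created for that child's group. Since only the root has $\kappa=\emptyset$, the parent relation forms a single tree rooted at $r(M)$.

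\textbf{\ref{H2}.} Connectedness follows from \ref{H4}(a) by a standard induction. If an attribute $v$ lies in $\chi(p)$ and also appears outside $M_p$, then $v\in\kappa(p)\subseteq\chi(\text{parent}(p))$. The nodes containing $v$ are therefore closed under moving upward until the topmost such node, so they induce a connected subtree.

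\textbf{Main obstacle.} I expect the hardest step to be the uniqueness argument for the parent choice in \cref{alg-line:phys-find-parent}, together with the matching of $\kappa$ with the final-tree interface in \ref{H4}(a). Both depend on a careful description of which hyperedges remain in $H'$ when $p$ is removed. I would attack both through an auxiliary lemma: if $e$ is removed in round $t$, then every hyperedge of $H'$ at round $t$ containing $o(e,H')$ corresponds to an ancestor of $p$ in $M$, or to a node sharing $p$'s minor parent.
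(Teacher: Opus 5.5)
Your bookkeeping (the acyclicity invariant, existence of a parent, acyclicity of the parent relation, \ref{H1}) is fine, and more explicit than the paper's. Deriving \ref{H2} from \ref{H4}(a) is also legitimate. But that order puts all the weight on \ref{H4}(a), and that step is not argued.

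\textbf{The gap in \ref{H4}(a).} The inclusion $\chi(p)\cap\chi(V(M)\setminus V(M_p))\subseteq\kappa(p)$ follows from $\kappa(p)=o(e,H')$ only for nodes whose hyperedges are still in $H'$ when $e$ is removed. Your ``absorbed through $p$'' description says nothing about nodes $s\notin V(M_p)$ that left $H'$ \emph{earlier}, in another branch. For those you need a propagation step. If $v\in\chi(s)\cap\chi(p)$, then $e$ was present when $s$ was removed, so $v\in\kappa(s)\subseteq\chi(q_s)$, where $q_s$ is the parent of $s$. Iterate along the ancestors of $s$, none of which lies in $M_p$, until one is still present when $e$ is removed; this gives $v\in o(e,H')=\kappa(p)$. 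This is exactly the connectedness argument, so proving \ref{H4}(a) first saves nothing.

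The paper goes the other way, in three steps:
\begin{enumerate}
\item For each tree edge $(q,p)$ it shows $\kappa(p)=\chi(p)\cap\chi(q)$, by a case split on whether $e_p$ or $e_q$ leaves $H'$ first. In the second case $q$ must be a minor node with $\chi(q)=\kappa(p)$.
\item It proves \ref{H2} by induction over edge insertions, using the same case split.
\item Only then does it get the $\supseteq$ direction of \ref{H4}(a) from \ref{H2}.
\end{enumerate}
The auxiliary lemma you propose to close the gap is false as stated. In \cref{ex:meta-complicated}, $R_1$ and $R_4$ are in $H'$ in the first round and contain $\kappa(R_5)=\{x_1\}$. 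Yet in \cref{fig:complicated-meta} they are neither ancestors of $R_5$ nor children of its minor parent.

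\textbf{Uniqueness of the parent.} Your uniqueness argument for the parent chosen on line~\ref{alg-line:phys-find-parent} is also wrong. Hyperedges sharing $\kappa(p)$ need not have equal overlaps: in the same example, $R_1,\dots,R_4$ all contain $x_1$ but have pairwise distinct overlaps. And \ref{H5} plays no role here. Uniqueness comes from removal order. Suppose $q_1\neq q_2$ both qualify and $q_1$ leaves $H'$ first. Then $q_2$ is still present (or, if $q_2$ is a special hyperedge added later, an original hyperedge containing it is). So $\kappa(p)\subseteq\chi(q_1)\cap\chi(q_2)\subseteq\kappa(q_1)$, contradicting the choice of $q_1$.

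\textbf{Justification of \ref{H5}.} Your reason for maximality in \ref{H5} (``all hyperedges with equal overlap are grouped in the same round'') is false. Take $A[x,a]$, $B[x,y]$, $C[y,c]$, $E[x,e,f]$, $F[e,g]$, $G[f,h]$. The ear $A$ gets $\kappa=\{x\}$ in round~1, while $B$ and $E$ become mutually reducible with overlap $\{x\}$ in round~2. What makes \ref{H5} hold is that minor-node creation is not confined to one round. It happens through special hyperedges and through line~\ref{alg-line:no-minor-check}, which inspects all of $V(M)$.
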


Finally, we note that meta-decompositions are highly space-efficient. They only require space linear to the size of the hypergraph.

\begin{theorem} \label{thm:meta-size}
    In the meta-decomposition $M$ of an acyclic hypergraph $H$ as constructed by \cref{alg:meta}, the number of vertices $|V(M)|$ and the number of edges $|E(M)|$ are $O(|E(H)|)$.
\end{theorem}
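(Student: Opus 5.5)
We prove the bound on $|V(M)|$ first, by splitting the nodes of $M$ into \emph{physical} nodes ($\lambda(p)\neq\emptyset$) and \emph{minor} nodes ($\lambda(p)=\emptyset$). The bound on $|E(M)|$ then follows at once: $M$ is a tree, since the edge-generation loop gives every non-root node exactly one parent. Hence $|E(M)| = |V(M)|-1$.

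\textbf{Physical nodes.} A physical node is created at \cref{alg-line:meta-origin-node} or \cref{alg-line:meta-ear-node} only for a hyperedge $e\in E(H)$. It is created only ``if such $p$ does not exist yet''. Moreover, once $e$ is removed from $E(H')$, no step of \cref{alg:meta} adds an original hyperedge back; only special hyperedges are added. So each $e\in E(H)$ yields at most one physical node, and there are at most $|E(H)|$ physical nodes.

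\textbf{Minor nodes, by charging.} Minor nodes arise in three ways:
\begin{itemize}
\item[(i)] special hyperedges $e_m$ that later become nodes at \cref{alg-line:meta-origin-node} or \cref{alg-line:meta-ear-node};
\item[(ii)] minor nodes created directly at \cref{alg-line:no-minor-create};
\item[(iii)] the de-duplication clauses, which identify nodes with equal $\chi$-labels.
\end{itemize}
Since $\chi$ of a minor node equals its $\kappa$-value $o$, each minor node corresponds to a \emph{distinct} set $o$. We charge each such set to the hyperedges whose removal created it.

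For (i), an $e_m$ is added only when a set $S$ with $|S|\ge 2$ is removed from $E(H')$ and a single hyperedge is added back. So $|E(H')|$ drops by at least $|S|-1\ge 1$.

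For (ii), the new minor node is charged to the $\geq 2$ ears in $\mathcal{E}$ sharing the same $\kappa$. Each ear is removed exactly once and has a single $\kappa$. Distinct $\kappa$-classes are disjoint, and each class has size at least $2$. Hence there are at most $|\mathcal{E}|/2$ such minor nodes per round.

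Summing over rounds, the total number of hyperedge removals is at most $|E(H)|$ plus the number of special hyperedges ever added. Each special hyperedge addition strictly decreases $|E(H')|$, as shown in (i). So at most $|E(H)|$ special hyperedges are ever added, and the total number of removals is at most $2|E(H)|$. Therefore the minor nodes number $O(|E(H)|)$, and $|V(M)|=O(|E(H)|)$.

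\textbf{Main obstacle.} The delicate step is showing that a minor node created at \cref{alg-line:no-minor-create} and a special hyperedge created for the same $o$ do not inflate the count. In other words, the de-duplication really merges them, so that no set $o$ gives rise to more than one minor node. I would argue this directly from the ``if such $v$ does not exist yet'' guards, which key on $\chi$. In the worst case, even without merging, the two sources are each bounded by $O(|E(H)|)$ separately. That still yields the linear bound, so this step affects only the constant.
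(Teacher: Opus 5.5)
Your proof is correct in substance but bounds the minor nodes by a different route from the paper; the physical nodes are handled identically. The paper argues on the finished tree. It claims every minor node has at least two children, so $|V(M)| = 1 + \sum_{p \in V(M)} f(p) \ge 1 + 2|V^M(M)|$. This yields $|V^M(M)| < |V^P(M)| - 1$ and hence $|V(M)| < 2|E(H)| - 1$. You instead count creation events during the reduction phase. You bound the total number of removals by $2|E(H)|$, because each insertion of a special hyperedge comes with a net decrease of $|E(H')|$. Your route is purely operational and avoids the fan-out-$\geq 2$ claim. The paper supports that claim only by pointing at the lines that create minor nodes, although fan-out is actually determined by the edge-generation loop. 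The paper's route gives a better constant (about $2|E(H)|$ versus about $4|E(H)|$ for yours). It also applies to any labeled tree whose minor nodes branch, not only to one traced through the execution of \cref{alg:meta}.

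One sub-step needs repair. In (ii) you charge each minor node created on \cref{alg-line:no-minor-create} to at least two ears of the current round. However, the guard on \cref{alg-line:no-minor-check} quantifies over $V(M) \cup \mathcal{E}$, so the second node sharing $\kappa(p)$ may come from an earlier round. A $\kappa$-class may then contain a single ear of $\mathcal{E}$, and the per-round bound $|\mathcal{E}|/2$ is unjustified. The fix is immediate within your framework: that line runs at most once per processed ear. It therefore creates at most as many minor nodes as there are ear removals, hence at most $2|E(H)|$ overall.

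Separately, your remark that a minor node's $\chi$ equals its $\kappa$ fails for nodes arising from special hyperedges. For example, the root in \cref{fig:star-meta} has $\chi = \{x_1\}$ and $\kappa = \emptyset$. Since you never actually need the sets $o$ to be distinct, this does not affect the bound.
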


\begin{sketch}
    Vertices in $M$ are either physical or minor. We create one physical node per hyperedge in $E(H)$. For minor nodes, we note that each minor node has fan-out at least 2, meaning there must be at least two edges incident to each minor node.
\end{sketch}

\subsection{Join Tree Enumeration}
\label{sec:enum}

The meta-decomposition allows one to efficiently enumerate all possible join trees of an acyclic query, as we show with an algorithm in this section. This not only serves as proof of the universality of this representation but also provides a powerful approach to enumerating join trees, which could be of great help to other structure-based methods that require doing so.

As an intuition, we can observe that the distinct join trees of the same hypergraph may differ in the following ways:
\begin{enumerate}[leftmargin=*]
    \item \textbf{``Re-branching''.} If a vertex $p$ in a join tree has children $c_1$ and $c_2$ such that
    $\chi(c_1) \cap \chi(p) \subseteq \chi(c_2)$,
    then $c_1$ could have been a child of $c_2$ without violating any condition. In the meta-decomposition, condition \ref{H4} ensures that $c_1$ is positioned at the highest possible node. When enumerating join trees, we can make $c_1$ a child of $c_2$ if $\kappa(c_1) \subseteq \chi(c_2)$, without violating any condition. The join tree in \cref{fig:complicated-jt3} based on the meta-decomposition in \cref{fig:complicated-meta} from \cref{ex:meta-complicated} demonstrates such a case, where $R_5$ could be a child of any one of $R_1$, $R_2$, $R_3$, and $R_4$.
    \item \textbf{Rerooting.} A rerooting of a join tree is still a valid join tree.
    \item \textbf{Mutually reducible relations.} If multiple relations share some exact same set of attributes, they can be connected in arbitrary tree structures. This situation is represented by minor nodes in meta-decompositions. An example is the root node with $\chi$-label $\{ x_1 \}$ in \cref{fig:star-meta} for \cref{ex:star}, based upon which we should be able to enumerate, among others, the join trees in \cref{fig:star-jts}.
\end{enumerate}

These alternatives should be considered to enumerate all join trees, and the meta-decomposition retains all the necessary information to facilitate this. \cref{alg:enum} is an algorithm to enumerate all join trees. For each node in the meta-decomposition starting from the root, it recursively enumerates all possible join trees given by the subtree rooted at each of its children (Lines \ref{alg-line:enum-for-each-child}--\ref{alg-line:enum-subtrees}), and then enumerates all possible ways to combine these join trees into one. For minor nodes $r$, we find its neighbors that share the common set of attributes given by $\chi(r)$ (\cref{alg-line:enum-origin-set}) and generate all possible distinct structures of join trees, where each vertex will be replaced by the join tree for a subtree given by a neighbor. (At this point, we consider all rerootings of the same tree as identical.) At the very end of this algorithm, we return all combined join trees we obtain, and it remains only to collect all rerootings.  Due to space constraints, we omit some lower-level details, e.g., the exact steps for combining sub-join trees and algorithms to enumerate all trees and rerootings without repetition. The full details are given in the \arxivorappendix.

\begin{algorithm}
\small
\caption{Enumerating all join trees based on a meta-decomposition} \label{alg:enum}
\SetInd{0.5em}{0.6em}
\SetKwInOut{Input}{input}
\SetKwInOut{Output}{output}
\SetKwInOut{Known}{known}
\SetKwProg{Fn}{Function}{:}{}
\SetKwFunction{enumRec}{$\mathsf{enumRec}$}
\Input{A meta-decomposition $M = (V(M), r(M), E(M), \lambda, \chi, \kappa)$}
\Output{All join trees of the hypergraph induced by $M$}

\Fn{\enumRec{$v$: a node on $M$}}{

$\mathcal{C} \gets $ set of children $c$ of $v$ on $M$, sorted by partial order $\supseteq$ on $\kappa(c)$ \;
\ForEach{$c \in \mathcal{C}$ \label{alg-line:enum-for-each-child}} {
    $\mathcal{T}_c \gets$ all rerootings $T_c'$ of all trees $T_c$ returned by $\mathsf{enumRec}(c)$ such that $\kappa(c) \subseteq \chi(r(T_c'))$ \label{alg-line:enum-subtrees} \;
}

\If(\tcp*[f]{minor node}){$\lambda(v) = \emptyset$ \label{alg-line:enum-minor-cond}} {
    $\mathcal{C}_{\sf origin} \gets \{ c \in \mathcal{C} : \kappa(c) = \chi(v) \}$ \label{alg-line:enum-origin-set} \tcp*[r]{They can be found at the head of the sorted $\mathcal{C}$}
    \If{$\kappa(v) = \chi(v)$}{
        $\mathcal{T} \gets $ all non-isomorphic trees with vertices $\mathcal{C}_{\sf origin} \cup \{ v \}$, all rerooted to $v$
        \label{alg-line:enum-minor-dummy} \;
        \tcp*[r]{The minor node $v$ will eventually be replaced on \cref{alg-line:enum-simplified-dummy-replace} by the parent of $r$}
    } \Else(\tcp*[f]{$\kappa(v) \neq \chi(v)$, the parent should not participate}) {
        $\mathcal{T} \gets $ all non-isomorphic trees with vertices $\mathcal{C}_{\sf origin}$ \label{alg-line:enum-minor-no-dummy} \;
    }
            $\mathcal{T} \gets $ set of trees $T'$ that can be obtained from some $T \in \mathcal{T}$, where each $c \in V(T)$, in bottom-up order, is replaced by some $T_c \in \mathcal{T}_c$, and, for each non-root $c \in V(T)$ with parent $p$, $T_c$ could be attached as a child of any $u \in V(T_p)$ such that $\kappa(c) \subseteq \chi(u)$ \;
    $\mathcal{C}_{\sf proper} \gets \mathcal{C} \setminus \mathcal{C}_{\sf origin}$, preserving the order \tcp*[r]{The set of children $c$ such that $\kappa(c) \subsetneq \chi(v)$}
} \Else {
    $\mathcal{T} \gets $ a set of one tree with one vertex $v$ only \label{alg-line:enum-physical-root} \;
    $\mathcal{C}_{\sf proper} \gets \mathcal{C}$, preserving the order \;
}
\ForEach {child $c \in \mathcal{C}_{\sf proper}$ in order,
    $T_c \in \mathcal{T}_c$} {
        \If{$\lambda(c) = \emptyset$ and $\kappa(c) = \chi(c)$}{
            \tcp*[r]{The root of $T_c$ would be a minor node from \cref{alg-line:enum-minor-dummy}. We find a parent of each of its children.}
            $\mathcal{T} \gets $ set of trees $T'$ that can be obtained from some $T \in \mathcal{T}$, where for each child $d$ of $r(T_c)$, we attach the subtree rooted at $d$ as a child of some $u \in V(T)$ such that $\kappa(d) \subseteq \chi(u)$ \label{alg-line:enum-simplified-dummy-replace} \;
            
        } \Else {
            $\mathcal{T} \gets $ set of trees $T'$ that can be obtained from some $T \in \mathcal{T}$, where $T_c$ is attached as a child of some $u \in V(T)$ such that $\kappa(c) \subseteq \chi(u)$ \;
        }
}

\Return{$\mathcal{T}$} \;
    
}

\Return{all rerootings of all trees returned by $\mathsf{enumRec}(r(M))$ } \;
\end{algorithm}

\begin{example}
    Consider again the query $Q_{\ref{ex:meta-complicated}}$ in \cref{ex:meta-complicated}. Its meta-decomposition is shown in \cref{fig:complicated-meta}. Let the root node be $r$, and we use $v_1, \dots v_5$ to denote the nodes such that $\lambda(v_i) = \{ R_i \}$.
    The root node $r$ is a minor node, with $\chi (r) = \{ x_1, x_3 \}$.
    We start by enumerating all possible join trees induced by the subtrees of the meta-decomposition, rooted at each child of $r$.
    For the subtree with $v_2$ and $v_1$, there is only one valid join tree, rooted at $v_2$, which has one child $v_1$. It is not possible to reroot to $v_1$ because $\kappa(v_2) = \{ x_1, x_3 \} \not\subseteq \chi(v_1)$. The case is similar for the subtree with $v_3$ and $v_4$. And the subtree with $v_5$ induces only one join tree with one node $v_5$.
    Since $v_2$ and $v_3$ (and not $v_5$) have $\kappa(v_2) = \kappa(v_3) = \chi(r) = \{ x_1, x_3 \}$, we enumerate all possible skeleton tree structures with vertices $v_2$ and $v_3$.
    There are only two such trees, each of which contains one edge connecting $v_2$ and $v_3$, and they are rerootings of each other.
    For each tree, we replace $v_2$ and $v_3$ by their induced join tree.
    In the end, we have one child $v_5$ with $\kappa(v_5) \subsetneq \chi(r)$.
    It can be a child of any node $v$ such that $\kappa(v_5) \subseteq \chi(v)$. In this example, it can, in fact, be a child of any other node.
    \cref{fig:complicated-jt1} is a join tree where the skeleton tree structure is rooted at $R_2$, and $R_5$ is attached as a child of $R_2$. \cref{fig:complicated-jt2} is a similar example where the skeleton is rooted at $R_3$, and $R_5$ is attached as a child of $R_3$. And \cref{fig:complicated-jt3} is based on the same skeleton as for \cref{fig:complicated-jt2}, but $R_5$ is attached as a child of $R_4$ instead of $R_3$.
\end{example}

We claim that this algorithm is both correct (producing valid join trees) and complete (enumerating all possible join trees).

\begin{theorem}[Correctness]
\label{the:correctness}
    Given a meta-decomposition $M$ of an acyclic hypergraph $H$, all trees $T$ output by \cref{alg:enum} are valid join trees of $H$.
\end{theorem}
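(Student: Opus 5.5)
We argue by structural induction on the meta-decomposition $M$, proving a stronger invariant for the recursive procedure $\mathsf{enumRec}$. For a node $v$ of $M$, let $E_v$ be the set of hyperedges labelling physical nodes in $M_v$. The invariant is: every tree $T$ returned by $\mathsf{enumRec}(v)$, after deleting the placeholder minor root $v$ in the case of \cref{alg-line:enum-minor-dummy}, satisfies three properties.
\begin{enumerate}[leftmargin=*]
    \item Its vertices are exactly the physical nodes of $M_v$, one per hyperedge of $E_v$, so \ref{H1} and \ref{H3} hold.
    \item It satisfies the connectedness condition \ref{H2} with respect to the sub-hypergraph $(\cup E_v, E_v)$.
    \item Some vertex $u$ of $T$ has $\kappa(v) \subseteq \chi(u)$; this holds by \cref{alg-line:enum-subtrees} for the rerooted versions used by the parent.
\end{enumerate}
Since $\kappa(r(M))=\emptyset$ and $E_{r(M)}=E(H)$, the invariant at the root gives a join tree of $H$. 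Rerooting preserves \ref{H1}--\ref{H3}, so the final step of \cref{alg:enum} also outputs only valid join trees.

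Properties (1) and (3) are bookkeeping. Each child's vertex set is disjoint by construction, and every combination step only attaches subtrees; it never duplicates or drops vertices. The core of the proof is (2), for which I would use the following gluing lemma.

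\emph{Gluing lemma.} Let $T_1$ and $T_2$ be trees satisfying \ref{H2} on disjoint hyperedge sets $E_1$ and $E_2$. Let $u \in V(T_1)$ and let $w$ be the root of $T_2$, with $(\cup E_1)\cap(\cup E_2) \subseteq \chi(w)\cap\chi(u)$. Then attaching $w$ below $u$ yields a tree satisfying \ref{H2} on $E_1\cup E_2$.

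The proof of the lemma is immediate. For an attribute $x$ appearing on both sides, $x$ lies in $\chi(u)$ and in $\chi(w)$, so the two connected pieces of vertices containing $x$ are joined by the new edge. Every other attribute stays within one side, where it was already connected.

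The real obligation is therefore, at each attachment, to show that all shared attributes lie in $\kappa(c)$:
\[(\cup E_c)\cap(\cup(E_v\setminus E_c)) \subseteq \kappa(c).\]
We also need $\kappa(c) \subseteq \chi(u)$, which the algorithm enforces, and $\kappa(c)\subseteq\chi(r(T_c))$, which holds by \cref{alg-line:enum-subtrees}. For the displayed inclusion I would use \ref{H4}(a), namely $\kappa(c)=\chi(c)\cap\chi(V(M)\setminus V(M_c))$, together with \ref{H2} on $M$.

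Take any attribute $x$ in a node of $M_c$ and in a node outside $M_c$. By connectedness of $M$, $x$ lies in $\chi(c)$ and in $\chi$ of $c$'s parent, hence $x\in\kappa(c)$. This requires the structural fact that the $\chi$-labels of physical nodes of $M_v$ cover those of the minor nodes in $M_v$. That fact should follow from \ref{H5} and the construction in \cref{alg:meta}, since every minor node's $\chi$ equals the common $\kappa$ of at least two of its children.

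The step I expect to be the main obstacle is the minor-node case. There the skeleton over $\mathcal{C}_{\sf origin}$ is arbitrary, and in the $\kappa(v)=\chi(v)$ case the placeholder root is later dissolved by \cref{alg-line:enum-simplified-dummy-replace}, with its children re-hung anywhere containing their $\kappa$.

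For the skeleton I would argue as follows. All $c\in\mathcal{C}_{\sf origin}$ have $\kappa(c)=\chi(v)$, and by (a) every attribute shared between distinct origin children lies in $\chi(v)$. Hence each skeleton edge between $T_{c}$ and $T_{c'}$ satisfies the gluing hypothesis. This holds as long as the attachment point $u$ contains $\kappa(c)$ and the root of the rerooted $T_c$ does as well, which the algorithm guarantees. Gluing in bottom-up order, as the algorithm does, then gives \ref{H2}.

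For the dissolved placeholder, the children $d$ re-hung at the parent level carry $\kappa(d)=\chi(v)=\kappa(v)\subseteq\chi(u)$. Their shared attributes with the rest lie in $\kappa(d)$ by the same argument applied at the parent. So each reattachment is again a gluing step, and induction on the order of $\mathcal{C}_{\sf proper}$ (sorted by $\supseteq$) completes the proof.
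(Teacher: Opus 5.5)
Your proposal is correct and follows essentially the paper's own proof. The paper also does structural induction over $M$ and justifies each attachment by showing that every attribute shared across the new edge lies in $\kappa(c)=\chi(c)\cap\chi(\text{parent of } c)$ (via \ref{H2} on $M$ and \ref{H4}(a)), which is contained both in the attachment vertex $u$ and in the root of the rerooted $T_c$; your gluing lemma just packages the paper's path-by-path check. Two small points: deleting the placeholder root leaves a forest, so invariant (2) should be stated per component (with all cross-component attributes in $\chi(v)$), and the ``covering'' fact about minor nodes is unnecessary, because a shared attribute of $\cup E_c$ and $\cup(E_v\setminus E_c)$ is already witnessed by physical nodes, to which \ref{H2} on $M$ applies directly.
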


\begin{theorem}[Completeness] \label{thm:enum-complete}
    Given a meta-decomposition $M$ of an acyclic hypergraph $H$, \cref{alg:enum} enumerates all possible join trees of $H$.
\end{theorem}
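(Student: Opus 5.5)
We prove completeness by fixing an arbitrary join tree $T$ of $H$ and showing that $T$ is among the outputs of \cref{alg:enum}. Since the last step returns all rerootings, and rerooting preserves validity, it suffices to handle one convenient rooting of $T$. The argument is an induction over the meta-decomposition $M$, from the leaves up. For each node $v \in V(M)$, let $E_v$ be the set of hyperedges $e$ with $\lambda(p)=\{e\}$ for some $p \in V(M_v)$. The central invariant is the following claim.

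\emph{Claim.} For every $v$, the hyperedges $E_v$ induce a connected subtree $T[E_v]$ of $T$. Moreover, some rerooting of $T[E_v]$ (rerooted at a vertex whose $\chi$-label contains $\kappa(v)$) appears in $\mathsf{enumRec}(v)$. When $v$ is a minor node with $\kappa(v)=\chi(v)$, the same holds for the corresponding tree carrying the dummy root $v$.

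To prove connectivity, we use the construction in \cref{alg:meta}: the hyperedges in $E_v$ are exactly those eliminated (possibly via special hyperedges) before $v$'s ear is removed. By \ref{H2} and \ref{H4}(a), these hyperedges meet the rest of $H$ only in $\kappa(v)$. A GYO-style argument then shows the following. If $T[E_v]$ were disconnected in $T$, the path in $T$ between two of its components would pass through a relation outside $E_v$. Connectedness in $H$ through attributes not in $\kappa(v)$ would then contradict \ref{H2}. Some care is needed here, since a component could attach only through $\kappa(v)$. That case is exactly the one where the tree splits along a minor node, and the invariant has to be phrased at the parent minor node instead.

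The inductive step is a case split mirroring the algorithm.

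\emph{Physical $v$.} Contract each child subtree $T[E_c]$ of $T$ to a single vertex. Each $T[E_c]$ hangs off the connected tree built so far through an edge to a vertex $u$. By \ref{H2}, $\kappa(c)\subseteq\chi(u)$. This is exactly an attachment choice offered in the final loop of the algorithm. The induction hypothesis supplies $T[E_c]$ rerooted at its attachment vertex, whose $\chi$-label contains $\kappa(c)$.

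\emph{Minor $v$.} Contract each $T[E_c]$ for $c\in\mathcal{C}_{\sf origin}$. The resulting quotient is a tree on $\mathcal{C}_{\sf origin}$, together with $v$ when $\kappa(v)=\chi(v)$. Since the algorithm enumerates \emph{all} non-isomorphic trees on these vertex sets, this quotient tree is generated. The edges between blocks run between vertices containing $\chi(v)$, so they are legitimate choices in the replacement step. The children in $\mathcal{C}_{\sf proper}$ are handled as in the physical case. The ordering by $\supseteq$ on $\kappa$ ensures that when $c$ is attached, every vertex it could hang from in $T$ is already present.

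The main obstacle is showing that the children's subtrees are attached to the structure \emph{only} in the ways the algorithm permits. Concretely, in $T$ a child block $T[E_c]$ might be the parent of vertices from blocks that the algorithm attaches later, since the rooting of $T$ does not respect the order of $M$. I would address this first by choosing the rooting of $T$ at a vertex in $E_{r(M)}$ that lies in the top-level structure. Then I would prove, using \ref{H4}(b), that \emph{any} tree edge between two blocks can be re-expressed as "the block with the smaller $\kappa$ hangs below a vertex of the other". When the two $\kappa$ values are equal, they are siblings under a common minor node, which \ref{H5} guarantees exists. Reorienting edges this way only reroots subtrees, and the algorithm compensates for that by allowing all rerootings satisfying $\kappa(c)\subseteq\chi(r(T'_c))$ on \cref{alg-line:enum-subtrees}. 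Finally, the dummy-root case, \cref{alg-line:enum-simplified-dummy-replace}, requires checking that replacing $v$ by the actual neighbour in $T$ corresponds to distributing $v$'s children. This follows because that neighbour contains $\chi(v)=\kappa(v)$.
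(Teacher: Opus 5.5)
Your overall architecture matches the paper's. It has a connectivity lemma for the relations of each $M_v$ inside an arbitrary join tree $T$ (\cref{lem:subtree-in-jt}) and an induction over $M$ (\cref{lem:enum-complete}). It then uses a quotient tree at minor nodes that the Pr\"ufer enumeration hits, attaches the remaining children at vertices containing $\kappa(c)$, and reroots to absorb orientation.

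\textbf{The gap: the invariant at minor nodes $v$ with $\kappa(v)=\chi(v)$.} This is exactly where the dummy-root machinery does its work, and your Claim still asserts there that $E_v$ is connected in $T$. That is false. Take $R_0[a,b,c,e]$, $S[a,b,c,s]$, $R_1[a,b,x]$, $R_2[a,b,y]$, $R_3[e,z]$. Its meta-decomposition has root $R_0$ with children $R_3$, $S$ and a minor node $m$ with $\chi(m)=\kappa(m)=\{a,b\}$, whose children are $R_1,R_2$. Now take the join tree with edges $R_0$--$R_1$, $R_0$--$R_3$, $R_0$--$S$, $S$--$R_2$. There $E_m=\{R_1,R_2\}$ is split, and its two pieces hang off two \emph{different} outside vertices. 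This breaks three steps of your argument:
\begin{itemize}
\item Your physical case, which contracts each $T[E_c]$ to one vertex attached at one $u$, fails for the child $m$ of $R_0$.
\item Phrasing the invariant ``at the parent minor node'' does not help, since $m$'s parent is physical.
\item Your final step cannot replace $v$ by ``the actual neighbour''.
\end{itemize}
The dummy-root clause points the right way, but you never extract that tree from $T$.

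\textbf{What is needed is the paper's reformulation.} For such $v$, the relations of $M_v$ together with some outside relations containing $\chi(v)$ form a connected subtree. The inductive claim is that $\mathsf{enumRec}(v)$ yields, up to rerooting, every join tree of $\lambda(V_v^P)\cup\{\chi(v)\}$; that is, $v$ is treated as a physical node carrying the hyperedge $\chi(v)$. To read this tree off $T$, orient $T$ from a vertex outside $E_v$ that contains $\chi(v)$ (here $R_0$). Every component of $T[E_v]$ contains a relation including $\chi(v)$, since each child $d$ has $\kappa(d)=\chi(v)$. So each component hangs off a vertex containing $\chi(v)$, possibly a different one per component. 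That is precisely the independent per-child attachment of \cref{alg-line:enum-simplified-dummy-replace}. Stating the induction as the paper does, for all join trees of the sub-hypergraph in all rootings, also spares you from finding one global rooting of $T$ that works at every level.

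\textbf{Your connectivity justification also needs repair.} $E_v$ is not ``exactly the hyperedges eliminated before $v$'', since leaves of other branches are eliminated earlier too. The version of your idea that works, and is simpler than the paper's proof of \cref{lem:subtree-in-jt}(2), is the following. If $v$ is not a minor node with $\kappa(v)=\chi(v)$, the relations of $E_v$ are linked through attributes outside $\kappa(v)$:
\begin{itemize}
\item every child $c$ has $\kappa(c)\not\subseteq\kappa(v)$, by \ref{H4}(b) when $\kappa(c)\neq\chi(v)$ (because $\kappa(v)\subseteq\chi(\mathrm{parent}(v))$), and trivially when $\kappa(c)=\chi(v)\neq\kappa(v)$;
\item the origin children of a minor $v$ are linked through $\chi(v)$.
\end{itemize}
A $T$-path between two components of $T[E_v]$ would then carry such an attribute through a vertex outside $E_v$. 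This is impossible by \ref{H2} for $T$, since $E_v$ meets the rest of $H$ only in $\kappa(v)$. The argument also explains the exception: children of a special minor node are linked only through $\chi(v)=\kappa(v)$.

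\textbf{The case this leaves open, and why the construction matters.} The remaining configuration is $\kappa(c)=\chi(v)=\kappa(v)$. For minor $v$ this is the special case. For physical $v$, only the construction excludes it; the bare definition permits it. For example, take $P[a,b,c]$ with child $V[a,b]$, grandchild $C[a,b,x]$, and a childless minor node $\{a,b\}$ under $P$. On that $M$, the join tree with $P$ adjacent to both $V$ and $C$ is never output. So you genuinely need the sibling structure of \cref{alg:meta}, where relations with equal $\kappa$ become children of the minor node. Arguing from the construction is the right instinct, but this is the property to extract from it.
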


As is detailed in the proof of \cref{thm:enum-complete}, for each join tree, we can pinpoint exactly one way in which it is enumerated by the algorithm based on the meta-decomposition. Noting that the number of children of each node on a meta-decomposition is bounded by the fan-out, we get the following guarantee of enumeration delay.

\begin{theorem}[Time complexity]
    Given a meta-decomposition $M$ of an acyclic hypergraph $H$, \cref{alg:enum} enumerates all possible join trees of $H$ in an amortized delay of $O(f(M) \cdot|V(H)|)$.
\end{theorem}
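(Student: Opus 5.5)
We prove the amortized delay bound by charging the work of \cref{alg:enum} to the join trees it outputs. We rely on the uniqueness property from the proof of \cref{thm:enum-complete}: every join tree is produced by exactly one choice sequence. That sequence fixes a skeleton tree at each minor node, a sub-join tree $T_c \in \mathcal{T}_c$ for each child $c$, an attachment point for each child in $\mathcal{C}_{\sf proper}$, and a final rerooting. So no work is spent on duplicates, and it suffices to bound the cost of producing one output from the previous one.

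First, we organize the recursion as nested iterators rather than materialized sets. Each call $\mathsf{enumRec}(v)$ becomes a generator whose state is one choice per child (a current sub-join tree from that child's generator, a current attachment vertex $u$, and, at minor nodes, a current skeleton from a constant-amortized-delay generator of non-isomorphic/labeled trees on $\mathcal{C}_{\sf origin}$). We advance these states like an odometer. Moving to the next output changes the least significant choice, and on carry resets lower choices to their first values.

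Second, we bound the work per step at a single node $v$. It touches at most $f(v) \le f(M)$ child positions. For each it either (i) scans for the next vertex $u$ with $\kappa(c) \subseteq \chi(u)$, or (ii) asks the child generator for its next tree. Each subset test costs $O(|V(H)|)$, for example with bitsets over $V(H)$. Splicing a subtree is $O(1)$ with pointer-based trees. The filter on \cref{alg-line:enum-subtrees}, which keeps rerootings with $\kappa(c) \subseteq \chi(r(T'_c))$, is handled the same way.

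Third, we amortize. The concern is wasted scanning: candidate vertices $u$ that fail the test, and candidate rerootings that fail the filter. Here we use \cref{thm:meta-correct}: every child $c$ has at least one valid attachment point (its parent's $\chi$ contains $\kappa(c)$ by \ref{H4}(a)), and every child generator yields at least one tree. Each scan is therefore nonempty, and its failures are paid for by the outputs that succeed. The carry cascades follow the standard odometer argument: over $K$ outputs the total number of digit increments is $O(K)$ times the number of digits touched per level. Each level costs $O(f(M)\,|V(H)|)$, which gives the stated amortized delay.

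\textbf{Main obstacle.} Carries can propagate through many recursion levels, and a naive analysis would charge depth times fan-out per output. The fix is to note that every generator in the chain produces at least two distinct outputs whenever it is advanced past its first. Otherwise it is constant and can be frozen once and skipped. Skipping constant generators makes the cascade geometric, so the deep carries amortize to $O(1)$ levels per output. Bounding the cost of each such level by $O(f(M)\,|V(H)|)$ then completes the argument.
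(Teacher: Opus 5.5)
Your high-level plan matches the paper's, whose entire argument is two observations. First, every join tree arises from exactly one choice path, as read off the proof of \cref{thm:enum-complete}. Second, every node of $M$ has at most $f(M)$ children, with $O(|V(H)|)$ work on labels per child. The amortization layer you add to make this rigorous has a step that does not go through. From ``each scan is nonempty'' you conclude that ``its failures are paid for by the outputs that succeed.'' Nonemptiness only gives at least one success per scan, not $O(1)$ failures per success. The candidate set for attaching $T_c$ is the whole current partial tree. It can have $\Theta(|E(H)|)$ nodes of which only one or two contain $\kappa(c)$, while $f(M)$ stays constant. For example, take a root $R_0[a,b,d]$ with children $R_1[a,b,y_1]$ and $R_3[d,w_1]$, each heading a long path of binary relations, plus a leaf child $R_2[a,e]$. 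Only $R_0$ and $R_1$ are valid attachment points for $R_2$. Under your own accounting of $O(|V(H)|)$ per subset test, one cycle of that digit costs $\Theta(|E(H)|\,|V(H)|)$ for only two values. So the per-level bound $O(f(M)\,|V(H)|)$ that your final step relies on is not established. You need one more idea here, and either of two works.
\begin{itemize}
\item Precompute the candidate lists. The set $\{u : \kappa(c)\subseteq\chi(u)\}$ depends only on which physical nodes are present (fixed by $M$, and at a minor node by which sibling is the current skeleton parent), not on the current tree shape. Advancing such a digit then costs $O(1)$.
\item Charge the scans to the last stage. Since $\kappa(r(M))=\emptyset$, each tree returned by $\mathsf{enumRec}(r(M))$ is emitted in all $|E(H)|$ of its rootings. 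Hence $O(f(M)\,|E(H)|\,|V(H)|)$ amortized work per such tree is affordable.
\end{itemize}

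Three smaller points also need care.
\begin{itemize}
\item Freezing constant generators ``once'' is valid only at the granularity of whole calls $\mathsf{enumRec}(c)$, whose number of outputs is fixed. Inside a minor node, the number of valid attachment points for a skeleton child depends on which sibling is its current skeleton parent. So whether that digit is forced changes with the skeleton and must be re-examined. This is harmless, since there are only $O(f(v))$ such digits, but it is not a static freeze.
\item You never cost the rerootings themselves. Moving a root reverses parent pointers along a path, which is not $O(1)$. It becomes $O(1)$ amortized only if the valid roots are visited by single-edge moves. That is possible because the nodes containing $\kappa(c)$ form a connected subtree by \ref{H2}, exactly as the paper's $\mathsf{rerootings}$ routine exploits.
\item When the parent $v$ is a minor node, the guaranteed attachment point is not $v$ itself, since $v$ does not appear in the join tree. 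It is instead the root of a rerooted skeleton subtree, which contains $\chi(v)\supseteq\kappa(c)$.
\end{itemize}
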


%% file: figs/star-meta.tex
\begin{minipage}[b]{0.48\linewidth}
    \centering
    \small
    \begin{tikzpicture}[
        level distance=1.25cm,
        level 1/.style={sibling distance=1.75cm},
    ]
    \node[align=center] { $\lambda: \emptyset$, $\chi: \{ x_1 \}$, $\kappa: \emptyset$} 
        child { node [align=center] {$\lambda: \{ R_1 \}$ \\ $\chi: \{ x_1, x_2 \}$ \\ $\kappa: \{ x_1 \}$ } edge from parent[-] }
        child { node [align=center] { $\lambda: \{ R_2 \}$ \\ $\chi: \{ x_1, x_3 \}$ \\ $\kappa: \{ x_1 \}$ } edge from parent[-] }
        child { node [align=center] { $\lambda: \{ R_3 \}$ \\ $\chi: \{ x_1, x_4 \}$ \\ $\kappa: \{ x_1 \}$ } edge from parent[-] }
        child { node [align=center] { $\lambda: \{ R_4 \}$ \\ $\chi: \{ x_1, x_5 \}$ \\ $\kappa: \{ x_1 \}$ } edge from parent[-] }
    ;
    \end{tikzpicture}
    \captionof{figure}{Meta-decomposition of the query $Q_{\ref{ex:star}}$}
    \label{fig:star-meta}
\end{minipage}

%% file: figs/hierarchical-meta.tex
\begin{minipage}[b]{0.48\linewidth}
    \centering
    \small
    \begin{tikzpicture}[
			level distance=0.7cm,
			level 1/.style={sibling distance=3.5cm}
        ]
        \node[align=center] { $\lambda: \{ R_1 \}$, $\chi: \{ x_1, x_2, x_3 \}$, $\kappa: \emptyset$ }
            child[align=center] { node {$\lambda: \{ R_2 \}$, $\chi: \{ x_1, x_4, x_5 \}$, $\kappa: \{ x_1 \}$} edge from parent[-]
                child[align=center] { node {$\lambda: \{ R_3 \}$, $\chi: \{ x_5, x_6 \}$, $\kappa: \{ x_5 \}$} edge from parent[-] }
            }
            child[align=center,level distance=1cm] { node {$\lambda: \{ R_4 \} $,\\$\chi: \{ x_3, x_7 \}$,\\$\kappa: \{ x_3 \}$} edge from parent[-]
            } ;
    \end{tikzpicture}
    \captionof{figure}{Meta-decomposition of the query $Q_{\ref{ex:hierarchical}}$}
    \label{fig:hierarchical-meta}
\end{minipage}

%% file: figs/meta-complicated.tex
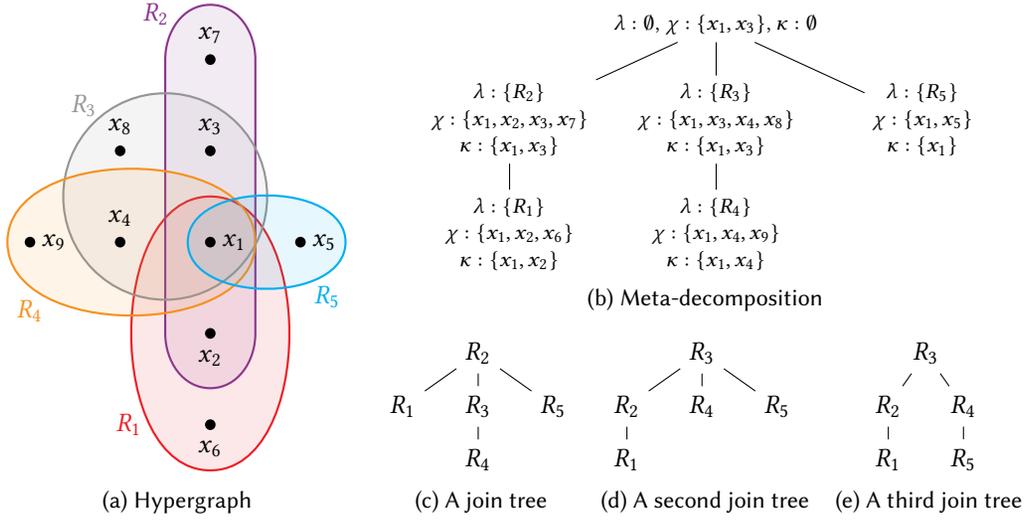
\begin{figure}
    \centering
    \begin{minipage}[b]{0.36\linewidth}
    \begin{subfigure}{\linewidth}
        \centering
        \begin{tikzpicture}[hypergraph,xscale=1.2,yscale=1.2]
            \node[vertex,label={east:$x_1$}] (x1) at (0, 0) {};
            \node[vertex,label={south:$x_2$}] (x2) at (0, -1) {};
            \node[vertex,label={north:$x_3$}] (x3) at (0, 1) {};
            \node[vertex,label={north:$x_4$}] (x4) at (-1, 0) {};
            \node[vertex,label={east:$x_5$}] (x5) at (1, 0) {};

            \node[vertex,label={south:$x_6$}] (x6) at (0, -2) {};
            \node[vertex,label={north:$x_7$}] (x7) at (0, 2) {};
            \node[vertex,label={north:$x_8$}] (x8) at (-1, 1) {};
            \node[vertex,label={east:$x_9$}] (x9) at (-2, 0) {};
        
            \begin{scope}[on background layer]
                
                \newcommand*{\pathri}{($(x1) + (0,0.5)$)
                    to[out=0,in=0] ($(x6) + (0,-0.5)$)
                    to[out=180,in=180] ($(x1) + (0,0.5)$)
                }
                    
                \newcommand*{\pathrii}{($(x7) + (0,0.6)$)
                    to[out=0,in=90] ($(x7) + (0.5,0)$)
                    to[out=-90,in=90] ($(x2) + (0.5,0)$)
                    to[out=-90,in=0] ($(x2) + (0,-0.6)$)
                    to[out=180,in=-90] ($(x2) + (-0.5,0)$)
                    to[out=90,in=-90] ($(x7) + (-0.5,0)$)
                    to[out=90,in=180] ($(x7) + (0,0.6)$)
                }
                
                \newcommand*{\pathriii}{($(x1) + (0.3,-0.3)$)
                    to[out=45,in=-45] ($(x3) + (0.3,0.3)$)
                    to[out=135,in=45] ($(x8) + (-0.3,0.3)$)
                    to[out=-135,in=135] ($(x4) + (-0.3,-0.3)$)
                    to[out=-45,in=-135] ($(x1) + (0.3,-0.3)$)
                }

                \newcommand*{\pathriv}{($(x9) + (-0.25,0)$)
                    to[out=90,in=90] ($(x1) + (0.5,0)$)
                    to[out=-90,in=-90] ($(x9) + (-0.25,0)$)
                }

                \newcommand*{\pathrv}{($(x1) + (-0.25,0)$)
                    to[out=90,in=90] ($(x5) + (0.5,0)$)
                    to[out=-90,in=-90] ($(x1) + (-0.25,0)$)
                }
                
                \draw[hyperedge,looseness=1,draw=Red,fill=Red,fill opacity=.1] \pathri;
                \draw[hyperedge,looseness=1,draw=Fuchsia,fill=Fuchsia,fill opacity=.1] \pathrii;
                \draw[hyperedge,looseness=1,draw=Gray,fill=Gray,fill opacity=.1] \pathriii;
                \draw[hyperedge,looseness=1, draw=BurntOrange,fill=BurntOrange,fill opacity=.1] \pathriv;
                \draw[hyperedge,looseness=1, draw=Cyan,fill=Cyan,fill opacity=.1] \pathrv;
        
                \node[text=Red] at (-0.9,-2) {$R_1$};
                \node[text=Fuchsia] at (-0.6,2.5) {$R_2$};
                \node[text=Gray] at (-1.4,1.5) {$R_3$};
                \node[text=BurntOrange] at (-2,-0.75) {$R_4$};
                \node[text=Cyan] at (1.3,-0.6) {$R_5$};
                
            \end{scope}
        \end{tikzpicture}
        \caption{Hypergraph}
        \label{fig:complicated-hg}
    \end{subfigure}
    \end{minipage}
    \hfill
    \begin{minipage}[b]{.63\linewidth}
    \begin{subfigure}{\linewidth}
        \centering
        \begin{tikzpicture}[
			level 1/.style={level distance=1.25cm, sibling distance=2.75cm},
            level 2/.style={level distance=1.5cm, sibling distance=1.4cm},
            xscale=1,yscale=1
        ]
        \footnotesize
			\node { $\lambda: \emptyset$, $\chi: \{ x_1, x_3 \}$, $\kappa: \emptyset$ }
				child { node[align=center] {$\lambda: \{R_2\}$ \\ $\chi: \{ x_1, x_2, x_3, x_7 \}$ \\ $\kappa: \{ x_1, x_3 \}$} edge from parent[-]
                    child { node[align=center] {$\lambda: \{ R_1 \}$ \\ $\chi: \{ x_1, x_2, x_6 \}$ \\ $\kappa: \{ x_1, x_2 \}$} edge from parent[-] }
                }
                child { node[align=center] {$\lambda: \{ R_3 \}$ \\ $\chi: \{ x_1, x_3, x_4, x_8 \}$ \\ $\kappa : \{ x_1, x_3 \}$} edge from parent[-]
                        child { node[align=center] {$\lambda: \{ R_4 \}$ \\ $\chi: \{ x_1, x_4, x_9 \}$ \\ $\kappa: \{ x_1, x_4 \}$} edge from parent[-] }
                }
                    child { node[align=center] {$\lambda: \{ R_5 \}$ \\ $\chi: \{ x_1, x_5 \}$ \\ $\kappa : \{ x_1 \}$} edge from parent[-] }
                ;
		\end{tikzpicture}
        \vspace{-0.25\baselineskip}
        \caption{Meta-decomposition}
        \label{fig:complicated-meta}
    \end{subfigure}
    
    \vspace{0.5\baselineskip}
    
    \begin{subfigure}{.32\linewidth}
        \centering
        \begin{tikzpicture}[
			level distance=0.7cm,
			level 1/.style={sibling distance=1cm}
        ]
			\node { $R_2$ }
				child { node {$R_1$} edge from parent[-] }
                child { node {$R_3$} edge from parent[-]
                    child { node {$R_4$} edge from parent[-] }
                }
                child { node {$R_5$} edge from parent[-] } ;
		\end{tikzpicture}
        \vspace{-0.25\baselineskip}
        \caption{A join tree}
        \label{fig:complicated-jt1}
    \end{subfigure}
    \hfill
    \begin{subfigure}{.32\linewidth}
        \centering
        \begin{tikzpicture}[
			level distance=0.7cm,
			level 1/.style={sibling distance=1cm}
        ]
			\node { $R_3$ }
                child { node {$R_2$} edge from parent[-]
                    child { node {$R_1$} edge from parent[-] }
                }
				child { node {$R_4$} edge from parent[-] }
                child { node {$R_5$} edge from parent[-] } ;
		\end{tikzpicture}
        \vspace{-0.25\baselineskip}
        \caption{A second join tree}
        \label{fig:complicated-jt2}
    \end{subfigure}
    \hfill
    \begin{subfigure}{.32\linewidth}
        \centering
        \begin{tikzpicture}[
			level distance=0.7cm,
			level 1/.style={sibling distance=1cm}
        ]
			\node { $R_3$ }
                child { node {$R_2$} edge from parent[-]
                    child { node {$R_1$} edge from parent[-]
                    }
                }
				child { node {$R_4$} edge from parent[-]
                    child { node {$R_5$} edge from parent[-] }
                };
		\end{tikzpicture}
        \vspace{-0.25\baselineskip}
        \caption{A third join tree}
        \label{fig:complicated-jt3}
    \end{subfigure}
    \end{minipage}
    \caption{Hypergraph, meta-decomposition, and three join trees of the query $Q_{\ref{ex:meta-complicated}}$ in \cref{ex:meta-complicated}}
    \label{fig:complicated}
\end{figure}

%% file: 5_cost.tex
\section{Cost-Based Optimization from Meta-Decompositions}
\label{sec:opt}

Even though we have presented an algorithm, enumerating all possible join trees is infeasible, as their number grows exponentially with the number of relations, as shown by \cref{thm:exp-join-trees}. Instead, we propose a cost-based optimization algorithm that \emph{operates directly on the meta-decomposition} to find an optimal width-1 query plan.  
Similar to the case for join tree enumeration discussed in \cref{sec:enum}, the search space for such plans is defined by four key factors: (1)~selecting a root of the join tree to induce the width-1 query plan; (2) unnesting minor nodes; (3) handling possible re-branching, and (4) determining the local join order for each node and its neighbors.  In this section, we detail our procedure to address them.  

\subsection{Re-Branching}

As previously discussed, re-branching can happen when a child node $c$ of a node $p$ in the meta-decomposition could also be a child of another descendant node $d$ of $p$ in a join tree, if $\kappa(c) \subsetneq \kappa(d)$.
This scenario, however, is in fact very rare in practice: we found no query in the JOB benchmark that exhibited this scenario, and only one query in the TPC-H benchmark can involve re-branching.
Therefore, for simplicity, we do not consider re-branching when describing the algorithm in this section. 
As we show in the \arxivorappendix, accommodating a possible re-branching adds only a constant factor to the overall complexity.

\begin{algorithm}[t]
\caption{Finding the optimal width-1 query plan to evaluate the query induced by a meta-decomposition}
\label{alg:cost}
\SetKwInOut{Input}{input}
\SetKwInOut{Output}{output}
\SetKwInOut{Known}{known}
\Input{A meta-decomposition $M = (V(M), r(M), E(M), \lambda, \chi, \kappa)$}
\Output{The optimal query plan}

\tcp*[l]{Bottom-up traversal}

\ForEach{non-root node $q \in V(M)$, in bottom-up order}{
    $p \gets$ $q$'s parent node on $M$\;
    $ {\sf plan}(T_{p \to q}) \gets {\sf optimizeLocal}(q, \set{{\sf plan}(T_{q \to c}) : c \in \mathsf{children}(q)})$ 
}

${\sf plan}(M) \gets {\sf optimizeLocal}(r, \set{ {\sf plan}(T_{r \to c}) : c \in \mathsf{children}(r) })$ \;

\tcp*[l]{Top-down traversal}

\ForEach{$c \in \mathsf{children}(r)$}{
    ${\sf plan}(T_{c \to r}) \gets {\sf optimizeLocal}(r, \set{{\sf plan}(T_{r \to c'}) : c' \in \mathsf{children}(r)\setminus \set{c}})$ \;
}

\ForEach{non-root node $q \in V(M)$, in top-down order}{
    $p \gets q$'s parent node on $M$\;
    \ForEach{child $c \in C(q)$} {
        ${\sf plan}(T_{c \to q}) \gets {\sf optimizeLocal}(q, \set{{\sf plan}(T_{q \to p})} \cup \set{{\sf plan}(T_{q \to c'}) : c' \in \mathsf{children}(r)\setminus \set{c}})$ \;
    }
}

\Return{the plan $
{\sf plan}(T_{p \to q}) \bowtie {\sf plan}(T_{q \to p})
$ with the minimum cost among all $(p, q) \in E(M)$ }

\end{algorithm}

\subsection{The Overall Algorithm}

Given a meta-decomposition $M$, our algorithm to produce a width-1 query plan is shown in \cref{alg:cost}. This algorithm first makes two traversals of $M$, one bottom-up and one top-down.

In the bottom-up traversal, for each non-root node $q$, we find the optimal plan to join the relation in $q$ and all the plans given by the children of $q$, which gives the optimal plan to evaluate $Q(T_q)$, the query induced by the subtree of $M$ rooted at $q$.

\begin{example}
    For the query $Q_{\ref{ex:hierarchical}}$ in \cref{ex:hierarchical}, with meta-decomposition shown in \cref{fig:hierarchical-meta}, this bottom-up traversal visits the nodes in the following order. 
    (For simplicity, we simply use $R_i$ to refer to the node with $\lambda$-label $\{ R_i \}$.)
    \begin{enumerate}[leftmargin=*]
        \item $R_3$. It generates the plan for the subtree $T_{R_3}$, or $T_{R_2 \rightarrow R_3}$, containing $R_3$ only. The plan is simply a scan of $R_3$.
        \item $R_2$. It generates the plan for $T_{R_2}$, or $T_{R_1 \rightarrow R_2}$, containing $R_2$ and $R_3$. It simply needs to join $R_2$ with the plan for $T_{R_3}$ in the previous step, i.e., $R_2 \bowtie R_3$.
        \item $R_4$. It generates the plan for the subtree $T_{R_4}$, or $T_{R_1 \rightarrow R_4}$, containing $R_4$ only. The plan is simply a scan of $R_4$.
        \item $R_1$. This is the root node. The traversal terminates here.
    \end{enumerate}
\end{example}

Then, in the top-down traversal, for each node $q$ and each child $c$ of $q$ on the meta-decomposition, we find the optimal plan to join $q$ and all plans given by the neighbors (parent and children) of $q$ except $c$, which gives the optimal plan to join all relations in $T_{c \to q}$.

\begin{example}
    Again for $Q_{\ref{ex:hierarchical}}$, the top-down traversal visits the nodes in the following order.
    \begin{enumerate}[leftmargin=*]
        \item $R_1 \to$ child $R_2$. It generates the plan for the subtree $T_{R_2 \rightarrow R_1}$, containing $R_1$ and $R_4$. The plan is simply $R_1 \bowtie R_4$.
        \item $R_2 \to$ child $R_3$. It generates the plan for the subtree $T_{R_3 \rightarrow R_2}$, containing $R_1$, $R_2$, and $R_4$. Similarly, we join $R_2$ with the result of $R_1 \bowtie R_4$, which is the plan for the subtree $T_{R_2 \rightarrow R_1}$.
        \item $R_1 \to$ child $R_4$. It generates the plan for the subtree $T_{R_4 \rightarrow R_1}$, containing $R_1$, $R_2$, and $R_3$. Here, $R_2$ and $R_3$ are in the same subtree $T_{R_1 \rightarrow R_2}$ of $R_1$. So, we first take the plan generated for this subtree, which is $R_2 \bowtie R_3$, and then join this result with $R_1$.
    \end{enumerate}
\end{example}

At the end, we choose the root of the query plan, which is the minimum-cost plan of the form ${\sf plan}(T_{p \to q}) \bowtie {\sf plan}(T_{q \to p})$ among all $(p, q) \in E(M)$. For example, the width-1 plan shown in \cref{fig:hierarchical-plan} corresponds to ${\sf plan}(T_{R_2 \to R_1}) \bowtie {\sf plan}(T_{R_1 \to R_2})$.

In the example we just considered, the fan-out is 2. For higher fan-outs, at each node in the meta-decomposition, we need to decide on an order to join the results of multiple subtrees. To do so, the algorithm calls the function $\sf optimizeLocal $, which optimizes the local join for each node and its neighbors. We will elaborate on this in \cref{sec:local-join}.

\subsection{Optimizing the Local Join with Neighbors}
\label{sec:local-join}
One essential component of this algorithm is to, for a vertex $q$ in the meta-decomposition, find the optimal plan to join the relation in this vertex and the optimal query plans given by (a subset of) its neighbors ($\sf optimizeLocal$).  This can be viewed as a classic star join optimization problem, where $R_q$ is the hub relation, and the result of the induced query $Q(T_{c_i})$ from each of its neighbors $c_i$ is a satellite relation, denoted by $R_{i}$ for $i = 1, \dots, f(q)$, where $\chi(c_i) \cap \chi(q) \neq \emptyset$ and $\chi(c_i) \cap \chi(c_j) \subseteq \chi(q)$ for all $i \neq j$.
Unfortunately, even in this restricted setting, this problem of optimizing the join of $R_q$ and all $R_i$'s is still \NP{}-hard, because this can be reduced from the problem of optimizing the ordering of a Cartesian product $S_1 \times S_2 \times \cdots \times S_n$---which is already shown to be \NP{}-hard \cite{scheufele_constructing_1996}---by adding a shared attribute to each $S_i$ such that all tuples have the same value for this attribute.  By using dynamic programming algorithms~\cite{moerkotte_dynamic_2008,stoian_dpconv_2024}, the problem can be solved in $\widetilde{O}(2^{f(q)} \cdot (f(q))^3)$
, where $\widetilde{O}$ suppresses poly-logarithmic factors, and the fan-out $f(q)$ is the number of children of $q$ on the meta-decomposition.  However, as mentioned in Section~\ref{sec:hierarchical-query-plan}, \emph{the fan-out of real-world queries is typically very small}. This makes the exponential-time exact algorithm practical for the vast majority of cases. 
In the rare case that a node has a high fan-out, making the exact DP algorithm too slow, we can substitute it with a heuristic. An effective choice is Greedy Operator Ordering~(GOO)~\cite{fegaras_new_1998}, which iteratively joins the neighbor that results in the smallest intermediate result. This method is outlined in the \arxivorappendix.

\subsection{Case Study: Optimizing Relation-Dominated Queries with Meta-Decompositions}

Our framework naturally incorporates projection pushdown for early cardinality reduction. When building a plan for a subquery, such as $\textsf{plan}(T_{p \to q})$ in \cref{alg:cost}, we can immediately project the result to only the attributes needed for subsequent operations.
Specifically, after computing the joins within the subtree $T_q$, we only need to retain (1)~attributes that are part of the final output, and
(2)~attributes required for joining with the remaining relations, which is $\kappa(q)$ by definition.

This strategy is especially effective for relation-dominated queries, where a single relation contains all output attributes.  If a subtree contains no output attributes, we can project its result to only $\kappa(q)$, i.e., the join keys with the remaining relations, which can significantly reduce the size of intermediate results.
However, traditional dynamic-programming-based optimizers lack such a comprehensive structural view. Therefore, they typically do not identify such opportunities and must consider the entire search space of query plans, relying solely on cost estimation.

\begin{example} \label{ex:rel-dom}
\input{figs/relation-dominated}
    Take query 17f in the join order benchmark~(JOB) as an example. Omitting irrelevant attributes in each relation, this query can be abstractly represented as
    \begin{align*}
        Q_{\ref{ex:rel-dom}} \gets \pi_{x_{\rm n}} ( 
            & R_{\sf ci}[x_{\sf mid}, x_{\sf pid}] \bowtie R_{\sf cn}[x_{\sf cid}, x_{\sf cc}] \bowtie \sigma_{x_{\sf k} = \cdots}(R_{\sf k}[x_{\sf kid}, x_{\sf k}])
            \bowtie R_{\sf mc}[x_{\sf mid}, x_{\sf cid}] \\
            & \bowtie R_{\sf mk}[x_{\sf mid}, x_{\sf kid}] \bowtie \sigma_{x_{\sf n} = \cdots}(R_{\sf n}[x_{\sf pid}, x_{\sf n}]) \bowtie R_{\sf t}[x_{\sf mid}, x_{\sf t}] 
        ).
    \end{align*}
    This is a relation-dominated query, as it has only one output attribute $x_{\sf n}$, which is an attribute of $R_{\sf n}$.
    \cref{fig:rel-dom} shows the hypergraph, the optimal width-1 query plan (found by \sys{} given cardinalities of the real data set), along with the join tree inducing it, and a width-2 query plan (which is the plan given by dynamic programming).
    In the width-1 plan in \cref{fig:rel-dom-w1}, \emph{all intermediate join results can be projected to a single attribute}, since the interface is always only one attribute, and there is no output attribute before the final join with $R_{\sf n}$.
    But, with the width-2 query plan in \cref{fig:rel-dom-duckdb}, \emph{the result of $((R_{\sf mk} \bowtie R_{\sf k}) \bowtie R_{\sf t}) \bowtie R_{\sf ci}$ has to be projected to two attributes $x_{\sf mid}$ and $x_{\sf pid}$}, which are join keys with $R_{\sf mc}$ and $R_{\sf n}$, respectively.
    Executing these plans, we indeed observe that \emph{the width-1 plan has a {\rm 1.72x} speedup over the width-2 plan}.
\end{example}

As with evaluating select-project-join queries, one can easily modify the cost model and use a join tree–based query evaluation algorithm, including any Yannakakis-style algorithm, that incorporates cost-based optimization via meta-decompositions. We hope this work inspires further research in related areas and encourages the integration of these theoretically desirable query evaluation techniques into practical systems.

%% file: figs/relation-dominated.tex
\begin{figure}
    \centering
    \small
    \hfill
    \begin{subfigure}{.5\linewidth}
        \centering
        \vspace{-0.5\baselineskip}
        \begin{tikzpicture}[hypergraph,xscale=.75,yscale=0.7,trim left=0cm,trim right=0cm]
            \node[vertex,label={west:$x_{\sf mid}$}] (mid) at (0, 0) {};
            \node[vertex,label={east:$x_{\sf kid}$}] (kid) at (-2.5, 0) {};
            \node[vertex,label={east:$x_{\sf k}$}] (k) at (-4, 0) {};

            \node[vertex,label={east:$x_{\sf pid}$}] (pid) at (0, -1.5) {};
            \node[vertex,label={east:$x_{\sf n}$}] (n) at (0, -2.5) {};
            
            \node[vertex,label={west:$x_{\sf cid}$}] (cid) at (2.5, 0) {};
            \node[vertex,label={west:$x_{\sf cc}$}] (cc) at (4, 0) {};
            
            \node[vertex,label={west:$x_{\sf t}$}] (t) at (0, 1) {};
        
            \begin{scope}[on background layer]

                \newcommand*{\patht}{($(t) + (0,0.5)$)
        					to[out=180,in=180] ($(mid) + (0,-0.5)$)
        					to[out=0,in=0] ($(t) + (0,0.5)$)}

                \newcommand*{\pathmk}{	($(mid) + (0.25,0.25)$)
                    to[out=135,in=135] ($(kid) + (-0.25,-0.25)$)
                    to[out=-45,in=-45] ($(mid) + (0.25,0.25)$)}
                
                \newcommand*{\pathk}{($(kid) + (0.6,0)$)
        					to[out=90,in=90] ($(k) + (-0.5,0)$)
        					to[out=-90,in=-90] ($(kid) + (0.6,0)$)}

                \newcommand*{\pathci}{($(mid) + (0,0.5)$)
        					to[out=180,in=180] ($(pid) + (0,-0.5)$)
        					to[out=0,in=0] ($(mid) + (0,0.5)$)}

                \newcommand*{\pathn}{($(pid) + (0,0.5)$)
        					to[out=180,in=180] ($(n) + (0,-0.5)$)
        					to[out=0,in=0] ($(pid) + (0,0.5)$)}
                
                \newcommand*{\pathmc}{($(mid) + (-0.25,0.25)$)
                    to[out=45,in=45] ($(cid) + (0.25,-0.25)$)
                    to[out=-135,in=-135] ($(mid) + (-0.25,0.25)$)}
                
                \newcommand*{\pathcn}{($(cid) + (-0.5,0)$)
        					to[out=90,in=90] ($(cc) + (0.5,0)$)
        					to[out=-90,in=-90] ($(cid) + (-0.5,0)$)}
                
                \draw[hyperedge,looseness=1,draw=Red,fill=Red,fill opacity=.1] \patht;
                \draw[hyperedge,looseness=1,draw=Fuchsia,fill=Fuchsia,fill opacity=.1] \pathmk;
                \draw[hyperedge,looseness=1,draw=NavyBlue,fill=NavyBlue,fill opacity=.1] \pathk;
                \draw[hyperedge,looseness=1, draw=BurntOrange,fill=BurntOrange,fill opacity=.1] \pathmc;
                \draw[hyperedge,looseness=1, draw=TealBlue,fill=TealBlue,fill opacity=.1] \pathcn;
                \draw[hyperedge,looseness=1, draw=Gray,fill=Gray,fill opacity=.1] \pathci;
                \draw[hyperedge,looseness=1, draw=Thistle,fill=Thistle,fill opacity=.1] \pathn;
        
                \node[text=Red] at (0.75,1.25) {$R_{\sf t}$};
                \node[text=Fuchsia] at (-1.5,1) {$R_{\sf mk}$};
                \node[text=Cyan] at (-3.25,1) {$R_{\sf k}$};
                \node[text=BurntOrange] at (1.6,1) {$R_{\sf mc}$};
                \node[text=TealBlue] at (3.25,1) {$R_{\sf cn}$};
                \node[text=Gray] at (-1,-1) {$R_{\sf ci}$};
                \node[text=Thistle] at (-1,-2.25) {$R_{\sf n}$};
                
            \end{scope}
        \end{tikzpicture}
        \caption{Hypergraph}
        \label{fig:rel-dom-hg}
    \end{subfigure}
    \hfill
    \begin{subfigure}{.33\linewidth}
        \centering
        \begin{tikzpicture}[
			level distance=0.9cm,
			level 1/.style={sibling distance=0.9cm}
        ]
			\node[align=center] { minor node \\ $\chi: \{ x_{\sf mid} \}$ }
				child { node {$R_{\sf mk}$} edge from parent[-]
                    child { node {$R_{\sf k}$} edge from parent[-] }
                }
                child { node {$R_{\sf ci}$} edge from parent[-]
                    child { node {$R_{\sf n}$} edge from parent[-] }
                }
                child { node {$R_{\sf mc}$} edge from parent[-]
                    child { node {$R_{\sf cn}$} edge from parent[-] }
                }
                child { node {$R_{\sf t}$} edge from parent[-] };
                
		\end{tikzpicture}
        \caption{Meta-decomposition (simplified)}
        \label{fig:rel-dom-meta}
    \end{subfigure}
    \hfill \ 

    \begin{subfigure}{.3\linewidth}
        \centering
        \begin{tikzpicture}[
			level distance=0.65cm,
			level 1/.style={sibling distance=1cm}
        ]
			\node { $R_{\sf n}$ }
				child { node {$R_{\sf ci}$} edge from parent[-]
                    child { node {$R_{\sf mk}$} edge from parent[-]
                        child { node {$R_{\sf k}$} edge from parent[-] }
                        child { node {$R_{\sf t}$} edge from parent[-] }
                        child { node {$R_{\sf mc}$} edge from parent[-]
                            child { node {$R_{\sf cn}$} edge from parent[-] }
                        }
                    }
                };
                
		\end{tikzpicture}
        \caption{Join tree inducing the width-1 plan in \cref{fig:rel-dom-w1}}
        \label{fig:rel-dom-jt}
    \end{subfigure}
    \hfill
    \begin{subfigure}{.33\linewidth}
        \centering
        \begin{tikzpicture}[
			level distance=0.55cm,
            level 1/.style={sibling distance=1.6cm},
			level 2/.style={sibling distance=1.6cm},
            level 3/.style={sibling distance=1.6cm},
            level 4/.style={sibling distance=0.8cm}
        ]
			\node { $\bowtie$ }
                child { node {$\bowtie$} edge from parent[-]
                    child { node {$\bowtie$} edge from parent[-]
                        child { node {$\bowtie$} edge from parent[-]
            				child { node {$\bowtie$} edge from parent[-]
                                child { node {$R_{\sf mk}$} edge from parent[-] }
                                child { node {$R_{\sf k}$} edge from parent[-] }
                            }
                            child { node {$R_{\sf t}$} edge from parent[-] }
                        }
                        child { node {$\bowtie$} edge from parent[-]
                            child { node {$R_{\sf mc}$} edge from parent[-] }
                            child { node {$R_{\sf cn}$} edge from parent[-] }
                        }
                    }
                    child { node {$R_{\sf ci}$} edge from parent[-] } 
                }
                child { node {$R_{\sf n}$} edge from parent[-] };
		\end{tikzpicture}
        \caption{Optimal width-1 query plan, induced by the join tree in \cref{fig:rel-dom-jt}}
        \label{fig:rel-dom-w1}
    \end{subfigure}
    \hfill
    \begin{subfigure}{.3\linewidth}
        \centering
        \begin{tikzpicture}[
			level distance=0.55cm,
			level 1/.style={sibling distance=1cm}
        ]
			\node { $\bowtie$ }
                child { node {$\bowtie$} edge from parent[-]
                    child { node {$\bowtie$} edge from parent[-]
                        child { node {$\bowtie$} edge from parent[-]
            				child { node {$\bowtie$} edge from parent[-]
                				child { node {$\bowtie$} edge from parent[-]
                                    child { node {$R_{\sf mk}$} edge from parent[-] }
                                    child { node {$R_{\sf k}$} edge from parent[-] }
                                }
                                child { node {$R_{\sf t}$} edge from parent[-] }
                            }
                            child { node {$R_{\sf ci}$} edge from parent[-] }
                        }
                        child { node {$R_{\sf n}$} edge from parent[-] }
                    }
                    child { node {$R_{\sf mc}$} edge from parent[-] }
                }
                child { node {$R_{\sf cn}$} edge from parent[-] };
		\end{tikzpicture}
        \caption{Optimal plan by dynamic programming, of width 2}
        \label{fig:rel-dom-duckdb}
    \end{subfigure}
    \caption{The hypergraph, the meta-decomposition, the optimal width-1 query plan with the inducing join tree, and the optimal query plan (of width 2) found by dynamic programming, of the query JOB 17f in \cref{ex:rel-dom}}
    \label{fig:rel-dom}
\end{figure}
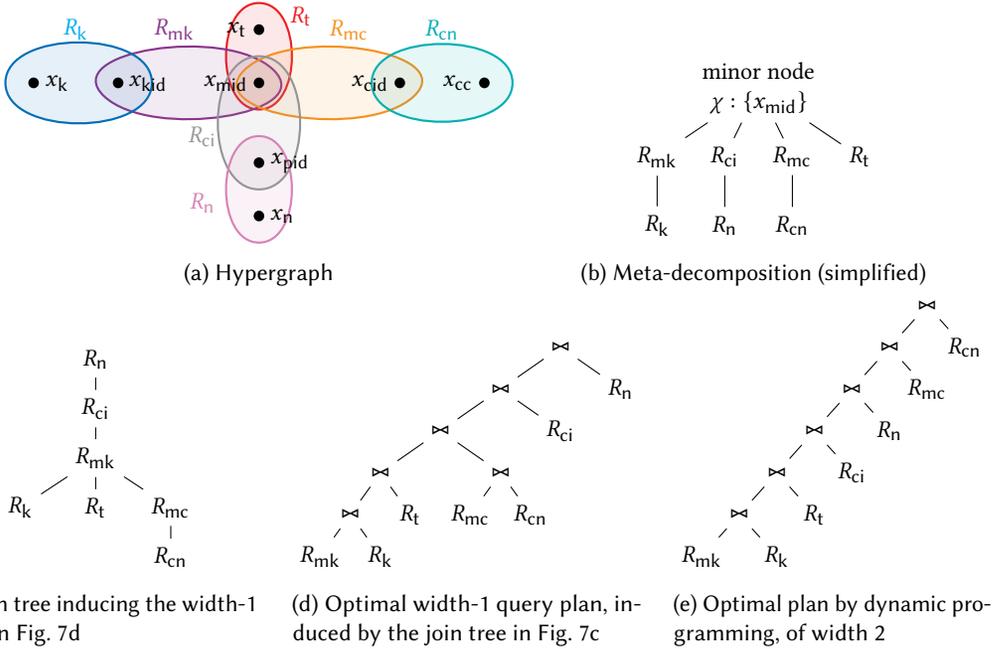

%% file: 6_experiments.tex
\section{Experimental Evaluation and Analysis}
\label{sec:exp}

In this section, we present experimental results to answer the following research questions:

\begin{enumerate}[label=\textbf{(RQ\arabic*)},ref=(RQ\arabic*),leftmargin=*]
    \item\label{RQ1}\gdef\Giref{\ref{RQ1}} How fast is query optimization using meta-decompositions?
	\item\label{RQ2}\gdef\Giref{\ref{RQ2}} How good are width-1 query plans, compared to the globally optimal ones?
    \item\label{RQ3}\gdef\Giref{\ref{RQ3}} To what extent is our proposed meta-decompositions-based query optimization sensitive to errors in cardinality estimation?
    \item\label{RQ4}\gdef\Giref{\ref{RQ4}} To what extent can our proposed meta-decompositions-based join tree enumeration technique help structure-based query evaluation?
\end{enumerate}

\subsection{Experimental Setup}
\subsubsection{Experimental Environment} Experiments are conducted on an Apple M4 Pro machine with 14 cores and 48 GB RAM, running macOS Tahoe 16.0. We use DuckDB as our test database because of its excellent performance and simple cost model. The experiments use DuckDB 1.2.2, Scala 3.6.3, GraalVM 17.0.12, and LLVM 20.1.3. The full source code for experiments is available at~\cite{github}.

We implement \sys{} in Scala. For each query, it first computes the meta-decomposition (Section~\ref{sec:meta}) and then applies our cost-based optimization algorithm (Section~\ref{sec:opt}) to select the best width-1 query plan.  
In the experiments, we compare the following query optimization approaches:
\begin{enumerate}[leftmargin=*]
    \item \sys{}, using our implementation.
    \item DPconv~\cite{stoian_dpconv_2024}, the state-of-the-art dynamic programming optimizer.
    \item DuckDB's native optimizer~\cite{raasveldt_duckdb_2019}, which uses \textsf{DPccp}~\cite{moerkotte_dynamic_2008} for queries with up to 12 relations and Greedy Operator Ordering~(GOO)~\cite{fegaras_new_1998} otherwise.
    \item UnionDP, a heuristic approach proposed in~\cite[Algorithm~4]{mancini_efficient_2022} which (i) partitions the query graph into subgraphs, each with size up to a threshold (which we set to 12, consistent with DuckDB), (ii) runs dynamic programming on each subgraph, and (iii) runs dynamic programming on the entire graph, treating each subgraph as a composite node.
    \item Yannakakis$^+$~\cite{wang_yannakakis_2025}, a state-of-the-art structure-guided query evaluation approach.
    \item LearnedRewrite~\cite{zhou_learned_2023,zhou_learned_2021}, a machine learning--based query rewriter.
    \item LLM-R$^2$~\cite{li_llm-r2_2024}, a query rewriter based on machine learning and large language models (LLMs).
\end{enumerate}
For Yannakakis$^+$, we use the implemented system as a black box and run the rewritten queries in DuckDB.
For LearnedRewrite and LLM-R$^2$, we similarly evaluate them using their pre-trained models as-is and run the rewritten queries in DuckDB.
For \sys{}, DPconv, and UnionDP, the query plans are enforced in DuckDB by rewriting queries into a sequence of subqueries that create a temporary view at each step, as is standard in, e.g., \cite{wang_yannakakis_2025,gottlob_reaching_2023}.  For all three methods, we apply the same strategies of selection and projection pushdown as explained in \cref{sec:prelim-plans}. Namely, we project out all irrelevant attributes (those that are neither in the output nor in subsequent joins) at every step, and we integrate each selection (filter) condition in the join or scan operation at the lowest possible level, i.e., as soon as all relevant attributes exist.

Each query plan is executed $10$ times, and we report the median optimization and execution time. To ensure accuracy, we exclude I/O time from time measurements. The full dataset is preloaded into the DuckDB database, and the estimated cardinalities are preloaded into memory, with all loading time excluded from timing. We enforce a 5-minute timeout on execution time. Queries that none of the optimizers can complete within this time limit are excluded from the results.

\subsubsection{Queries, Benchmarks, and Cardinalities.} We experiment with the following four benchmarks:
\begin{enumerate}[leftmargin=*]
    \item The SPJ (select-project-join) queries in the \textbf{Decision Support Benchmark (DSB)}~\cite{ding_dsb_2021}. We use the implemented system to generate a 10 GB database instance and generate queries using the default settings. We only consider acyclic queries (240 out of 300). 
    \item The \textbf{Join Order Benchmark (JOB)}~\cite{leis_query_2018} over the IMDB dataset. All queries in this benchmark are already acyclic.
    \item \textbf{Musicbrainz}, a benchmark proposed by \cite{mancini_efficient_2022} with significantly larger queries that join up to 26 relations. We use the implemented query generator with default settings, except that we project the output to at most 5 randomly-selected attributes. We retain only the acyclic queries (243 out of 375).
    \item \textbf{JOBLarge}, a new benchmark we introduce to extend JOB to simulate practical but larger analytical scenarios, where each query is generated by selecting two queries from the JOB that share at least one common output attribute and merging them into a single query, with a join on a shared attribute, such that the query remains acyclic. 
We only retain generated queries that take DuckDB at least 50 ms to evaluate.
\end{enumerate}
In \cref{tab:query-props}, we show detailed statistics of some structural properties of queries in each benchmark.

\input{figs/exp/tab-benchmark-stats}

For \sys{}, DPconv, and UnionDP, which all follow the same cost model, we precompute (by running appropriate \texttt{COUNT(*)} queries) true cardinalities of all possible intermediate join results for the relatively small benchmarks DSB and JOB, in the approach and format consistent with the implementation of DPconv~\cite{stoian_dpconv_2024}. 
For the larger benchmarks Musicbrainz and JOBLarge, computing exact cardinalities for all possible intermediate joins is infeasible due to the much larger query sizes. Therefore, we use estimated cardinalities instead, via DuckDB's \texttt{EXPLAIN} command. Note, however, that these estimates may sometimes be highly inaccurate and may not reflect the true relative sizes of intermediate results.
DuckDB, Yannakakis$^+$, Learned Rewrite, and LLM-R$^2$ are used as black boxes with their own cost models.

\subsection{Results}

In this section, we present the results of our experiments.

\subsubsection{\ref{RQ1} Optimization Time}

\begin{figure}
    \includegraphics[width=\linewidth]{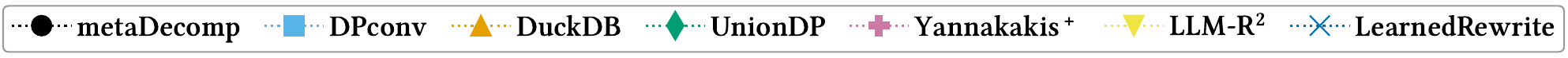}
    
    \vspace{0.5\baselineskip}
    
    \begin{subfigure}[b]{0.49\textwidth}
        \centering
        \includegraphics[width=\linewidth]{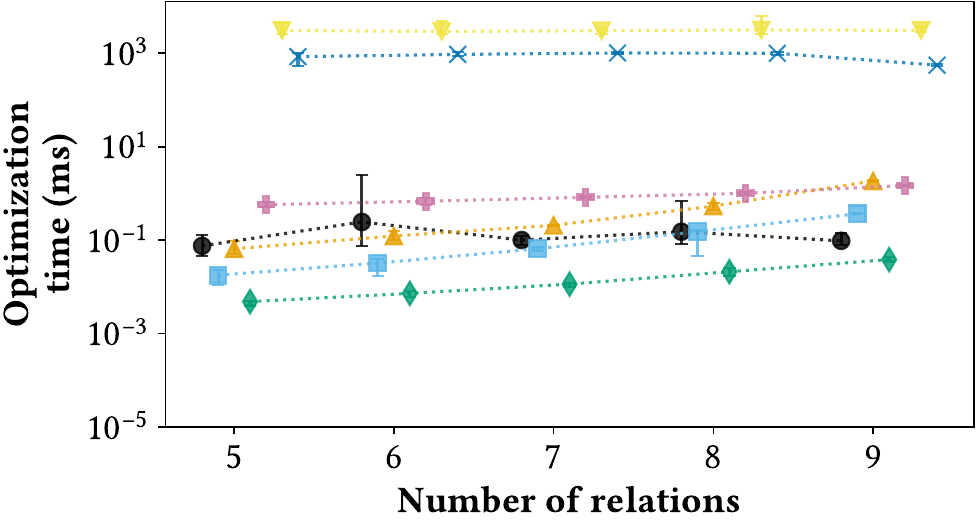}
        \vspace{-1.25\baselineskip}
        \caption{DSB}
        \label{fig:opt-time-dsb}
    \end{subfigure}
    \hfill
    \begin{subfigure}[b]{0.49\textwidth}
        \centering
        \includegraphics[width=\linewidth]{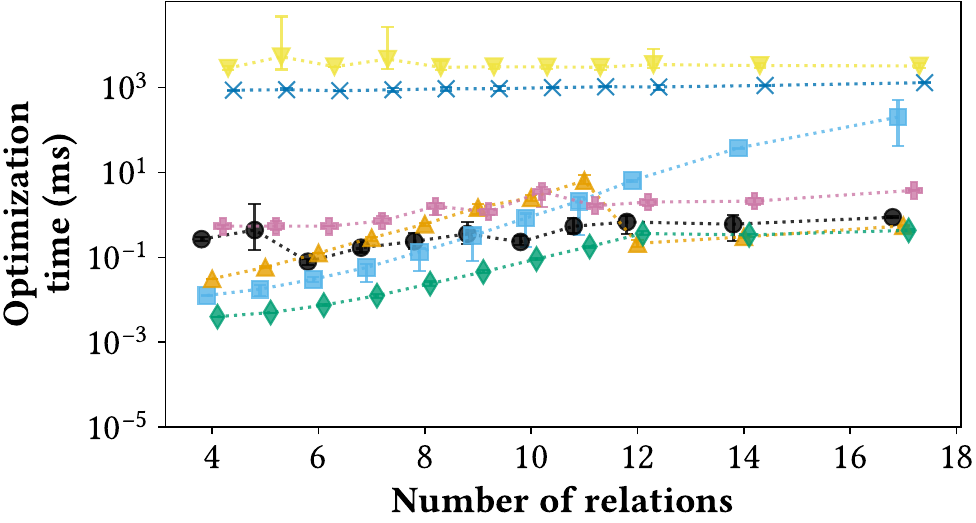}
        \vspace{-1.25\baselineskip}
        \caption{JOB}
        \label{fig:opt-time-job-original}
    \end{subfigure}
    
    \vspace{0.75\baselineskip}
    
    \begin{subfigure}[b]{0.49\textwidth}
        \centering
        \includegraphics[width=\linewidth]{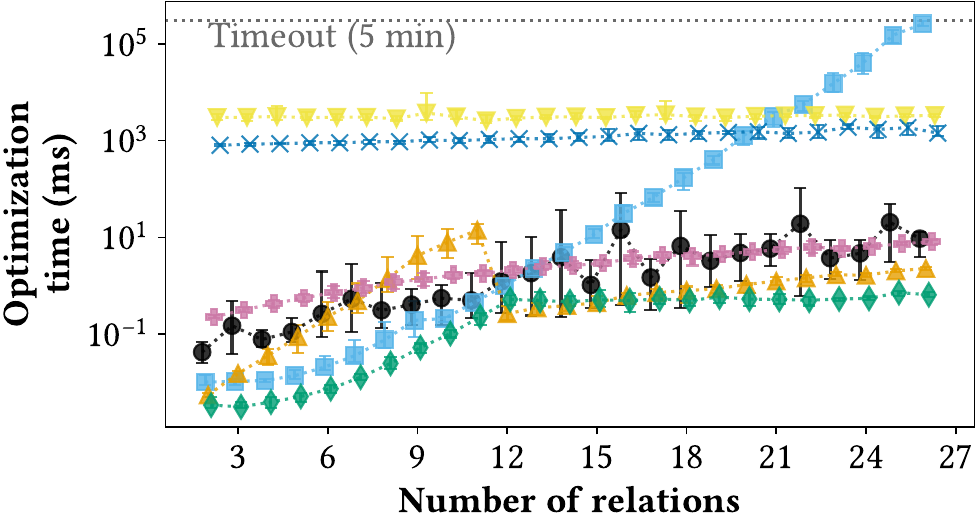}
        \vspace{-1.25\baselineskip}
        \caption{Musicbrainz}
        \label{fig:opt-time-musicbrainz}
    \end{subfigure}
    \hfill
    \begin{subfigure}[b]{0.49\textwidth}
        \centering
        \includegraphics[width=\linewidth]{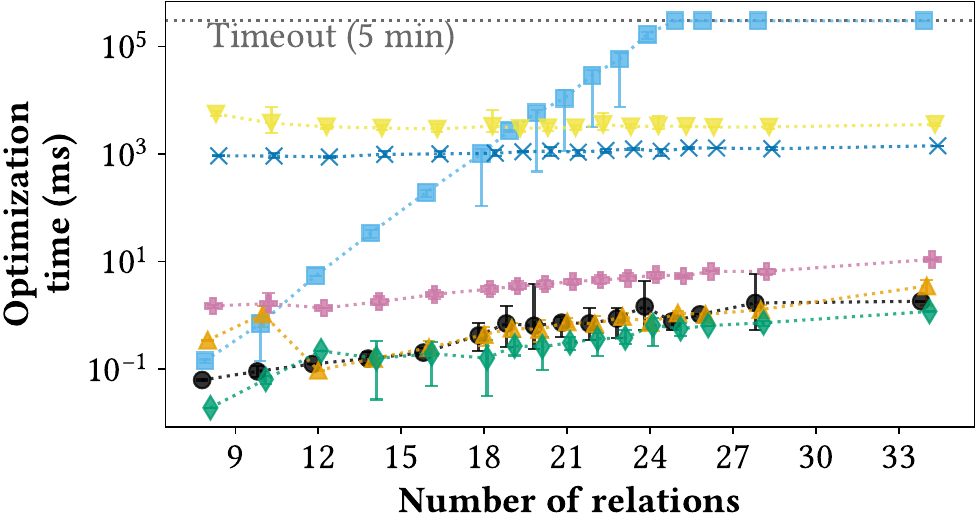}
        \vspace{-1.25\baselineskip}
        \caption{JOBLarge}
        \label{fig:opt-time-job-large}
    \end{subfigure}
    \vspace{-0.25\baselineskip}
    \caption{Optimization time for queries in each benchmark. For each number of relations, the dot shows the average, and the whiskers show the minimum and maximum optimization time.}
    \label{fig:opt-time-overall}
\end{figure}

We first compare the optimization time. The results are shown in \cref{fig:opt-time-overall}.
The figures clearly show that the optimization time of \sys{} remains almost constant as the number of relations increases, typically below 10 ms, as is also the case for DuckDB, UnionDP, and Yannakakis$^+$.
The two machine learning--based approaches, LLM-R$^2$ and LearnedRewrite, exhibit stable but higher optimization time, in the order of seconds, due to the overhead of the complex models and, for LLM-R$^2$, the interaction with LLMs.
DPconv, however, has an exponential growth, and its optimization time surpasses that of \sys{} starting at approximately 9 relations. 
For many queries in Musicbrainz and JOBLarge (starting from approximately 25 relations), DPconv fails to find a valid plan within the 5-minute limit, again highlighting that traditional exponential-time dynamic programming–based optimizers do not scale to such large queries.

\subsubsection{\ref{RQ2} Quality of width-1 query plans}

Even though \sys{} restricts its search to width-1 plans only, as shown in \cref{fig:cost-ratio}, for most queries, it \emph{finds query plans with cost values comparable to those of the optimal plans}.
Furthermore, \cref{fig:exec-speedup} shows the speedups achieved by executing the optimal width-1 query plans over executing the globally optimal query plans. Note that, even though one would typically not expect speedups since the optimal plans have the minimum possible cost values, in many cases, the optimal width-1 query plans in fact run faster in the actual execution, thanks to the small intermediate result sizes through exploiting queries' structural properties.

\begin{figure}
    \begin{subfigure}[b]{0.24\textwidth}
        \centering
        \includegraphics[width=\linewidth]{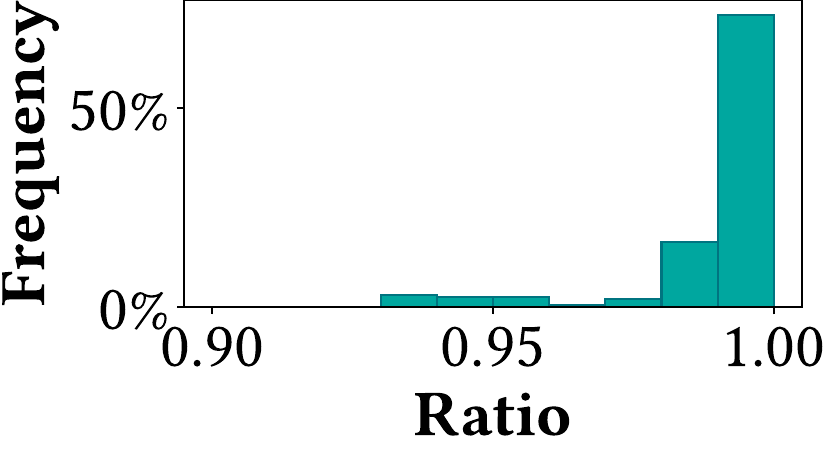}
        \vspace{-1.25\baselineskip}
        \caption{DSB}
        \label{fig:cost-ratios-dsb}
    \end{subfigure}
    \hfill
    \begin{subfigure}[b]{0.24\textwidth}
        \centering
        \includegraphics[width=\linewidth]{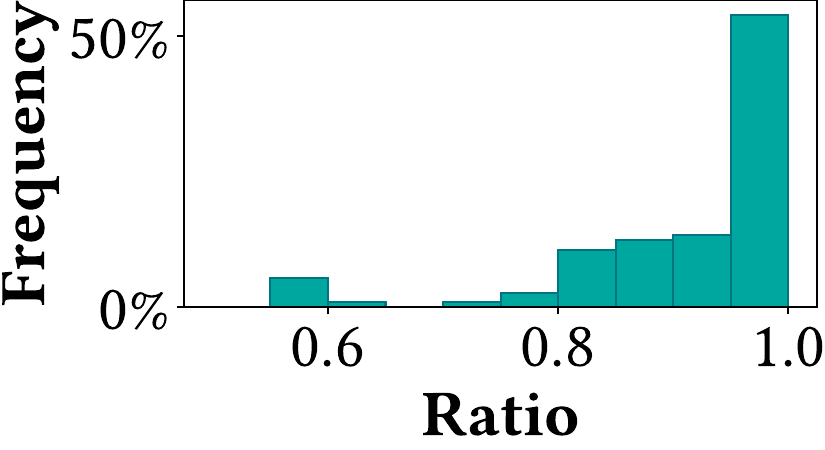}
        \vspace{-1.25\baselineskip}
        \caption{JOB}
        \label{fig:cost-ratios-job-original}
    \end{subfigure}
    \hfill
    \begin{subfigure}[b]{0.24\textwidth}
        \centering
        \includegraphics[width=\linewidth]{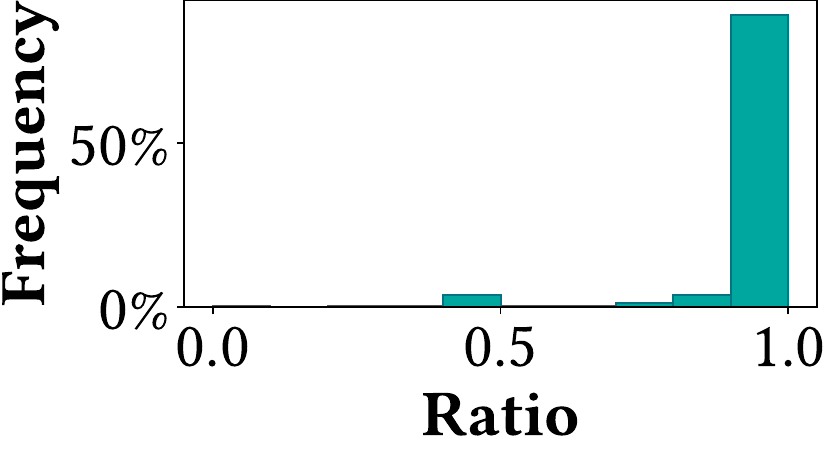}
        \vspace{-1.25\baselineskip}
        \caption{Musicbrainz}
        \label{fig:cost-ratios-musicbrainz}
    \end{subfigure}
    \hfill
    \begin{subfigure}[b]{0.24\textwidth}
        \centering
        \includegraphics[width=\linewidth]{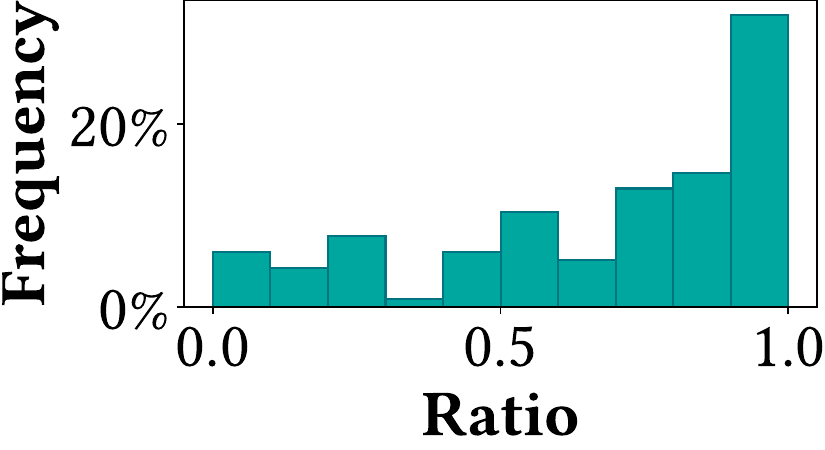}
        \vspace{-1.25\baselineskip}
        \caption{JOBLarge}
        \label{fig:cost-ratios-job-large}
    \end{subfigure}
    
    \vspace{-0.5\baselineskip}
    \caption[Ratios between the costs of the globally optimal query plans and the costs of the optimal width-1 plans for queries in each benchmark]{Ratios between the costs of the globally optimal query plans and the costs of the optimal width-1 plans. A ratio of 1 indicates they have the same cost, and is included in the bar between 0.9 and 1.0.\footnotemark[3]}
    \vspace{0.5\baselineskip}
    \label{fig:cost-ratio}
\end{figure}

    \footnotetext[3]{This measure is analogous to speedups. Since width-1 plans are a restricted class, their theoretical cost values cannot be better (lower) than the global optimal, and thus the ratios are at most 1.}

\begin{figure}
    \begin{subfigure}[b]{0.49\textwidth}
        \centering
        \includegraphics[width=\linewidth]{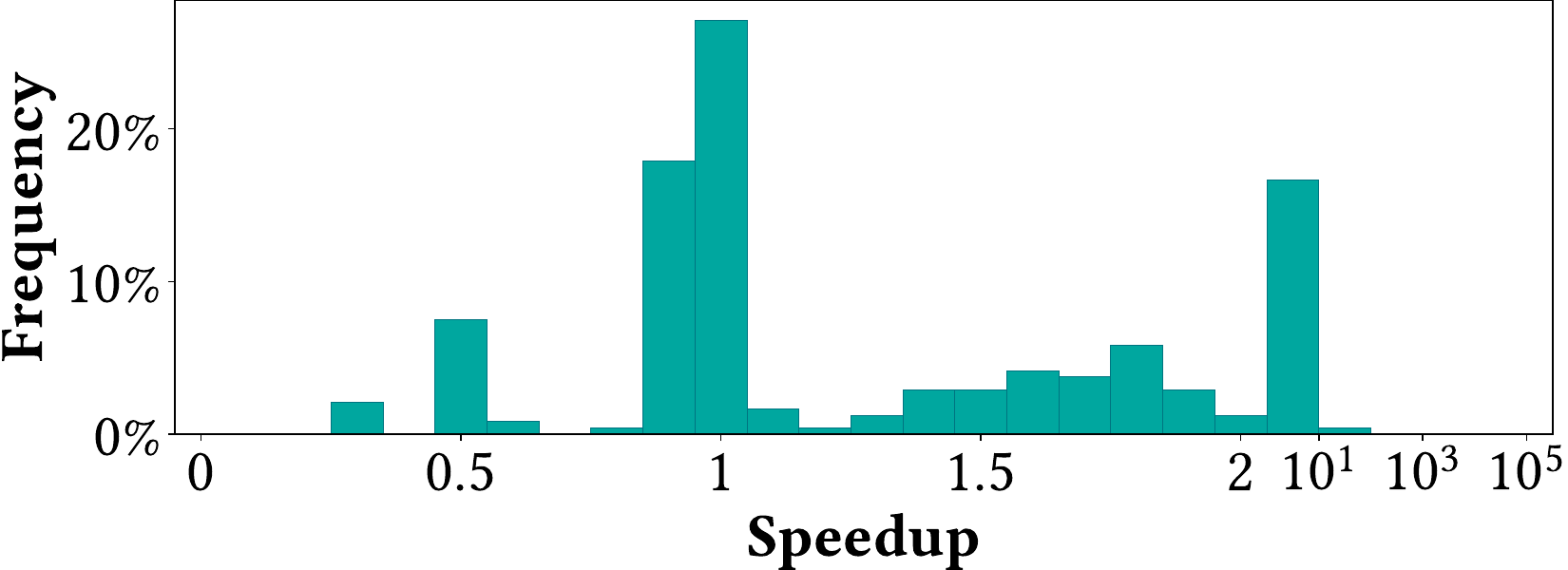}
        \vspace{-1.25\baselineskip}
        \caption{DSB}
        \label{fig:exec-speedup-dsb}
    \end{subfigure}
    \hfill
    \begin{subfigure}[b]{0.49\textwidth}
        \centering
        \includegraphics[width=\linewidth]{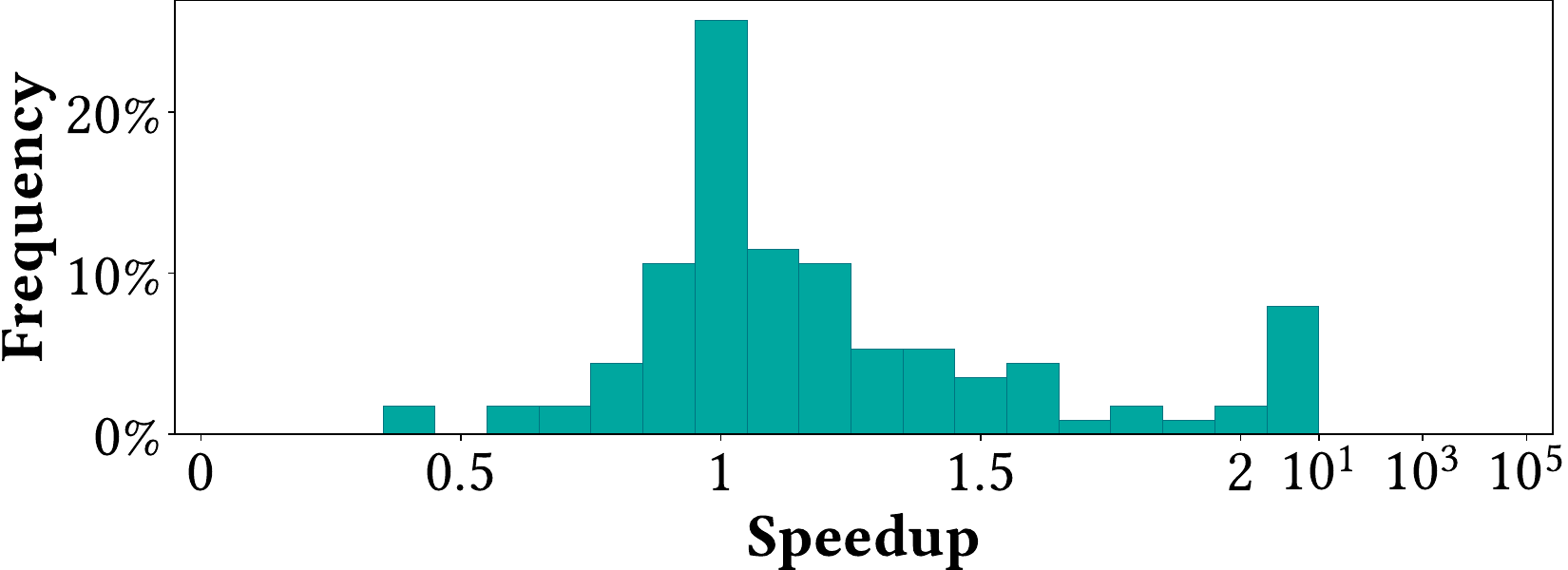}
        \vspace{-1.25\baselineskip}
        \caption{JOB}
        \label{fig:exec-speedup-job-original}
    \end{subfigure}

    \vspace{0.5\baselineskip}

    \begin{subfigure}[b]{0.49\textwidth}
        \centering
        \includegraphics[width=\linewidth]{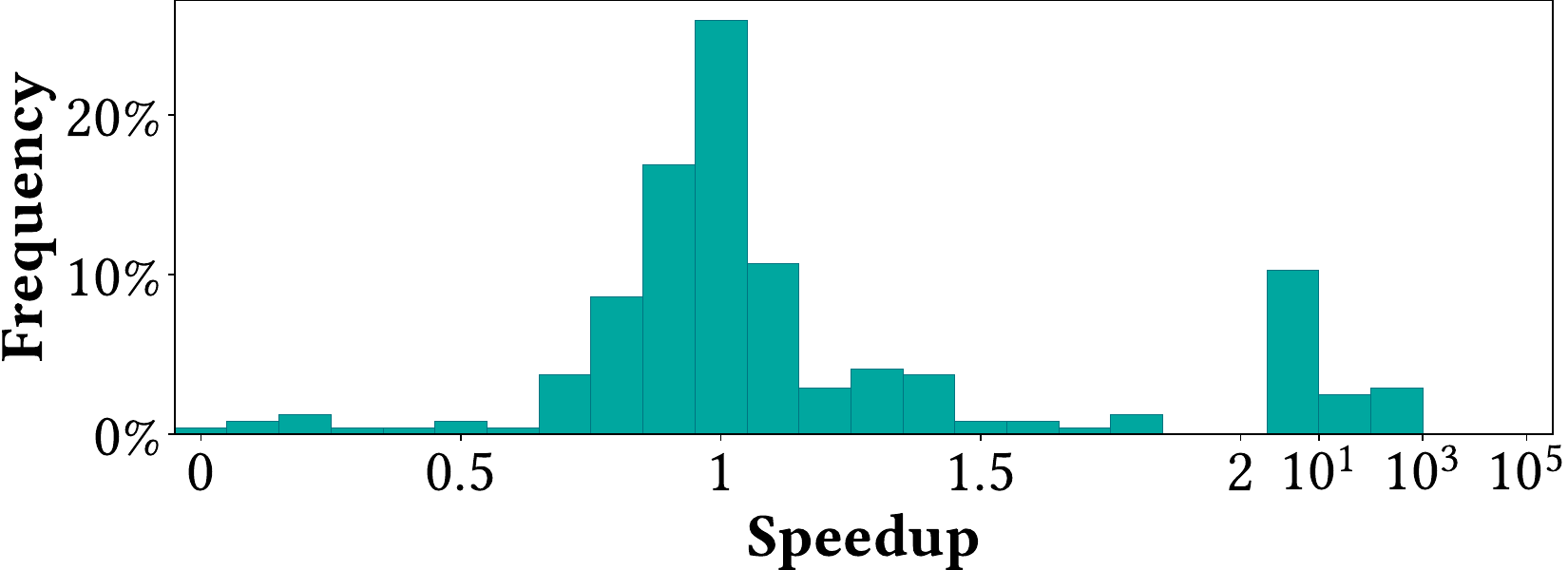}
        \vspace{-1.25\baselineskip}
        \caption{Musicbrainz}
        \label{fig:exec-speedup-musicbrainz}
    \end{subfigure}
    \hfill
    \begin{subfigure}[b]{0.49\textwidth}
        \centering
        \includegraphics[width=\linewidth]{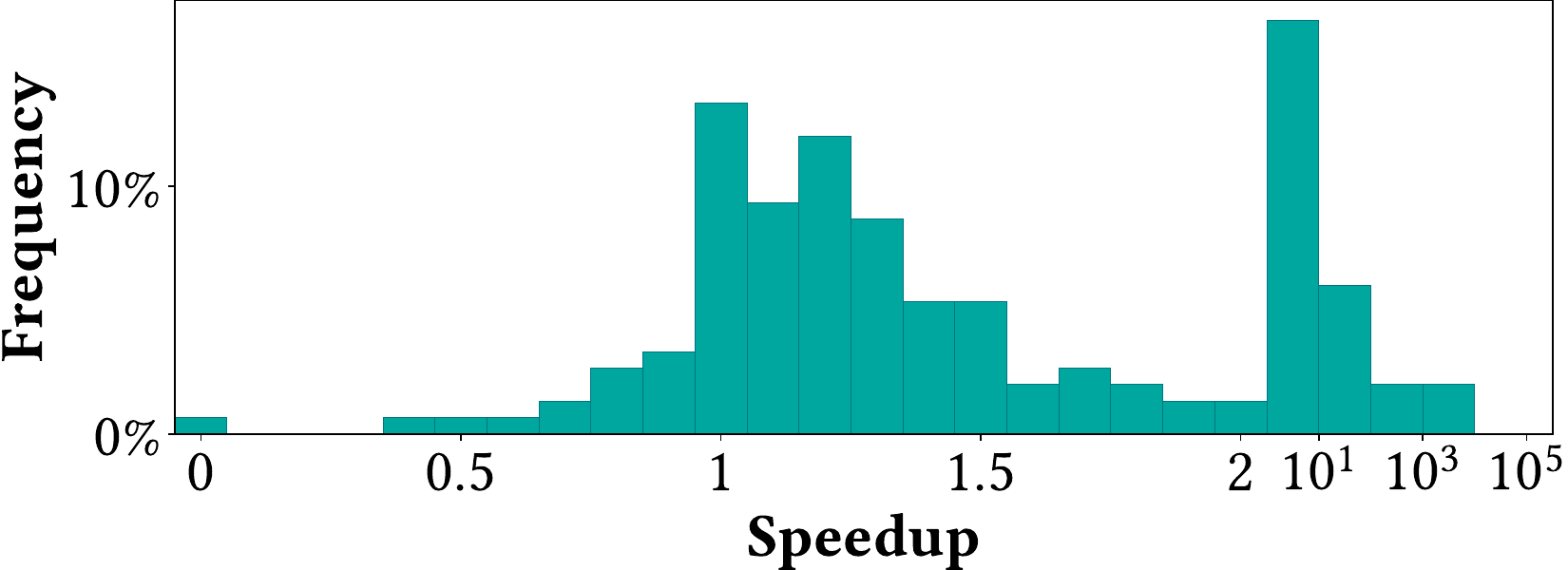}
        \vspace{-1.25\baselineskip}
        \caption{JOBLarge}
        \label{fig:exec-speedup-job-large}
    \end{subfigure}
    
    \vspace{-0.5\baselineskip}
    \caption{Histograms of execution speedups of the optimal width-1 plans over the globally optimal plans}
    \label{fig:exec-speedup}
\end{figure}

\subsubsection{\ref{RQ1} \& \ref{RQ2} Combined: Overall Performance}

\input{figs/exp/tab-results}

\afterpage{
\begin{figure}
    \includegraphics[width=0.55\linewidth]{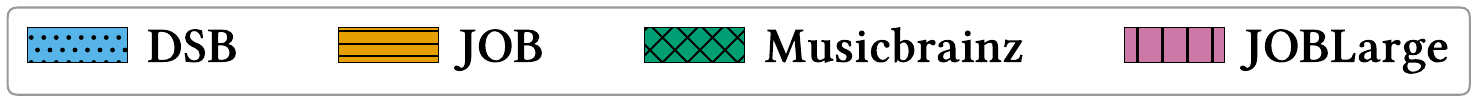}
    
    \vspace{0.25\baselineskip}
    
    \begin{subfigure}[b]{0.49\linewidth}
        \centering
        \includegraphics[width=\linewidth]{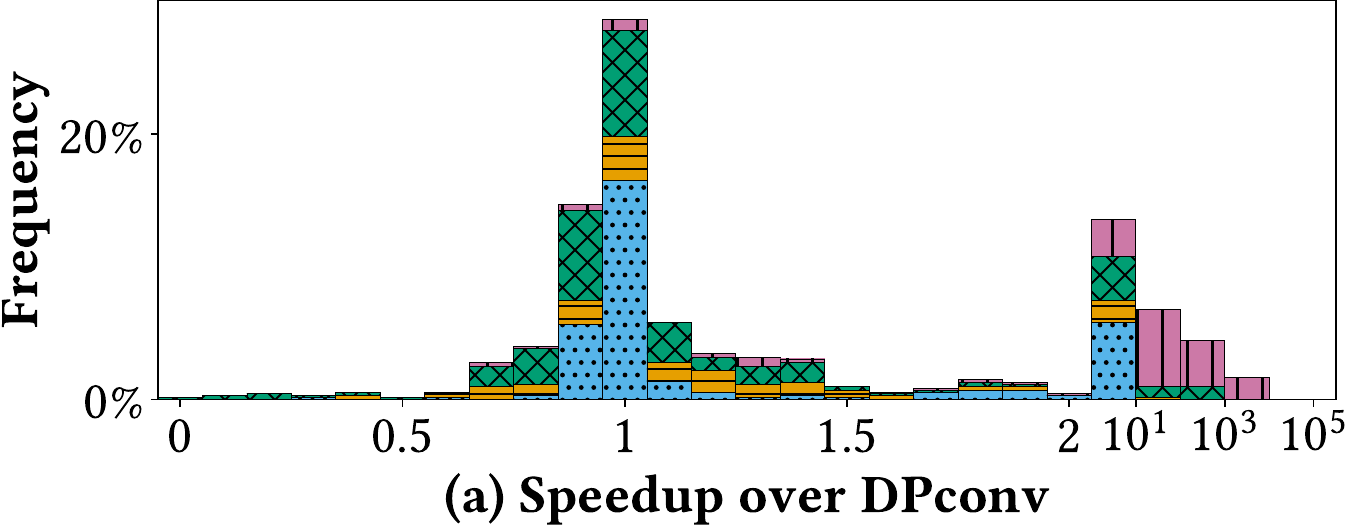}
        \label{fig:overall-speedup-dpconv}
    \end{subfigure}
    \hfill
    \begin{subfigure}[b]{0.49\linewidth}
        \centering
        \includegraphics[width=\linewidth]{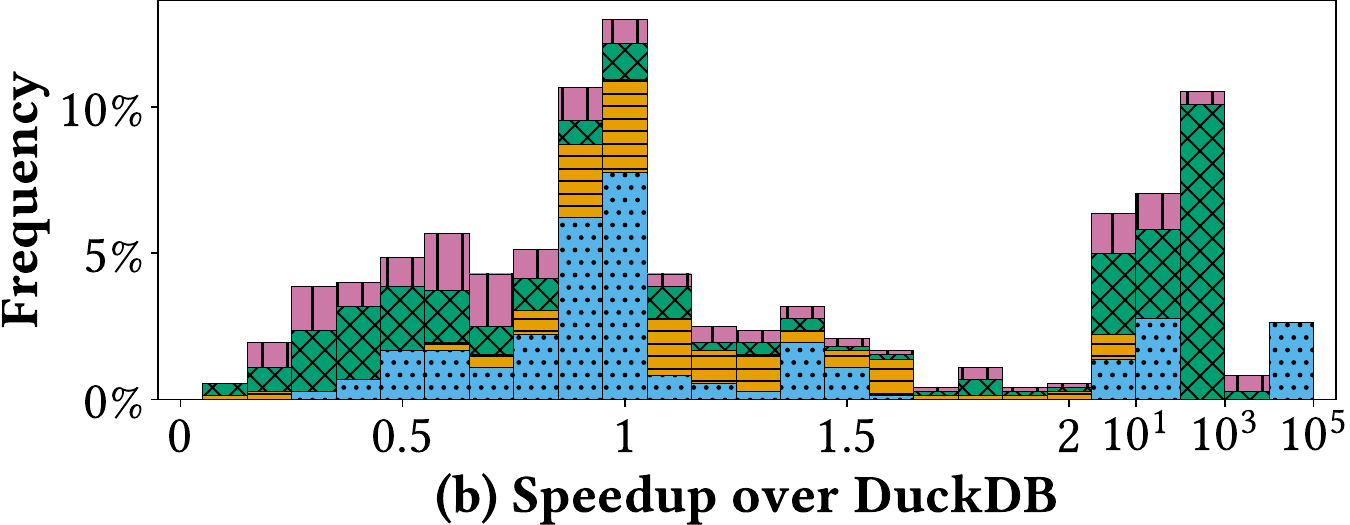}
        \label{fig:overall-speedup-duckdb}
    \end{subfigure}
    
    \vspace{-0.5\baselineskip}
    
    \begin{subfigure}[b]{0.49\linewidth}
        \centering
        \includegraphics[width=\linewidth]{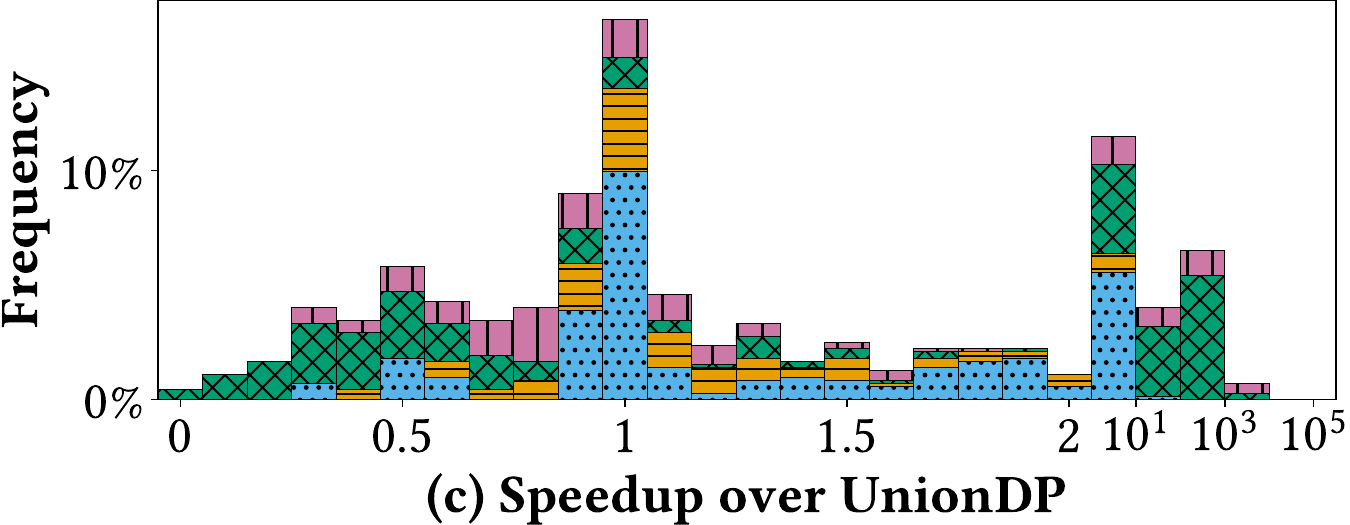}
        \label{fig:overall-speedup-uniondp}
    \end{subfigure}
    \hfill
    \begin{subfigure}[b]{0.49\linewidth}
        \centering
        \includegraphics[width=\linewidth]{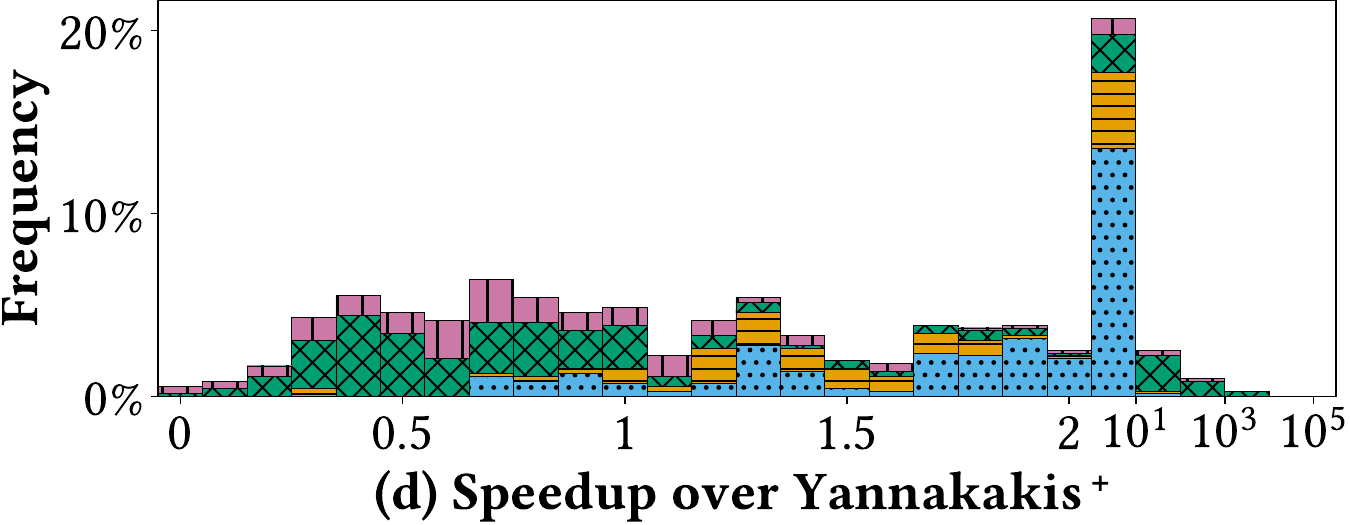}
        \label{fig:overall-speedup-yanplus}
    \end{subfigure}

    \vspace{-0.5\baselineskip}

    \begin{subfigure}[b]{0.49\linewidth}
        \centering
        \includegraphics[width=\linewidth]{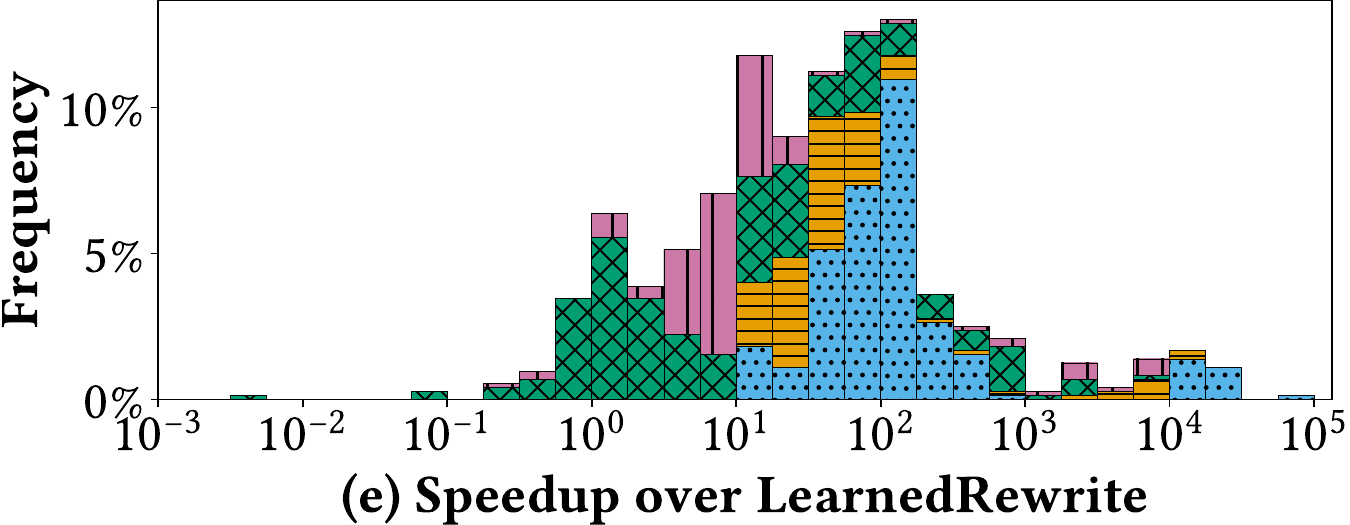}
        \label{fig:overall-speedup-learned-rewrite}
    \end{subfigure}
    \hfill
    \begin{subfigure}[b]{0.49\linewidth}
        \centering
        \includegraphics[width=\linewidth]{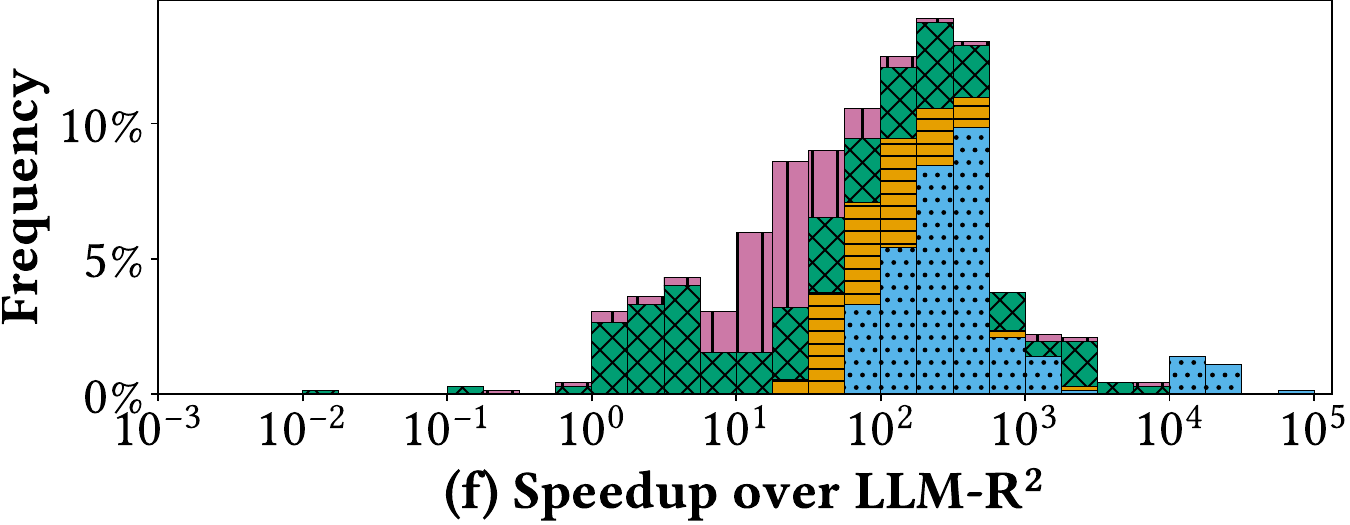}
        \label{fig:overall-speedup-llm-r2}
    \end{subfigure}
    
    \vspace{-1.5\baselineskip}
    
    \caption{Histograms of overall speedups of using \sys{} over using other methods}
    \label{fig:overall-speedup-histogram}
\end{figure}
\begin{figure}
    \includegraphics[width=0.55\linewidth]{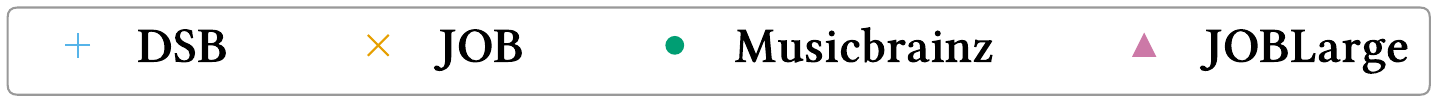}
    
    \vspace{0.5\baselineskip}
    
    \begin{subfigure}[b]{0.28\linewidth}
        \centering
        \includegraphics[width=\linewidth]{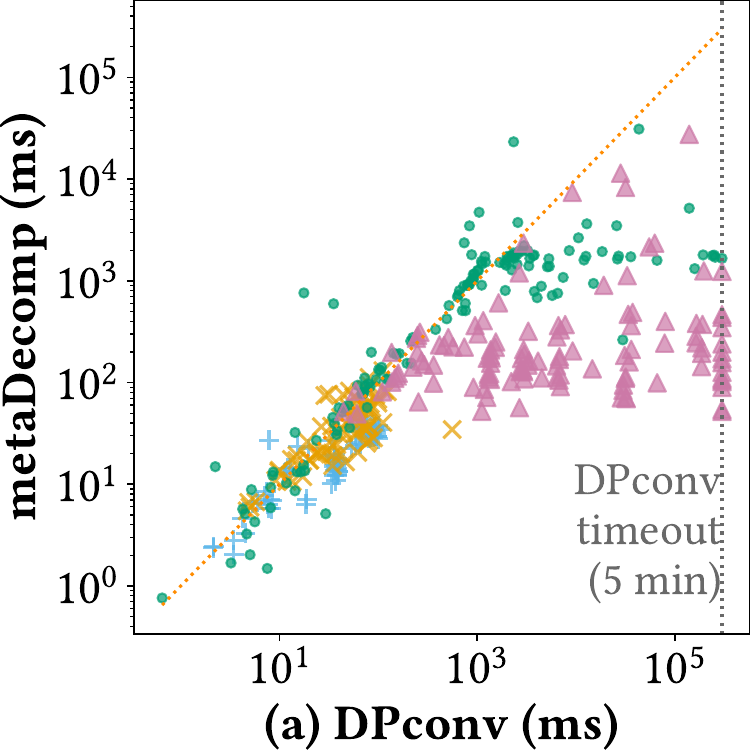}
        \label{fig:overall-scatter-dpconv}
    \end{subfigure}
    \hfill
    \begin{subfigure}[b]{0.28\linewidth}
        \centering
        \includegraphics[width=\linewidth]{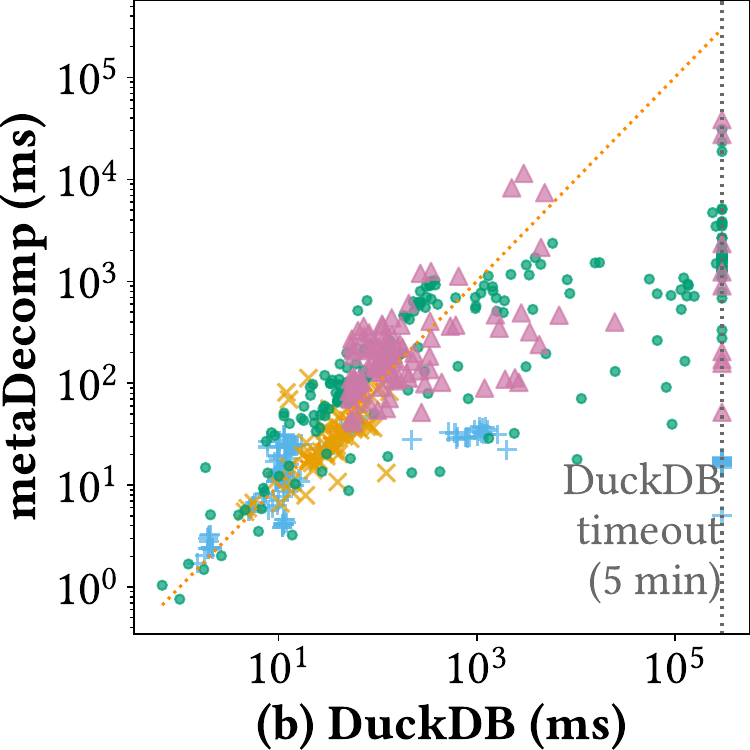}
        \label{fig:overall-scatter-duckdb}
    \end{subfigure}
    \hfill
    \begin{subfigure}[b]{0.28\linewidth}
        \centering
        \includegraphics[width=\linewidth]{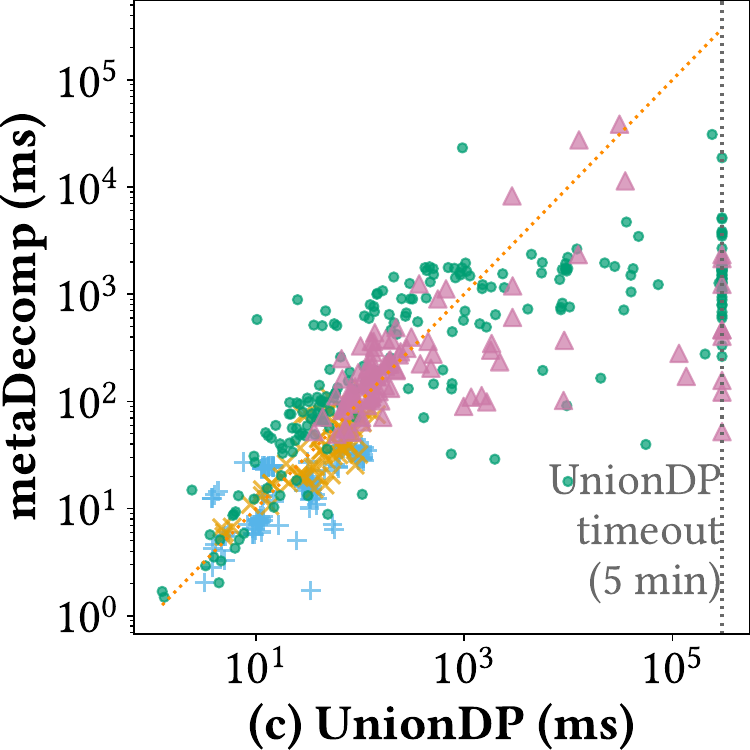}
        \label{fig:overall-scatter-uniondp}
    \end{subfigure}
    
    \vspace{-0.5\baselineskip}
    
    \begin{subfigure}[b]{0.28\linewidth}
        \centering
        \includegraphics[width=\linewidth]{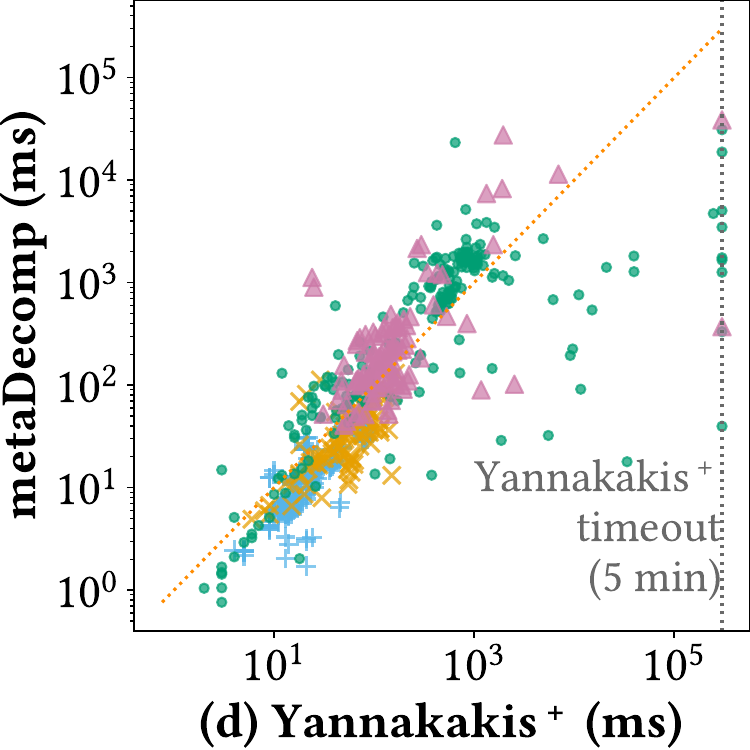}
        \label{fig:overall-scatter-yanplus}
    \end{subfigure}
    \hfill
    \begin{subfigure}[b]{0.28\linewidth}
        \centering
        \includegraphics[width=\linewidth]{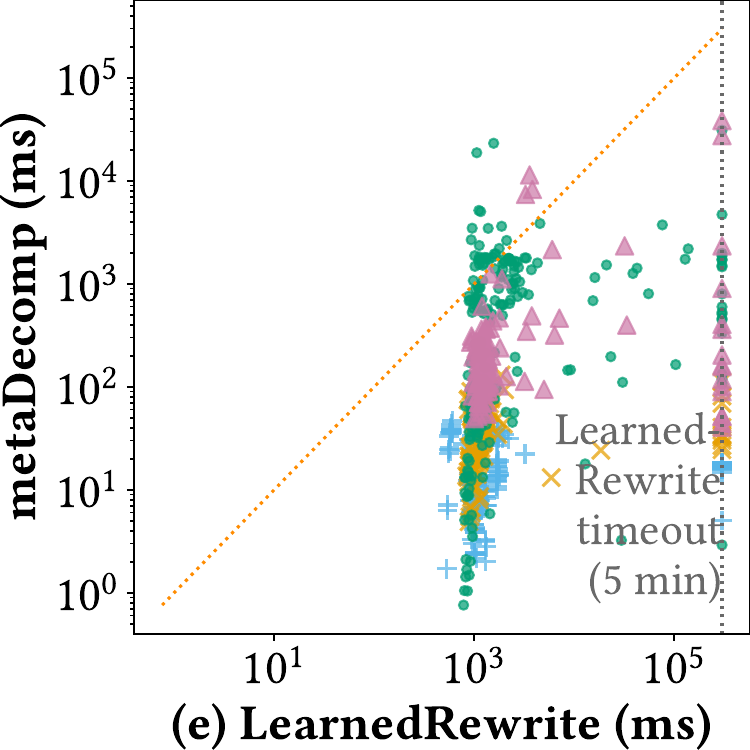}
        \label{fig:overall-scatter-learned-rewrite}
    \end{subfigure}
    \hfill
    \begin{subfigure}[b]{0.28\linewidth}
        \centering
        \includegraphics[width=\linewidth]{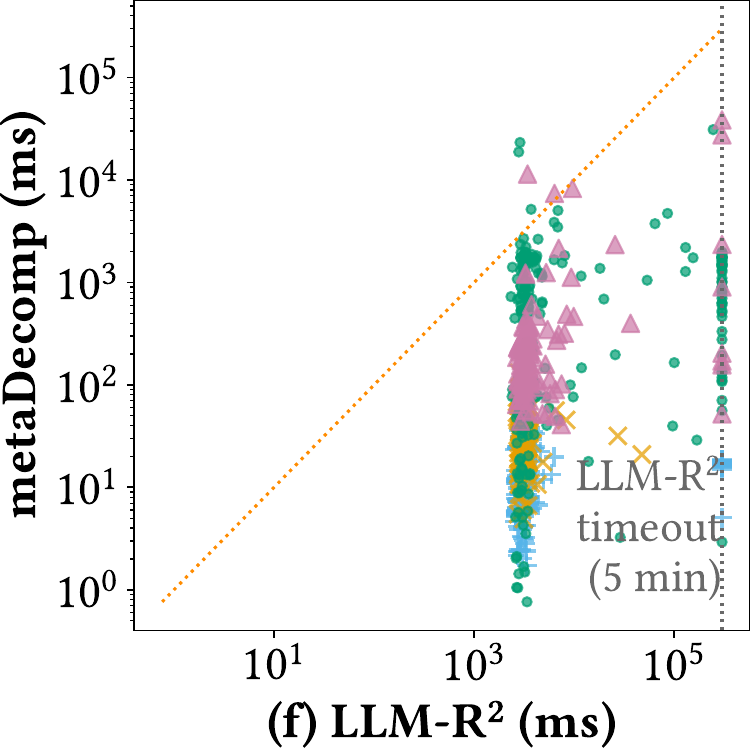}
        \label{fig:overall-scatter-llm-r2}
    \end{subfigure}
    
    \vspace{-1.5\baselineskip}
    \caption{Scatter plots of overall evaluation time using metaDecomp versus using other methods. Queries for which the difference is less than 10\% are omitted in the figures.}
    \label{fig:overall-scatter}
    \vspace{0.5\baselineskip}
\end{figure}
}

Taking both optimization and execution time into account, we observe in \cref{tab:stats-new}, \cref{fig:overall-speedup-histogram}, and \cref{fig:overall-scatter} that \sys{} shows a clear advantage overall. This is especially the case for queries that are complex, as shown by the 95th and 99th percentiles in \cref{tab:stats-new}, and in the points towards the right in the scatter plots in \cref{fig:overall-scatter}. 
For DuckDB, UnionDP, Yannakakis$^+$, LearnedRewrite, and LLM-R$^2$, although their optimization is fast, they are also very likely to generate plans that are significantly worse than the optimal width-1 query plans, especially for larger queries.
In addition, for Yannakakis$^+$, there is an overhead in the reduction phase, which may be costly for certain queries.
For DPconv, the main disadvantage is the exponential optimization time, which becomes the dominant factor in the overall evaluation time for large queries.
The overall results demonstrate that finding and executing width-1 plans with \sys{} is indeed highly efficient in practice.
More detailed, separate figures for individual benchmarks can be found in the \arxivorappendix.

\subsubsection{\ref{RQ3} Sensitivity to cardinality misestimation}
Recent research has shown that structure-based query optimization tends to be more robust even in the presence of errors in cardinality estimation~\cite{wang_yannakakis_2025,DiamondJoin2024,RPT2025}.
This is also the case for \sys{}.
To illustrate, we compare \sys{} with DPconv and UnionDP, which use the exact same cost model. We run them on DSB and JOB queries with (1) exact and (2) misestimated cardinalities.
The misestimated cardinalities are generated by adding a multiplicative noise to the exact cardinalities using a log-normal distribution. Specifically, for each exact cardinality $C$, we generate a perturbed estimate $\hat{C} = C \cdot e^\epsilon$, where $\epsilon \sim \mathcal{N}(0, \sigma^2)$, i.e., following a normal (Gaussian) distribution, with standard deviation $\sigma$ set to $10$.
From the results in \cref{tab:misestimation}, \sys{} still has excellent performance despite such errors in cardinality estimation.
The advantage is particularly pronounced at the 99th percentile of the JOB benchmark, where DPconv and UnionDP, which rely solely on cardinality values, are significantly misled by the highly inaccurate estimation.

\input{figs/exp/tab-misestimation}

\subsubsection{\ref{RQ4} Meta-decompositions applied to other structure-based optimization techniques}
\begin{figure}
    \includegraphics[width=0.37\linewidth]{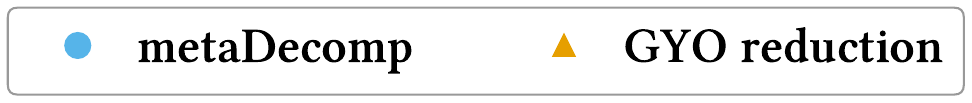}
    
    \vspace{0.25\baselineskip}
    
    \begin{subfigure}[b]{0.24\textwidth}
        \centering
        \includegraphics[width=\linewidth]{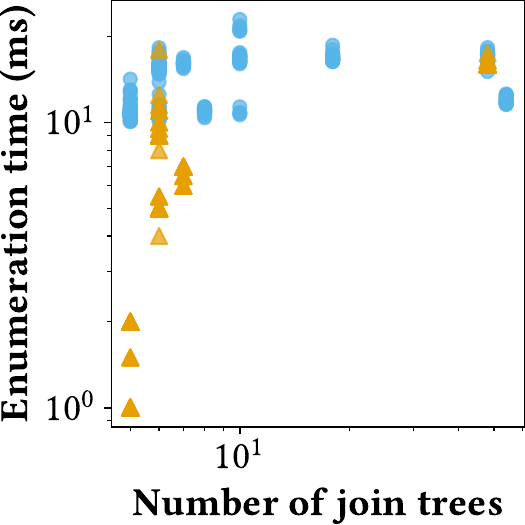}
        \caption{DSB}
        \label{fig:enum-dsb}
    \end{subfigure}
    \hfill
    \begin{subfigure}[b]{0.24\textwidth}
        \centering
        \includegraphics[width=\linewidth]{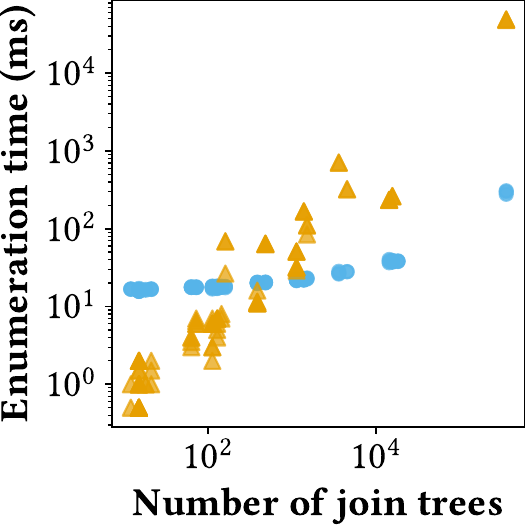}
        \caption{JOB}
        \label{fig:enum-job-original}
    \end{subfigure}
    \hfill
    \begin{subfigure}[b]{0.24\textwidth}
        \centering
        \includegraphics[width=\linewidth]{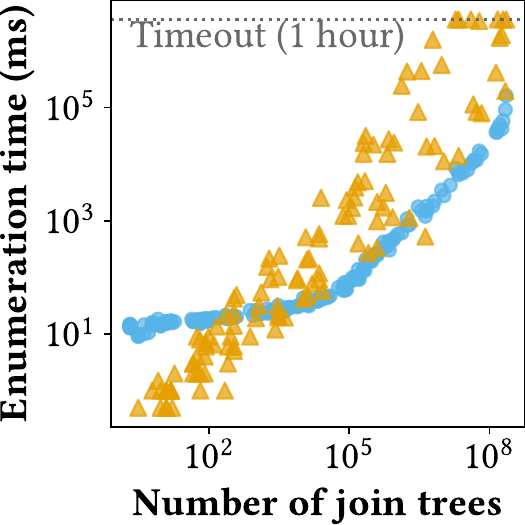}
        \caption{Musicbrainz}
        \label{fig:enum-musicbrainz}
    \end{subfigure}
    \hfill
    \begin{subfigure}[b]{0.24\textwidth}
        \centering
        \includegraphics[width=\linewidth]{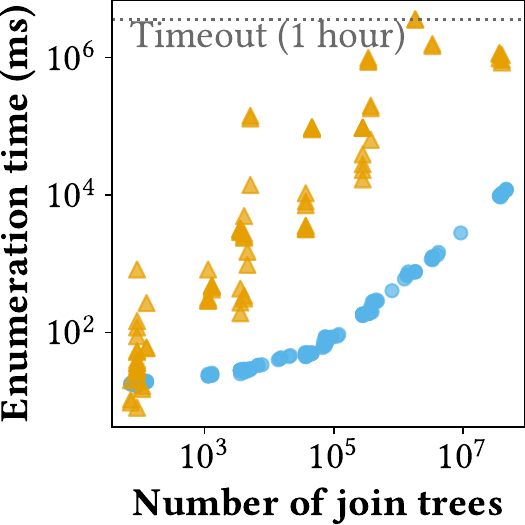}
        \caption{JOBLarge}
        \label{fig:enum-job-large}
    \end{subfigure}
    \vspace{-0.5\baselineskip}
    \caption{Scatter plots of join tree enumeration time using \sys{} versus using na\"ive GYO reduction}
    \label{fig:enum-time}
\end{figure}
Finally, we demonstrate how meta-decompositions can help other recently developed structure-based optimization techniques.
A crucial step in those techniques is selecting a good join tree, as it can significantly influence the structures and, consequently, the costs of the generated query plans.

Thus, we first evaluate \sys{}'s efficiency in \emph{enumerating join trees}, compared to the standard GYO reduction–based approach from \cite{dai_sparksql_2023}. 
As shown in \cref{fig:enum-time}, \sys{} shows up to \emph{four orders of magnitude of speedups} over the traditional approach. In particular, for many highly complex queries in the Musicbrainz and JOBLarge benchmarks with more than $10^7$ join trees, the traditional approach already takes over $1$ hour, whereas \sys{} can complete the enumeration in $2$ seconds to $1$ minute.  

A more effective approach is to apply cost-based optimization directly on meta-decompositions, as has already been done in \sys{}, since this eliminates the need to enumerate exponentially large numbers of join trees. Therefore, in our second experiment, we use Yannakakis$^+$~\cite{wang_yannakakis_2025} to rewrite queries based on join trees selected by \sys{} that induce the optimal width-1 query plans. \emph{This modification achieves an average speedup of {\rm 1.11x} on JOB queries.} This demonstrates that structure-based evaluation algorithms, such as Yannakakis$^+$, can benefit significantly from our efficient cost-based optimization framework.

%% file: figs/exp/tab-benchmark-stats.tex
\begin{footnotesize}
\begin{table}
    \centering
    \begin{tabular}{cc|c|c|c|c|c}
        \hline
        \multicolumn{2}{r|}{\bf{Benchmarks $\rightarrow$}}               & \bf{DSB} & \bf{JOB} & \bf\makecell{Musicbrainz} & \bf\makecell{JOBLarge} & \bf\makecell{Overall} \\ \hline
        \multirow{3}{*}{\bf\shortstack{Number of\\relations}}  & Min     & 5        & 4        & 2                            & 8                         & 2 \\
                                                               & Median  & 6        & 8        & 10                           & 18                        & 8 \\
                                                               & Max     & 9        & 17       & 26                           & 34                        & 34 \\ \hline
        \multirow{3}{*}{\bf\shortstack{Fan-out}}               & Min     & 2        & 2        & 1                            & 2                         & 1 \\
                                                               & Median  & 3        & 3        & 4                            & 4                         & 3 \\
                                                               & Max     & 4        & 5        & 9                            & 5                         & 9 \\ \hline
        \multirow{3}{*}{\bf\shortstack{Number of\\join trees}} & Min     & 5        & 12       & 2                            & 72                        & 2 \\
                                                               & Median  & 7        & 144      & 22500                        & 40000                     & 126 \\
                                                               & Max     & 54       & 352512   & $> 10^8$                     & $> 10^8$                  & $> 10^8$ \\ \hline
    \end{tabular}
    \caption
    {Structural properties of queries in each benchmark}
    \vspace{-1\baselineskip}
    \label{tab:query-props}
\end{table}
\end{footnotesize}

%% file: figs/exp/tab-results.tex
\begin{table}
    \begin{tiny}
        \renewcommand{\arraystretch}{1.1}
        \setlength\tabcolsep{0.3pt}
        
\begin{tabular}{cr|rrrr|rrrr|rrrr|rrrr}
    \hline
    \multicolumn{2}{r|}{\textbf{Benchmark $\rightarrow$}}
        & \multicolumn{4}{c|}{\textbf{DSB}} & \multicolumn{4}{c|}{\textbf{JOB}} & \multicolumn{4}{c|}{\textbf{Musicbrainz}} & \multicolumn{4}{c}{\textbf{JOBLarge}} \\
    \textbf{Metric $\downarrow$} & \textbf{Method $\downarrow$}
        & Mean & Med. & 95th & 99th & Mean & Med. & 95th & 99th & Mean & Med. & 95th & 99th & Mean & Med. & 95th & 99th \\
    \hline
    \multirow{7}{*}{\bf\rotatebox[origin=c]{90}{Optimization time}}
        & \sys{}  & 0.15 ms & 0.10 ms & 0.32 ms & 1.01 ms & 0.38 ms & 0.26 ms & 0.94 ms & 1.04 ms & 2.62 ms & 0.37 ms & 10.45 ms & 45.05 ms & 0.59 ms & 0.37 ms & 1.67 ms & 4.17 ms \\
        & DPconv  & 0.08 ms & 0.03 ms & 0.36 ms & 0.37 ms & 8.32 ms & 0.16 ms & 38.22 ms & 45.08 ms & 7.37 s & 0.30 ms & 14.62 s & 244.44 s & 42.17 s & 1.26 s & $>$ 5 min & $>$ 5 min \\
        & DuckDB  & 0.32 ms & 0.12 ms & 1.83 ms & 1.87 ms & 1.13 ms & 0.32 ms & 5.58 ms & 8.43 ms & 1.89 ms & 0.47 ms & 11.34 ms & 18.59 ms & 0.62 ms & 0.55 ms & 1.23 ms & 3.81 ms \\
        & UnionDP  & 0.01 ms & 0.007 ms & 0.04 ms & 0.04 ms & 0.10 ms & 0.03 ms & 0.38 ms & 0.45 ms & 0.25 ms & 0.11 ms & 0.64 ms & 0.77 ms & 0.29 ms & 0.24 ms & 0.77 ms & 1.18 ms \\
        & Yannakakis$^+$  & 0.78 ms & 0.70 ms & 1.48 ms & 1.50 ms & 1.44 ms & 1.37 ms & 3.78 ms & 4.59 ms & 2.29 ms & 1.72 ms & 6.43 ms & 8.15 ms & 3.55 ms & 3.27 ms & 6.53 ms & 10.89 ms \\
        & LearnedRew.  & 0.89 s & 0.95 s & 1.03 s & 1.06 s & 0.96 s & 0.94 s & 1.13 s & 1.30 s & 1.11 s & 1.00 s & 1.78 s & 2.14 s & 1.09 s & 1.10 s & 1.29 s & 1.42 s \\
        & LLM-R$^2$  & 3.03 s & 2.92 s & 3.66 s & 4.71 s & 3.75 s & 2.99 s & 3.98 s & 24.96 s & 3.11 s & 3.03 s & 3.57 s & 6.10 s & 3.40 s & 3.11 s & 5.72 s & 6.84 s \\
    \hline
    \multirow{7}{*}{\bf\rotatebox[origin=c]{90}{Execution time}}
        & \sys{}  & 0.01 s & 0.01 s & 0.04 s & 0.04 s & 0.04 s & 0.03 s & 0.08 s & 0.11 s & 1.08 s & 0.52 s & 2.56 s & 13.75 s & 1.02 s & 0.17 s & 2.31 s & 23.84 s \\
        & DPconv  & 0.02 s & 0.01 s & 0.09 s & 0.10 s & 0.04 s & 0.04 s & 0.09 s & 0.11 s & 1.42 s & 0.50 s & 2.65 s & 28.27 s & 17.50 s & 0.24 s & 89.05 s & $>$ 5 min \\
        & DuckDB  & 23.85 s & 0.01 s & $>$ 5 min & $>$ 5 min & 0.04 s & 0.04 s & 0.09 s & 0.13 s & 95.92 s & 0.52 s & $>$ 5 min & $>$ 5 min & 22.46 s & 0.13 s & $>$ 5 min & $>$ 5 min \\
        & UnionDP  & 0.03 s & 0.01 s & 0.10 s & 0.11 s & 0.05 s & 0.04 s & 0.10 s & 0.12 s & 58.75 s & 0.28 s & $>$ 5 min & $>$ 5 min & 25.04 s & 0.15 s & $>$ 5 min & $>$ 5 min \\
        & Yannakakis$^+$  & 0.03 s & 0.02 s & 0.08 s & 0.08 s & 0.06 s & 0.06 s & 0.12 s & 0.15 s & 13.63 s & 0.41 s & 34.61 s & $>$ 5 min & 5.11 s & 0.13 s & 1.52 s & 232.60 s \\
        & LearnedRew.  & 23.95 s & 0.05 s & $>$ 5 min & $>$ 5 min & 26.91 s & 0.15 s & $>$ 5 min & $>$ 5 min & 42.47 s & 0.42 s & $>$ 5 min & $>$ 5 min & 42.03 s & 0.14 s & $>$ 5 min & $>$ 5 min \\
        & LLM-R$^2$  & 23.79 s & 0.01 s & $>$ 5 min & $>$ 5 min & 0.04 s & 0.04 s & 0.10 s & 0.14 s & 96.32 s & 0.53 s & $>$ 5 min & $>$ 5 min & 29.94 s & 0.12 s & $>$ 5 min & $>$ 5 min \\
    \hline
    \multirow{7}{*}{\bf\rotatebox[origin=c]{90}{\shortstack[c]{Overall evaluation\\time}}}
        & \sys{}  & 0.01 s & 0.01 s & 0.04 s & 0.04 s & 0.04 s & 0.03 s & 0.08 s & 0.11 s & 1.09 s & 0.52 s & 2.65 s & 13.75 s & 1.02 s & 0.17 s & 2.31 s & 23.84 s \\
        & DPconv  & 0.02 s & 0.01 s & 0.09 s & 0.10 s & 0.05 s & 0.04 s & 0.10 s & 0.12 s & 8.78 s & 0.51 s & 26.36 s & 246.63 s & 58.26 s & 4.32 s & $>$ 5 min & $>$ 5 min \\
        & DuckDB  & 23.85 s & 0.01 s & $>$ 5 min & $>$ 5 min & 0.04 s & 0.04 s & 0.09 s & 0.13 s & 95.93 s & 0.52 s & $>$ 5 min & $>$ 5 min & 22.46 s & 0.13 s & $>$ 5 min & $>$ 5 min \\
        & UnionDP  & 0.03 s & 0.01 s & 0.10 s & 0.11 s & 0.05 s & 0.04 s & 0.10 s & 0.12 s & 58.75 s & 0.28 s & $>$ 5 min & $>$ 5 min & 25.04 s & 0.15 s & $>$ 5 min & $>$ 5 min \\
        & Yannakakis$^+$  & 0.03 s & 0.02 s & 0.08 s & 0.08 s & 0.06 s & 0.06 s & 0.13 s & 0.15 s & 13.63 s & 0.41 s & 34.61 s & $>$ 5 min & 5.11 s & 0.13 s & 1.52 s & 232.60 s \\
        & LearnedRew.  & 24.84 s & 1.01 s & $>$ 5 min & $>$ 5 min & 27.87 s & 1.12 s & $>$ 5 min & $>$ 5 min & 54.93 s & 1.43 s & $>$ 5 min & $>$ 5 min & 43.12 s & 1.28 s & $>$ 5 min & $>$ 5 min \\
        & LLM-R$^2$  & 26.82 s & 3.03 s & $>$ 5 min & $>$ 5 min & 3.79 s & 3.05 s & 4.01 s & 24.99 s & 99.43 s & 4.06 s & $>$ 5 min & $>$ 5 min & 33.34 s & 3.41 s & $>$ 5 min & $>$ 5 min \\
    \hline
    \multirow{6}{*}{\bf\rotatebox[origin=c]{90}{\shortstack[c]{Overall speedup\\[-2pt]of metaDecomp\\[-2pt]over ...}}}
        & DPconv  & 1.22x & 0.99x & 2.90x & 3.29x & 1.18x & 1.08x & 2.77x & 3.11x & 1.34x & 0.99x & 15.49x & 138.59x & 27.61x & 21.23x & 2308.55x & 5014.14x \\
        & DuckDB  & 2.75x & 0.97x & 16938.94x & 18945.33x & 1.11x & 1.07x & 2.03x & 3.53x & 6.39x & 1.86x & 238.77x & 901.19x & 1.46x & 0.82x & 118.01x & 1872.97x \\
        & UnionDP  & 1.28x & 1.03x & 3.38x & 6.57x & 1.12x & 1.04x & 2.09x & 3.01x & 2.67x & 0.94x & 293.61x & 844.52x & 1.91x & 0.94x & 606.52x & 2328.21x \\
        & Yannakakis$^+$  & 1.83x & 1.91x & 2.77x & 7.03x & 1.61x & 1.50x & 3.75x & 5.07x & 1.07x & 0.72x & 47.66x & 656.33x & 0.76x & 0.73x & 2.13x & 22.05x \\
        & LearnedRew.  & 142.29x & 123.70x & 16993.51x & 19006.30x & 59.91x & 38.45x & 7958.89x & 11031.32x & 14.49x & 11.44x & 442.62x & 967.06x & 13.65x & 8.85x & 2444.33x & 7240.85x \\
        & LLM-R$^2$  & 379.88x & 282.20x & 17095.75x & 19117.48x & 103.68x & 98.29x & 484.18x & 832.41x & 65.93x & 78.95x & 927.52x & 2488.64x & 23.98x & 21.60x & 174.23x & 3632.91x \\
    \hline
\end{tabular}

    \end{tiny}
    \caption{Statistics of query evaluation time. The "mean" values are arithmetic mean for time and geometric mean for speedup factors.}
    \vspace{-1\baselineskip}
    \label{tab:stats-new}
\end{table}

%% file: figs/exp/tab-misestimation.tex
\begin{table}
    \begin{footnotesize}

\begin{tabular}{cc|cccc|cccc}
    \hline
    \multicolumn{2}{r|}{\textbf{Benchmark $\rightarrow$}}
        & \multicolumn{4}{c|}{\textbf{DSB}} & \multicolumn{4}{c}{\textbf{JOB}} \\
    \textbf{Method $\downarrow$} & \textbf{Cardinality $\downarrow$}
        & Mean & Median & 95th & 99th & Mean & Median & 95th & 99th \\
    \hline
    \multirow{2}{*}{\sys{}}
        & Exact & 0.01 s & 0.01 s & 0.04 s & 0.04 s & 0.04 s & 0.03 s & 0.08 s & 0.11 s \\
        & Misestimated & 0.02 s & 0.01 s & 0.06 s & 0.12 s & 0.05 s & 0.04 s & 0.11 s & 0.15 s \\
    \hline
    \multirow{2}{*}{DPconv}
        & Exact & 0.02 s & 0.01 s & 0.09 s & 0.10 s & 0.04 s & 0.04 s & 0.09 s & 0.11 s \\
        & Misestimated & 0.03 s & 0.01 s & 0.11 s & 0.14 s & 0.46 s & 0.05 s & 0.23 s & 2.07 s \\
    \hline
    \multirow{2}{*}{UnionDP}
        & Exact & 0.03 s & 0.01 s & 0.10 s & 0.11 s & 0.05 s & 0.04 s & 0.10 s & 0.12 s \\
        & Misestimated & 0.03 s & 0.01 s & 0.11 s & 0.14 s & 0.48 s & 0.05 s & 0.22 s & 1.86 s \\
    \hline
\end{tabular}

    \end{footnotesize}
    \caption{Query execution time when optimized with exact vs misestimated cardinalities}
    \label{tab:misestimation}
\end{table}

%% file: 7_conclusions.tex
\section{Conclusions and Future Work}
\label{sec:conclusions}

In this paper, we have introduced novel techniques for query optimization based on structural properties via meta-decompositions. They offer promising opportunities to integrate theoretically desirable structure-based optimization into practical database systems to efficiently evaluate large, complex queries.
A natural next step is to extend our width notion and meta-decompositions-based framework to cyclic queries. While width-1 query plans cannot exist for cyclic queries (according to \cref{thm:acyclicwidth}) we may, e.g., identify minimal-width query plans and represent all minimal-width hypertree decompositions.
In addition, to improve our current simple strategy of selection pushdown, it is possible to incorporate existing techniques for evaluating conjunctive queries with comparisons~\cite{SIGMOD22}. Similar is the case for many other operators that are so far excluded in this paper, such as aggregations~\cite{AJAR, FAQ}, top-$k$~\cite{wang2023relational}, differences~\cite{SIGMOD23, PODS24}, etc., for which there already exist theoretically desirable structure-guided approaches. The performance of such methods can also benefit from our cost-based optimization framework.
Finally, even though our technique has shown its robustness, it will still benefit from more accurate cardinality estimation. Recent research~\cite{LPBound, Cai2019CE} shows promising techniques that can be adopted to develop efficient, reliable cardinality estimators suitable within our proposed framework.

%% file: app_glossary.tex
\section{Glossary of Frequently-Used Terms and Notations in the Paper}
\label{app:glossary}
\cref{tab:glossary} gives a glossary of frequently-used terms and notations in the paper, each with a brief intuitive explanation.
\begin{small}
\begin{table}[H]
    \centering
    \begin{tabular}{>{\raggedright\arraybackslash}p{3.5cm}>{\raggedright\arraybackslash}p{9.5cm}}
        \hline
        {\bf Term / Notation} & {\bf Explanation} \\ \hline
        $f(S)$ on a set $S$ of elements & Set of all $f(a)$ for all elements $a$ in $S$. \\ \hline
        $\cup S$ on a set $S$ of sets & Union of all sets in $S$. \\ \hline
        Conjunctive query & A query with joins, with selection and projection involving only attributes in base relations. \\ \hline
        Full (conjunctive) query & Conjunctive query where no attribute is projected out. \\ \hline
        Boolean (conjunctive) query & Conjunctive query whose output contains no attribute, but only a Boolean value indicating whether there exists any tuple satisfying all predicates. \\ \hline
        Relation-dominated query & Conjunctive query in which all output attributes are present in some single relation. \\ \hline
        Hypergraph $H(Q)$ associated with query $Q$ & Hypergraph where each vertex is an attribute, and each hyperedge corresponds to a relation, containing all attributes in the relation. \\ \hline
        $\lambda$-label $\lambda(p)$ & In a hypertree decomposition or meta-decomposition, maps each tree vertex $p$ to a set of hyperedges (relations). Non-empty in hypertree decompositions, possibly empty in minor nodes of meta-decompositions. \\ \hline
        $\chi$-label $\chi(p)$ & In a hypertree decomposition or meta-decomposition, maps each tree vertex $p$ to a set of hypergraph vertices (attributes). Subset of vertices in the $\lambda$-label. \\ \hline
        $\kappa$-label $\kappa(p)$ & In a meta-decomposition, maps each tree vertex to the ``interface'', i.e., hypergraph vertices in the intersection between $p$ and other tree vertices not on the subtree rooted at $p$. \\ \hline
        $T_p$, subtree of $T$ rooted at $p$ & Subtree of $T$ containing $p$ and its descendants. \\ \hline
        $T_{p \to q}$ & If $q$ is a child of $p$, same as $T_q$. If $q$ is a parent of $p$, the subtree of $T$ with all vertices except those in $T_p$. \\ \hline
        Fan-out $f(p)$, $f(T)$ & For a tree vertex $p$, the number of children. For a tree $T$, the maximum number of children per vertex. \\ \hline
        Induced query $Q(p)$ & For vertices $p$ on a join tree or query plan, the join of all relations on the subtree rooted at $p$, with appropriate selection and projection applied. \\ \hline
        Interface $I(p)$ & For a vertex $p$ on a meta-decomposition or query plan, the intersection (of vertices) between $p$ and other vertices not in the subtree rooted at $p$. \\ \hline
        Width $w(p)$, $w(\mathcal{P})$ of (vertex on) query plan & For a vertex $p$ on a query plan, the number of relations needed to cover the attributes in the interface. \newline For an entire query plan $\mathcal{P}$, the maximum number of relations needed to cover the attributes in the interface of any intermediate step. \\ \hline
        Width $w(p)$, $w(\mathcal{T})$ of (vertex on) hypertree decomposition & For a vertex $p$ on a hypertree decomposition, the number of hyperedges / relations in the $\lambda$-label. \newline For a hypertree decomposition $\mathcal{T}$, the maximum number of hyperedges / relations in the $\lambda$-label of a single vertex. \\ \hline
        Ear, reducible hyperedge/relation & In GYO reduction on an acyclic hypergraph/query, a hyperedge/relation whose intersection (hypergraph vertices / query attributes) with the rest of the hyperedges/relations is covered by some other single hyperedge/relation. \\ \hline
        $o(e, H)$ & Set of overlapping vertices of hyperedge $e$ and the remaining hyperedges $E(H) \setminus \set{e}$. \\ \hline
    \end{tabular}
    \caption{Glossary of frequently-used terms and notations in the paper}
    \label{tab:glossary}
\end{table}
\end{small}

%% file: app_hierarchical.tex
\section{Omitted Proofs in Section~\ref{sec:hierarchical-query-plan}}
\label{app:hier}

\subsection{Proof of Theorem~\ref{thm:w1plan-size-bound}}

Let $p^* \in V(\mathcal{P})$ be a node in the plan where the maximum width is achieved, i.e., $w(p^*) = w(\mathcal{P})$.  By the definition of $w(p^*)$, there must exist a set of relations $S \subseteq E(H_{p^*})$ such that $|S| = w(p^*)$ and $I(p^*) \subseteq \bigcup_{e \in S} e$.  Therefore, the result of $Q(p^*)$ must be a subset of the join of all relations in $S$.  In the worst-case scenario, where all relations in $S$ share no common attributes and their join is a Cartesian product, the size of this join is $O(N^{|S|})$, which gives an upper bound of $O(N^{w(\mathcal{P})})$.

\subsection{Proof of Theorem~\ref{thm:w1plan-jt}}
    \paragraph{$(\Leftarrow)$}
    Let $T$ be the join tree of the query $Q$ that induces the query plan $\mathcal{P}$. Then, for any $q \in V(T)$, the interface $I_q$ between the induced hypergraph $H_q$ for subtree $T_q$ rooted at $q$ and the residual hypergraph $\overline{H}_q$ satisfies $I_q \subseteq q$, because of the connectedness condition of the join tree.  In addition, by \cref{def:w1plan-induced-by-jt}(1), we have $Q(p) = Q(q)$. This implies that the induced hypergraphs $H_p = H_q$. Therefore, for every $p \in \mathcal{P}$, we can find a set $S = \{q\}$ that covers $I_p$, and $w(p) = 1$ holds as $|S| = 1$.

    \paragraph{$(\Rightarrow)$}
    Given a width-1 query plan $\mathcal{P}$, we construct such a join tree $T$ by induction.  
    For each leaf node $p \in V(\mathcal{P})$, we simply construct a tree with the relation corresponding to the leaf node, and this is vacuously a join tree.
    For each join operation $p \in V(\mathcal{P})$,  assume that there exist valid join trees $T_{c_1}$ and $T_{c_2}$ for each of the two child nodes $c_1$ and $c_2$, repsectively.  Let $q_1 \in V(T_{c_1})$ be the relation that covers the interface $I_{c_1}$ and $q_2 \in V(T_{c_2})$ covers $I_{c_2}$, we make $q_1$ the root of $T_{c_1}$ and $q_2$ the root of $T_{c_2}$, then $I_{c_1} \cap I_{c_2} = q_1 \cap q_2$, and we can merge $T_1$ and $T_2$ to obtain $T'$ by connecting $q_1$ and $q_2$.  
    The new tree $T'$ still preserves the connectedness condition, since, for all $q_1' \in V(T_{c_1})$ and all $q_2' \in V(T_{c_2})$, $q_1$ and $q_2$ are on the path between $q_1'$ and $q_2'$, and therefore $\chi(q_1') \cap \chi(q_2') \subseteq I_{c_1} \subseteq \chi(q_1)$ and $\chi(q_1') \cap \chi(q_2') \subseteq I_{c_2} \subseteq \chi(q_2)$. Then, by the connectedness condition within each of the two subtrees $T_{c_1}$ and $T_{c_2}$, we have that $\chi(q_1') \cap \chi(q_2') \subseteq \chi(q_1') \cap \chi(q_1) \subseteq \chi(q_1'')$ for all $q_1''$ on the path between $q_1'$ and $q_1$, and $\chi(q_1') \cap \chi(q_2') \subseteq \chi(q_2') \cap \chi(q_2) \subseteq \chi(q_2'')$ for all $q_2''$ on the path between $q_2'$ and $q_2$.
    Therefore, $T'$ is a valid join tree for the query induced by the node $p$.
    
\subsection{Proof of \cref{thm:opt-hierarchical-time}}

Following \cref{thm:w1plan-jt}, we have

\begin{lemma}
    For all width-1 query plans $\mathcal{P}$ induced by a join tree $T$,
    for all non-leaf nodes $t \in V(T)$ with distinct children $c_1$ and $c_2$,
    if $p_1, p_2 \in V(\mathcal{P})$ are the nodes on $\mathcal{P}$ such that $Q(p_1) = Q(c_1)$ and $Q(p_2) = Q(c_2)$,
    then there exists no $q \in \mathcal{P}$ such that $Q(q) = Q(p_1) \bowtie Q(p_2) = Q(c_1) \bowtie Q(c_2)$.
\end{lemma}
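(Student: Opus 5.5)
The plan is to argue by contradiction using a counting argument on edges of $T$, exploiting condition (2) of \cref{def:w1plan-induced-by-jt}. Suppose there is a node $q \in V(\mathcal{P})$ with $Q(q) = Q(c_1) \bowtie Q(c_2)$. Then the set of relations $E(H_q)$ equals $E(T_{c_1}) \cup E(T_{c_2})$, which is the disjoint union of the relation sets of the two subtrees rooted at $c_1$ and $c_2$. Write $k = |E(H_q)|$. The key structural fact is that, in $T$, the vertex set $V(T_{c_1}) \cup V(T_{c_2})$ induces a forest with exactly two components, $T_{c_1}$ and $T_{c_2}$. Each component is a tree, so the forest has exactly $k - 2$ edges. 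Here we use that $c_1$ and $c_2$ are distinct children of $t$: no edge of $T$ joins $T_{c_1}$ and $T_{c_2}$, since their only connection passes through $t$, which lies in neither subtree.

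On the plan side, the subtree $\mathcal{P}_q$ rooted at $q$ is a binary tree with $k$ leaves, one per relation in $E(H_q)$. It therefore has exactly $k - 1$ join nodes. For each join node $j$ in $\mathcal{P}_q$ with children $u, v$, condition (2) of \cref{def:w1plan-induced-by-jt} gives relations $s \in E(H_u)$ and $s' \in E(H_v)$ with $\{s, s'\} \in E(T)$. Both endpoints lie in $E(H_j) \subseteq E(H_q)$, so this is an edge of the induced forest on $V(T_{c_1}) \cup V(T_{c_2})$. Assign this edge $e_j$ to $j$.

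Next I show that the map $j \mapsto e_j$ is injective. The relation sets $E(H_j)$ over nodes of $\mathcal{P}_q$ form a laminar family, and the two children of any join node have disjoint relation sets. Take two distinct join nodes $j$ and $j'$.
\begin{enumerate}
\item If neither is an ancestor of the other, then $E(H_j) \cap E(H_{j'}) = \emptyset$. So $e_j$ and $e_{j'}$ have disjoint endpoint sets and are different edges.
\item If $j'$ is a proper descendant of $j$, then $j'$ lies inside one child of $j$, say $u$. Both endpoints of $e_{j'}$ are then in $E(H_u)$. But $e_j$ has one endpoint in $E(H_u)$ and one in the disjoint set $E(H_v)$, so $e_j \neq e_{j'}$.
\end{enumerate}

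Hence we obtain $k - 1$ distinct edges of a forest that has only $k - 2$ edges, a contradiction. So no such $q$ exists.

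The main point requiring care is the injectivity step. It relies on laminarity of the plan's relation sets and disjointness of sibling subtrees, both of which follow from $\mathcal{P}$ being a tree whose leaves are the base relations, each appearing once. Everything else is direct counting.
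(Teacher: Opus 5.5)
Your proof is correct, but it takes a different route from the paper. The paper's proof applies condition~(2) of \cref{def:w1plan-induced-by-jt} once, at $q$ itself. Reading $q$'s two children as $p_1$ and $p_2$, it obtains an edge of $T$ joining $T_{c_1}$ to $T_{c_2}$, which cannot exist because these subtrees meet only through $t$. That is much shorter, but it leaves two points implicit. First, $q$'s children must be exactly $p_1$ and $p_2$; this needs a laminarity argument ($p_1,p_2$ exist by condition~(1), lie below $q$, and partition $E(H_q)$). Second, condition~(2) gives an edge between the two subtrees, not literally ``between $c_1$ and $c_2$'' as the paper phrases it. Your counting argument avoids both points: you charge each of the $k-1$ join nodes of $\mathcal{P}_q$ to a distinct edge of $T$ inside $E(H_q)$, so you never use $p_1,p_2$ or the shape of $q$'s children. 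What you actually establish is a stronger invariant: for every node $j$ of a plan induced by $T$, the relations in $E(H_j)$ induce a connected subtree of $T$. That invariant excludes not only pairs but any plan node combining two or more child subtrees of $t$ without $t$ itself. It therefore directly covers the case of three or more subtrees, which the left-deep conclusion drawn right after the lemma also needs and which the pairwise statement handles only via an extra induction. One minor notational point: you write $E(T_{c_1})$ for the relations of $T_{c_1}$ while also using $E(T)$ for the edge set of $T$. Writing $V(T_{c_1})$, identified with relations via $\lambda$, avoids the clash.
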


\begin{proof}
    Suppose there is such a $q$, then by \cref{def:w1plan-induced-by-jt}, there would have been an edge between $c_1$ and $c_2$ on $T$, contradicting with the assumption that $c_1$ and $c_2$ are distinct children of some node $t$.
\end{proof}

This means that, given a width-1 query plan $\mathcal{P}$ induced by a join tree $T$, for each $t \in V(T)$, with children $c_1, \dots, c_{f(t)}$ and the corresponding $p \in V(\mathcal{P})$ such that $Q(t) = Q(p)$, if we consider the sub–query plans $\mathcal{P}_{c_1}, \dots, \mathcal{P}_{c_{f(t)}}$, induced by the sub–join trees $T_{c_1}, \dots, T_{c_{f(t)}}$ rooted at the children of $t$, each as a single relation, then the sub–query plan $\mathcal{P}_p$ rooted at $p$ is a \emph{left-deep plan} consisting of $t$, $\mathcal{P}_{c_1}, \dots, \mathcal{P}_{c_{f(t)}}$, and $t$ is an operand of the deepest join operation.

Therefore, the optimization can be done by a dynamic programming algorithm, as shown in \cref{alg:opt-hierarchical}. In the algorithm, $\mathsf{children}(v)$ is the set of all children of $v$.

\begin{algorithm}
\caption{Finding the optimal width-1 query plan induced by a join tree} \label{alg:opt-hierarchical}
\SetKwInOut{Input}{input}
\SetKwInOut{Output}{output}
\SetKwInOut{Known}{known}
\Input{A join tree $T$}
\Output{The optimal width-1 query plan induced by $T$}

\ForEach {$v \in V(T)$, in bottom-up order \label{alg-line:opt-hierarchical-traversal}} {
    \ForEach{child $c$ of $v$} {
        \If {$\lambda(c) \neq \emptyset$}{
            $P(v, \{ c \}) \gets P(c, \mathsf{children}(c) \cup \{ c \})$ \;
        } \Else {
            $P(v, \{ c \}) \gets P(c, \mathsf{children}(c))$ \;
        }
    }
    \If{$\lambda(v) \neq \emptyset$}{
        $P(v, \{ v \}) \gets $ query plan with one node $v$ \;
    }
    \If {$\lambda(v) = \emptyset$} {
        $S \gets 2^{\mathsf{children}(v)} \setminus \{ \emptyset \}$ \;
    } \Else {
        $S \gets \{ T \cup \{ v \} : T \in 2^{\mathsf{children}(v)} \setminus \emptyset \}$ \;
    }
    \ForEach {$T \in S$, ordered by increasing size \label{alg-line:opt-hierarchical-subsets}} {
        $P(v, T) \gets \arg\!\min_{p \in \{ P(v, T \setminus \{ u \}) \bowtie P(v, \{ u \}) : u \in T \}} \mathsf{cost}(p)$ \label{alg-line:opt-hierarchical-last} \;
    }
}
\Return {$P(\mathsf{root}(T), S(\mathsf{root}(T)))$} \;

\end{algorithm}

In the algorithm, the outer loop (\cref{alg-line:opt-hierarchical-traversal}) is executed $|V(T)| = |Q|$ times. The complexity of the loop body is dominated by enumerating all possible $T$ (\cref{alg-line:opt-hierarchical-subsets}) and $u \in T$ (\cref{alg-line:opt-hierarchical-last}). Noting that the number of children of $v$ is bounded by the fan-out $f(p)$, the complexity of this part is given by
\begin{align*}
    \sum_{i = 1}^{f(v)} \binom{f(v)}{i} \times i &= \sum_{i = 1}^{f(v)} \frac{f(v)!}{i! (f(v) - i)!} \times i = \sum_{i = 1}^{f(v)} \frac{f(v)!}{(i-1)! (f(v) - i)!} \\
    & = \sum_{i = 1}^{f(v)} f(v) \times \frac{(f(v) - 1)!}{(i-1)! ((f(v) - 1) - (i-1))!} \\
    & = \sum_{i = 1}^{f(v)} f(v) \times \binom{f(v) - 1}{i - 1} = f(v) \sum_{i = 0}^{f(v) - 1} \binom{f(v) - 1}{i} \\
    & = f(v) 2^{f(v) - 1} \leq \max_{v \in V(T)} f(v) 2^{f(v) - 1} = f(T) 2^{f(T) - 1}.
\end{align*}

So, overall, the complexity is $O(f(T) 2^{f(T) - 1} |Q|)$.

%% file: app_meta.tex
\section{Omitted Details in Section~\ref{sec:meta}}
\label{app:meta}

\subsection{Omitted Details on \cref{alg:enum}}
\label{app:enum}

\subsubsection{Detailed Version of \cref{alg:enum}}
A more detailed version of \cref{alg:enum} is shown in \cref{alg:enum-full}. It shows in particular the detailed steps to construct join trees from sub--join trees. Some functions being used are detailed in the following subsections.

\begin{algorithm}
\small
\caption{Enumerating all join trees based on a meta-decomposition} \label{alg:enum-full}
\SetInd{0.5em}{0.6em}
\SetKwInOut{Input}{input}
\SetKwInOut{Output}{output}
\SetKwInOut{Known}{known}
\SetKwProg{Fn}{Function}{:}{}
\SetKwFunction{enumRec}{$\mathsf{enumRec}$}
\Input{A meta-decomposition $M = (V(M), r(M), E(M), \lambda, \chi, \kappa)$}
\Output{All join trees of the hypergraph induced by $M$}

\Fn{\enumRec{$v$: a node on $M$}}{

$\mathcal{C} \gets $ set of children $c$ of $v$ on $M$, sorted by partial order $\supseteq$ on $\kappa(c)$ \;
\ForEach{$c \in C$} {
    $\mathcal{T}_c \gets \bigcup_{T \in \mathsf{enumRec}(c)} \mathsf{rerootings}(T, \emptyset, \kappa(c))$ \label{alg-line:enum-subtrees} \;
}
\If(\tcp*[f]{minor node}){$\lambda(v) = \emptyset$ \label{alg-line:enum-minor-cond}} { %
    $\mathcal{C}_{\sf origin} \gets \{ c \in \mathcal{C} \mid \kappa(c) = \chi(v) \}$ \tcp*[r]{they can be found at the head of the sorted $\mathcal{C}$}
    \If{$\kappa(v) = \chi(v)$}{
        $\mathcal{T}_P \gets \textsf{enumerateTrees}(\mathcal{C}_{\sf origin} \cup \{ v \})$, all rerooted to $v$ %
        \label{alg-line:enum-minor-dummy}
        \tcp*[r]{All non-isomorphic trees with vertices $\mathcal{C}_{\sf origin} \cup \{ v \}$. The minor node $v$ will eventually be replaced 
        on \cref{alg-line:enum-dummy-replace}
        by the parent of $r$.
        }
    } \Else(\tcp*[f]{$\kappa(v) \neq \chi(v)$, the parent should not participate}) {
        $\mathcal{T}_P \gets \textsf{enumerateTrees}(S)$ \label{alg-line:enum-minor-no-dummy} \;
    }
        \ForEach(\tcp*[f]{A skeleton of a way to connect subtrees}){$T_P \in \mathcal{T}_P$ \label{alg-line:enum-minor-origin-start}} {
            \ForEach {$c \in V(T_P)$} {
                \If {$c$ is a leaf node} {
                    $\mathcal{T}_c' \gets \mathcal{T}_c$ \;
                } \Else {
                    $\mathcal{T}_c' \gets \emptyset$ \;
                }
            }
            \ForEach {non-leaf $c \in V(T_P)$, in bottom-up order,
                $T_c \in \mathcal{T}_c$,
                    child $d$ of $c$ on $T_P$,
                        $T_d \in \bigcup_{T_d \in \mathcal{T}_d}\mathsf{rerootings}(T_d, \emptyset, \kappa(d))$,
                            $u \in V(T_c)$ such that $\kappa(d) \subseteq \chi(u)$}{
                                $T_c' \gets T_c$, with $T_c$ added as a child of $u$ \;
                                Add $T_c'$ to $\mathcal{T}_c'$ \;
                            }
            $\mathcal{T} \gets \mathcal{T} \cup \mathcal{T}_{r(T_P)}'$ \; \label{alg-line:enum-minor-origin-end}
        }
    
    $\mathcal{C}_{\sf proper} \gets \mathcal{C} \setminus \mathcal{C}_{\sf origin}$, preserving the order \;
} \Else {
    $\mathcal{T} \gets $ a set of one tree with one vertex $v$ only \label{alg-line:enum-physical-root} \;
    $\mathcal{C}_{\sf proper} \gets \mathcal{C}$, preserving the order \;
}
\ForEach {child $c \in \mathcal{C}_{\sf proper}$ in order \label{alg-line:enum-children-start}}{
    $\mathcal{T}' \gets \emptyset$ \;
    \ForEach{$T_c \in \mathcal{T}_c$, $T \in \mathcal{T}$} {
        \tcp*[l]{Enumerate all possible $v$ in some $T$ which $T_c$ can be attached to}
        \If(\tcp*[f]{The root of $T_c$ would be a minor node from \cref{alg-line:enum-minor-dummy}. We find a parent of each of its children.}){$\lambda(c) = \emptyset$ and $\kappa(c) = \chi(c)$}{
            \ForEach {child $d$ of the root of $T_c$}{
                $\mathcal{T}'' \gets \emptyset$ \;
                \ForEach{$T' \in \mathcal{T}'$, $u \in T'$ such that $\kappa(d) \subseteq \chi(u)$} {
                    $T'' \gets T'$ with $T_d$ attached as a child of $u$ \label{alg-line:enum-dummy-replace} \;
                    Add $T''$ to $\mathcal{T}''$ \;
                }
                $\mathcal{T}' \gets \mathcal{T}''$ \;
            }
        } \Else {
            \ForEach{$u \in T$ such that $\kappa(c) \subseteq \chi(u)$}{
                $T' \gets T$ with $T_c$ added as a child of $u$ \label{alg-line:enum-attach-subtree} \;
                Add $T'$ to $\mathcal{T}'$
            }
        }
    }
    $\mathcal{T} \gets \mathcal{T}'$ \label{alg-line:enum-children-end}
}

\Return{$\mathcal{T}$} \;
    
}

\Return{$\bigcup_{T \in \mathsf{enumRec}(r)} \mathsf{rerootings}(T, \emptyset, \emptyset)$ } \;
\end{algorithm}

\subsubsection{Enumeration of Non-Isomorphic Trees Based on \prufer{} Sequences}

For minor nodes $p$, as is illustrated by \cref{ex:star}, all of its neighbors $q$ for which $\kappa(q) = \chi(p)$ can be connected in an arbitrary tree structure with these neighbors being vertices. In order to enumerate all such possibilities without repetition, we (1) enumerate all non-isomorphic trees up to rerooting and (2) enumerate all rerootings for each tree enumerated in step (1).
We start with the enumeration of non-isomorphic trees up to rerooting---namely, if two trees are simply rerootings of each other, they are treated as the same and only one of them will be enumerated. This can be done based on \prufer{} sequences~\cite{prufer_neuer_1918}. A \prufer{} sequence is a sequence of $|V| - 2$ numbers, each of which is an integer between $1$ and $|V|$, inclusive. It can be shown that each sequence bijectively encodes one tree (up to rerooting) of size $|V|$, with vertices labeled by integers ranging from $1$ to $|V|$, inclusive. Therefore, we use \cref{alg:prufer} which simply enumerates all such numeric sequences and converts each sequence into a tree.

\begin{algorithm}[H]
\caption{$\textsf{enumerateTrees}(V)$: Enumerate all non-homomorphic trees, up to rerooting, with vertices $V$, using \prufer{} sequences}\label{alg:prufer}
\SetKwInOut{Input}{input}
\SetKwInOut{Output}{output}
\SetKwInOut{Known}{known}
\Input{Set of vertices $V$}
\Output{A set $\mathcal{T}$ of all non-isomorphic trees with vertices $V$}

Arbitrarily order the vertices in $V$ and let $v_i$ be the $i$-th vertex in $V$ under this ordering \;

$\mathcal{T} \gets \emptyset$ \;

\ForEach{sequence $P = p_1, \dots, p_{|V| - 2}$ of $|V|-2$ integers, each with value in range $[1, |V|]$}{
    \ForEach{$i = 1, \dots, |V|$}{
        $d(i) \gets 1$ \;
    }
    $S \gets [1, |V|]$ \tcp*[r]{maintains the set of indices $i$ with $d(v_i) = 1$, in increasing order}
    \ForEach{$i = 1, \dots, |V| - 2$}{
        $d(s_i) \gets d(p_i) + 1$ \;
        $S \gets S \setminus \{i\}$ \;
    }
    Initialize a tree $T$ with set of vertices $V$ and no edge \;
    \ForEach{$i = 1, \dots, |V| - 2$}{ %
        $j \gets$ remove the least element in $S$ \; %
        $d(i) \gets d(i) - 1$ \;
        $d(j) \gets d(j) - 1$ \;
        \If{$d(i) = 1$}{
            Add $i$ into $S$ \;
        }
        In $T$, add $v_j$ as a child of $v_i$ \;
    }
    $i, j \gets $ the remaining two elements in $S$ \;
    In $T$, add $v_j$ as a child of $v_i$ \tcp*[r]{the direction does not matter}
    $\mathcal{T} \gets \mathcal{T} \cup \{ T \}$
}
\Return{$\mathcal{T}$} \;
\end{algorithm}

\subsubsection{Enumeration of Rerootings}

We then recursively enumerate all possible rerootings of a tree $T$ such that the new root $r'$ has $\kappa \subseteq \chi(r')$, where $\kappa$ is the set of keys as required by the original \cref{alg:enum-full}. This can be done as shown in \cref{alg:reroot}.

\begin{algorithm}[H]
\caption{$\mathsf{rerootings}(T, p, \kappa)$: Enumerate all valid rerootings of a tree $T$ such that the new root $r'$ has $\kappa \subseteq \chi(r')$}
\label{alg:reroot}
\SetKwInOut{Input}{input}
\SetKwInOut{Output}{output}
\SetKwInOut{Known}{known}
\Input{A tree $T$, optionally the previous parent $p$ of the root $r$ of $T$ before the previous rotation, the expected key $\kappa$}
\Output{A set $\mathcal{T}$ of all valid rerootings of $T$ such that the new root $r'$ has $\kappa \subseteq \chi(r')$}

$\mathcal{T} \gets \emptyset$ \;
\ForEach{child $c$ of the root $r$ of $T$}{
    \If{$\kappa \subseteq \chi(c)$ and $c \neq p$ \label{alg-line:reroot-kappa-cond}}{
        $T' \gets T$ rerooted to $c$ \;
        $\mathcal{T} \gets \mathcal{T} \cup \{ T' \} \cup \textsf{rerootings}(T', r, \kappa)$ \;
    }
}
\Return{$\mathcal{T}$} \;
\end{algorithm}

\subsection{Proof of \cref{lem:add-back-minor}}

We start with the following lemma:

\begin{lemma} \label{lem:minor-overlap}
    Given an acyclic hypergraph $H$, if there exist two ears $e_1, e_2 \in E(H)$ such that $o(e_1, H) = o(e_2, H)$, then $o(e_1, H) = o(e_2, H) = e_1 \cap e_2$.
\end{lemma}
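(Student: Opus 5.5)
Set $o = o(e_1, H) = o(e_2, H)$. We prove the two inclusions $e_1 \cap e_2 \subseteq o$ and $o \subseteq e_1 \cap e_2$ separately. The ear property and acyclicity turn out to be unnecessary; only the definition of $o(\cdot, H)$ as an overlap is used.

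\emph{The inclusion $e_1 \cap e_2 \subseteq o$.} Since $e_1 \neq e_2$, the hyperedge $e_2$ belongs to $E(H) \setminus \set{e_1}$. Hence
\[
e_1 \cap e_2 \subseteq e_1 \cap \bigl(\cup(E(H) \setminus \set{e_1})\bigr) = o(e_1, H) = o.
\]

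\emph{The inclusion $o \subseteq e_1 \cap e_2$.} By definition, $o(e_1, H) \subseteq e_1$ and $o(e_2, H) \subseteq e_2$. Because both sets equal $o$, we obtain $o \subseteq e_1$ and $o \subseteq e_2$, and therefore $o \subseteq e_1 \cap e_2$.

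Combining the two inclusions gives $o(e_1, H) = o(e_2, H) = e_1 \cap e_2$. There is no substantive obstacle here. The one point to watch is that $e_1$ and $e_2$ must be distinct hyperedges, which is what lets $e_2$ count among the ``remaining'' hyperedges in the definition of $o(e_1, H)$; the statement implicitly assumes this.

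This lemma is the first step towards \cref{lem:add-back-minor}. For that proposition, I would take a GYO elimination sequence of $H$ and show it can be simulated on $H'$. Each $e_i \in S$ satisfies $o(e_i, H) = o \subseteq o$, so it is covered by the new edge $o$. Every remaining hyperedge meets $\cup S$ only inside $o$, since any vertex shared with some $e_i$ lies in $o(e_i, H) = o$. Consequently, replacing $S$ by $o$ preserves each hyperedge's overlaps with the rest of the hypergraph. I expect the main difficulty to be organizing the reduction order so that $o$ acts as a witness wherever some $e_i$ did in the original sequence.
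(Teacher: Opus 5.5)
Your proof is correct and is essentially the paper's argument: both inclusions come straight from the definition of $o(\cdot,H)$. For $e_1 \cap e_2 \subseteq o$, the paper uses $e_1 \in E(H)\setminus\{e_2\}$ where you use the symmetric fact $e_2 \in E(H)\setminus\{e_1\}$, and your remark that neither the ear property nor acyclicity is needed is accurate, since the paper's proof does not use them either.
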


\begin{proof}
    ($\subseteq$): $o(e_1, H) = e_1 \cap (\cup (E(H) \setminus \{ e_1 \})) \subseteq e_1$, and, similarly, $o(e_2, H) \subseteq e_2$. So it follows that $o(e_1, H) = o(e_2, H) \subseteq e_1 \cap e_2$.

    ($\supseteq$): Since $e_1 \in E(H) \setminus \{ e_2 \}$, we have $o(e_1, H) = o(e_2, H) = e_2 \cap (\cup (E(H) \setminus \{ e_2 \})) \supseteq e_1 \cap e_2$.
\end{proof}

    A hypergraph is acyclic if and only if it has a join tree, satisfying conditions \ref{H1}--\ref{H3}.
    Since $e_1$ and $e_2$ are ears in $H$ such that $o(e_1, H) = o(e_2, H) \subseteq e_2$, in the GYO algorithm for $H$, $e_1$ can already be removed, with $e_2$ being its witness.
    Therefore, there exists a join tree $T$ of $H$ where there is a leaf node $p_1$ with $\lambda(p_1) = \{ e_1 \}$, and its parent $p_2$ has $\lambda(p_2) = \{ e_2 \}$. We construct a tree $T'$ that is exactly the same as $T$ except that $p_1$ and $p_2$ are merged into a single node $m$ with $\lambda(m) = \{ o(e_1, H) \}$ and $\chi(m) = o(e_1, H)$, and we show that $T'$ is a valid join tree of $H'$.
    \ref{H1} and \ref{H3} are satisfied by construction.
    It remains to show that $T'$ satisfies \ref{H2}.
    Suppose by way of contradiction that this is not the case.
    By \ref{H2} on the join tree $T$, this means, on $T'$, either (1) there exist $s, t \in V(T')$ such that $m$ is on the path between $s$ and $t$, and $\chi(s) \cap \chi(t) \not\subseteq \chi(m)$, or (2) there exists $s \in V(T')$ and $t$ on the path between $s$ and $m$, such that $\chi(s) \cap \chi(m) \not\subseteq \chi(t)$.
    
    For case (1), since, by \cref{lem:minor-overlap}, $\chi(m) = o(e_1, H) = e_1 \cap e_2 = \chi(p_1) \cap \chi(p_2)$,
    if $\chi(s) \cap \chi(t) \not\subseteq \chi(m)$,
    this means either $\chi(s) \cap \chi(t) \not\subseteq \chi(p_1)$
    or $\chi(p) \cap \chi(q) \not\subseteq \chi(p_2)$.
    On $T$, since $p_1$ is a leaf node, $p_2$ should be on the path between $s$ and $t$.
    By \ref{H2} for $T$, we have $\chi(s) \cap \chi(t) \subseteq \chi(p_2)$.
    So it can only be that $\chi(s) \cap \chi(t) \not\subseteq \chi(p_1)$, i.e., there exists some $v \in \chi(s) \cap \chi(t) \subseteq \chi(p_2)$ such that $v \not\in \chi(p_1)$.
    But then, $v \in o(e_2, H)$ but $v \not\in o(e_1, H)$, contradicting $o(e_1, H) = o(e_2, H)$.
    
    For case (2), we first note that $\chi(s) \cap o(e_1, H) = \chi(s) \cap e_1$. This is because ($\subseteq$) $o(e_1, H) = e_1 \cap (\cup (E(H) \setminus \{ e_1 \})) \subseteq e_1$, and ($\supseteq$) $\chi(s) \cap e_1 \subseteq (\cup (E(H) \setminus \{e_1 \})) \cap e_1 = o(e_1, H)$. Then, $\chi(s) \cap \chi(m) = \chi(s) \cap o(e_1, H) = \chi(s) \cap e_1 = \chi(s) \cap \chi(p_1)$. So, $\chi(s) \cap \chi(p_1) = \chi(s) \cap \chi(m) \not\subseteq \chi(p_2)$. Since $p_1$ is a leaf and $p_2$ is the parent of $p_1$ in $T$, $p_2$ must be on the path between $s$ and $p_1$, so this contradicts \ref{H2} for $T$.

\subsection{Proof of \cref{thm:meta-complexity}}

    We start by considering the while-loop (\cref{alg:meta-while}). At the start of each iteration, an ear must exist because the hypergraph $H'$ must be acyclic, as guaranteed by \cref{lem:add-back-minor}. Furthermore, $E(H')$ always decreases in size by at least 1 in each iteration, because we only remove ears from $E(H')$, except when we have mutually reducible ears, in which case we remove all (at least 2) those ears sharing the same intersection but add only one special hyperedge $e_m$ to $E(H')$.
    Therefore, the body of the while loop is executed at most $|E(H)|$ times.
    
    For each iteration of the while-loop, all sets $S$ (\cref{alg-line:meta-minor-origins-sets}) can be enumerated in $O(|E(H')|)$, if we maintain two hash maps: (1) $T_1$, which maps each $o \subseteq V(H')$ to all hyperedges $e \in E(H')$ such that $o(e, H') = o$, and (2) $T_2$, which maps each $v \in V(H')$ to the number of hyperedges $e \in E(H')$ such that $v \in e$.
    Each time a hyperedge $e$ is added or removed from $E(H')$, we update $T_2$ for each $v \in e$, and then update $T_1$ if some $v \in e$ is/was covered by exactly one hyperedge. These updates can be done in $O(|E(H')|)$, which is $O(|E(H)|)$.
    
    With these observations, the runtime $O(|E(H)|^3)$ can be obtained by following the control flow.

\subsection{Proof of \cref{thm:meta-correct}}
We show that $M$ output by the algorithm satisfies \ref{H1}--\ref{H5}.

\paragraph{\ref{H1}.} For each $e \in E(H)$, the algorithm has to remove $e$ from $E(H')$ on \cref{alg-line:meta-origin-remove} or \cref{alg-line:meta-ear-remove} before it terminates. In any of these cases, the algorithm will create a node $p$ in $M$ with $\lambda(p) = \{ e \}$, on \cref{alg-line:meta-origin-node} or \cref{alg-line:meta-ear-node}.

Before proceeding with \ref{H2}, we have the following lemmas:

\begin{lemma} \label{lem:meta-alg-edge-correct}
    For each edge $(q, p) \in E(M)$, where $\chi(p) = \{ e_p \}$ and $\chi(q) = \{ e_q \}$, we have either
    \begin{enumerate}
        \item $e_p$ is removed as ear by the algorithm before $e_q$ is, \emph{or}
        \item $\lambda(q) = \emptyset$ and $\kappa(q) = \chi(q)$
    \end{enumerate}
    Additionally, $\kappa(p) = \chi(p) \cap \chi(q)$.
\end{lemma}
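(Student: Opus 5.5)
We prove the lemma by examining the two places in \cref{alg:meta} where an edge $(q,p)$ is created. Throughout, $e_p$ and $e_q$ denote the hyperedges (original or special) that were removed when $p$ and $q$ were created, so $\chi(p)=e_p$ and $\chi(q)=e_q$. We also use $\kappa(p)=o(e_p,H')$, where $H'$ is the current hypergraph at the moment $e_p$ was removed. We keep \cref{lem:add-back-minor} and \cref{lem:minor-overlap} at hand: the first guarantees that $H'$ stays acyclic throughout, and the second identifies $o(\cdot)$ of mutually reducible ears with their pairwise intersection.

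\emph{Case A: the edge comes from \cref{alg-line:minor-origin-edge-cond}.} Here $q$ is a minor node with $\chi(q)=\kappa(p)$. The node $q$ was created either on \cref{alg-line:meta-add-special-edge} as a special hyperedge $e_m=o$, or on \cref{alg-line:no-minor-create}. In the second sub-case, $\kappa(q)=\chi(q)$ by construction, so alternative (2) holds. In the first sub-case, the special hyperedge $e_m$ is inserted into $E(H')$ only after $e_p$ has been removed, so it is removed in a strictly later round (or later within the same loop body), and alternative (1) holds. In both sub-cases, $\kappa(p)=\chi(q)$ and $\kappa(p)=o(e_p,H')\subseteq e_p=\chi(p)$. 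Hence $\chi(p)\cap\chi(q)=\kappa(p)$.

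\emph{Case B: the edge comes from \cref{alg-line:phys-find-parent}.} Here $q$ satisfies $\kappa(p)\subseteq\chi(q)$ and $\kappa(p)\not\subseteq\kappa(q)$. We must show that $e_q$ was still present in $H'$ when $e_p$ was removed, which gives (1). Suppose instead that $e_q$ had been removed at or before that moment. Since $\kappa(p)\subseteq e_q$ and $\kappa(p)\subseteq e_p$, the set $\kappa(p)\cap e_q=\kappa(p)$ would have been shared with another hyperedge still present when $e_q$ was removed (namely $e_p$, or a witness containing $o(e_p,\cdot)$). This would place $\kappa(p)$ inside $o(e_q,\cdot)=\kappa(q)$, a contradiction. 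Same-round removals with equal $o$ are handled separately: they were grouped as mutually reducible and connected through a minor node in Case A. For the equality, the inclusion $\chi(p)\cap\chi(q)\supseteq\kappa(p)$ is immediate. Conversely, $e_q$ was present in $H'$ when $e_p$ was removed, so $e_p\cap e_q\subseteq o(e_p,H')=\kappa(p)$.

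\textbf{Main obstacle.} The delicate step is the timing argument in Case B. Removal rounds are simultaneous, and special hyperedges can replace original ones, so ``$e_q$ still present'' must be formulated carefully. If $e_q$ was absorbed into a special edge $e_m$, the relevant presence is that of $e_m\subseteq e_q$, and we need $\kappa(p)\subseteq e_m$. We would establish this by an invariant maintained across iterations: for every $v\in V(H')$, the hyperedges of $H'$ containing $v$ are exactly the current representatives of the original hyperedges containing $v$ that have not yet been fully reduced. This invariant is what makes the $o(\cdot,H')$ computations track intersections in the original $H$.
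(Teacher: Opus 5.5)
\textbf{There is a genuine gap in Case A.} Your Case B is essentially the paper's argument. The problem is the special-hyperedge sub-case of Case A. You argue that $e_m$ is inserted only after $e_p$ has been removed. That presupposes $e_p$ belonged to the mutually reducible set $S$ whose removal spawned $e_m$, and nothing forces this. The rule at \cref{alg-line:minor-origin-edge-cond} attaches to $q$ \emph{every} node $p$ with $\kappa(p)=\chi(q)$, including ears removed long after $e_m$.

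A concrete instance: take $e_1=\{x,a\}$, $e_2=\{x,b\}$, $e_p=\{x,y\}$, $e_4=\{y,d\}$, $e_5=\{x,z\}$, $e_6=\{z,u\}$, $e_7=\{u,g\}$.
\begin{itemize}
\item In round~1, $\{e_1,e_2\}$ is mutually reducible with $o=\{x\}$, so $e_m=\{x\}$ is inserted. It is removed as an ear in the same round, together with $e_4$ and $e_7$. This yields a minor node $q$ with $\chi(q)=\kappa(q)=\{x\}$, since $e_p$ and $e_5$ still contain $x$; the check at \cref{alg-line:no-minor-create} then finds $q$ already present.
\item $e_p$ is not an ear in round~1, because $o(e_p)=\{x,y\}$ lies in no single other hyperedge.
\item In round~2, $e_p$ is removed as an ear with $\kappa(p)=\{x\}$, and the edge $(q,p)$ is created.
\end{itemize}
Here $e_p$ is removed \emph{after} $e_q=e_m$, so your conclusion that alternative~(1) holds is false. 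The lemma survives only through alternative~(2), which your Case A never establishes for minor nodes coming from special hyperedges.

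\textbf{The fix is the paper's route.} Split on removal order rather than on which line produced the edge. If $e_p$ is removed after $e_q$, then (a superset of) $e_p$ is present when $e_q$ is removed, so $\kappa(p)\subseteq\chi(p)\cap\chi(q)\subseteq\kappa(q)$.
\begin{itemize}
\item In the else branch (\cref{alg-line:phys-find-parent}) this contradicts $\kappa(p)\not\subseteq\kappa(q)$; that is exactly your Case B.
\item In the minor branch it gives $\chi(q)=\kappa(p)\subseteq\kappa(q)\subseteq\chi(q)$, i.e.\ alternative~(2).
\end{itemize}
Your derivation of $\kappa(p)=\chi(p)\cap\chi(q)$ in Case A is fine as written, since it only uses $\chi(q)=\kappa(p)\subseteq\chi(p)$. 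Your no-minor-create sub-case is also fine.

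\textbf{On your ``main obstacle''.} You are right that ``present at the relevant time'' needs care, and the paper glosses over it. However, the scenario you describe ($e_q$ absorbed into some $e_m\subseteq e_q$ before $e_p$ is removed) is already excluded by your contradiction argument. The case that actually needs care is when $e_p$ or $e_q$ is itself a special hyperedge not yet inserted at that moment. For this, no elaborate invariant is needed. Each special hyperedge $o$ is contained in every member of the set that spawned it. By induction on insertion time, before a special hyperedge is inserted some present hyperedge contains it, and that superset can replace $e_p$ or $e_q$ in both inclusions above.

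Your remark that same-round removals ``were grouped as mutually reducible'' is unnecessary and not accurate in general. Ears with equal overlap can arise in the ear loop after the mutual-reduction step. Removals within a round are still sequential, and the order argument covers them directly.
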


\begin{proof}
    We first have $\kappa(p) \subseteq \chi(p)$ by definition on Lines~\ref{alg-line:meta-origin-node}, \ref{alg-line:meta-ear-node}, and \ref{alg-line:no-minor-create}.
    Since the algorithm adds the edge $(q, p)$, by Lines~\ref{alg-line:minor-origin-edge-cond} and \ref{alg-line:phys-find-parent}, we have $\kappa(p) \subseteq \chi(q)$.
    Therefore, $\kappa(p) \subseteq \chi(p) \cap \chi(q)$.
    
    We then consider the relative order in which $e_p$ and $e_q$ are removed from $E(H')$ in the algorithm.
    
    (1) If $e_p$ is removed before $e_q$ is, then at the time $e_p$ is removed, $e_q \in E(H')$, and therefore $\chi(p) \cap \chi(q) = e_p \cap e_q \subseteq e_p \cap (\cup (E(H') \setminus \{ e_p \})) = \kappa(p)$. Since also $\kappa(p) \subseteq \chi(p) \cap \chi(q)$, we have $\kappa(p) = \chi(p) \cap \chi(q)$.
    
    (2) If $e_p$ is removed after $e_q$ is, then, symmetric to case (1), we have $\chi(p) \cap \chi(q) \subseteq \kappa(q)$. Then, since we have shown that $\kappa(p) \subseteq \chi(p) \cap \chi(q)$, we have $\kappa(p) \subseteq \chi(p) \cap \chi(q) \subseteq \kappa(q)$.
    By \cref{alg-line:minor-origin-edge-cond} and \cref{alg-line:phys-find-parent}, this is only possible if $\lambda(q) = \emptyset$ and $\kappa(p) = \chi(q)$.
    Since $\kappa(p) \subseteq \chi(p)$ and $\kappa(p) = \chi(q)$, it is true that $\kappa(p) = \chi(p) \cap \chi(q)$.
\end{proof}

It follows that

\begin{corollary} \label{cor:descendent-removed-before}
    $p$ is a descendant of $q$ on $M$ if and only if $e_p$ is removed as ear by the algorithm before $e_q$ is.
\end{corollary}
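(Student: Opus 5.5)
The plan is to prove the two directions separately. Both rest on \cref{lem:meta-alg-edge-correct}, applied edge by edge along tree paths of $M$, together with the bookkeeping of when each (possibly special) hyperedge enters and leaves $E(H')$ in \cref{alg:meta}. Throughout, for a minor node $m$ I write $e_m$ for the special hyperedge $\chi(m)$ added on \cref{alg-line:meta-add-special-edge}, or created implicitly on \cref{alg-line:no-minor-create}. ``Removed'' means removed from $E(H')$ in some iteration of the while-loop, so that ``before'' is the order of removal rounds.

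\emph{($\Rightarrow$)} I would first prove the statement for a single edge $(q,p)\in E(M)$ and then chain along the unique path from $p$ up to $q$; transitivity of ``removed before'' gives the general case. For one edge, \cref{lem:meta-alg-edge-correct} leaves two cases.
\begin{itemize}
\item In case (1), $e_p$ is removed before $e_q$ directly.
\item In case (2), $q$ is a minor node with $\kappa(q)=\chi(q)$. Its special hyperedge $e_q$ is added to $E(H')$ only at the moment the mutually reducible ears, or the ears sharing the interface $\kappa(p)=\chi(q)$, are removed. Here I expect to show that $p$ is among these. The argument: the parent of $p$ is chosen on \cref{alg-line:minor-origin-edge-cond} precisely because $\kappa(p)=\chi(q)$, and the node $q$ is created on \cref{alg-line:meta-origin-node} or \cref{alg-line:no-minor-create} from the set of ears with that interface.
\end{itemize}
Since $e_q$ enters $E(H')$ no earlier than $e_p$ leaves, and it is itself removed only in a later round, $e_p$ is removed strictly before $e_q$.

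\emph{($\Leftarrow$)} I would argue by contraposition. Suppose $p$ is not a descendant of $q$. If $q$ is a descendant of $p$, the forward direction gives the opposite order, a contradiction. The remaining case, $p$ and $q$ incomparable in $M$, is the main obstacle. My plan is to use the ``highest parent'' rule of \cref{alg-line:phys-find-parent} (equivalently \ref{H4}(b)) together with the round structure of the loop. Let $a$ be the lowest common ancestor of $p$ and $q$, and let $p',q'$ be the children of $a$ on the paths to $p$ and $q$. By the forward direction, $e_p$ is removed no later than $e_{p'}$, and $e_q$ no later than $e_{q'}$.

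The first thing I would try is to show that the subtrees $M_{p'}$ and $M_{q'}$ are reduced independently of each other. The reason is that, by \ref{H2}, any attribute shared across them lies in $\chi(a)$, so the ear status of a hyperedge in one subtree never depends on hyperedges of the other. The comparison ``removed before'' between $e_p$ and $e_q$ would then have to be read within the removal order induced on each branch. I would try to make this precise by a lemma saying that the rounds in which the hyperedges of $M_{p'}$ are removed form a linear extension of the descendant order of $M_{p'}$.

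If this independence argument does not yield the literal unrestricted converse, I would state explicitly the precise sense of ``before'' under which the converse holds. The intended reading, I expect, is the dependency order in which $e_q$ can become an ear only after $e_p$ has been removed. With that reading, the incomparable case cannot occur, because the independence lemma shows no such dependency exists across disjoint branches.
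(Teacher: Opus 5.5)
\textbf{The converse direction.} Your plan for the incomparable case cannot succeed, because the converse is false as stated. In $Q_{\ref{ex:hierarchical}}$, the first iteration of the while-loop of \cref{alg:meta} removes exactly $R_3$ and $R_4$. The overlaps $\{x_1,x_3\}$ of $R_1$ and $\{x_1,x_5\}$ of $R_2$ are covered by no single hyperedge, so $R_2$ is removed only in the second iteration. Yet $R_4$ is attached to $R_1$ and is not a descendant of $R_2$ (cf.\ \cref{fig:hierarchical-meta}). Your independence observation is precisely why: disjoint branches are reduced on unrelated schedules, so the removal order across them carries no ancestry information. Retreating to an unspecified ``dependency order'' replaces the corollary with a different statement that you neither define nor prove. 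What can actually be proved is the forward implication, together with the converse for pairs $p,q$ on a common root-to-leaf path, which your contraposition already covers. The paper merely states that the corollary follows from \cref{lem:meta-alg-edge-correct}; chained along tree edges, that lemma likewise yields only the forward implication.

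\textbf{The forward direction.} This direction also needs repair.
\begin{enumerate}
\item If ``before'' means the order of while-loop iterations, it fails. In $Q_{\ref{ex:star}}$, $R_1,\dots,R_4$ are removed on Line~\ref{alg-line:meta-origin-remove}. The special hyperedge $\{x_1\}$ is then the only one left and is removed on Line~\ref{alg-line:meta-ear-remove} in the same iteration, so no parent--child pair of \cref{fig:star-meta} is ordered. Similarly, your claim that $e_q$ is removed only in a later round is false in general: $e_q$ is already an ear of the same iteration whenever some remaining hyperedge contains $\chi(q)$. You need the order of individual removal steps, which is the sense in which \cref{lem:meta-alg-edge-correct} is proved.
\item Line~\ref{alg-line:minor-origin-edge-cond} attaches to $q$ every node with $\kappa(p)=\chi(q)$, not only members of the set $S$ that created $e_q$. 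So ``$p$ is among these'' needs an argument.
\item Minor nodes created on Line~\ref{alg-line:no-minor-create} correspond to no hyperedge of $H'$ and have no removal time, so the lemma does not apply to their incident edges; ``created implicitly'' does not change this. Bridge such a node $v$ directly. Its parent $u$ comes from Line~\ref{alg-line:phys-find-parent} with $\kappa(v)\not\subseteq\kappa(u)$. If $e_u$ were removed before the hyperedge $e_p$ of a child $p$ of $v$, then $\kappa(v)=\kappa(p)\subseteq e_u\cap e_p\subseteq\kappa(u)$, a contradiction.
\end{enumerate}
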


\paragraph{\ref{H2}.} 
The algorithm iteratively adds edges $(q, p)$ to attach a connected subtree $M_p$ rooted at $p$ to a connected subtree $M^q$ that contains $q$.
Suppose by way of contradiction that \ref{H2} is violated at the end, and
suppose it is when processing $p \in V(M)$, that the addition of edge $(q, p)$ violates this condition for the first time.
Namely, $M_p$ and $M^q$ each satisfies \ref{H2}, but the resulting $M$ violates \ref{H2}, i.e., for some $s, t \in V(M)$, there exists $u$ on the path betwween $s$ and $t$ such that $\chi(s) \cap \chi(t) \not\subseteq \chi(u)$.
Without loss of generality, we take the shortest possible path from $s$ to $t$ where the condition is violated, where $u$ is the node immediately before $t$ on the path. Then, $\kappa(t) = \chi(t) \cap \chi(u)$.

Since $M_p$ and $M^q$ each satisfies \ref{H2}, the only way it can be violated is if $s \in V(M_p)$, i.e., $s$ is a descendent of $p$, $t \in M^q$, and thus $p$ and $q$ are on the path between $s$ and $t$.
For the purpose of the proof, we think of the subtree $M_p$ as being added to the subtree $M^q$ one node at a time, in a top-down order, starting from the root $p$. We then show that each time a node is added, the connectedness condition is preserved.
Let $\lambda(p) = \{ e_p \}$ and $\lambda(t) = \{ e_t \}$.

As the base case, when $p$ is added as a child of $q$, we consider two cases on the relative order in which $e_p$ and $e_t$ are removed from $E(H')$ by the algorithm.

(1) If $e_p$ is removed as the ear by the algorithm before $e_t$ is, then $\chi(p) \cap \chi(t) = e_p \cap e_t \subseteq \kappa(p) \subseteq \chi(q)$. And, for all $u$ between $q$ and $t$ on the path between $p$ and $t$, we have $\chi(p) \cap \chi(t) = \chi(p) \cap \chi(t) \cap \chi(t) \subseteq \chi(q) \cap \chi(t) \subseteq \chi(u)$, by the connectedness condition on $M^q$.

(2) If $e_t$ is removed as the ear by the algorithm before $e_p$ is, then $\chi(p) \cap \chi(t) = e_p \cap e_t \subseteq \kappa(t)$.
By \cref{cor:descendent-removed-before}, $t$ is a descendant of $q$, and so the parent $u$ of $t$ is on the path between $p$ and $t$.
Then, $\chi(p) \cap \chi(t) \subseteq \kappa(t) = \chi(t) \cap \chi(u) \subseteq \chi(u)$

As the induction step, assuming that the condition holds for the connected subtree with nodes in $M^q$ and all $p'$ that are added before $s$. Following a similar argument as for the base case, substituting $q$ with the parent of $s$, we can also derive that $\chi(s) \cap \chi(t) \subseteq \chi(u)$ for all $u$.

\paragraph{\ref{H3'}.} This is guaranteed by construction on Lines~\ref{alg-line:meta-origin-node}, \ref{alg-line:meta-ear-node}, and \ref{alg-line:no-minor-create}.

\paragraph{\ref{H4}(a).}
By \cref{lem:meta-alg-edge-correct}, for each $p \in V(M)$ and its parent $q$, we have $\kappa(p) = \chi(p) \cap \chi(q)$.
($\subseteq$): $\kappa(p) = \chi(p) \cap \chi(q) \subseteq \chi(q)$.
($\supseteq$): For all $s \in V(M) \setminus V(M_p)$, since $q$ is the parent of $p$, $q$ has to be on the path between $p$ and $s$. Then, by the connectedness condition \ref{H2} on $M$, we have $\chi(p) \cap \chi(s) \subseteq \chi(q)$. Therefore, $\kappa(p) = \chi(p) \cap \chi(q) \supseteq \chi(p) \cap \left( \bigcup_{s \in V(M) \setminus V(M_p)} \chi(s) \right) = \chi(p) \cap \chi(V(M) \setminus V(M_p))$.

\paragraph{\ref{H4}(b).} This is guaranteed by Lines~\ref{alg-line:minor-origin-edge-cond} and \ref{alg-line:phys-find-parent}.

\paragraph{\ref{H5}.} Lines~\ref{alg-line:meta-minor-origins-sets}--\ref{alg-line:meta-add-special-edge} and Lines~\ref{alg-line:no-minor-check}--\ref{alg-line:no-minor-create} create such minor nodes when necessary. 
The condition when creating a minor node on \cref{alg-line:meta-origin-node}, \cref{alg-line:meta-ear-node}, and \cref{alg-line:no-minor-create} ensures there are no duplicate minor nodes.

\subsection{Proof of \cref{thm:meta-size}}

    Vertices $V(M)$ of $M$ are either physical nodes or minor nodes. Let $V^P(M)$ be the set of physical nodes and $V^M(M)$ be the set of minor nodes, such that $V(M) = V^P(M) \cup V^M(M)$.
    Each time we create a physical node $p$ with $\lambda(p) = \{ e \}$ (Lines~\ref{alg-line:meta-origin-node} and \ref{alg-line:meta-ear-node}), we remove $e$ from $E(H')$ (Lines~\ref{alg-line:meta-origin-remove} and \ref{alg-line:meta-ear-remove}).
    So there are at most $|E(H)|$ physical nodes, i.e., $|V^P(M)| \leq |E(H)|$.
    Now let us consider the number of minor nodes, $|V^M(M)|$.
    Note that each minor node $m \in V^M(M)$ must have at least two children (see \cref{alg-line:meta-minor-origins-sets} and \cref{alg-line:no-minor-check}), i.e., the fanout $f(m) \geq 2$.
    Then, the total number of vertices
    \begin{align*}
        |V(M)| &= |V^P(M)| + |V^M(M)| = |E(M)| + 1 \\
        & = 1 + \sum_{p \in V(M)} f(p)
        > 1 + \sum_{m \in V^M(M)} f(m) \geq 1 + \sum_{m \in V^M(M)} 2 = 2|V^M(M)| + 1.
    \end{align*}
    Then, $|V^P(M)| + |V^M(M)| > 2|V^M(M)| + 1$, so $|V^M(M)| < |V^P(M)| - 1 = |E(H)| - 1$.
    So, $|V(M)| = |V^P(M)| + |V^M(M)| < |E(H)| + |E(H)| - 1 = 2|E(H)| - 1$.
    And $|E(M)| = |V(M)| - 1 \leq 2|E(H)| - 2$.

\subsection{Proof of \cref{the:correctness}}

    We prove by showing that all trees $T$ output by \cref{alg:enum} satisfy \ref{H1}, \ref{H2}, and \ref{H3}.

    For \ref{H1}, note that this condition must already hold for the meta-decomposition $M$. In \cref{alg:enum}, all physical nodes, for which $\lambda(r) \neq \emptyset$, will be kept in $\mathcal{T}$, on \cref{alg-line:enum-physical-root}.
    Minor nodes, for which $\lambda(r) = \emptyset$, will no longer exist in the output. If $\kappa(r) \neq \chi(r)$, trees in $\mathcal{T}_p$, and thus $\mathcal{T}$, will not contain minor nodes (\cref{alg-line:enum-minor-no-dummy}) but only nodes in the subtrees rooted at the children of $r$ (Lines~\ref{alg-line:enum-minor-origin-start}--\ref{alg-line:enum-minor-origin-end}). If $\kappa(r) = \chi(r)$, the minor node $r$ on \cref{alg-line:enum-minor-dummy} will eventually be removed and replaced by $v$ on \cref{alg-line:enum-dummy-replace}. Therefore, the trees output by \cref{alg:enum} contain only physical nodes. \ref{H3} then follows, by \ref{H3'} for $M$.

    It now remains to show \ref{H2}, the connectedness condition.
    We prove by structural induction on vertices $v \in V(M)$, i.e., for each $v \in V(M)$, all trees $T$ returned by the function $\mathsf{enumRec}$ satisfy \ref{H2}.
    In the base case, for all leaf nodes $v$, $v$ has to be physical, and the function simply returns a tree with one single node, so these conditions hold.
    For non-leaf nodes $v$, we assume as induction hypothesis that for all children $c$ of $v$, all trees returned by the function call $\mathsf{enumRec}(c)$ satisfy \ref{H2}. And we now show that it remains satisfied throughout the operations in $\mathsf{enumRec}(v)$.
    
    We start with the trees in $\mathcal{T}$ constructed on Lines~\ref{alg-line:enum-minor-cond}--\ref{alg-line:enum-physical-root}.
    If $\lambda(v) \neq \emptyset$, this vacuously holds as there is only one tree with one vertex $r$ in $\mathcal{T}$ (\cref{alg-line:enum-physical-root}).
    If $\lambda(v) = \emptyset$, we will show that it still holds for all $T_c'$ (constructed on Lines~\ref{alg-line:enum-minor-origin-start}--\ref{alg-line:enum-minor-origin-end}), assuming that it holds for $T_c \in \mathcal{T}_c$ from the previous iteration. 
    It suffices to show that, for each $(T_P, c, T_c, d, T_d, u)$ processed in the loop,
    for all $p \in T_c$ and $q \in T_d$, all $s$ on the path between $p$ and $q$ has $\chi(p) \cap \chi(q) \subseteq \chi(s)$.
    Let the root of $T_c$ be $r_c$ and the root of $T_d$ be $r_d$.
    The path from $p$ to $q$ needs to go through $u$ then $r_d$.
    For $s$ on the path between $p$ and $u$, we have $\chi(p) \cap \chi(q) \subseteq \chi(v) = \kappa(d) \subseteq \chi(u)$.
    Since $T_c$ satisfies \ref{H2}, we have $\chi(p) \cap \chi(q) \subseteq (\chi(p) \cap \chi(q)) \cap \chi(q) \subseteq \chi(u) \cap \chi(q) \subseteq \chi(s)$.
    For $s$ on the path between $r_d$ and $q$, we have $\chi(p) \cap \chi(q) \subseteq \chi(v) = \kappa(d) \subseteq \chi(r_d)$.
    Since $T_d$ satisfies \ref{H2}, we have $\chi(p) \cap \chi(q) \subseteq (\chi(p) \cap \chi(q)) \cap \chi(q) \subseteq \chi(r_d) \cap \chi(q) \subseteq \chi(s)$.
    
    We now continue to show that lines~\cref{alg-line:enum-children-start}--\cref{alg-line:enum-children-end} preserve the condition \ref{H2} for all $T \in \mathcal{T}$, by showing that the condition is preserved for all $T' \in \mathcal{T}'$ after each child $c \in X$ is processed, assuming that all the previous $T \in \mathcal{T}$ satisfy \ref{H2}.
    
    Let the root of $T_c$ be $r_c$.
    By the connectedness condition on $M$, we have $\chi(p) \cap \chi(q) \subseteq \chi(v)$, and $\chi(p) \cap \chi(q) \subseteq \chi(c)$. Since $c$ is a child of $v$, $\chi(p) \cap \chi(q) \subseteq \chi(v) \cap \chi(c) = \kappa(c)$.
    
    (1) If $\lambda(c) \neq \emptyset$ or $\kappa(c) \neq \chi(c)$, $T_c$ is added directly as a child of $u$ on $T'$ (\cref{alg-line:enum-attach-subtree}).
    For all $p \in V(T')$ and $q \in V(T_c)$, the path from $p$ to $q$ needs to go through $u$ then $r_c$.
    For $s$ on the path between $p$ and $u$, we have $\chi(p) \cap \chi(q) \subseteq \kappa(c) \subseteq \chi(u)$.
    Since $T'$ satisfies \ref{H2}, we have $\chi(p) \cap \chi(q) \subseteq (\chi(p) \cap \chi(q)) \cap \chi(q) \subseteq \chi(u) \cap \chi(q) \subseteq \chi(s)$.
    For $s$ on the path between $r_c$ and $q$, we have $\chi(p) \cap \chi(q) \subseteq \kappa(c) \subseteq \chi(r_c)$.
    Since $T_c$ satisfies \ref{H2}, we have $\chi(p) \cap \chi(q) \subseteq (\chi(p) \cap \chi(q)) \cap \chi(q) \subseteq \chi(r_c) \cap \chi(q) \subseteq \chi(s)$.
    
    (2) If $\lambda(c) = \emptyset$ and $\kappa(c) \neq \chi(c)$, for each child $d$ of the root of $T_c$, we attach $T_d$ as a child of $u$. Let the root of $T_d$ be $r_d$. Then, $\kappa(d) \subseteq \chi(r_d)$ and $\kappa(d) \subseteq \chi(u)$.
    For all $p \in V(T')$ and $q \in V(T_d)$, the path from $p$ to $q$ needs to go through $u$ then $r_d$.
    For $s$ on the path between $p$ and $u$, we have $\chi(p) \cap \chi(q) \subseteq \kappa(c) = \kappa(d) \subseteq \chi(u)$.
    Since $T'$ satisfies \ref{H2}, we have $\chi(p) \cap \chi(q) \subseteq (\chi(p) \cap \chi(q)) \cap \chi(q) \subseteq \chi(u) \cap \chi(q) \subseteq \chi(s)$.
    For $s$ on the path between $r_c$ and $q$, we have $\chi(p) \cap \chi(q) \subseteq \kappa(c) = \kappa(d) \subseteq \chi(r_d)$.
    Since $T_c$ satisfies \ref{H2}, we have $\chi(p) \cap \chi(q) \subseteq (\chi(p) \cap \chi(q)) \cap \chi(q) \subseteq \chi(r_d) \cap \chi(q) \subseteq \chi(s)$.
    
\subsection{Proof of \cref{thm:enum-complete}}

We start from a few lemmas:

\begin{lemma} \label{lem:kappa}
    For any non-root node $p$ on a meta-decomposition $M$ with parent $q$, $\kappa(p) = \chi(p) \cap \chi(q)$.
\end{lemma}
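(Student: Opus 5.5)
Since $p$ is a non-root node with parent $q$, condition \ref{H4}(a) in the definition of meta-decompositions already gives
\[
\kappa(p) = \chi(p) \cap \chi(V(M) \setminus V(M_p)) \subseteq \chi(q).
\]
The plan is to prove the two inclusions separately, using only this condition and the connectedness condition \ref{H2} on $M$.

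For $\kappa(p) \subseteq \chi(p) \cap \chi(q)$, no work is needed beyond unpacking the display. The defining expression $\chi(p) \cap \chi(V(M)\setminus V(M_p))$ is a subset of $\chi(p)$. Condition \ref{H4}(a) itself asserts that $\kappa(p) \subseteq \chi(q)$. Intersecting these two facts gives the inclusion.

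For $\chi(p) \cap \chi(q) \subseteq \kappa(p)$, the key observation is that $q \notin V(M_p)$, since $q$ is the parent of $p$. Hence $q \in V(M) \setminus V(M_p)$, and so
\[
\chi(q) \subseteq \chi(V(M)\setminus V(M_p)) = \bigcup_{s \in V(M)\setminus V(M_p)} \chi(s).
\]
Intersecting both sides with $\chi(p)$ yields $\chi(p) \cap \chi(q) \subseteq \kappa(p)$. This direction does not even need \ref{H2}. Alternatively, one could check the equality against the algorithmic construction, as in \cref{lem:meta-alg-edge-correct}, but working from the definition is cleaner and applies to any meta-decomposition.

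The only step needing care is confirming that the containment $\kappa(p) \subseteq \chi(q)$ is part of the hypothesis \ref{H4}(a) rather than something to be proved. If it were not given, I would derive it from \ref{H2}: for every $s \notin V(M_p)$, the parent $q$ lies on the path from $p$ to $s$, so $\chi(p) \cap \chi(s) \subseteq \chi(q)$. Taking the union over all such $s$ gives $\chi(p) \cap \chi(V(M)\setminus V(M_p)) \subseteq \chi(q)$, which is the required containment. Either way, the proof is a short set-theoretic argument with no real obstacle.
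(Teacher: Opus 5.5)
Your proof is correct and essentially the same as the paper's. The $\supseteq$ direction is identical, using $q \in V(M)\setminus V(M_p)$. For $\subseteq$, the paper uses exactly your fallback argument via the connectedness condition \ref{H2}: the parent $q$ lies on the path from $p$ to any node outside $M_p$. It does not read $\kappa(p)\subseteq\chi(q)$ directly off \ref{H4}(a), though either justification works, since \ref{H4}(a) as written does include that containment.
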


\begin{proof}
\textbf{($\supseteq$)}: Since $q \in V(M) \setminus V(M_p)$, $\chi(p) \cap \chi(q) \subseteq \chi(p) \cap \chi(V(M) \setminus V(M_p)) = \kappa(p)$.
\textbf{($\subseteq$)}: For any $v \in \kappa(p) = \chi(p) \cap \chi(V(M) \setminus V(M_p))$, i.e., any $v \in \chi(p) \cap \chi(s)$ for some $s \in V(M) \setminus V(M_p)$, the parent $q$ of $p$ lies on the path on $M$ between $p$ and $s$. Then, by the connectedness condition \ref{H2} for $M$, $v \in \chi(q)$.
\end{proof}

\begin{lemma} \label{lem:minor-all-origin}
    For all $p \in V(M)$ with $\lambda(p) = \emptyset$ and $\kappa(p) = \chi(p)$, all children $c$ of $p$ satisfy $\kappa(c) = \chi(p)$.
\end{lemma}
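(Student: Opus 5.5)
Let $p$ be a minor node with $\lambda(p)=\emptyset$ and $\kappa(p)=\chi(p)$, and let $c$ be any child of $p$. I will show the two inclusions $\kappa(c)\subseteq\chi(p)$ and $\chi(p)\subseteq\kappa(c)$. The first is immediate from \cref{lem:kappa}: since $p$ is the parent of $c$, we get $\kappa(c)=\chi(c)\cap\chi(p)\subseteq\chi(p)$. The whole content of the lemma is therefore the reverse inclusion, namely that every child of such a minor node contains all of $\chi(p)$.

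For the reverse inclusion I would first use \cref{lem:kappa} again to rewrite it as $\chi(p)\subseteq\chi(c)$. Next I would use the interface condition \ref{H4} at $p$. Since $\kappa(p)=\chi(p)\neq\emptyset$ (for a non-root $p$; the root case needs separate care), every attribute $x\in\chi(p)$ appears in some node $s\notin V(M_p)$. The structural fact to exploit is that $p$ has an empty $\lambda$-label, so its $\chi$-label is not witnessed by any relation of its own.

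I would argue via condition \ref{H5} and the construction in \cref{alg:meta}. A minor node $m$ with $\chi(m)=\kappa$ exists exactly because there is a maximal set $S$ of at least two nodes all having $\kappa$-label equal to $\kappa$. The edge-generation step (\cref{alg-line:minor-origin-edge-cond}) then attaches every node with $\kappa(\cdot)=\chi(m)$ to $m$. I would then argue that no other node can be attached under $m$ when $\kappa(m)=\chi(m)$. A node $c$ with $\kappa(c)\subsetneq\chi(m)$ would, by \ref{H4}(b), have to be placed at the highest node containing $\kappa(c)$. However, $\kappa(c)\subsetneq\chi(m)=\kappa(m)\subseteq\chi(q)$ for the parent $q$ of $m$, so $q$ (or a node above it) also contains $\kappa(c)$. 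Hence $m$ is not the highest such node, which contradicts \ref{H4}(b) unless $\kappa(c)=\chi(m)$. The case $\kappa(c)=\chi(p)$ is exactly the claimed conclusion, so this argument gives the result.

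The main obstacle is making this \ref{H4}(b) step precise. Condition (b) is stated as $\kappa(c)\nsubseteq\chi(s)$ for all $s\notin V(M_p)$ whenever $\kappa(c)\neq\chi(p)$. So I need exactly this: the parent $q$ of $p$ lies outside $M_p$ and satisfies $\kappa(c)\subseteq\chi(p)=\kappa(p)\subseteq\chi(q)$, using \ref{H4}(a) for $p$. That yields the contradiction directly whenever $\kappa(c)\neq\chi(p)$, so I would present this as the core of the proof. The only remaining subtlety is the root, where $\kappa=\emptyset$. There the hypothesis $\kappa(p)=\chi(p)$ forces $\chi(p)=\emptyset$, and I would check that this is either vacuous or excluded by \ref{H5}.
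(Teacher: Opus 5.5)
Your core argument is exactly the paper's proof: $\kappa(c)\subseteq\chi(p)=\kappa(p)\subseteq\chi(q)$ for the parent $q\notin V(M_p)$ of $p$, which contradicts \ref{H4}(b) unless $\kappa(c)=\chi(p)$. The detour through \cref{alg:meta} and \ref{H5} is unnecessary, and the root case you left open closes at once, since $\chi(p)=\kappa(p)=\emptyset$ together with $\kappa(c)\subseteq\chi(p)$ forces $\kappa(c)=\emptyset=\chi(p)$ (the paper instead simply asserts $\chi(p)\neq\emptyset$).
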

\begin{proof}
    Suppose by way of contradiction that there exists such a $p \in V(T)$ with child $c$ such that $\kappa(c) \neq \chi(p)$. By \cref{lem:kappa}, we have $\kappa(c) = \chi(c) \cap \chi(p) \subseteq \chi(p)$. Since $\kappa(p) = \chi(p) \neq \emptyset$, $p$ is not the root of $M$ and thus has a parent $q$. Again by \cref{lem:kappa}, we have $\kappa(p) = \chi(p) \cap \chi(q) \subseteq \chi(q)$. But then, $\kappa(c) \subseteq \chi(p) = \kappa(p) \subseteq \chi(q)$, contradicting \ref{H4}(b) since $\kappa(c) \neq \chi(p)$ but $q \in V(T) \setminus V(T_p)$.
\end{proof}

\begin{lemma} \label{lem:no-consec-special-minor}
    Given a meta-decomposition $M$, if some non-root node $p \in V(M)$ has $\lambda(p) = \emptyset$ and $\kappa(p) = \chi(p)$, then its parent $q$ has either $\lambda(q) \neq \emptyset$ or $\kappa(q) \neq \chi(q)$.
\end{lemma}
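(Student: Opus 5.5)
We argue by contradiction, in the style of \cref{lem:minor-all-origin}, using \cref{lem:kappa} and condition \ref{H4}(b). Suppose $p$ is non-root with $\lambda(p) = \emptyset$ and $\kappa(p) = \chi(p)$, and its parent $q$ also has $\lambda(q) = \emptyset$ and $\kappa(q) = \chi(q)$.

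\textbf{Step 1: relate $\chi(p)$ and $\chi(q)$.} By \cref{lem:kappa}, $\kappa(p) = \chi(p) \cap \chi(q)$. Combined with $\kappa(p) = \chi(p)$, this gives $\chi(p) \subseteq \chi(q)$.

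\textbf{Step 2: show $\chi(p) = \chi(q)$.} Apply \cref{lem:minor-all-origin} to $q$: every child of $q$, in particular $p$, satisfies $\kappa(p) = \chi(q)$. Hence $\chi(p) = \kappa(p) = \chi(q)$.

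\textbf{Step 3: derive the contradiction.} Now $p$ and $q$ are two distinct minor nodes with the same $\chi$-label, $\chi(p) = \chi(q) = \kappa$. Since $\kappa(q) = \chi(q) = \kappa$, the node $q$ belongs to the maximal set $S$ of nodes whose $\kappa$-label equals $\kappa$. The node $p$ has $\kappa(p) = \kappa$ as well, so $|S| \geq 2$. Condition \ref{H5} then requires a \emph{unique} minor node $m$ with $\lambda(m) = \emptyset$ and $\chi(m) = \kappa$. Both $p$ and $q$ qualify, contradicting uniqueness.

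\textbf{Fallback for Step 3.} The step I expect to be delicate is exactly this use of \ref{H5}: its phrasing only speaks of ``one unique node'' $m$ for each such maximal set, so one must check that it really forbids a second minor node with the same $\chi$-label. If this reading is questionable, I would instead argue via \ref{H4}(b). Since $q$ is non-root (because $\kappa(q) = \chi(q) \neq \emptyset$), it has a parent $q'$, and \cref{lem:kappa} gives $\kappa(p) = \chi(q) = \kappa(q) \subseteq \chi(q')$ with $q' \in V(M) \setminus V(M_q)$. Then \ref{H4}(b) for $p$ is violated unless $\kappa(p) = \chi(q)$, which is precisely the equality case. So this route alone does not close the argument, and the uniqueness in \ref{H5} is genuinely needed to conclude.
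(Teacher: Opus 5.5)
Your proof is correct and uses the same two ingredients as the paper's, the interface condition \ref{H4}(b) and the uniqueness clause of \ref{H5}, only applied in the opposite order: the paper first invokes \ref{H5} to get $\kappa(p) \neq \kappa(q) = \chi(q)$ and then contradicts \ref{H4}(b) via the parent of $q$, whereas you obtain $\kappa(p) = \chi(q)$ from \cref{lem:minor-all-origin} (whose proof is exactly that \ref{H4}(b) argument one level up) and then contradict \ref{H5}. Step~1 is redundant given Step~2, and your ``fallback'' is precisely the second half of the paper's proof, so your conclusion that \ref{H5} is genuinely needed is right.
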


\begin{proof}
    Suppose by way of contradiction that, on a meta-decomposition $M$, there is $p \in V(M)$ such that $\lambda(p) = \emptyset$ and $\kappa(p) = \chi(p)$, and its parent $q$ has $\lambda(q) = \emptyset$ and $\kappa(q) = \chi(q)$.
    Then, $\kappa(p) \subseteq \chi(q) = \kappa(q)$. Since $\kappa(q) = \chi(q) \neq \emptyset$, $q$ has a parent as well. Let the parent of $q$ be $s$. Then, $\kappa(q) = \chi(q) \cap \chi(s) \subseteq \chi(s)$. So, $\kappa(p) \subseteq \kappa(q) \subseteq \chi(s)$. By \ref{H5}, $\kappa(p) \neq \kappa(q) = \chi(q)$, contradicting \ref{H4}(b).
\end{proof}

\begin{lemma} \label{lem:subtree-in-jt}
    Given a meta-decomposition $M$ of an acyclic hypergraph $H$, for all $v \in V(M)$,
    \begin{enumerate}[leftmargin=*]
        \item if $\lambda(v) = \emptyset$ and $\kappa(v) = \chi(v)$, then, for each join tree $T$ of $H$, there exists some $S = \{ p \in V(M) \mid \lambda(p) \neq \emptyset, \chi(v) \subseteq \chi(p) \}$ such that the set $V_v^P = \{ p \in V(M_{v}) \mid \lambda(p) \neq \emptyset \} \cup S$ induces a connected subtree $T_v$ of $T$;
        \item if $\lambda(v) \neq \emptyset$ or $\kappa(v) \neq \chi(v)$, then, for each join tree $T$ of $H$, the set $V_v^P = \{ p \in V(M_v) \mid \lambda(p)  \neq \emptyset \}$, i.e., physical nodes on the subtree $M_v$ rooted at $v$, induces a connected subtree $T_v$ of $T$.
    \end{enumerate}
\end{lemma}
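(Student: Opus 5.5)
The plan is a structural induction on $M$, bottom-up, proving (1) and (2) together for every $v\in V(M)$ and every join tree $T$ of $H$. Two facts about $M$ will be used throughout. The first is \cref{lem:kappa}: $\kappa(c)=\chi(c)\cap\chi(q)$ for every child $c$ of $q$. The second is the attribute-interface property coming from \ref{H4}(a) and \ref{H2}: every attribute shared between a physical node inside $M_v$ and a physical node outside $M_v$ lies in $\kappa(v)$. Hence every such attribute lies in $\chi(v)$, and, if it belongs to a physical node $a$ in the subtree of a child $c$ of $v$, it also lies in $\kappa(c)$.

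The basic tool is a standard cut argument on $T$. Suppose a set $U$ of physical nodes is not connected in $T$. Pick two $T$-components $X,Y$ of $U$ and a shortest $T$-path between them. Its endpoints are $a\in X$ and $b\in Y$, and its interior $s_1,\dots,s_k$ lies outside $U$. By \ref{H2} on $T$, we get $\chi(a)\cap\chi(b)\subseteq\chi(s_i)$ for all $i$.

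For this to yield a contradiction I need $\chi(a)\cap\chi(b)\neq\emptyset$. I would secure this by choosing $X,Y$ so that they are joined by a hyperedge-chain inside $U$. That is, I use that $U$ is connected in the hypergraph sense, because $M_v$ is connected and, by \cref{lem:kappa}, adjacent nodes of $M$ share $\kappa(c)\neq\emptyset$ (no Cartesian products). I then apply \ref{H2} along the $T$-path for a shared attribute $x$. This shows that some outside node $s$ contains an attribute $x$ lying in $\chi(v)$, and more precisely in $\kappa(c)$ for the relevant child $c$.

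Case (2), where $\lambda(v)\neq\emptyset$ or $\kappa(v)\neq\chi(v)$, goes as follows. By the induction hypothesis, each child $c$ contributes a connected piece $T_c$. For a child of type (1) the piece is enlarged by the set $S$, which I have to show can be taken inside $V_v^P$. It then remains to show that these pieces, together with $v$ when $v$ is physical, are glued together inside $V_v^P$. If the gluing passes through an outside node $s$, then by the cut argument $\kappa(c)\subseteq\chi(s)$, with $s\in V(M)\setminus V(M_v)$. This contradicts \ref{H4}(b) whenever $\kappa(c)\neq\chi(v)$.

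The remaining subcase is $\kappa(c)=\chi(v)$. If $v$ is physical, then the outside node would contain $\chi(v)$, forcing $\chi(v)\subseteq\kappa(v)$, i.e.\ $\kappa(v)=\chi(v)$. Then $v$ and $s$ are mutually reducible, so \ref{H5} yields a minor node that should be $v$'s parent, and this contradiction is obtained via \cref{lem:no-consec-special-minor}. If $v$ is minor with $\kappa(v)\neq\chi(v)$, the same reasoning applies using \ref{H4}(b) for $v$ itself.

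Case (1), where $v$ is minor with $\kappa(v)=\chi(v)$, is handled with \cref{lem:minor-all-origin}: every child $c$ has $\kappa(c)=\chi(v)$. Each child's piece is connected by induction, and it is case (2) by \cref{lem:no-consec-special-minor}. In $T$, the pieces are separated only through nodes whose $\chi$ contains $\chi(v)$, by the cut argument. I take $S$ to be exactly the physical nodes outside $M_v$ lying on the $T$-paths between pieces. These all contain $\chi(v)$, so adding them makes $V_v^P$ connected.

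The main obstacle is the nonemptiness and precise localization step in the cut argument. I must show that the attribute forcing an outside node onto the $T$-path lies in $\kappa(c)$ for a \emph{specific} child $c$, and that it is in fact all of $\kappa(c)$, not just part of it, so that \ref{H4}(b) can be applied. My first attempt would be to choose $a$ as the node of the child's piece closest in $T$ to the rest of $V_v^P$. I would then use the piece's connectivity and \cref{lem:kappa} to show that the whole of $\kappa(c)$ must propagate along the path, arguing from the requirement that the join tree $T$ restricted to the child's piece already realizes the interface $\kappa(c)$ at its attachment point.
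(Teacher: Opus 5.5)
Your overall architecture (bottom-up induction, case (1) via \cref{lem:minor-all-origin} and \cref{lem:no-consec-special-minor}, case (2) by excluding outside nodes from $T$-paths with \ref{H4}(b)) matches the paper's. However, two steps do not go through.

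\textbf{First gap: localization.} You leave this step open, and your proposed fix targets the wrong path. Suppose the node of $P_c$ nearest to the rest of $V_v^P$ is nearest to another child's piece $P_d$. Then the connecting path only carries $\chi(a)\cap\chi(b)\subseteq\kappa(c)\cap\kappa(d)$, by \ref{H2} on $M$ and \cref{lem:kappa}. This can be a proper subset of $\kappa(c)$, and then \ref{H4}(b) gives you nothing. The missing idea is to never use arbitrary endpoints. Instead, route each piece to a fixed anchor through two physical nodes whose $\chi$-intersection contains \emph{all} of $\kappa(c)$.
\begin{itemize}
\item For physical $v$, take the $T$-path from $v$ to $c$ (or, if $c$ is minor, to a physical node of $M_c$ whose $\chi$ contains $\chi(c)$). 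Its endpoints share $\kappa(c)$, so \ref{H2} on $T$ puts $\kappa(c)$ on every intermediate node, and any outside node on it is excluded. Its entry point into $P_c$ is exactly your ``closest node'', measured from $v$ rather than from the rest.
\item For minor $v$, connect everything to physical nodes whose $\chi$ contains $\chi(v)$. Two such nodes below different children of $v$ meet in exactly $\chi(v)$; this is the paper's ``path from $c$ to $d$'' in case (1). A non-origin child meets any of them in a superset of its $\kappa$. An outside node on such a path would force $\kappa(v)=\chi(v)$.
\end{itemize}
With this choice, the component/shortest-path machinery and the nonemptiness issue disappear. Note also that ``no Cartesian products'' is an assumption on plans and tells you nothing about $M$.

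\textbf{Second gap: the subcase $\kappa(c)=\chi(v)$ with $v$ physical.} Your argument does not close it. \ref{H5} only provides \emph{some} minor node with $\chi=\chi(v)$, and nothing forces $c$ to hang below it. \cref{lem:no-consec-special-minor} concerns a special minor node and its parent, so it does not apply.

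In fact no argument from the defining conditions can work. Take $R_0[x_1]$, $R_1[x_1,x_2]$, $R_2[x_1,x_3]$. Let $M'$ have a minor root $m$ with $\chi(m)=\{x_1\}$ and children $R_0$ and $R_2$, let $R_1$ be a child of $R_0$, and give every non-root node $\kappa=\{x_1\}$.
\begin{itemize}
\item $M'$ satisfies every defining condition of a meta-decomposition.
\item Every instance of \ref{H4}(b) is vacuous, since each $\kappa$ equals the parent's $\chi$.
\item \ref{H5} is witnessed by $m$, which is even $R_0$'s parent.
\end{itemize}
Yet in the join tree with edges $\{R_0,R_2\}$ and $\{R_2,R_1\}$, the set $V^P_{R_0}=\{R_0,R_1\}$ is disconnected.

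So you must use a property of the output of \cref{alg:meta}. Line~\ref{alg-line:phys-find-parent} attaches $c$ to $q$ only if $\kappa(c)\not\subseteq\kappa(q)$. In case (2), origin children of a minor $v$ satisfy this too, because $\kappa(v)\subsetneq\chi(v)$. By \ref{H4}(a), $\kappa(c)\not\subseteq\kappa(v)$ implies that no node outside $M_v$ contains $\kappa(c)$. This replaces \ref{H4}(b) uniformly in case (2), removes the special subcase, and also keeps the set $S$ of a case-(1) child inside $M_v$.

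The paper's proof has the same blind spot. It ends case (2) by asserting that $\kappa(v)=\chi(v)$ is impossible there, which holds only for minor $v$.
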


\begin{proof}
    We prove by structural induction on $M$.
    In the base case, for leaf nodes $v \in V(M)$, there is only one node on the subtree $M_v$, so the statement is vacuously true.
    For non-leaf nodes $v \in V(M)$, we assume as induction hypothesis that the statement is true for all children $c$ of $v$.
    Let $c$ and $d$ be the children of $v$ such that $p \in V(M_c)$ and $q \in V(M_d)$.
    We now consider two cases of $v$.
    
    (1) If $\lambda(v) = \emptyset$ and $\kappa(v) = \chi(v)$, then, for any $s$ on the path between $p$ and $q$ on $T$ such that $s \notin M_v$, we have $\chi(p) \cap \chi(q) \subseteq \chi(s)$ by the connectedness condition on $T$.
    By (the contrapositive of) \cref{lem:no-consec-special-minor}, $\lambda(c) \neq \emptyset$ or $\kappa(c) \neq \chi(c)$, so by the inductive hypothesis, $V_c^P$ induces a connected subtree $T_c$ of $T$. Similarly, $V_d^P$ induces a connected subtree $T_d$ of $T$.
    If $c = d$, then $p$ and $q$ are in the same subtree rooted at the same child of $v$, and the statement would hold.
    If $c \neq d$,
    on the path from $p$ to $q$ on $T$, let the last node on $T_c$ be $x$ and the first node on $T_d$ be $y$. $s$ is between $x$ and $y$.
    Then, the path from $c$ and $d$ has to go through $x$ and $y$, and hence also $s$.
    Therefore, $\chi(c) \cap \chi(d) \subseteq \chi(s)$.
    By \cref{lem:minor-all-origin}, in this case, $\kappa(v) = \kappa(c) = \kappa(d) = \chi(c) \cap \chi(d) \subseteq \chi(s)$.

    (2) If $\lambda(v) \neq \emptyset$ or $\kappa(v) \neq \chi(v)$, we would like to show that for any $s$ on the path between $p$ and $q$ on $T$, $s \in V(M_v)$.
    Suppose by way of contradiction that this is not the case, i.e., there exists some $s$ on the path between $p$ and $q$ on $T$, but $s \in V(M) \setminus V(M_v)$. By the connectedness condition on $T$, we have $\chi(p) \cap \chi(q) \subseteq \chi(s)$. Furthermore, on $M$, $\chi(p) \cap \chi(s) \subseteq \chi(v)$ and $\chi(q) \cap \chi(s) \subseteq \chi(v)$, so $\chi(p) \cap \chi(q) = \chi(p) \cap \chi(q) \cap \chi(s) \subseteq \chi(v)$.
    Let the parent of $p$ on $M$ be $t$.
    Then, $\chi(t) \cap \chi(s) \subseteq \chi(v)$. Then, $\kappa(p) = \kappa(p) \cap \chi(s) = \chi(p) \cap \chi(t) \cap \chi(s) = (\chi(p) \cap \chi(s)) \cap (\chi(t) \cap \chi(s)) \subseteq \chi(v) \cap \chi(v) = \chi(v)$.
    But, by \ref{H4}, this means either $t = v$ or $t$ is a minor node with $\kappa(t) = \chi(t)$.
    In either case, we have $\chi(t) \cap \chi(q) \subseteq \chi(v)$, because
    (i) if $t = v$, $\chi(t) \cap \chi(q) = \chi(v) \cap \chi(q) \subseteq \chi(v)$, and
    (ii) if $t$ is a minor node with $\kappa(t) = \chi(t)$, then $\chi(t) \cap \chi(q) = \kappa(t) \cap \chi(q) = \kappa(p) \cap \chi(q) \subseteq \chi(p) \cap \chi(q) \subseteq \chi(v)$.
    Also, we have $\kappa(t) = \chi(p) \cap \chi(t) \subseteq \chi(q)$, because either $t$ is on the path between $p$ and $q$, or $p$ is on the path between $t$ and $q$.
    Let the parent of $v$ be $u$. Then, we have $\kappa(t) = \kappa(t) \cap \chi(q) = \chi(t) \cap \chi(q) = (\chi(t) \cap \chi(q)) \cap (\kappa(p) \cap \chi(q)) \subseteq (\chi(t) \cap \chi(q)) \cap (\chi(p) \cap \chi(q)) \subseteq \chi(v) \cap \chi(s) \subseteq \chi(u)$.
    But, since it cannot be that $\kappa(v) = \chi(v)$ in this case, this violates \ref{H4}, contradiction.
\end{proof}

\begin{lemma} \label{lem:enum-complete}
    Given a meta-decomposition $M$ of hypergraph $H$, for all $v \in V(M)$, let $V_v^P = \{ p \in V(M_v) \mid \lambda(p) \neq \emptyset \}$. Then,
    \begin{enumerate}[leftmargin=*]
        \item if $\lambda(v) = \emptyset$ and $\kappa(v) = \chi(v)$, then \cref{alg:enum} enumerates all join trees of the hypergraph $H_v$ with $E(H_v) = \lambda(V_v^P) \cup \{ \chi(v) \}$ and $V(H_v) = \cup E(H_v)$;
        \item if $\lambda(v) \neq \emptyset$ or $\kappa(v) \neq \chi(v)$, then \cref{alg:enum} enumerates all join trees of the hypergrpah $H_v$ with $E(H_v) = \lambda(V_v^P)$ and $V(H_v) = \cup E(H_v)$.
    \end{enumerate}
\end{lemma}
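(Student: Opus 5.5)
I will prove \cref{lem:enum-complete} by structural induction on $M$, bottom-up, mirroring the recursion of $\mathsf{enumRec}$. The base case is a leaf $v$. By \ref{H5} and the construction, a leaf is physical, so $H_v$ has a single hyperedge and its unique join tree is the single vertex returned on \cref{alg-line:enum-physical-root}. For the inductive step, I fix a non-leaf $v$ and an arbitrary join tree $T$ of $H_v$, and show that $T$ arises from some sequence of choices in $\mathsf{enumRec}(v)$ followed by a rerooting. Two facts will be used throughout. First, by \cref{lem:subtree-in-jt}, applied to $H_v$ rather than $H$, for every child $c$ of $v$ the physical nodes of $M_c$ induce a connected subtree $T_c$ of $T$; in the first case of that lemma, $S$ contributes the extra vertex corresponding to $\chi(c)$. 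Second, by the induction hypothesis, the tree obtained by restricting $T$ to $T_c$ is enumerated by $\mathsf{enumRec}(c)$ up to rerooting. \cref{lem:no-consec-special-minor} guarantees that at most one special minor level intervenes, so the two cases of the lemma interlock correctly.

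Next I contract each $T_c$ in $T$ to a single vertex. For physical $v$, I also keep $v$ itself. The result is a tree $T/\!\!\sim$ on the children of $v$, plus $v$ when $v$ is physical, since contracting connected subtrees of a tree yields a tree. Consider an edge of $T$ joining $x\in T_c$ to $y$ outside $T_c$. By \ref{H2} and \cref{lem:kappa}, $\chi(x)\cap\chi(y)\subseteq\kappa(c)$. Conversely, because $\kappa(c)$ must be connected to the rest of $T$ through this single edge, $\kappa(c)\subseteq\chi(x)\cap\chi(y)$, and hence $\kappa(c)\subseteq\chi(x)$. This is exactly the condition under which the algorithm roots $T_c$ at $x$ (rerooting filter on \cref{alg-line:enum-subtrees}) and attaches it to a node $u=y$ with $\kappa(c)\subseteq\chi(u)$. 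To identify the skeleton, I split on the type of $v$. If $v$ is physical, I claim that orienting $T/\!\!\sim$ away from $v$ matches the attachments performed in the $\mathcal{C}_{\sf proper}$ loop, taken in the $\supseteq$-order on $\kappa$. If $v$ is minor, I claim that the children in $\mathcal{C}_{\sf origin}$, together with $v$ when $\kappa(v)=\chi(v)$, induce a subtree of $T/\!\!\sim$. This subtree is one of the trees produced by $\mathsf{enumerateTrees}$; since Prüfer sequences biject onto labeled trees, it appears exactly once. The remaining children $c\in\mathcal{C}_{\sf proper}$ have $\kappa(c)\subsetneq\chi(v)$ and hang off some node $u$ with $\kappa(c)\subseteq\chi(u)$. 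They are therefore produced by the attachment steps, including attachment inside subtrees from re-branching.

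The main obstacle is the ordering issue in the $\mathcal{C}_{\sf proper}$ loop. A child $c$ may be attached in $T$ to a node $u$ lying in some $T_{c'}$ with $c'$ processed later, in which case $u$ is not yet present when $c$ is handled. I will resolve this by showing that $\kappa(c)\subseteq\chi(u)$ with $u\in T_{c'}$ forces $\kappa(c)\subseteq\kappa(c')$. The argument is that the path from $u$ to the physical part reachable from $v$ passes through the attachment edge of $T_{c'}$, so \ref{H2} applies. Consequently $c'$ precedes $c$ in the $\supseteq$-sorted order, and whenever the containment is strict, \ref{H4}(b) and \ref{H5} together rule out ties.

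The special case on \cref{alg-line:enum-dummy-replace} still needs handling. When $c$ is a minor node with $\kappa(c)=\chi(c)$, the placeholder vertex $v$ is replaced in $T/\!\!\sim$ by its actual neighbour outside, so the grandchildren $d$ are attached individually. This matches the first case of \cref{lem:subtree-in-jt}, in which $S$ supplies the node containing $\chi(c)$. The final rerooting step then recovers the true root of $T$.
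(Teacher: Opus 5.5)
Your plan follows the paper's own proof. It uses structural induction on $M$, \cref{lem:subtree-in-jt} for the connected pieces $T_c$, and the induction hypothesis on each piece. It then contracts to a skeleton whose $\mathcal{C}_{\sf origin}$ part comes from $\mathsf{enumerateTrees}$, and attaches the remaining pieces at nodes $u$ with $\kappa(c)\subseteq\chi(u)$. You are right to flag the processing order of $\mathcal{C}_{\sf proper}$, which the paper's proof never addresses. Your containment is also correct, and follows most directly from \ref{H2} on $M$: if $u\in V(M_{c'})$ and $\kappa(c)\subseteq\chi(u)$, then $\kappa(c)\subseteq\chi(u)\cap\chi(v)\subseteq\chi(c')\cap\chi(v)=\kappa(c')$ by \cref{lem:kappa}.

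\textbf{The gap.} Your closing claim that ``\ref{H4}(b) and \ref{H5} together rule out ties'' is false. Take the star query on $R_0[x_1,y]$, $R_1[x_1,a]$, $R_2[x_1,b]$, $R_3[x_1,e]$, $R_4[x_1,f]$. Let $M$ have physical root $R_0$ with children $R_1$, $R_2$ and a minor node $m$ with $\chi(m)=\{x_1\}$, whose children are $R_3$ and $R_4$. This $M$ satisfies the definition:
\begin{itemize}
\item every non-root node has $\kappa=\{x_1\}$, so \ref{H4}(a) holds;
\item \ref{H4}(b) is vacuous for children of the root and inapplicable for children of $m$;
\item \ref{H5} holds with the unique minor node $m$.
\end{itemize}
Yet $R_1$, $R_2$, $m$ are tied in $\mathcal{C}_{\sf proper}$ of $R_0$. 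Whichever is processed first can only be attached to $R_0$ (for $m$, every child of its placeholder root is). With $R_1$ first, the join tree with edges $R_0R_2$, $R_2R_1$, $R_0R_3$, $R_0R_4$ is never produced, and rerooting does not change edge sets; the other orders fail symmetrically. So the lemma is false for meta-decompositions satisfying only \ref{H1}--\ref{H5}, and no argument from \ref{H4}(b) and \ref{H5} alone can close your step.

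\textbf{The fix.} What closes it is a property of meta-decompositions built by \cref{alg:meta}. By \cref{alg-line:minor-origin-edge-cond}, every non-root node $p$ with $\kappa(p)=\chi(q)$ for some minor node $q\neq p$ becomes a child of $q$. Combined with \ref{H5}, two distinct siblings with equal $\kappa$ must both be children of the unique minor node with that $\chi$-label, so they lie in its $\mathcal{C}_{\sf origin}$. Hence the $\kappa$-labels within any $\mathcal{C}_{\sf proper}$ are pairwise distinct, your containment is strict, and every linear extension of $\supseteq$ processes $c'$ before $c$. This also covers $u$ lying in a grandchild piece of a special minor child; origin pieces are built first anyway. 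State this restriction explicitly; your base case already leans on ``the construction''.

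\textbf{Smaller points.}
\begin{itemize}
\item The reverse inclusion $\kappa(c)\subseteq\chi(x)\cap\chi(y)$ holds only for the edge leaving $T_c$ towards $v$ (towards the nodes containing $\chi(v)$ when $v$ is minor). It does not hold for every edge leaving $T_c$, since re-branched siblings with smaller $\kappa$ can hang off $T_c$.
\item Connectedness of the origin pieces in the contracted tree needs the short argument from the paper's case~(2). A node $s$ of a proper piece $T_{c'}$ on the path between $\chi(v)$-containing nodes of two origin pieces would contain $\chi(v)$ by \ref{H2} on $T$. But $\chi(s)\cap\chi(v)\subseteq\kappa(c')\subsetneq\chi(v)$ by \ref{H2} on $M$.
\item For a special minor child $c$, the tree you hand to the induction hypothesis must be built by hanging the grandchild pieces under a fresh vertex labelled $\chi(c)$, as in the paper's case~(2)(i). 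The set $S$ in \cref{lem:subtree-in-jt} consists of physical nodes whose $\chi$-labels contain $\chi(c)$, not of such a vertex.
\end{itemize}
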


\begin{proof}
    We will show that each join tree $T$ of $H$ can be enumerated by the algorithm based on the meta-decomposition $M$ of $H$.
    We prove by structural induction on vertices $v \in V(M)$. If $v$ is a leaf node, there is only one node in the subtree $M_v$, so this statement is vacuously true. If $v$ is a non-leaf node, consider the following cases.
    
    (1) If $\lambda(v) = \emptyset$ and $\kappa(v) = \chi(v)$, by \cref{lem:no-consec-special-minor}, all children $c$ of $v$ have either $\lambda(c) \neq \emptyset$ or $\kappa(c) \neq \chi(c)$, so, by the induction hypothesis, \cref{alg:enum} enumerates all join trees of hypergraph $H_c$ with $E(H_c) = \bigcup_{p \in V_c^P} \lambda(p)$ and $V(H_c) = \bigcup_{e \in E(H_c)} e$.
    Also, note that it is possible to obtain a meta-decomposition $M_v'$ of the hypergraph $H_v$ from $M_v$ by simply replacing $v$ by a physical node $v'$ with $\lambda(v') = \{ \chi(v) \}$, $\chi(v') = \chi(v)$, and $\kappa(v') = \emptyset$.
    By \cref{lem:subtree-in-jt}, for each child $c$ of $v$, on any join tree $T_c$ of hypergraph $H_c$, $V_c^P$ induces a connected subtree $T_c$ of $T$.
    Furthermore, $T_c$ is a join tree of the meta-decomposition $M_c$, and can hence be enumerated on \cref{alg-line:enum-subtrees}.
    So it remains to show that the algorithm enumerates all possible ways to combine these subtrees along with $v$.
    For each possible join tree $T_v$ of $H_v$, without loss of generality, assume that $T_v$ is rooted at $v'$.
    $T_v$ can be constructed from the algorithm as follows.
    We first construct a tree $T_P$,
    by a top-down traversal of $T_v$, where, for each node $p$,
    if (i) $\chi(v) \subseteq \chi(p)$, (ii) $p \in M_c$, i.e., in the subtree $M_c$ rooted at some child $c$ of $v$, and (iii) its parent $q$ is in a different subtree $M_d$ rooted at some child $d \neq c$ of $v$, then we add an edge from $d$ to $c$.
    If $p = v$, and its parent $q \in V(M_d)$ for some child $d$ of $v$, we add an edge from $d$ to $v$.
    If $p \in V(M_c)$ for some child $c$ of $v$, and its parent $q = v$, we add an edge from $v$ to $c$.
    Note that, for all pairs of $c$ and $d$ where we add edges, $\kappa(c) = \kappa(d) = \chi(v)$.
    Now, this tree $T_P$ is a spanning tree of the clique with vertices $C \cup \{ v \}$, where $C$ is the set of all children of $v$, so it can be enumerated by the \prufer{} sequence on \cref{alg-line:enum-minor-dummy}.
    For each child $c$ of $v$, by \cref{lem:subtree-in-jt}, $V_c^P$ induces a connected subtree $T_c$ of $T_v$.
    We traverse the vertices $c \in V(T_P)$ in bottom-up order.
    For each $c$ and each child $d$ of $c$ on $T_P$, %
    let $r_d$ be the root of the connected subtrees $T_d$ of $T_v$ induced by $V_d^P$, and $u$ be the parent of $r_d$ on $T_v$.
    The algorithm will be able to reroot $T_d$ to $r_d$, find $u$ (since $\kappa(d) = \kappa(v) = \chi(v) = \chi(c) \subseteq \chi(u)$), and attach $T_d$ as a child of $u$.
    Finally, by \cref{lem:minor-all-origin}, in this case, $S = C$ and $X = \emptyset$, so the second part will not be executed at all.

    (2) If $\lambda(v) = \emptyset$ and $\kappa(v) \neq \chi(v)$, starting from Lines~\ref{alg-line:enum-minor-origin-start}--\ref{alg-line:enum-minor-origin-end}, note that it is not possible to have a child $c \in C$ such that $\kappa(c) = \chi(c)$, because if that is the case, $\kappa(c) = \chi(c) = \chi(v)$, violating the uniqueness condition \ref{H5}. Therefore, for all $c \in C$, we have $\lambda(c) \neq \emptyset$ or $\lambda(c) \neq \chi(c)$.
    Furthermore, we claim that the vertices $V_C^P = \bigcup_{c \in C} V_c^P$ induces a connected subtree $T_C$ on any join tree $T_v$ of $M_v$. This is because, if this is not the case, i.e., if there is some $s$ on the path between some $p \in V_c^P$ and $q \in V_d^P$, where $c$ and $d$ are children of $v$ with $\kappa(c) = \kappa(d) = \chi(v)$, such that $\kappa(s) \subsetneq \chi(v)$, then $\chi(p) \cap \chi(q) = \chi(v) \not\subseteq \kappa(s)$, violating the connectedness condition \ref{H2}.
    Therefore, we can follow the same method as in case (1) to construct $T_P$ with vertices $C$ and show that, at the end of this part, it is possible to obtain a partial join tree $T$ with vertices in $V_C^P$.
    
    Now consider all children $c \in X = \{ c \in C \mid \kappa(c) \subsetneq \chi(v) \}$.

    (i) If $\lambda(c) = \emptyset$ and $\kappa(c) = \chi(c)$, by \cref{lem:no-consec-special-minor}, all children $d$ of $c$ has either $\lambda(d) \neq \emptyset$ or $\kappa(d) \neq \chi(d)$, so, by \cref{lem:subtree-in-jt}, $V_d^P$ induces a connected subtree in any join tree. We build a tree $T_c$ by (a) collecting all maximal connected subtrees of the join tree with nodes $p \in V_c^P$ and (b) adding a common parent $c$ as the root node. This is a valid join tree of the hypergraph $H_c$ with $E(H_c) = \left( \bigcup_{p \in V_c^P} \lambda(p) \right) \cup \{ \chi(c) \}$ and $V(H_c) = \bigcup_{e \in E(H_c)} e$, and, by our inductive hypothesis, can be enumerated. For each child $d$ of $c$ on $T_c$, let the parent of $d$ on $T$ be $u$. The algorithm will be able to find such $u$ to attach $T_d$, because $\kappa(d) = \chi(d) \cap \chi(c) \subseteq \chi(u)$, by the connectedness condition on $T$.

    (ii) If $\lambda(c) \neq \emptyset$ or $\kappa(c) \neq \chi(c)$, similarly to case (i), by \cref{lem:subtree-in-jt}, $V_c^P$ induces a connected subtree on $T$, which, by our inductive hypothesis, can be enumerated. Then, the algorithm will be able to find such $u$ to attach $T_c$, because $\kappa(c) = \chi(c) \cap \chi(v) \subseteq \chi(u)$, by the connectedness condition on $T$.

    (3) If $\lambda(v) \neq \emptyset$, we can follow a similar argument for the children $c \in X$ in case (2).
\end{proof}

Finally, based on \cref{lem:enum-complete}, taking $v = r$, the root of $M$ (for which $\kappa(r) = \emptyset \neq \chi(r)$), we finish the proof of completeness for the algorithm.
\vfill

%% file: app_cost.tex
\section{Omitted Details in \cref{sec:opt}}
\label{app:opt}

\subsection{Supporting Re-branching for \cref{alg:cost}}

For each node $p$, if there exists a subtree $T_q$ that can be attached as a child of $p$, we simply use an additional bit in the DP table ($\sf plan$) to indicate whether $T_q$ is included. And we call $\sf optimizeLocal$ with or without the relations in $T_q$ accordingly. We note again that this is a very rare possibility in practical queries.

\subsection{Greedy Algorithm for Local Join Ordering}
\label{app:greedy}

The implementation of local join ordering using a greedy algorithm is as shown in \cref{alg:greedy}.

\begin{algorithm}
\caption{Greedy implementation of $\sf optimizeLocal$}
\label{alg:greedy}
\SetKwInOut{Input}{input}
\SetKwInOut{Output}{output}
\SetKwInOut{Known}{known}
\Input{Node $q$ on meta representation, a set of query plans $\mathcal{P}$}
\Output{A query plan that joins the relation in $q$ with all query plans in $\mathcal{P}$}

\If(\tcp*[f]{$\lambda(q) = \emptyset$, there is no relation in $q$}){$q$ is a minor node}{
    $P \gets $ remove $P_1$ from $\mathcal{P}$ with the minimum result cardinality \;
} \Else {
    $P \gets $ the relation in $\lambda(q)$ \;
}

\While{$\mathcal{P} \neq \emptyset$}{
    $P_i \gets $ remove $P_i$ from $\mathcal{P}$ that minimizes the cost to join $P$ with $P_i$ \;
    $P \gets $ the plan $P_i \bowtie P$ \;
}

\Return{$P$} \;

\end{algorithm}

%% file: app_exp.tex
\newpage
\section{Omitted Details in \cref{sec:exp}}
\label{app:exp}

\subsection{Overall Evaluation Time Using \sys{} vs Other Methods on Each Benchmark}

\subsubsection{\sys{} versus DPconv}
\label{app:total-time-dpconv-individual}

\paragraph{Histograms of speedups}
\cref{fig:overall-speedup-dpconv-individual} shows histograms of the speedups of the overall query execution time using \sys{} over using DPconv for queries in each of the four benchmarks. In DSB, JOB, and Musicbrainz, most of the queries show a speedup close to 1. In JOBLarge, however, we observe a large number of significant speedups by multiple orders of magnitude, mostly due to the high optimization time of DPconv for the large query sizes in this benchmark.

\begin{figure}[H]
    \begin{subfigure}[b]{0.49\textwidth}
        \centering
        \includegraphics[width=\linewidth]{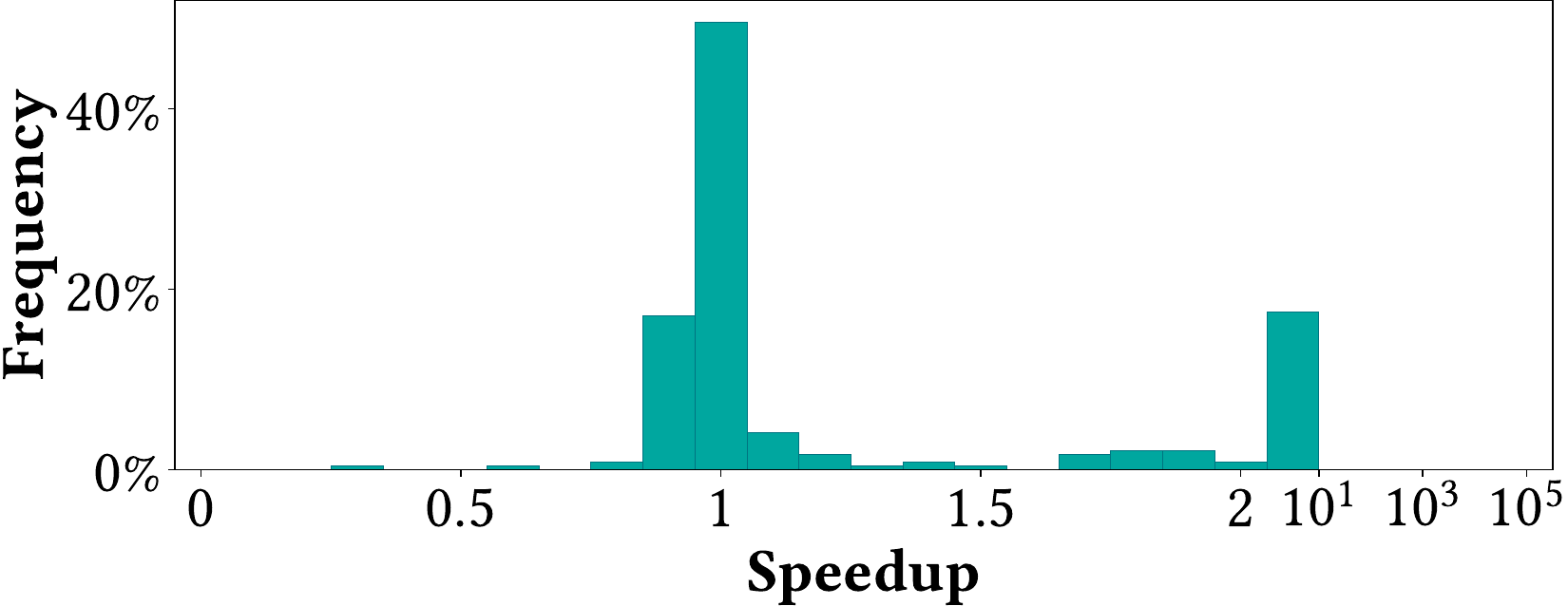}
        \caption{DSB}
        \label{fig:overall-speedup-dpconv-dsb}
    \end{subfigure}
    \hfill
    \begin{subfigure}[b]{0.49\textwidth}
        \centering
        \includegraphics[width=\linewidth]{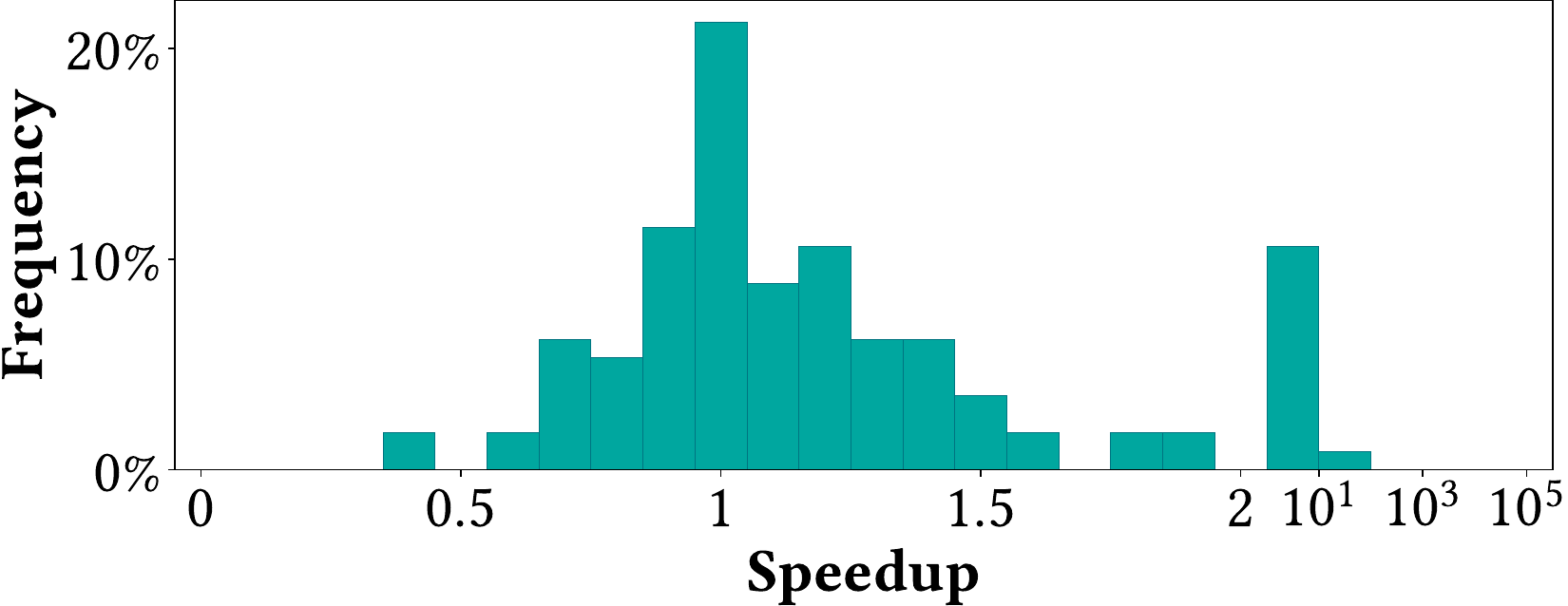}
        \caption{JOB}
        \label{fig:overall-speedup-dpconv-job-original}
    \end{subfigure}
    
    \vspace{0.5\baselineskip}

    \begin{subfigure}[b]{0.49\textwidth}
        \centering
        \includegraphics[width=\linewidth]{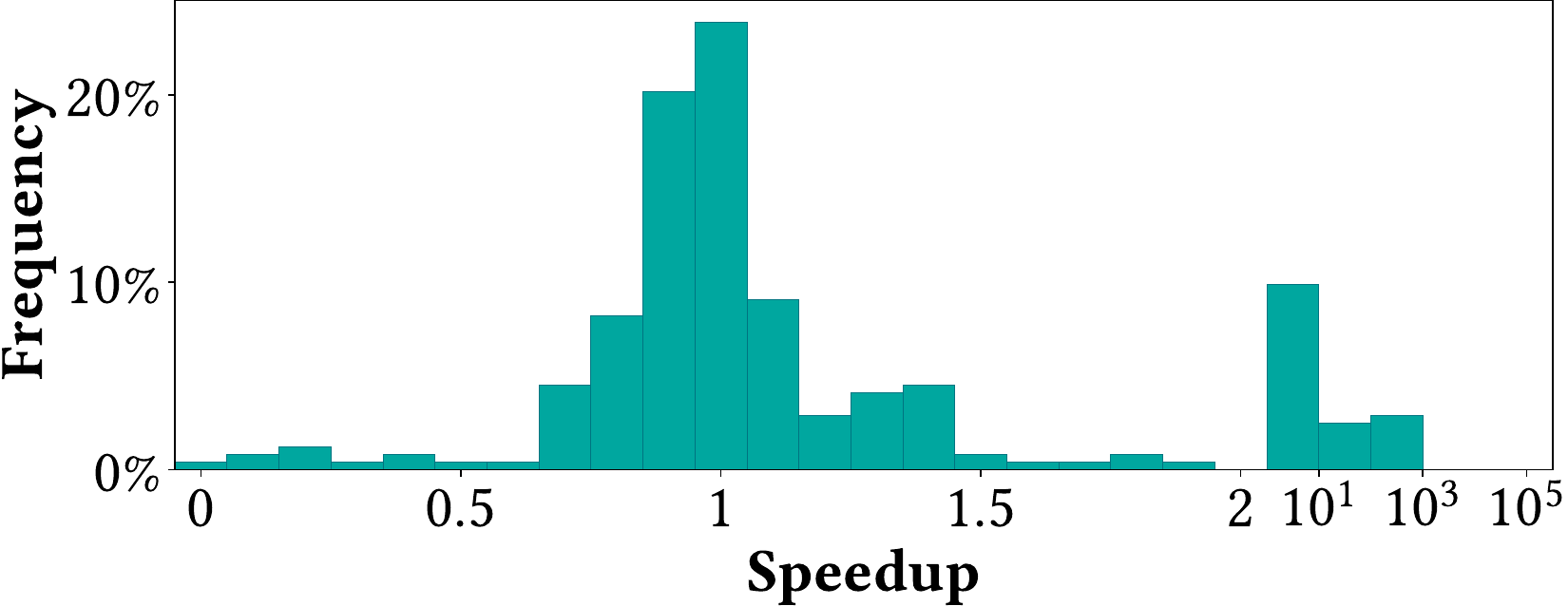}
        \caption{Musicbrainz}
        \label{fig:overall-speedup-dpconv-musicbrainz}
    \end{subfigure}
    \hfill
    \begin{subfigure}[b]{0.49\textwidth}
        \centering
        \includegraphics[width=\linewidth]{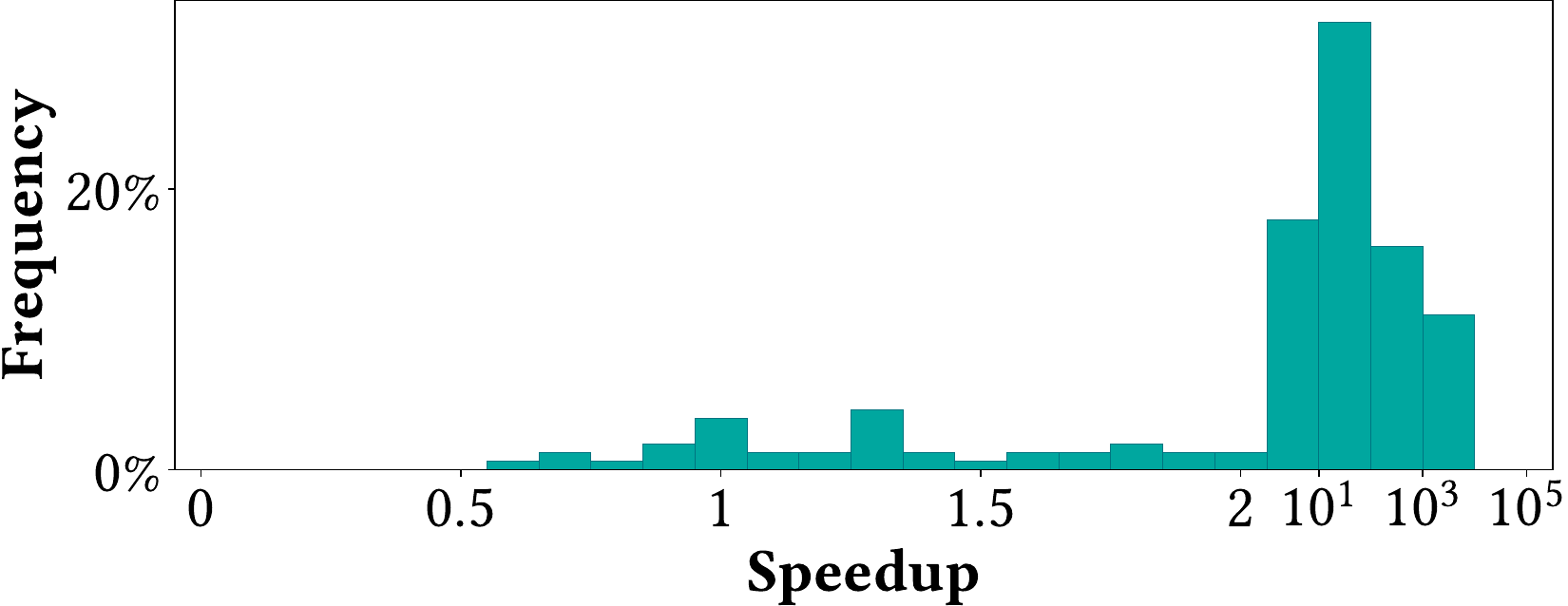}
        \caption{JOBLarge}
        \label{fig:overall-speedup-dpconv-job-large}
    \end{subfigure}
    \vspace{-0.5\baselineskip}
    \caption{Histograms of speedups of overall evaluation time of \sys{} over DPconv on each benchmark}
    \label{fig:overall-speedup-dpconv-individual}
\end{figure}

\paragraph{Scatter plots}
\begin{figure}
    \begin{subfigure}[b]{0.24\textwidth}
        \centering
        \includegraphics[width=\linewidth]{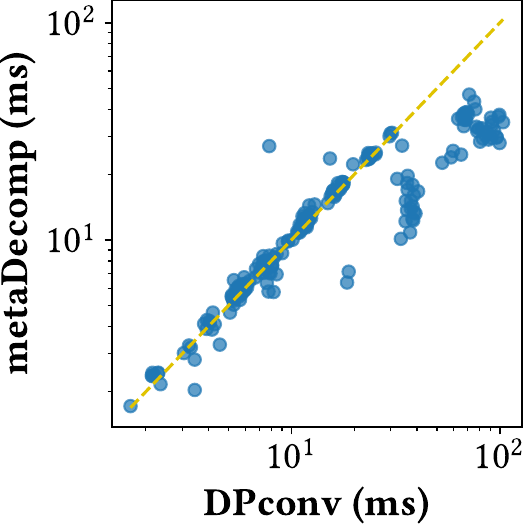}
        \vspace{-1.25\baselineskip}
        \caption{DSB}
        \label{fig:overall-scatter-dpconv-dsb}
    \end{subfigure}
    \hfill
    \begin{subfigure}[b]{0.24\textwidth}
        \centering
        \includegraphics[width=\linewidth]{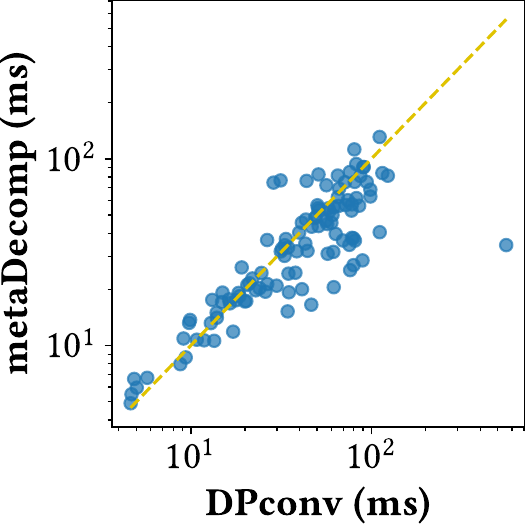}
        \vspace{-1.25\baselineskip}
        \caption{JOB}
        \label{fig:overall-scatter-dpconv-job-original}
    \end{subfigure}
    \hfill 
    \begin{subfigure}[b]{0.24\textwidth}
        \centering
        \includegraphics[width=\linewidth]{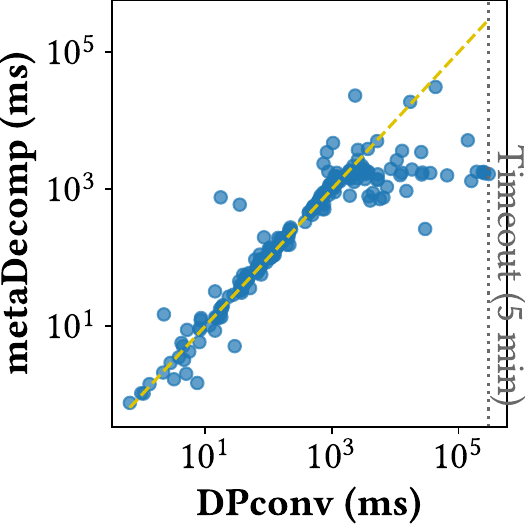}
        \vspace{-1.25\baselineskip}
        \caption{Musicbrainz}
        \label{fig:overall-scatter-dpconv-musicbrainz}
    \end{subfigure}
    \hfill
    \begin{subfigure}[b]{0.24\textwidth}
        \centering
        \includegraphics[width=\linewidth]{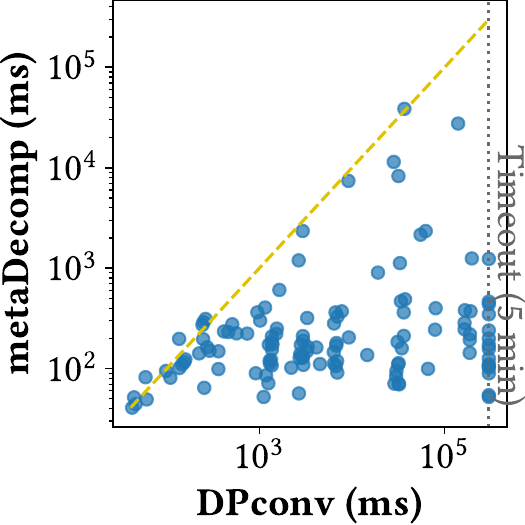}
        \vspace{-1.25\baselineskip}
        \caption{JOBLarge}
        \label{fig:overall-scatter-dpconv-job-large}
    \end{subfigure}
    \vspace{-0.5\baselineskip}
    \caption{Scatter plots of overall evaluation time of using \sys{} versus DPconv on each benchmark}
    \label{fig:overall-scatter-dpconv-individual}
\end{figure}
\cref{fig:overall-scatter-dpconv-individual} shows scatter plots of the overall evaluation time using \sys{} and DPconv, measured for queries in each of the four benchmarks. The trends that can be observed in each of the four benchmarks and the overall figure (\cref{fig:overall-scatter-dpconv}) are mostly consistent. The two approaches have similar performance for queries that takes DPconv less than 1 second ($10^3\ {\rm ms}$). For those taking DPconv more than 1 second, \sys{} starts to show significant speedups. Notably, for complex queries in Musicbrainz (\cref{fig:overall-scatter-dpconv-musicbrainz}) and JOBLarge (\cref{fig:overall-scatter-dpconv-job-large}) for which DPconv times out (taking more than 5 minutes), \sys{} can evaluate them within merely 1 second.

\newpage
\subsubsection{\sys{} versus DuckDB}
\label{app:total-time-metadecomp-duckdb-individual}

\paragraph{Histograms of speedups}
\cref{fig:overall-speedup-duckdb-individual} shows histograms of speedups of the overall evaluation time using \sys{} over DuckDB for queries in each benchmark. Most of the queries show a speedup close to 1, with some notable speedups especially for larger queries in the Musicbrainz benchmark.

\begin{figure}[H]
    \begin{subfigure}[b]{0.49\textwidth}
        \centering
        \includegraphics[width=\linewidth]{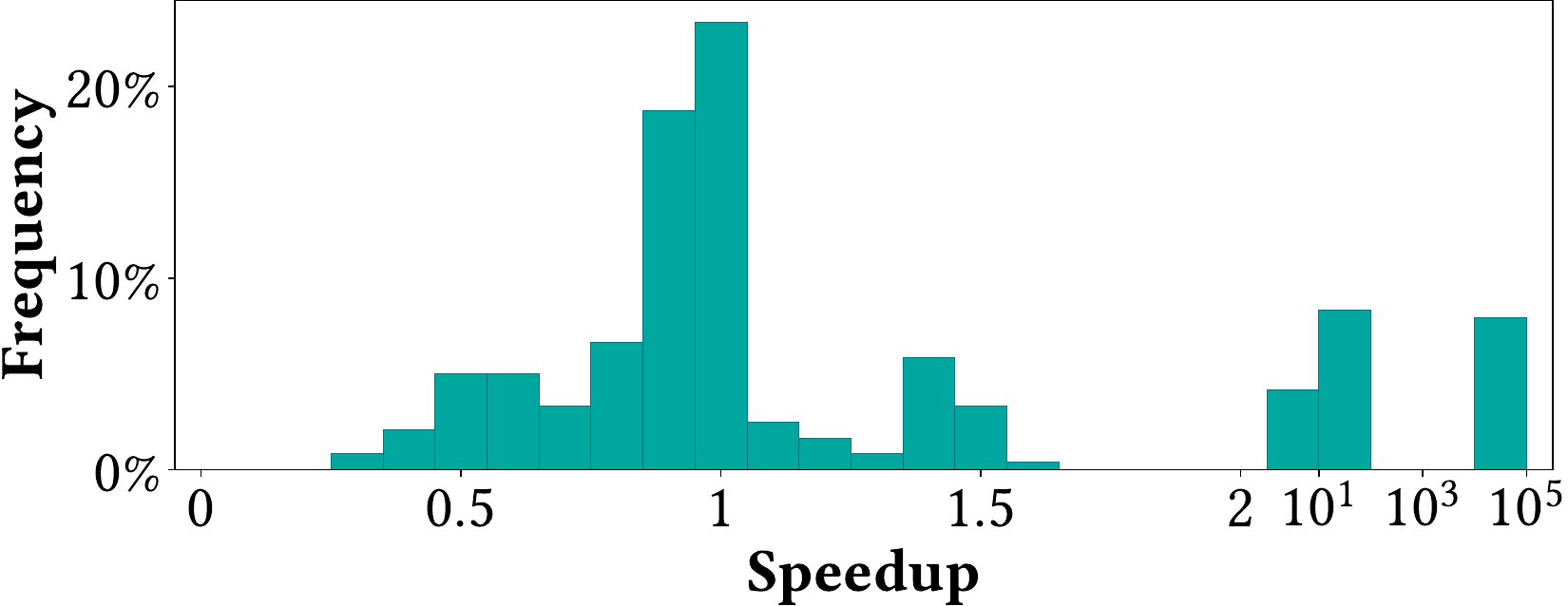}
        \vspace{-1.25\baselineskip}
        \caption{DSB}
        \label{fig:overall-speedup-duckdb-dsb}
    \end{subfigure}
    \hfill
    \begin{subfigure}[b]{0.49\textwidth}
        \centering
        \includegraphics[width=\linewidth]{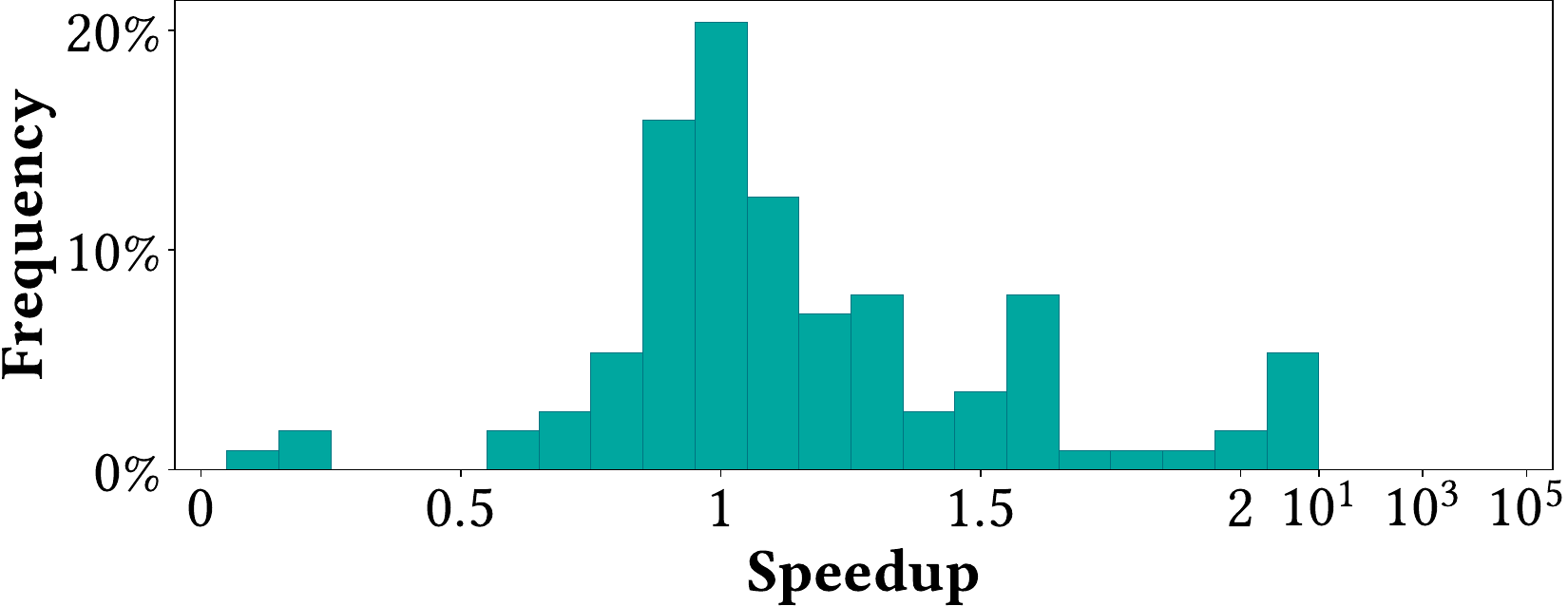}
        \vspace{-1.25\baselineskip}
        \caption{JOB}
        \label{fig:overall-speedup-duckdb-job-original}
    \end{subfigure}

    \vspace{0.5\baselineskip}

    \begin{subfigure}[b]{0.49\textwidth}
        \centering
        \includegraphics[width=\linewidth]{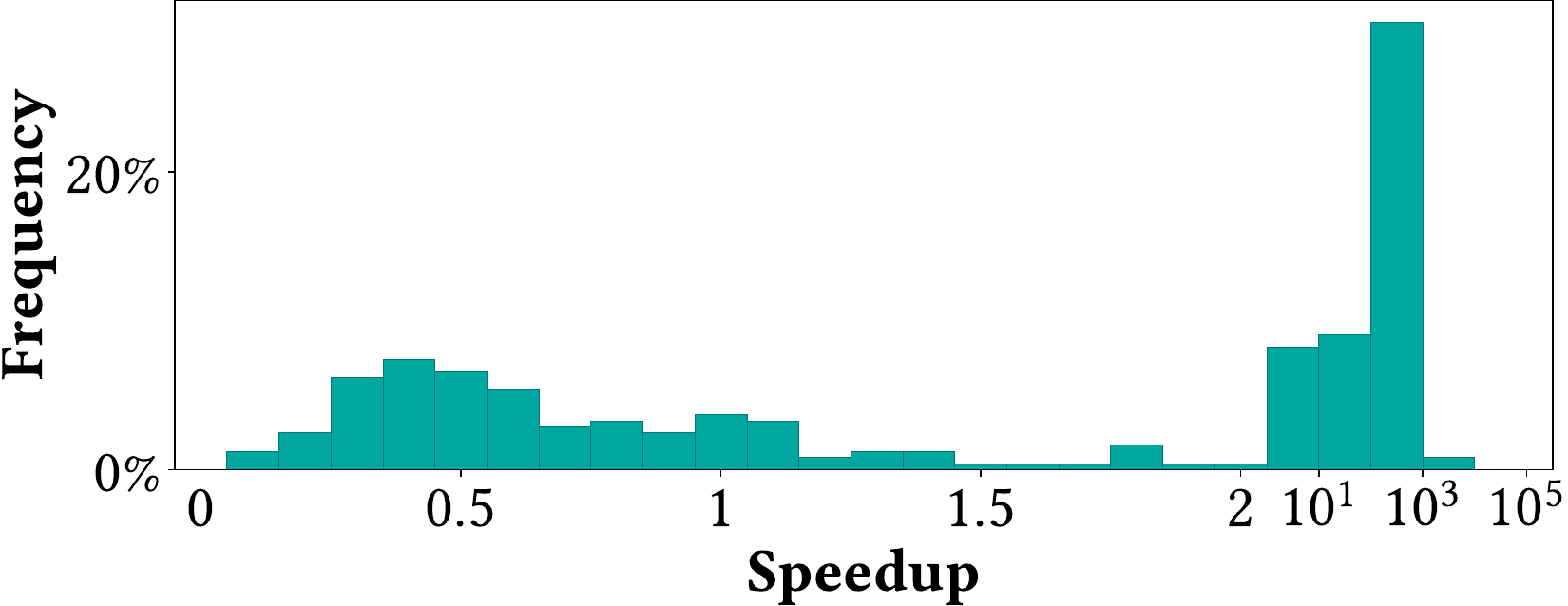}
        \vspace{-1.25\baselineskip}
        \caption{Musicbrainz}
        \label{fig:overall-speedup-duckdb-musicbrainz}
    \end{subfigure}
    \hfill
    \begin{subfigure}[b]{0.49\textwidth}
        \centering
        \includegraphics[width=\linewidth]{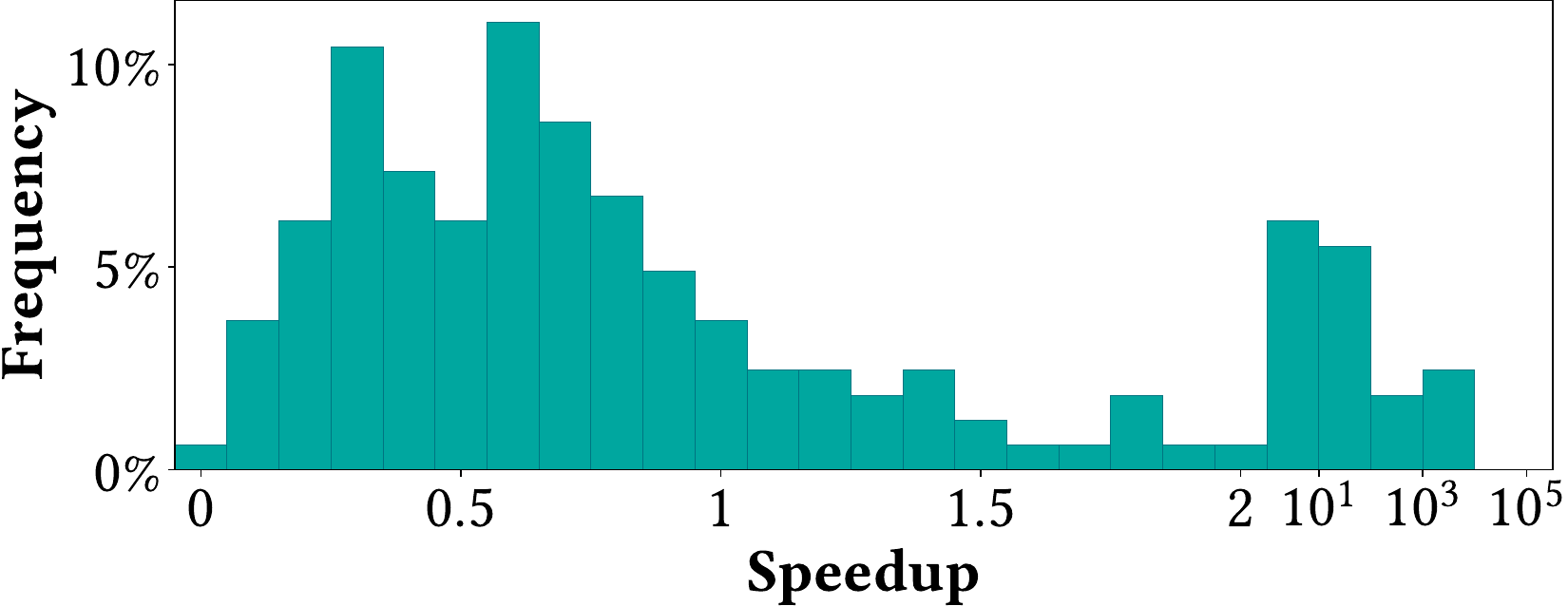}
        \vspace{-1.25\baselineskip}
        \caption{JOBLarge}
        \label{fig:overall-speedup-duckdb-job-large}
    \end{subfigure}
    \vspace{-0.5\baselineskip}
    \caption{Histograms of speedups of overall evaluation time of \sys{} over DuckDB on each benchmark}
    \label{fig:overall-speedup-duckdb-individual}
\end{figure}

\paragraph{Scatter plots}
\label{app:total-time-duckdb-individual}
\begin{figure}[b]
    \begin{subfigure}[b]{0.24\textwidth}
        \centering
        \includegraphics[width=\linewidth]{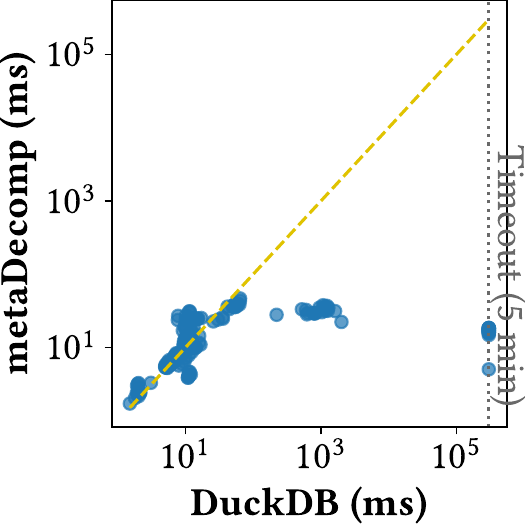}
        \vspace{-1.25\baselineskip}
        \caption{DSB}
        \label{fig:overall-scatter-duckdb-dsb}
    \end{subfigure}
    \hfill
    \begin{subfigure}[b]{0.24\textwidth}
        \centering
        \includegraphics[width=\linewidth]{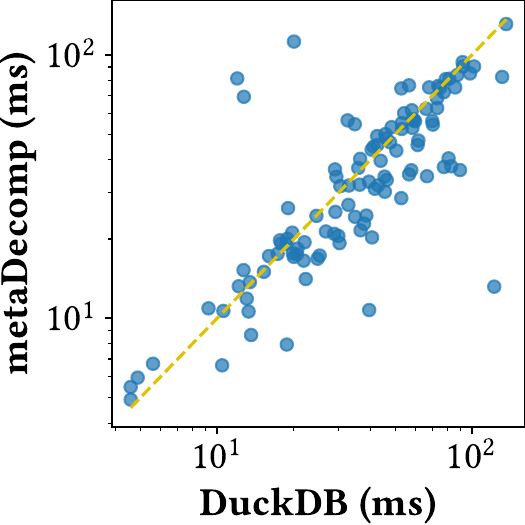}
        \vspace{-1.25\baselineskip}
        \caption{JOB}
        \label{fig:overall-scatter-duckdb-job-original}
    \end{subfigure}
    \hfill
    \begin{subfigure}[b]{0.24\textwidth}
        \centering
        \includegraphics[width=\linewidth]{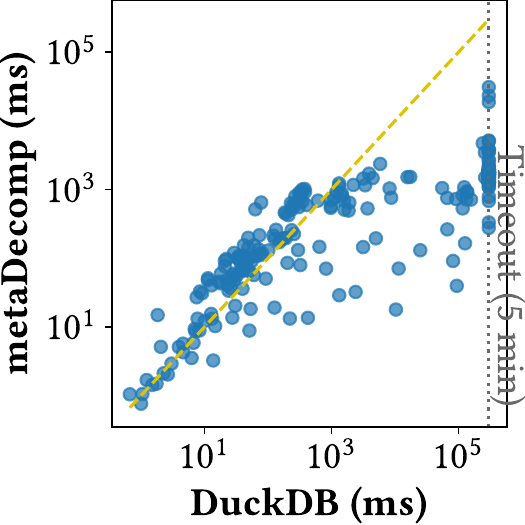}
        \vspace{-1.25\baselineskip}
        \caption{Musicbrainz}
        \label{fig:overall-scatter-duckdb-musicbrainz}
    \end{subfigure}
    \hfill
    \begin{subfigure}[b]{0.24\textwidth}
        \centering
        \includegraphics[width=\linewidth]{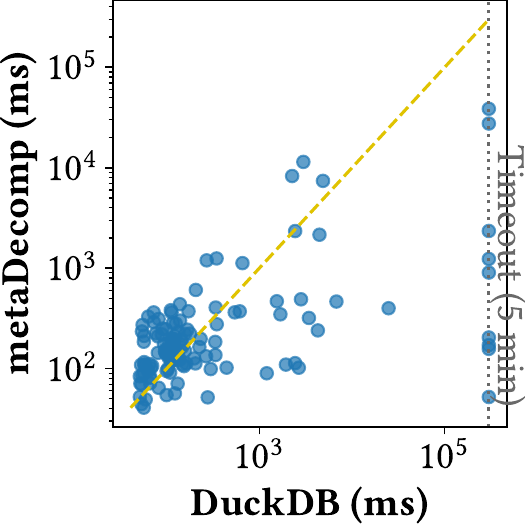}
        \vspace{-1.25\baselineskip}
        \caption{JOBLarge}
        \label{fig:overall-scatter-duckdb-job-large}
    \end{subfigure}
    \vspace{-0.5\baselineskip}
    \caption{Scatter plot of overall query evaluation time of using \sys{} versus DuckDB on each benchmark}
    \label{fig:overall-scatter-duckdb-individual}
\end{figure}
\cref{fig:overall-scatter-duckdb-individual} shows the detailed breakdown of overall query evaluation time using \sys{} and using DuckDB measured for queries in each of the four benchmarks.
With the exception of JOBLarge, all approaches have similar performance for queries that takes DuckDB less than 100 ms. For those taking DuckDB more than 100 ms, \sys{} starts to show significant speedups, mostly due to the greedy strategy of DuckDB used for large queries, which does not have optimality guarantees but may lead to highly suboptimal plans.

\subsubsection{\sys{} versus UnionDP}
\label{app:total-time-uniondp-individual}
\paragraph{Histograms of speedups}
\cref{fig:overall-speedup-uniondp-individual} shows histograms of the speedups of the overall query execution time using \sys{} over using UnionDP for queries in each of the four benchmarks. In the relatively smaller benchmarks DSB and JOB, the performance is similar. In Musicbrainz and JOBLarge, however, we observe a large number of significant speedups by multiple orders of magnitude, mostly due to the fact that UnionDP starts to use the heuristic partitioning which may lead to highly suboptimal plans.

\begin{figure}[H]
    \begin{subfigure}[b]{0.49\textwidth}
        \centering
        \includegraphics[width=\linewidth]{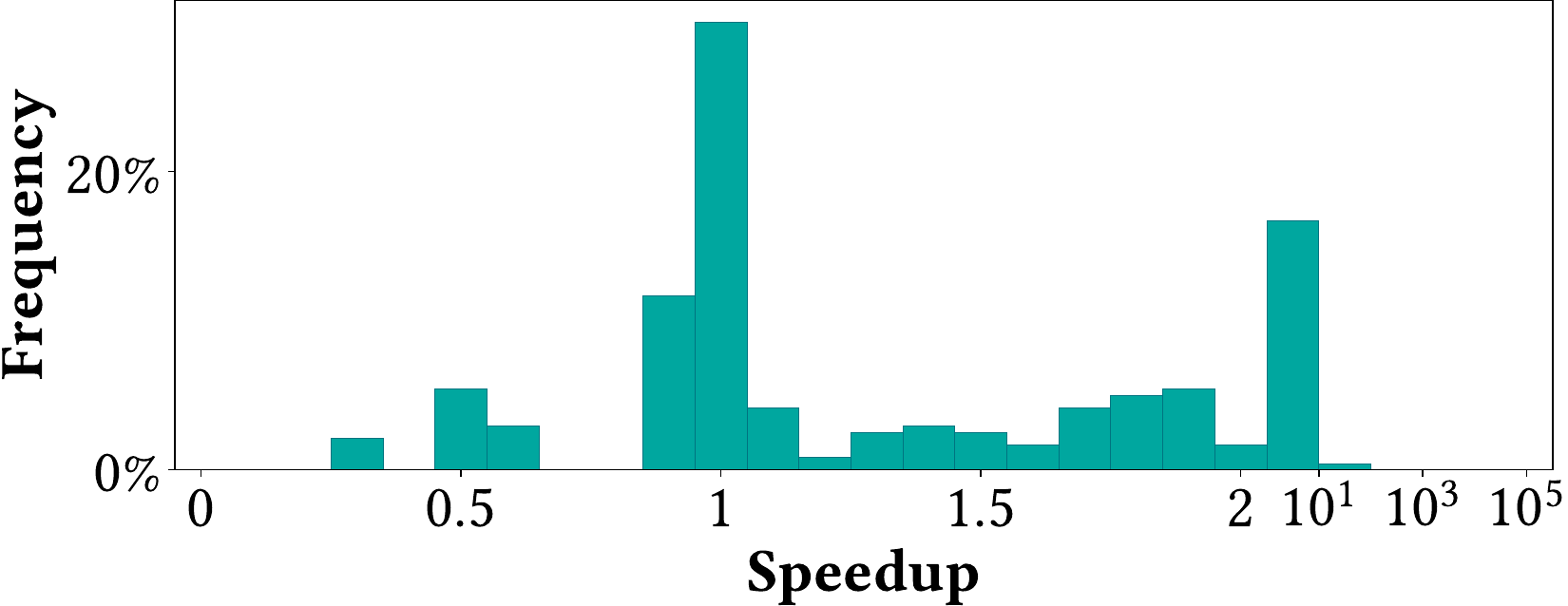}
        \caption{DSB}
        \label{fig:overall-speedup-uniondp-dsb}
    \end{subfigure}
    \hfill
    \begin{subfigure}[b]{0.49\textwidth}
        \centering
        \includegraphics[width=\linewidth]{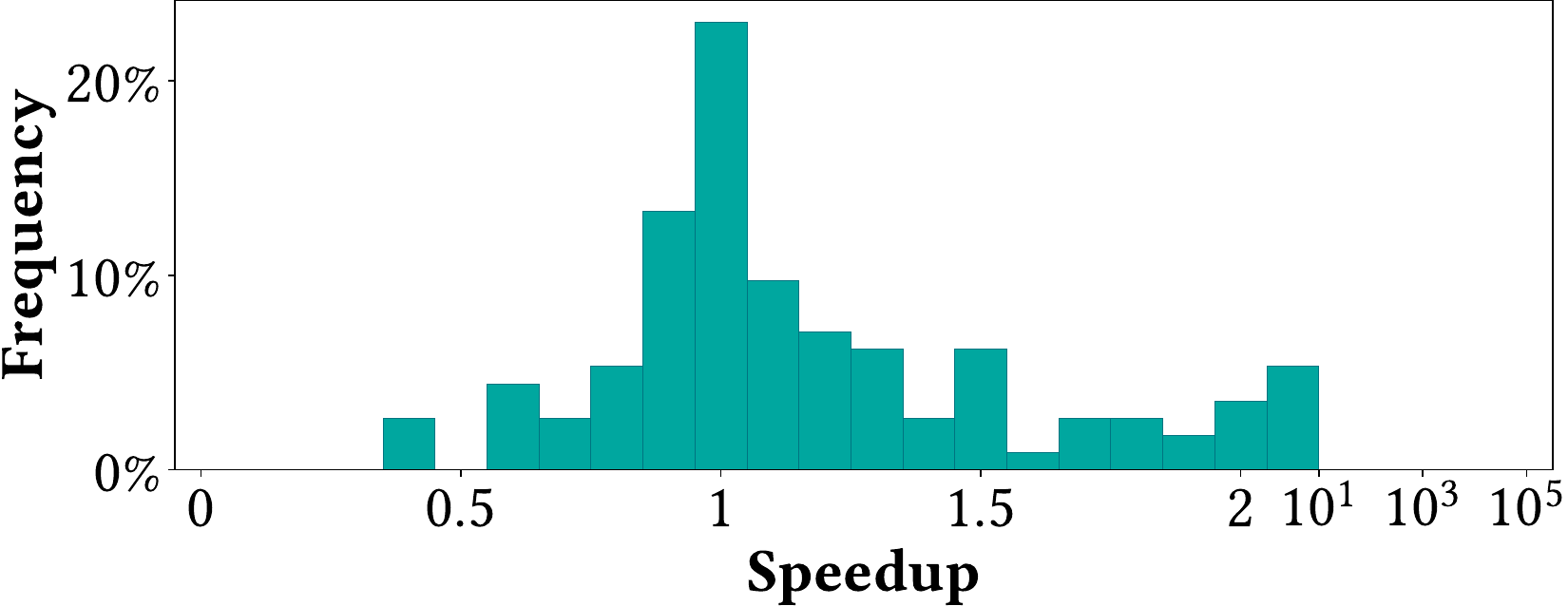}
        \caption{JOB}
        \label{fig:overall-speedup-uniondp-job-original}
    \end{subfigure}

    \vspace{0.5\baselineskip}

    \begin{subfigure}[b]{0.49\textwidth}
        \centering
        \includegraphics[width=\linewidth]{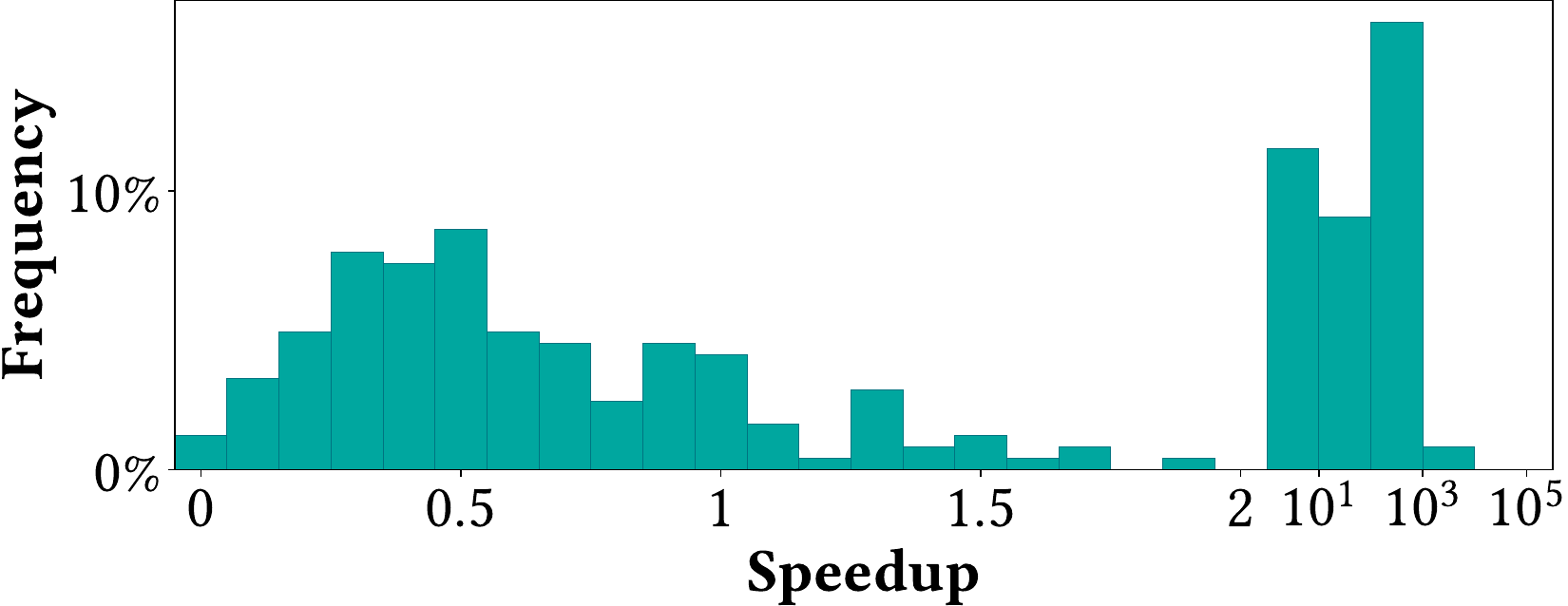}
        \caption{Musicbrainz}
        \label{fig:overall-speedup-uniondp-musicbrainz}
    \end{subfigure}
    \hfill
    \begin{subfigure}[b]{0.49\textwidth}
        \centering
        \includegraphics[width=\linewidth]{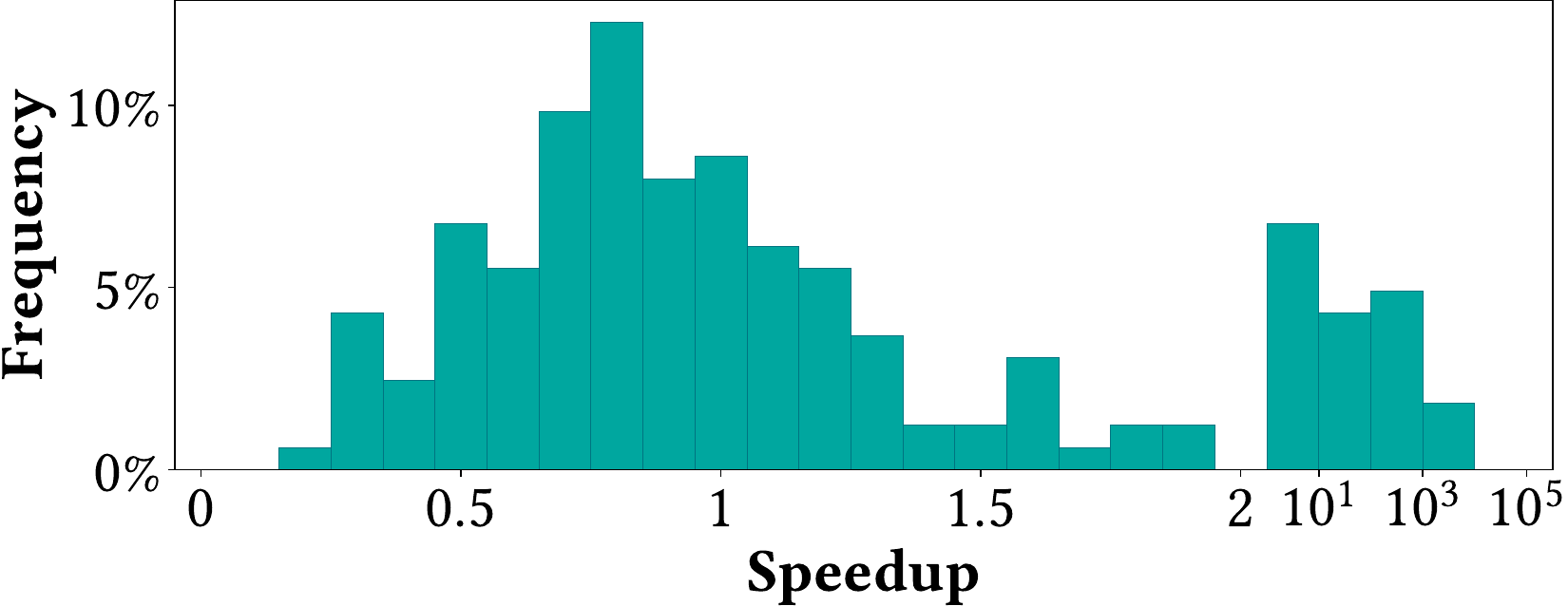}
        \caption{JOBLarge}
        \label{fig:overall-speedup-uniondp-job-large}
    \end{subfigure}
    \vspace{-0.5\baselineskip}
    \caption{Histograms of speedups of overall evaluation time of \sys{} over UnionDP on each benchmark}
    \label{fig:overall-speedup-uniondp-individual}
\end{figure}

\paragraph{Scatter plots} \cref{fig:overall-scatter-uniondp-individual} shows scatter plots of the overall query evaluation time using \sys{} and using UnionDP measured for queries in each of the four benchmarks. The two approaches have similar performance for queries that takes UnionDP less than 1 second ($10^3\ {\rm ms}$). For those taking UnionDP more than 1 second, \sys{} starts to show significant speedups. This is especially the case in the Musicbrainz and JOBLarge benchmarks.

\begin{figure}[H]
    \begin{subfigure}[b]{0.24\textwidth}
        \centering
        \includegraphics[width=\linewidth]{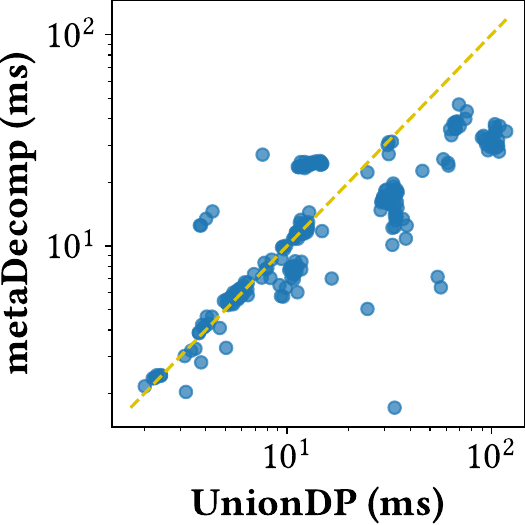}
        \caption{DSB}
        \label{fig:overall-scatter-uniondp-dsb}
    \end{subfigure}
    \hfill
    \begin{subfigure}[b]{0.24\textwidth}
        \centering
        \includegraphics[width=\linewidth]{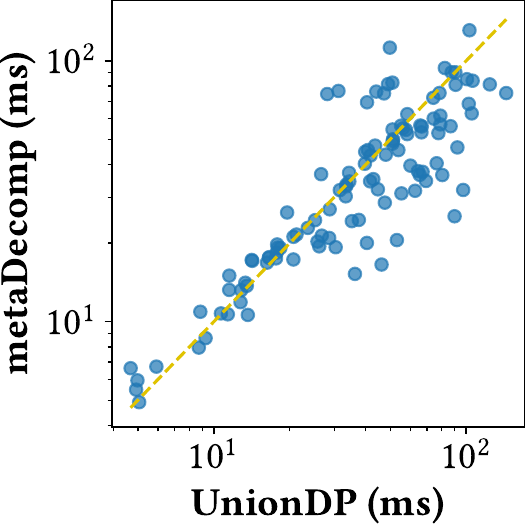}
        \caption{JOB}
        \label{fig:overall-scatter-uniondp-job-original}
    \end{subfigure}
    \hfill
    \begin{subfigure}[b]{0.24\textwidth}
        \centering
        \includegraphics[width=\linewidth]{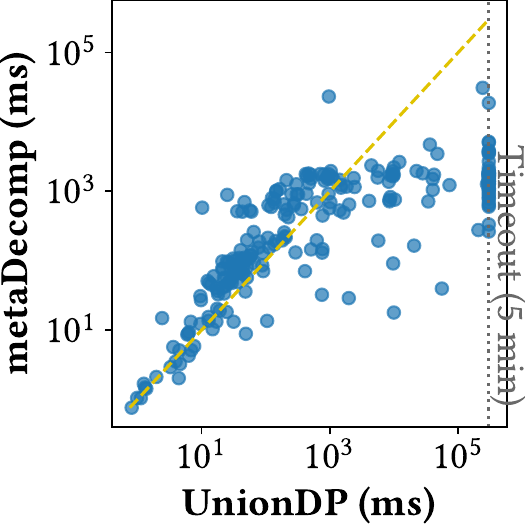}
        \caption{Musicbrainz}
        \label{fig:overall-scatter-uniondp-musicbrainz}
    \end{subfigure}
    \hfill
    \begin{subfigure}[b]{0.24\textwidth}
        \centering
        \includegraphics[width=\linewidth]{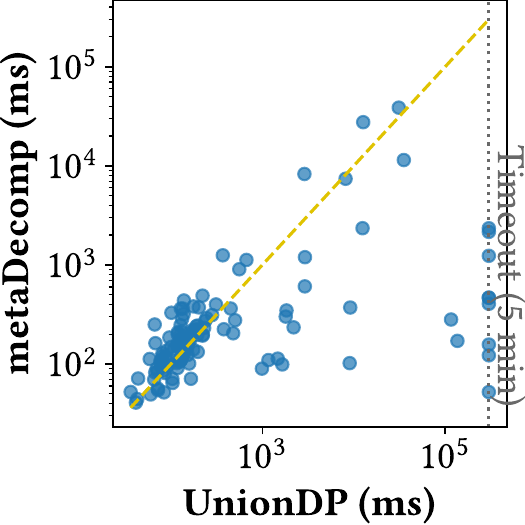}
        \caption{JOBLarge}
        \label{fig:overall-scatter-uniondp-job-large}
    \end{subfigure}
    \vspace{-0.5\baselineskip}
    \caption{Scatter plot of overall evaluation time of using \sys{} versus UnionDP on each benchmark}
    \label{fig:overall-scatter-uniondp-individual}
\end{figure}

\newpage

\subsubsection{\sys{} versus Yannakakis$^+$}
\label{app:total-time-yanplus-individual}
\paragraph{Histograms of speedups}
\cref{fig:overall-speedup-yanplus-individual} shows histograms of the speedups of the overall execution time using \sys{} over using Yannakakis$^+$ for queries in each of the four benchmarks. The major advantage of \sys{} over Yannakakis$^+$ is the elimination of the overhead of full reduction, and this is very clearly reflected in the speedups in the smaller benchmarks DSB and JOB. On the larger benchmarks Musicbrainz and JOBLarge, \sys{} also shows comparable performance, with significant speedups on a few queries in which the full reduction takes a very long time.

\begin{figure}[H]
    \begin{subfigure}[b]{0.49\textwidth}
        \centering
        \includegraphics[width=\linewidth]{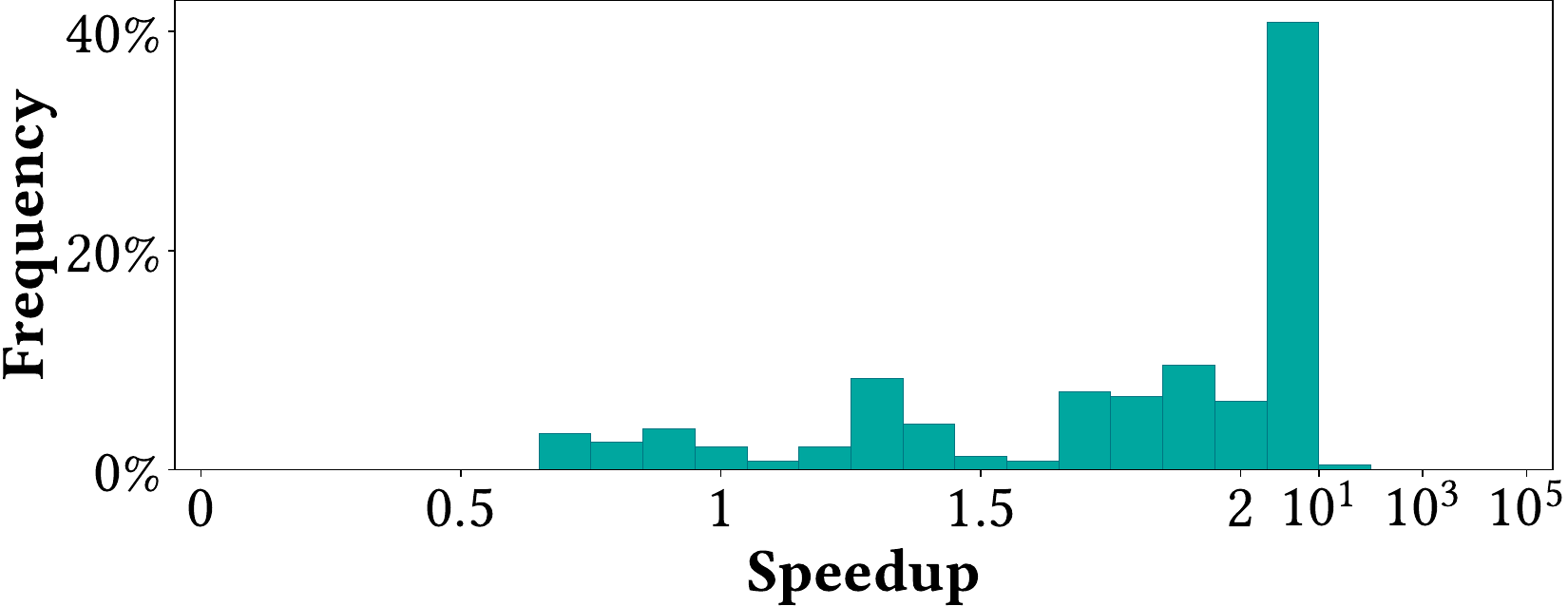}
        \caption{DSB}
        \label{fig:overall-speedup-yanplus-dsb}
    \end{subfigure}
    \hfill
    \begin{subfigure}[b]{0.49\textwidth}
        \centering
        \includegraphics[width=\linewidth]{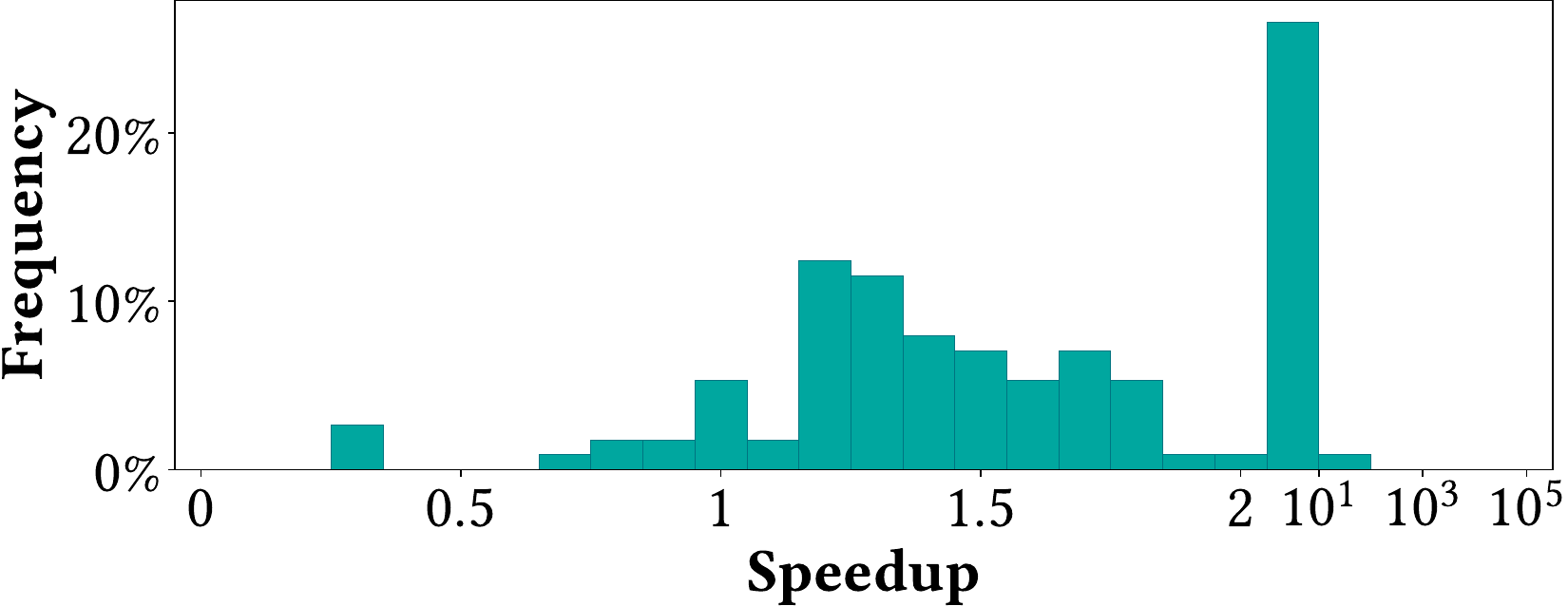}
        \caption{JOB}
        \label{fig:overall-speedup-yanplus-job-original}
    \end{subfigure}

    \vspace{0.5\baselineskip}

    \begin{subfigure}[b]{0.49\textwidth}
        \centering
        \includegraphics[width=\linewidth]{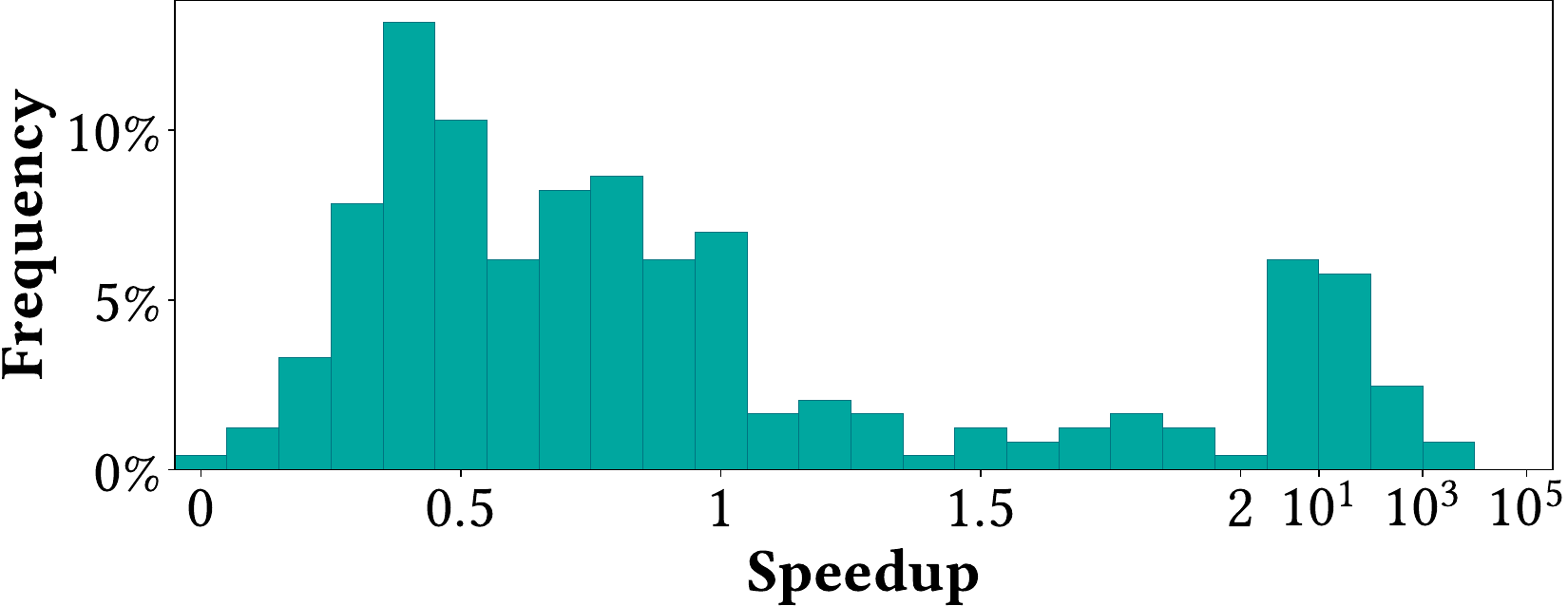}
        \caption{Musicbrainz}
        \label{fig:overall-speedup-yanplus-musicbrainz}
    \end{subfigure}
    \hfill
    \begin{subfigure}[b]{0.49\textwidth}
        \centering
        \includegraphics[width=\linewidth]{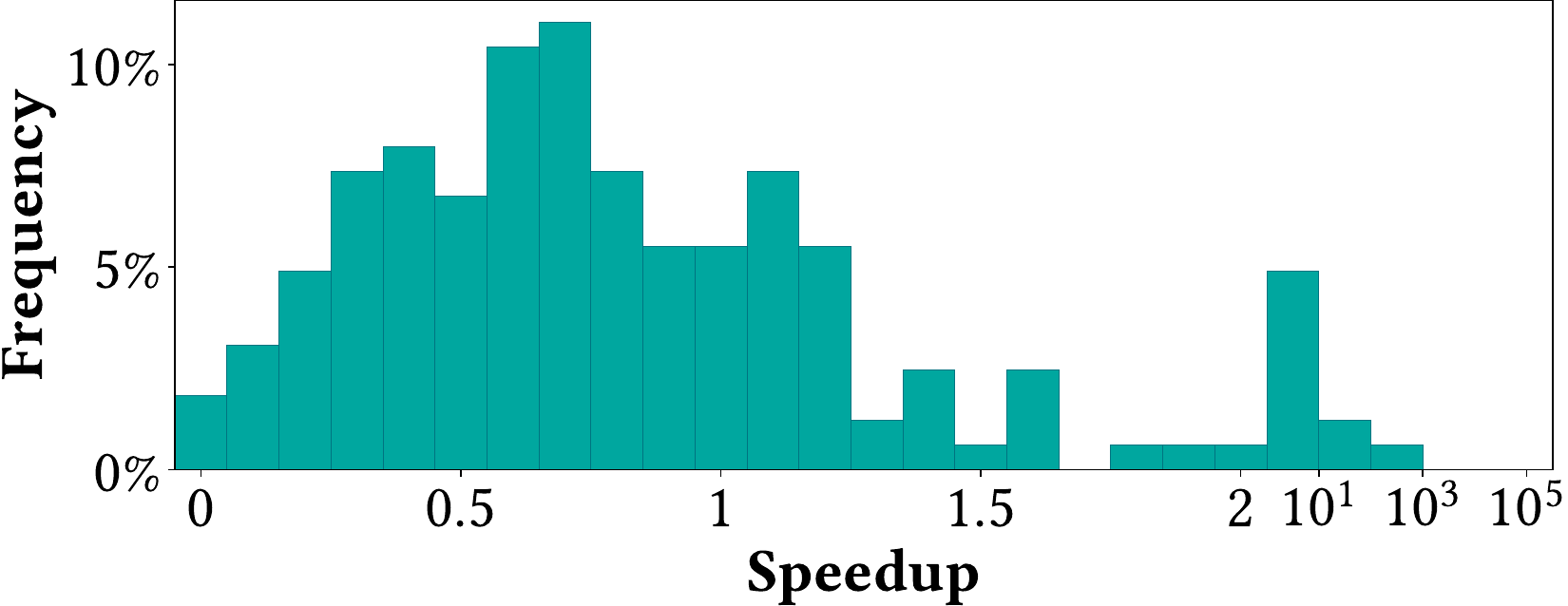}
        \caption{JOBLarge}
        \label{fig:overall-speedup-yanplus-job-large}
    \end{subfigure}
    \vspace{-0.5\baselineskip}
    \caption{Histograms of speedups of overall evaluation time of \sys{} over Yannakakis$^+$ on each benchmark}
    \label{fig:overall-speedup-yanplus-individual}
\end{figure}

\paragraph{Scatter plots} \cref{fig:overall-scatter-yanplus-individual} shows scatter plots of the overall query evaluation time using \sys{} and using Yannakakis$^+$ measured for queries in each of the four benchmarks. Again, the elimination of the overhead of full reduction allows for speedups in many queries in all four benchmarks, and there are significant speedups on a few queries in Musicbrainz and JOBLarge for which the full reduction takes a very long time.

\begin{figure}[H]
    \begin{subfigure}[b]{0.24\textwidth}
        \centering
        \includegraphics[width=\linewidth]{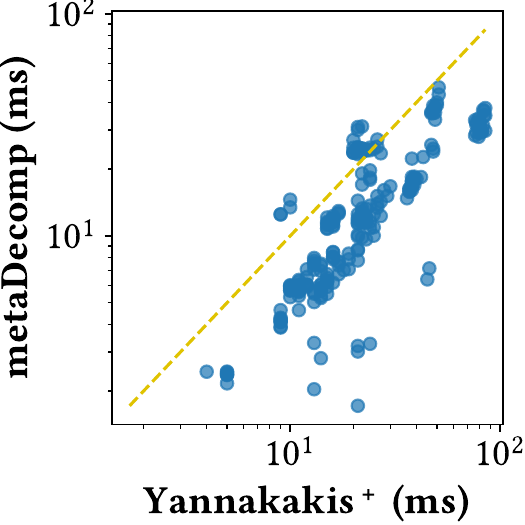}
        \caption{DSB}
        \label{fig:overall-scatter-yanplus-dsb}
    \end{subfigure}
    \hfill
    \begin{subfigure}[b]{0.24\textwidth}
        \centering
        \includegraphics[width=\linewidth]{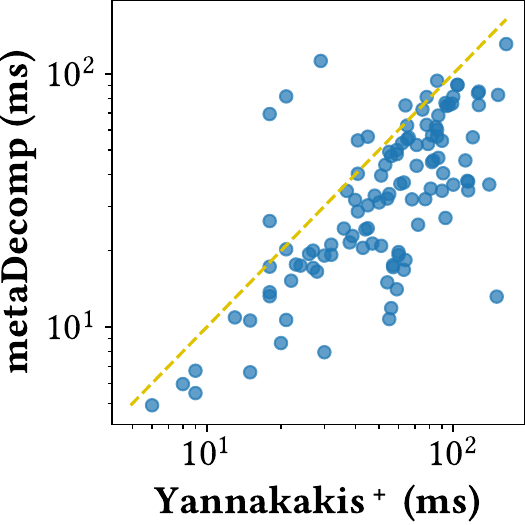}
        \caption{JOB}
        \label{fig:overall-scatter-yanplus-job-original}
    \end{subfigure}
    \hfill
    \begin{subfigure}[b]{0.24\textwidth}
        \centering
        \includegraphics[width=\linewidth]{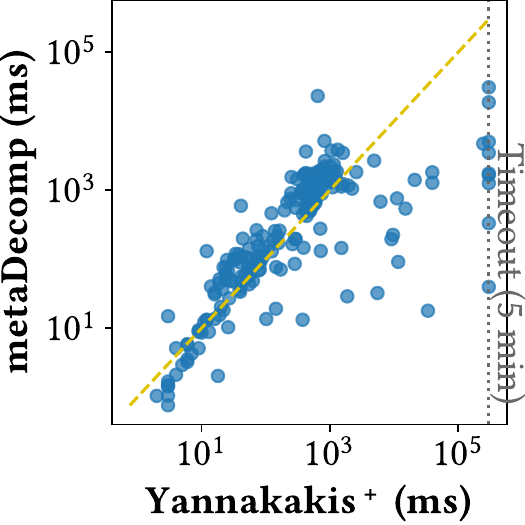}
        \caption{Musicbrainz}
        \label{fig:overall-scatter-yanplus-musicbrainz}
    \end{subfigure}
    \hfill
    \begin{subfigure}[b]{0.24\textwidth}
        \centering
        \includegraphics[width=\linewidth]{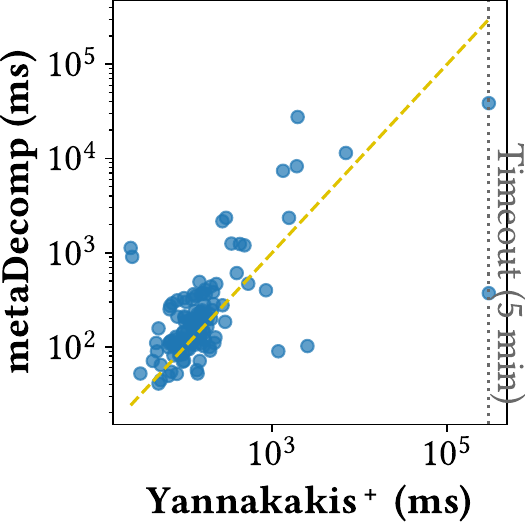}
        \caption{JOBLarge}
        \label{fig:overall-scatter-yanplus-job-large}
    \end{subfigure}
    \vspace{-0.5\baselineskip}
    \caption{Scatter plots of overall evaluation time of using \sys{} versus Yannakakis$^+$ on each benchmark}
    \label{fig:overall-scatter-yanplus-individual}
\end{figure}

\newpage

\subsubsection{\sys{} versus LearnedRewrite}
\label{app:total-time-learned-rewrite-individual}
\paragraph{Histograms of speedups}
\cref{fig:overall-speedup-learned-rewrite-individual} shows histograms of the speedups of the overall execution time using \sys{} over LearnedRewrite for queries in each of the four benchmarks. We can observe a clear overall advantage of \sys{}, due to (1) the large optimization overhead and (2) the limitation of the cost-based optimization and rewrite rules in LearnedRewrite.

\begin{figure}[H]
    \begin{subfigure}[b]{0.49\textwidth}
        \centering
        \includegraphics[width=\linewidth]{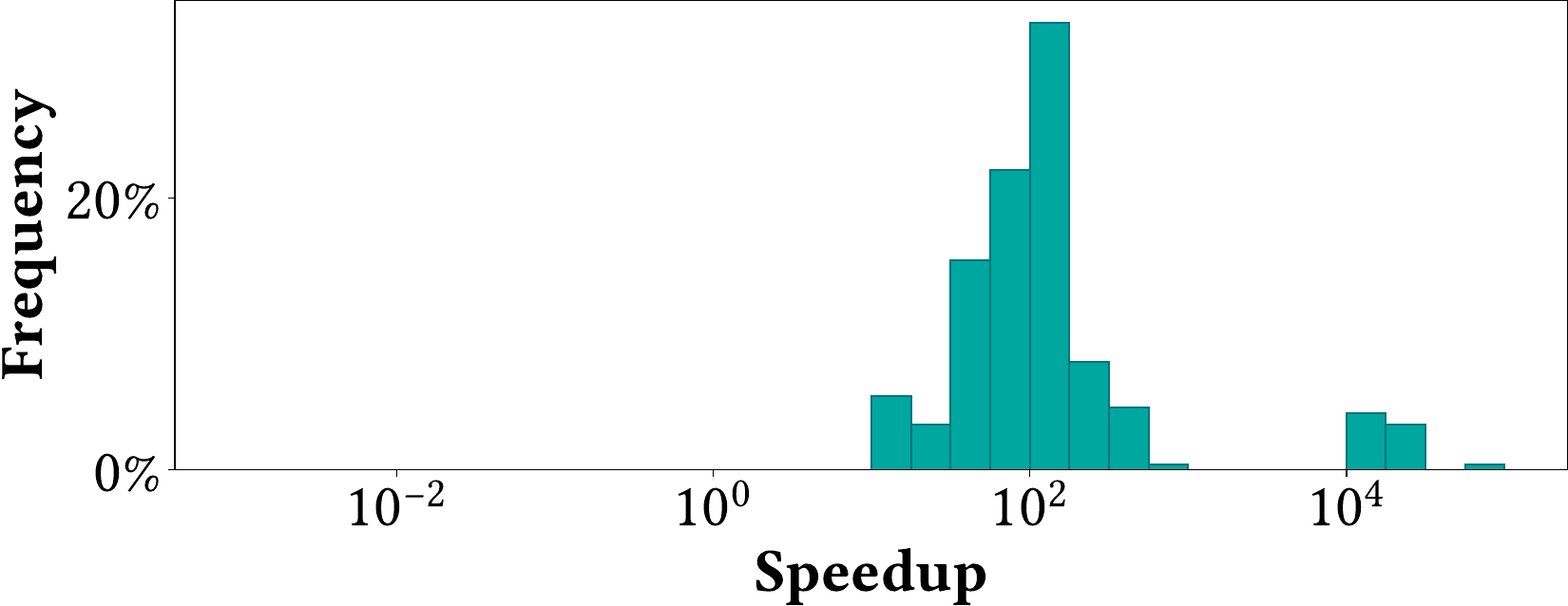}
        \caption{DSB}
        \label{fig:overall-speedup-learned-rewrite-dsb}
    \end{subfigure}
    \hfill
    \begin{subfigure}[b]{0.49\textwidth}
        \centering
        \includegraphics[width=\linewidth]{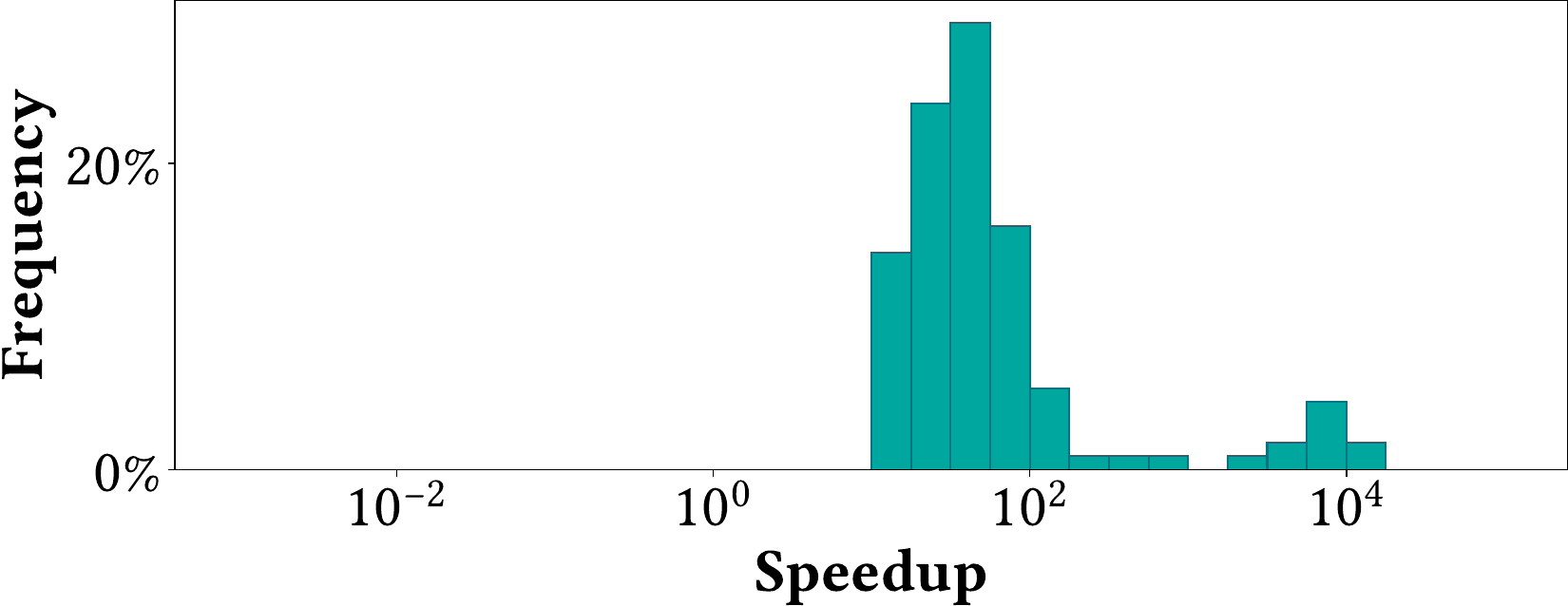}
        \caption{JOB}
        \label{fig:overall-speedup-learned-rewrite-job-original}
    \end{subfigure}

    \vspace{0.5\baselineskip}

    \begin{subfigure}[b]{0.49\textwidth}
        \centering
        \includegraphics[width=\linewidth]{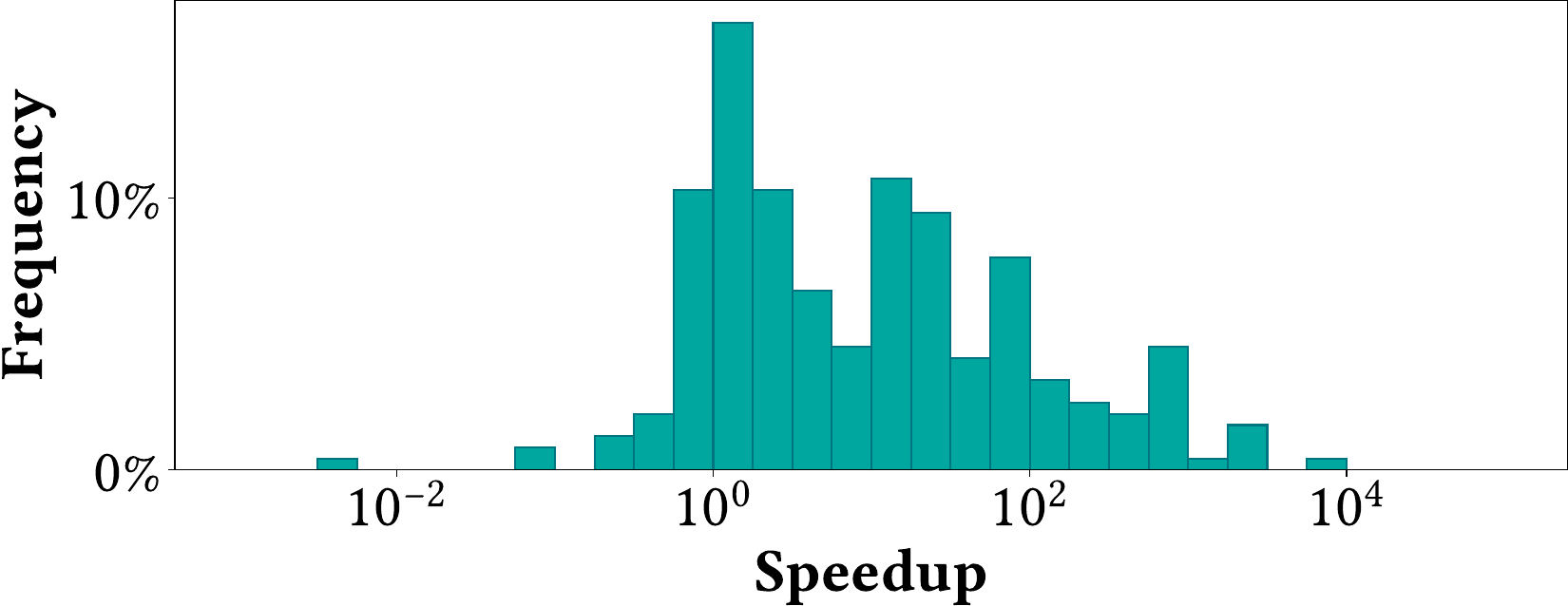}
        \caption{Musicbrainz}
        \label{fig:overall-speedup-learned-rewrite-musicbrainz}
    \end{subfigure}
    \hfill
    \begin{subfigure}[b]{0.49\textwidth}
        \centering
        \includegraphics[width=\linewidth]{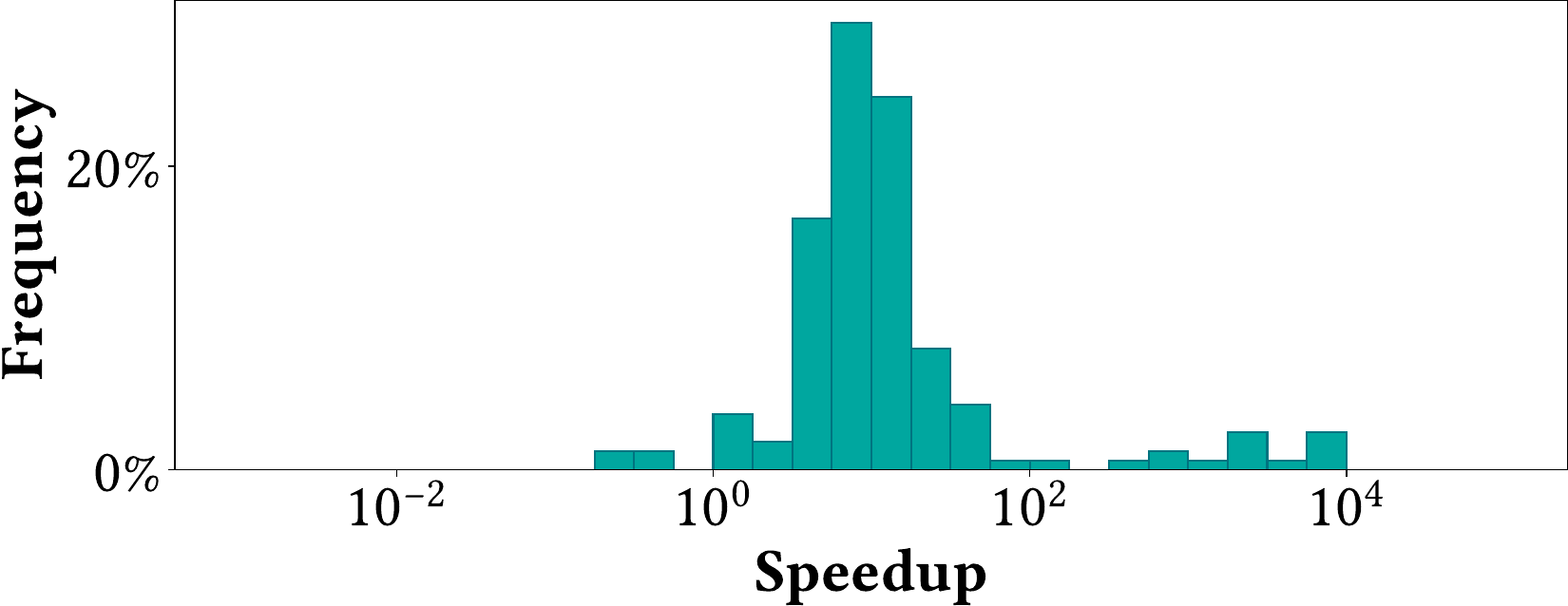}
        \caption{JOBLarge}
        \label{fig:overall-speedup-learned-rewrite-job-large}
    \end{subfigure}
    \vspace{-0.5\baselineskip}
    \caption{Histograms of speedups of overall evaluation time of \sys{} over LearnedRewrite on each benchmark}
    \label{fig:overall-speedup-learned-rewrite-individual}
\end{figure}

\paragraph{Scatter plots} \cref{fig:overall-scatter-learned-rewrite-individual} shows scatter plots of the overall query evaluation time using \sys{} and using Learned Rewrite, measured for queries in each of the four benchmarks. Again, we observe a clear overall advantage of \sys{} over Learned Rewrite in all four benchmarks. In particular, it takes Learned Rewrite an overhead of at least 1 second to optimize each query. For many small queries, this is already more than the overall evaluation time for \sys{} as well as other traditional query optimization approaches.

\begin{figure}[H]
    \begin{subfigure}[b]{0.24\textwidth}
        \centering
        \includegraphics[width=\linewidth]{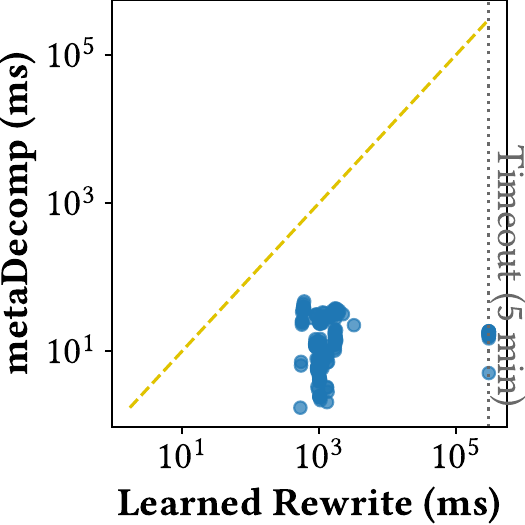}
        \caption{DSB}
        \label{fig:overall-scatter-learned-rewrite-dsb}
    \end{subfigure}
    \hfill
    \begin{subfigure}[b]{0.24\textwidth}
        \centering
        \includegraphics[width=\linewidth]{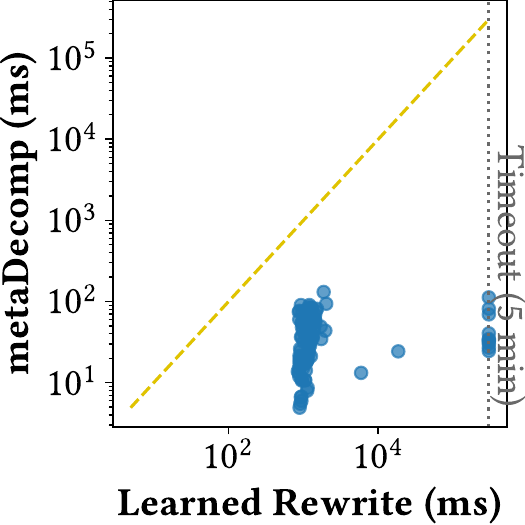}
        \caption{JOB}
        \label{fig:overall-scatter-learned-rewrite-job-original}
    \end{subfigure}
    \hfill
    \begin{subfigure}[b]{0.24\textwidth}
        \centering
        \includegraphics[width=\linewidth]{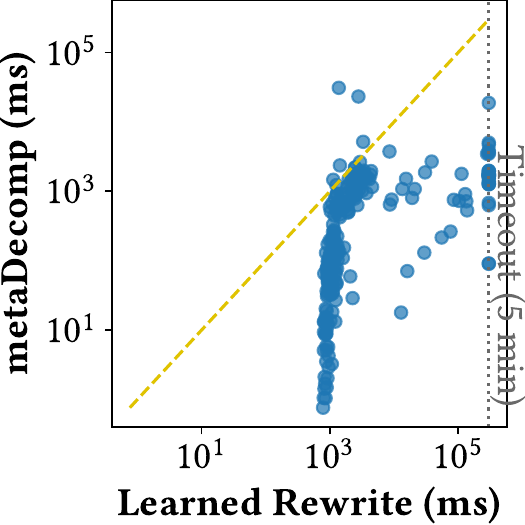}
        \caption{Musicbrainz}
        \label{fig:overall-scatter-learned-rewrite-musicbrainz}
    \end{subfigure}
    \hfill
    \begin{subfigure}[b]{0.24\textwidth}
        \centering
        \includegraphics[width=\linewidth]{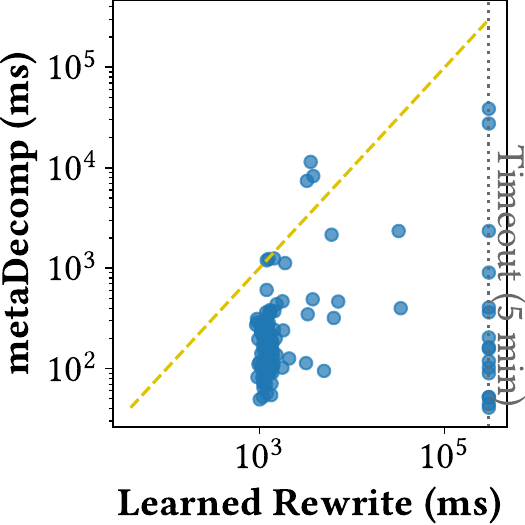}
        \caption{JOBLarge}
        \label{fig:overall-scatter-learned-rewrite-job-large}
    \end{subfigure}
    \vspace{-0.5\baselineskip}
    \caption{Scatter plots of overall evaluation time using \sys{} versus LearnedRewrite on each benchmark}
    \label{fig:overall-scatter-learned-rewrite-individual}
\end{figure}

\newpage

\subsubsection{\sys{} versus LLM-R$^2$}
\label{app:total-time-llm-r2-individual}
\paragraph{Histograms of speedups}
\cref{fig:overall-speedup-llm-r2-individual} shows histograms of the speedups of the overall query execution time using \sys{} over using LLM-R$^2$ for queries in each of the four benchmarks. Similar to the case for Learned Rewrite, we can observe a clear overall advantage of \sys{} over LLM-R$^2$, due to (1) the large optimization overhead and (2) the limitation of the cost-based optimization and rewrite rules of LLM-R$^2$.

\begin{figure}[H]
    \begin{subfigure}[b]{0.49\textwidth}
        \centering
        \includegraphics[width=\linewidth]{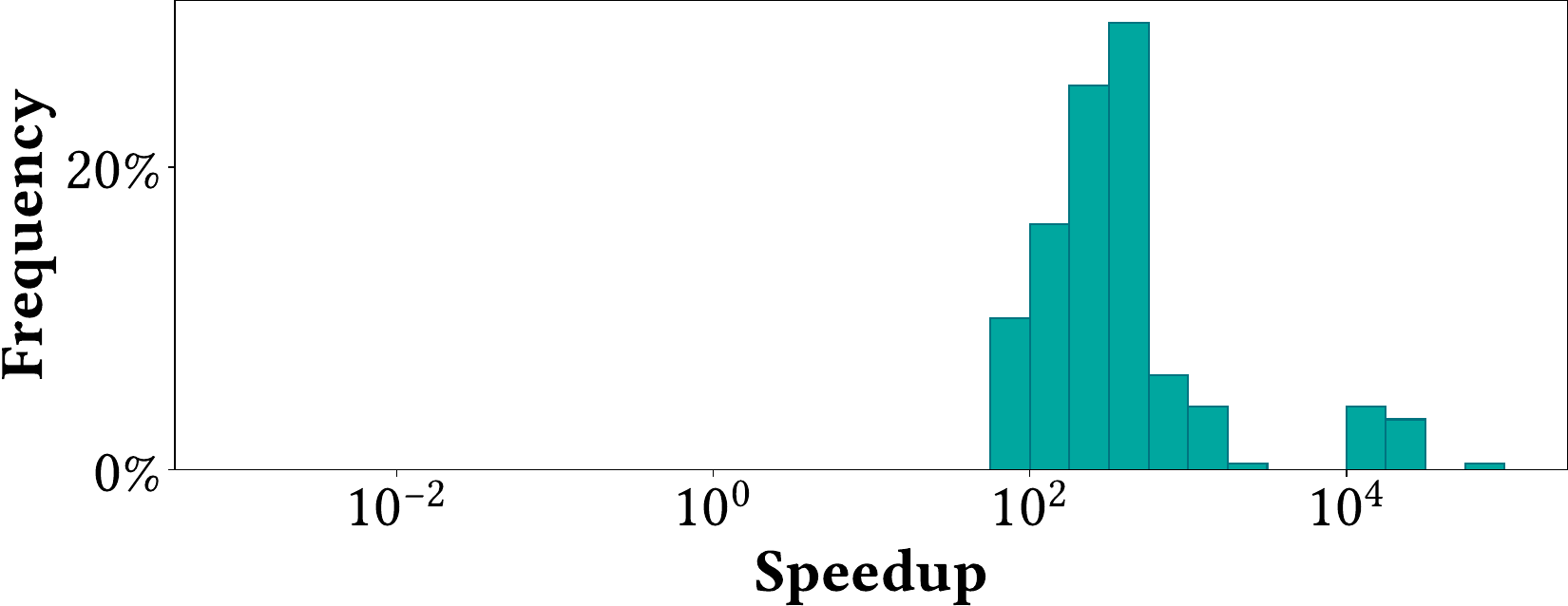}
        \caption{DSB}
        \label{fig:overall-speedup-llm-r2-dsb}
    \end{subfigure}
    \hfill
    \begin{subfigure}[b]{0.49\textwidth}
        \centering
        \includegraphics[width=\linewidth]{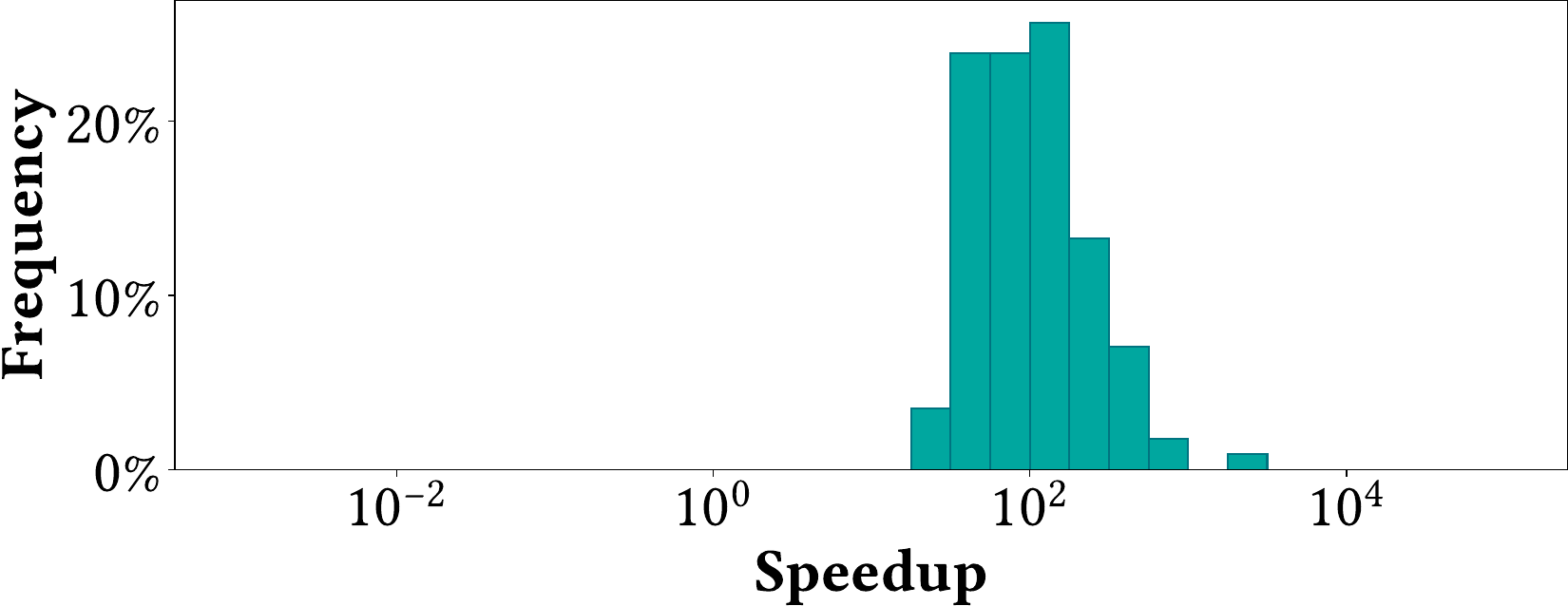}
        \caption{JOB}
        \label{fig:overall-speedup-llm-r2-job-original}
    \end{subfigure}

    \vspace{0.25\baselineskip}

    \begin{subfigure}[b]{0.49\textwidth}
        \centering
        \includegraphics[width=\linewidth]{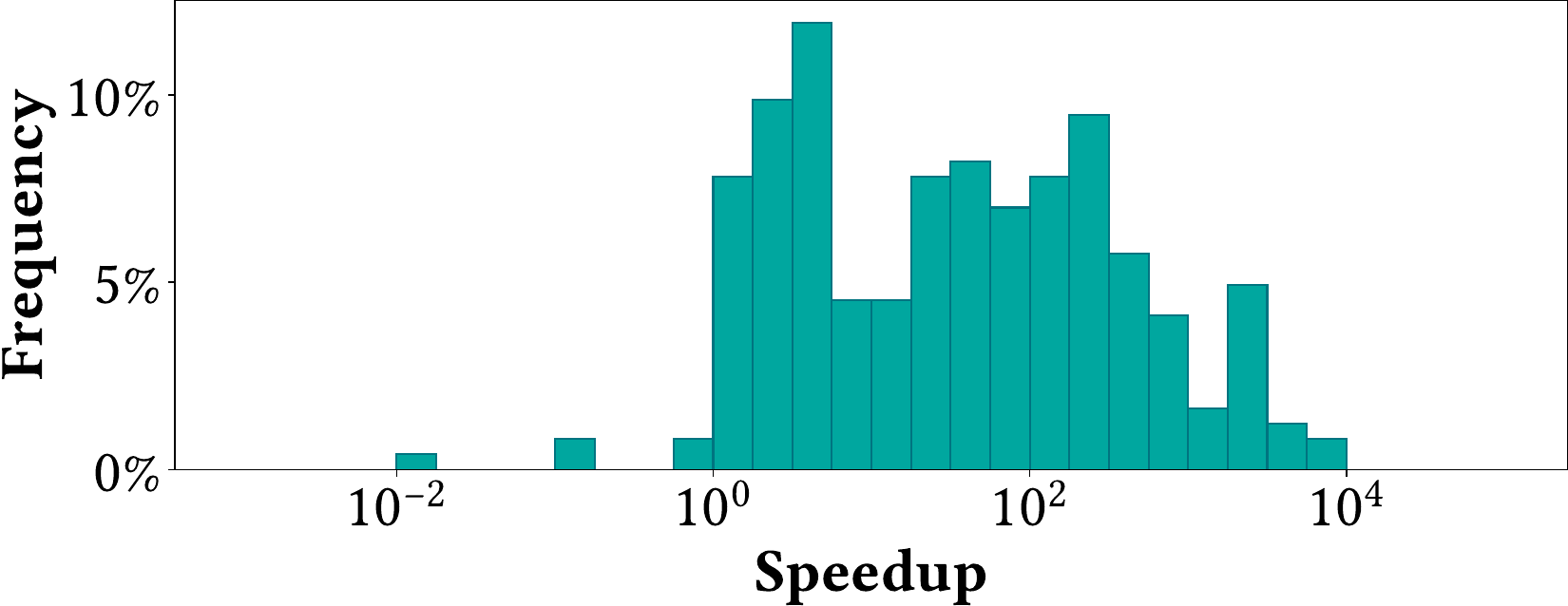}
        \caption{Musicbrainz}
        \label{fig:overall-speedup-llm-r2-musicbrainz}
    \end{subfigure}
    \hfill
    \begin{subfigure}[b]{0.49\textwidth}
        \centering
        \includegraphics[width=\linewidth]{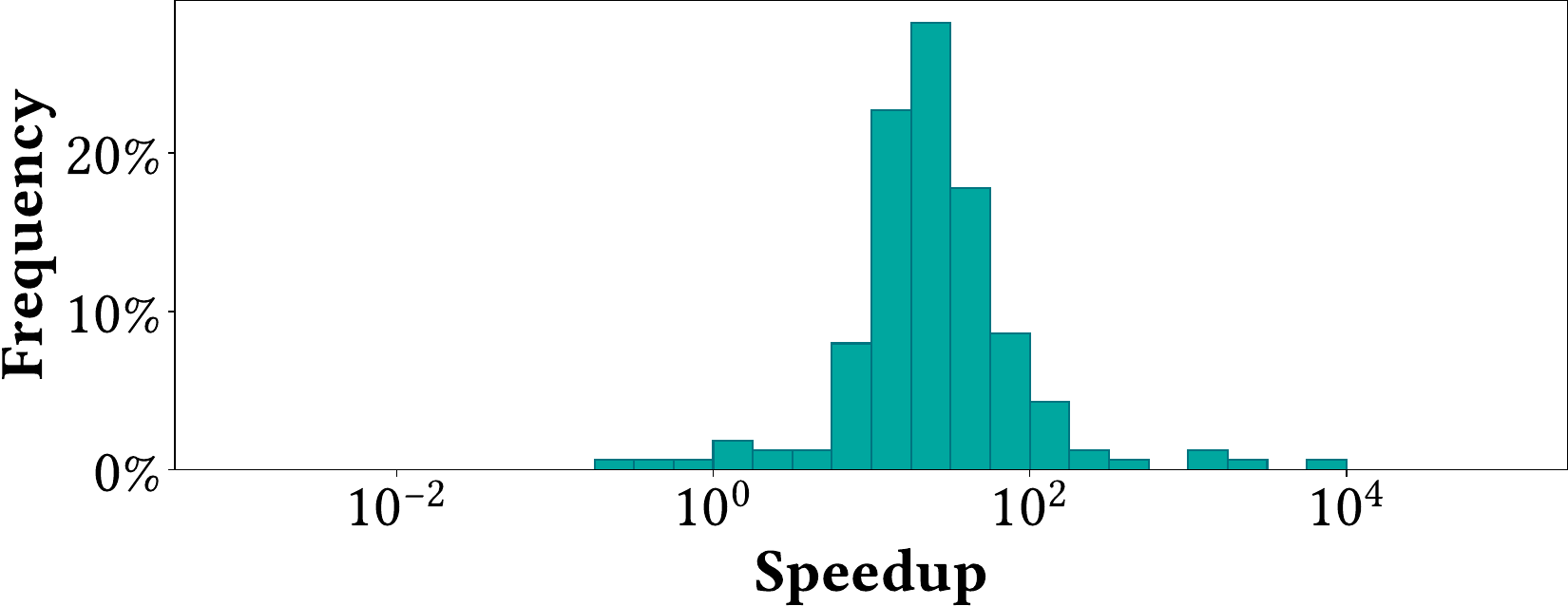}
        \caption{JOBLarge}
        \label{fig:overall-speedup-llm-r2-job-large}
    \end{subfigure}
    \vspace{-0.5\baselineskip}
    \caption{Histograms of speedups of overall evaluation time of \sys{} over LLM-R$^2$ on each benchmark}
    \label{fig:overall-speedup-llm-r2-individual}
\end{figure}

\paragraph{Scatter plots} \cref{fig:overall-scatter-llm-r2-individual} shows scatter plots of the overall query evaluation time using \sys{} and using LLM-R$^2$, measured for queries in each of the four benchmarks. Again, we observe a clear overall advantage of \sys{} over LLM-R$^2$ in all four benchmarks. Similar to the case with Learned Rewrite, it takes LLM-R$^2$ an even higher overhead of at least 3 seconds to optimize each query. For many small queries, this is already more than the overall evaluation time for \sys{} as well as other traditional query optimization approaches.

\begin{figure}[H]
    \begin{subfigure}[b]{0.24\textwidth}
        \centering
        \includegraphics[width=\linewidth]{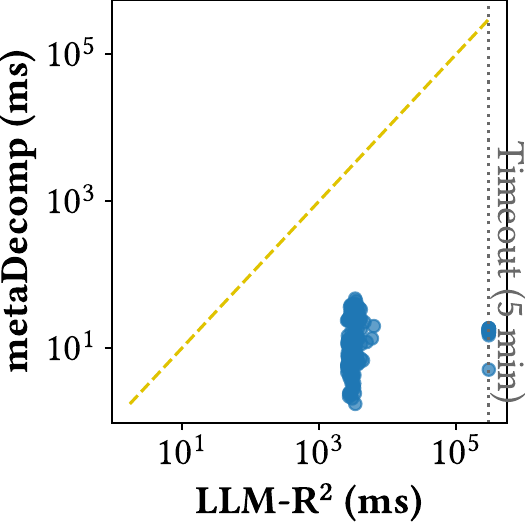}
        \caption{DSB}
        \label{fig:overall-scatter-llm-r2-dsb}
    \end{subfigure}
    \hfill
    \begin{subfigure}[b]{0.24\textwidth}
        \centering
        \includegraphics[width=\linewidth]{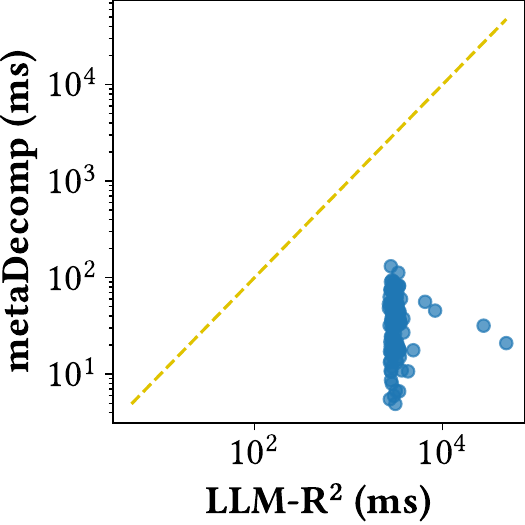}
        \caption{JOB}
        \label{fig:overall-scatter-llm-r2-job-original}
    \end{subfigure}
    \hfill
    \begin{subfigure}[b]{0.24\textwidth}
        \centering
        \includegraphics[width=\linewidth]{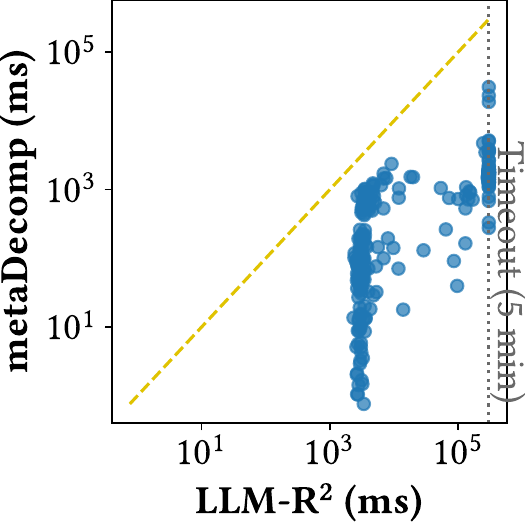}
        \caption{Musicbrainz}
        \label{fig:overall-scatter-llm-r2-musicbrainz}
    \end{subfigure}
    \hfill
    \begin{subfigure}[b]{0.24\textwidth}
        \centering
        \includegraphics[width=\linewidth]{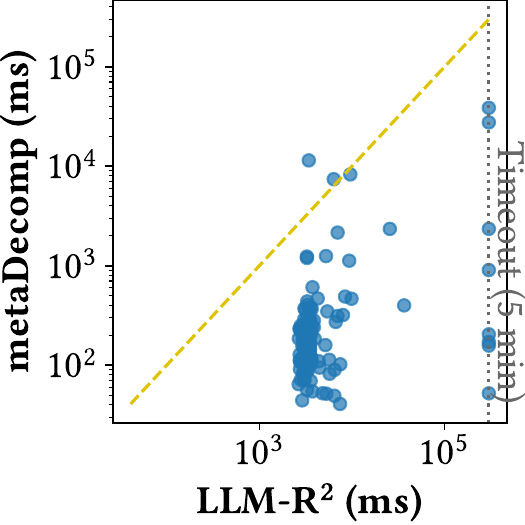}
        \caption{JOBLarge}
        \label{fig:overall-scatter-llm-r2-job-large}
    \end{subfigure}
    \vspace{-0.5\baselineskip}
    \caption{Scatter plot of overall evaluation time of using \sys{} versus LLM-R$^2$ on each benchmark}
    \label{fig:overall-scatter-llm-r2-individual}
\end{figure}